\newcommand{\R}{{\mathbb{R}}}
\newcommand{\N}{{\mathbb{N}}}
\newcommand{\Nbr}{\mathsf{Nbr}}
\newcounter{lpnumber} \setcounter{lpnumber}{0}
\newcommand{\Comp}{\mathcal{C}}
\newcommand{\wt}{\mathsf{wt}}
\newtheorem{new-claim}{Claim}
\newtheorem{prop}{Proposition}
\newtheorem{fact}{Fact}
\begin{document}
\title{Popular Matchings and Limits to Tractability\thanks{This paper is a merger of results shown in the arXiv papers \cite{FPZ18,K18-roommates} along with one in \cite{Kav18} and new results.}}
\author{Yuri Faenza\inst{1} \and Telikepalli Kavitha\inst{2} \and Vladlena Powers\inst{1} \and Xingyu Zhang\inst{1}}
\institute{IEOR, Columbia University, New York, USA,\\
	\email{yf2414, vp2342,  xz2464@columbia.edu} \and Tata Institute of Fundamental Research,\\ Mumbai, India, \email{kavitha@tcs.tifr.res.in}}
\maketitle
\pagestyle{plain}

\begin{abstract}
  We consider {\em popular matching} problems in both bipartite and non-bipartite graphs with strict preference lists. It is known that every stable matching is a min-size popular matching. A subclass of max-size popular matchings called
  {\em dominant matchings} has been well-studied in bipartite graphs: they always exist and there is a simple linear time algorithm to find one. 
  
  We show that stable and dominant matchings are the only two tractable subclasses of popular matchings in bipartite graphs; more precisely, we show that it is NP-complete to decide if $G$ admits a popular matching that is neither stable nor dominant. We also show a number of related hardness results, such as (tight) inapproximability of the maximum weight popular matching problem. In non-bipartite graphs, we show a strong negative result: it is NP-hard to decide whether a popular matching exists or not, and the same result holds if we replace \emph{popular} with \emph{dominant}. 
On the positive side, we show the following results in any graph:
\begin{itemize}
\item  we identify a subclass of dominant matchings called {\em strongly dominant} matchings and show a linear time algorithm to decide if a strongly dominant matching exists or not;
\item we show an efficient algorithm to compute a popular matching of minimum cost in a graph with edge costs and bounded treewidth.
\end{itemize}  
  \end{abstract}

\section{Introduction}
\label{intro}

The marriage problem considered by Gale and Shapley~\cite{GS62} is arguably the most relevant two-sided market model, and has been studied and applied in many areas of mathematics, computer science, and economics. The classical model assumes that the input is a complete bipartite graph, and that each node is endowed with a strict preference list over the set of nodes of the opposite color class. The goal is to find a matching that respects a certain concept of fairness called \emph{stability}. An immediate extension deals with incomplete lists, i.e., it assumes that the input graph is bipartite but not complete. In this setting, the problem enjoys strong and elegant structural properties, that lead to fast algorithms for many related optimization problems (a classical reference in this area is~\cite{GI}). 

In order to investigate more realistic scenarios, several extensions of the above Gale-Shapley model have been investigated. On one hand, one can change the structure of the \emph{input}, admitting e.g.~ties in preference lists, or of preference patterns that are given by more complex choice functions, or allow the input graph to be non-bipartite (see e.g.~\cite{Manlove} for a collection of extensions). On the other hand, one can change the requirements of the \emph{output}, i.e., we could ask for a matching that satisfies properties other than stability. For instance, relaxing the stability condition to \emph{popularity} allows us to overcome one of the main drawbacks of stable matchings and this is its size --- the restriction that blocking pairs are forbidden constrains the size of a stable matching; there are simple instances where the size of a stable matching is only half the size of a maximum  matching (note that a stable matching is maximal, so its size is at least half the size of a maximum matching).

Popularity is a natural relaxation of stability: roughly speaking, a matching $M$ is \emph{popular} if the number of nodes that prefer $M$ to any matching $M'$ is at least the number of nodes that prefer $M'$ to $M$. One can show that stable matchings are popular matchings of minimum size, and a maximum size popular matching can be twice as large as a stable
matching. Hence, popularity allows for matchings of larger size while still guaranteeing a certain fairness condition.

Popular matchings (and variations thereof) have been extensively studied in the discrete optimization community, see e.g. \cite{Biro,CK16,HK11,HK17,KMN09,Kav12,Kav16}, but there are still large gaps on what we know on the tractability of optimization problems over the set of popular matchings. Interestingly, all tractability results in popular matchings rely on connections with stable matchings. For instance, Kavitha~\cite{Kav12}  showed that a max-size popular matching can be found efficiently by a combination of the \emph{Gale-Shapley} algorithm and {\em promotion} of nodes rejected once by all neighbors. Cseh and Kavitha~\cite{CK16} showed that a pair of nodes is matched together in some popular matching if and only if this pair is matched together either in some stable matching or in some \emph{dominant} matching. Dominant matchings are a subclass of max-size popular matchings, and these are equivalent (under a simple linear map) to stable matchings in a larger graph. Recently, Kavitha \cite{Kav18} showed that when there are weights on edges, the problem of finding a max-weight popular matching is NP-hard. 

The notion of popular matchings can be immediately extended to non-bipartite graphs. Popular matchings need not always exist in a non-bipartite graph and it was not known if one can efficiently decide if a popular matching exists or not in a given non-bipartite graph. 

\smallskip

\noindent {\bf Our Contribution.} In this paper, we show NP-hardness, inapproximability results, and polynomial-time algorithms for many problems in popular matchings, some of which have been posed as open questions in many papers in the area (see e.g.~\cite{Cseh on popular matchings,CK16,HK17,Kav16,Manlove}).  Our most surprising result is probably the following: it is NP-complete to decide if a bipartite graph has a popular matching that is neither stable nor dominant (see Theorem~\ref{final-thm}).

Stable matchings and dominant matchings always exist in bipartite graphs and there are linear time algorithms to compute these matchings. Recall that a stable matching is a min-size popular matching and a dominant matching is a max-size popular matching:
thus finding a min-size (similarly, max-size) popular matching is easy. We are not aware of any other natural combinatorial optimization problem where finding elements of min-size (similarly, max-size) is easy but to decide whether there exists {\em any} element that is neither min-size nor max-size is NP-hard.

We also deduce other hardness results: it is NP-complete to decide if there exists a popular matching that contains or does not contain two given edges (see Theorem \ref{thm:forced-forb3} and Section~\ref{sec:forbidden-forced}, where this and some other positive and negative results are discussed); unless P$=$NP, the maximum weight popular matching problem with nonnegative costs cannot be approximated better than a factor $1/2$, and this is tight, since a $1/2$-approximation follows by known results (see Theorem~\ref{thr:mwp}). Moving to non-bipartite graphs, we show that the problem of deciding if a popular matching exists is NP-complete\footnote{Very recently, this result also appeared in~\cite{GMSZ18} on arXiv; our results (from \cite{K18-roommates}) were obtained independently and our proofs are different.} (see Theorem~\ref{main-thm}). We also show that the problem stays NP-complete if we replace \emph{popular} with \emph{dominant} (see Theorem~\ref{second-thm}). 

All together, those negative results settle the main open questions in the area, and cast a dark shadow on the tractability of popular matchings. While stable matchings are a tractable subclass of popular matchings in non-bipartite graphs~\cite{Irv85}, the dominant matching problem is NP-hard in non-bipartite graphs, as shown here. The fact that stable matchings and dominant matchings are the only tractable subclasses of popular matchings in bipartite graphs prompts the following question: is there is a non-trivial subclass of dominant matchings that is tractable in {\em all} graphs?

We show the answer to the above question is ``yes'' by identifying a subclass called {\em strongly dominant} matchings  (see Definition~\ref{def:strong-dominant}): in bipartite graphs, these two classes coincide.
We show a simple linear time algorithm for the problem of deciding if a given graph admits a strongly dominant matching or not and to find one, if so.
We also show that a popular matching of minimum cost (with no restriction on the signs of the cost function) in bipartite and non-bipartite graphs can be found efficiently if the treewidth of the input graph is bounded. 

\smallskip

\noindent{\bf Background and Related results.}
Algorithmic questions for popular matchings were 
first studied in the domain of {\em one-sided} preference lists~\cite{AIKM07} in bipartite instances
where it is only nodes one side, also called agents, that have preferences over their neighbors, also called objects.
Popular matchings need not always exist here, however fractional matchings that are popular always exist~\cite{KMN09}.

Popular matchings always exist in a bipartite instance $G$ with two-sided strict preference lists~\cite{Gar75}. Polynomial time
algorithms to compute a max-size popular matching here were given in \cite{HK11,Kav12} and these algorithms always compute dominant matchings. The equivalence between dominant matchings in the given bipartite graph and stable matchings in a larger bipartite graph shown in \cite{CK16} implies a polynomial time algorithm to solve the max-weight popular matching problem in a complete bipartite graph. It was shown in \cite{Kav18} that it is $\mathsf{NP}$-hard to find a max-size popular matching in a non-bipartite graph (even when a stable matching exists) and it was shown in~\cite{HK17} that it is $\mathsf{UGC}$-hard to compute a $\Theta(1)$-approximation of a max-weight popular matching in non-bipartite graphs.

It was very recently shown \cite{Kav-WG18} that given a bipartite graph $G$ along with a parameter $k\in(\mathsf{min},\mathsf{max})$, where 
$\mathsf{min}$ is the size of a stable matching and $\mathsf{max}$ is the size of a dominant matching,
it is NP-hard to decide whether $G$ admits a popular matching of size $k$ or not. Note that our NP-hardness result is a much stronger statement as we show that the problem of deciding whether $G$ admits a popular matching of 
{\em any} intermediate size (rather than a particular size $k$) is NP-hard.

\smallskip

\noindent {\bf Organization of the paper.} Definitions and important properties of stable and popular matchings are given in Section \ref{prelims}. A linear time algorithm for strongly dominant matchings is given in Section~\ref{section4}. Our main gadget construction is given in Section~\ref{sec:hardness}, where we also show that the problem of deciding if a graph admits a popular matching that is neither stable nor dominant is NP-complete. Other hardness (and some related positive) results are given in Section~\ref{sec:consequences}. In Section~\ref{sec:treewidth}, we give an algorithm for finding a popular matching of minimum cost in a graph with bounded treewidth.

\section{Preliminaries}\label{prelims}

\subsection{Definitions}\label{sec:definitions}

Throughout the paper, we will consider problems where our input is a graph $G$, together with a collection of rankings, one per node of $G$, with each node ranking its neighbors in a strict order of preference. We will denote an edge of $G$ between nodes $u$ and $v$ as $(u,v)$ or $uv$. A matching $M$ in $G$ is {\em stable} if there is no \emph{blocking edge} with respect to $M$, i.e. an edge whose both endpoints strictly prefer each other to their respective assignments in $M$. It follows from the classical work of Gale and Shapley~\cite{GS62} that a stable matching always exists when $G$ is bipartite and such a matching can be computed in linear time.

The notion of popularity was introduced by G\"ardenfors~\cite{Gar75} in 1975. 
We say a node $u$ {\em prefers} matching $M$ to matching $M'$ if either (i)~$u$ is matched in $M$
and unmatched in $M'$ or (ii)~$u$ is matched in both $M, M'$ and $u$ prefers $M(u)$ to $M'(u)$, where $M(u)$ is the partner of $u$ in $M$. 
For any two matchings $M$ and $M'$, let $\phi(M,M')$ be the number of nodes that prefer $M$ to $M'$.

\begin{definition}
\label{pop-def}
A matching $M$ is {\em popular} if  $\phi(M,M') \ge  \phi(M',M)$ for every matching $M'$ in $G$, 
i.e., $\Delta(M,M') \ge 0$ where $\Delta(M,M') = \phi(M,M') -  \phi(M',M)$.
\end{definition}

Thus, there is no matching $M'$ that would defeat a popular matching $M$ in an election between $M$ and $M'$, where each node casts a vote
for the matching that it prefers. Since there is no matching where more nodes are {\em better-off} than in a popular matching, a popular matching
can be regarded as a ``globally stable matching''. Equivalently, popular matchings are weak {\em Condorcet winners}~\cite{condorcet} in the voting instance where nodes are voters and all feasible matchings are the candidates.

Though (weak) Condorcet winners need not exist in a general voting instance, popular matchings always exist in bipartite graphs,
since every stable matching is popular~\cite{Gar75}. Popular matchings have been well-studied in bipartite graphs, in particular,
a subclass of max-size popular matchings called {\em dominant matchings} is well-understood~\cite{CK16,HK11,Kav12}.

\begin{definition}
  \label{def:dominant}
  A popular matching $M$ is dominant in $G$ if $M$ is more popular than any larger matching in $G$, i.e.,  $\Delta(M,M') > 0$ for any matching
  $M'$ such that $|M'| > |M|$.
\end{definition}  

Dominant matchings always exist in a bipartite graph and such a matching can be computed in linear time~\cite{Kav12}. Every polynomial time algorithm currently known to find a popular matching in a bipartite graph finds either a stable matching~\cite{GS62} or a dominant matching~\cite{HK11,Kav12,CK16}.

In some problems, together with the graph $G$ and the preference lists, we will also be given a weight function $c: E \rightarrow \R$. The \emph{weight} (or \emph{cost}) of a matching $M$ of $G$ (with respect to $c$) is defined as $c(M):=\sum_{e \in M} c(e)$.

\subsection{Combinatorial characterization of popular and dominant matchings}
\label{sec:comb-prelims}
Fix a matching $M$ of $G$. A node $u$ of $G$ is \emph{$M$-exposed} if $\delta(u) \cap M=\emptyset$, and \emph{$M$-covered} otherwise. An \emph{$M$-alternating} path (resp. cycle) in $G$ is a path (resp. cycle) whose edges alternate between $M$ and in $E(G)\setminus M$. An $M$-alternating path is \emph{$M$-augmenting} if its first and last nodes are $M$-exposed. We can associate \emph{labels} to edges from $E\setminus M$ as follows:
\begin{itemize}
\item an edge $(u,v)$ is $(-,-)$ if both $u$ and $v$ prefer their respective partners in $M$ to each other;
\item $(u,v) = (+,+)$ if $u$ and $v$ prefer each other to their partners in $M$;
\item  $(u,v) = (+,-)$ if $u$ prefers $v$ to its partner in $M$ and $v$ prefers its partner in $M$ to $u$.  
\end{itemize}

We write $(u,v)=(+,+)$ or \emph{$uv$ is a $(+,+)$ edge}, and similarly for the other cases.
  We also say that \emph{the label of $(u,v)$ at $u$ is $+$} (resp. $-$) if $u$ prefers $v$ to its current partner (resp. its current partner to $v$). 
  Note that \emph{blocking edges} introduced in Section \ref{sec:definitions} coincide exactly with $(+,+)$ edges.

  The graph $G_M$ is defined as the subgraph of $G$ obtained by deleting edges that are labeled $(-,-)$, and by attaching to other edges not in $M$ the appropriate labels defined above. Observe that $M$ is also a matching of $G_M$, hence definitions of $M$-alternating path and cycles apply in $G_M$ as well. These definitions can be used to obtain a characterization of popular matchings in terms of forbidden substructures of $G_M$, as shown in~\cite{HK11}.

\begin{theorem}\label{thr:characterize-popular}
A matching $M$ of $G$ is popular if and only if $G_M$ does not contain any of the following:
	\begin{enumerate}
		\item[(i)]\label{it:circuit} an $M$-alternating cycle with a $(+,+)$ edge.
		\item[(ii)]\label{it:two-plusplus}an $M$-alternating path that starts and ends with two distinct $(+,+)$ edges.
		\item[(iii)]\label{it:plusplus-and-unmatched}
		 an $M$-alternating path that starts from an $M$-exposed node and ends with a $(+,+)$ edge. 
	\end{enumerate}
\end{theorem}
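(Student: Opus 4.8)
The plan is to reduce popularity to a statement about individual alternating structures, establish a weight identity for those contained in $G_M$, and then read the three forbidden configurations off that identity. For matchings $M,N$, the set $M\oplus N$ decomposes into $M$-alternating paths and cycles, and every node on a nontrivial component $\rho$ has both its $M$-partner and its $N$-partner on $\rho$; hence $\Delta(M,N)=\sum_\rho \wt_M(\rho)$, where $\wt_M(\rho)$ assigns $+1$ to each node of $\rho$ that prefers $M$, assigns $-1$ to each node that prefers $N=M\oplus\rho$, and assigns $0$ otherwise (the last case applies only to nodes lying on no nontrivial component). Every component $\rho$ of some $M\oplus N$ is either an $M$-alternating cycle or an $M$-alternating path with $M\oplus\rho$ a matching, and conversely every such $\rho$ arises from $N=M\oplus\rho$; therefore $M$ is popular if and only if $\wt_M(\rho)\ge 0$ for every such $\rho$.

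I would then record the identity $\wt_M(\rho)=a-2p$ for every such $\rho$ that moreover lies in $G_M$ (i.e.\ has no $(-,-)$ edge), where $a\in\{0,1,2\}$ is the number of endpoints of $\rho$ that meet a matching edge ($a=0$ if $\rho$ is a cycle) and $p$ is the number of $(+,+)$ edges of $\rho$. This is a local vote count: an endpoint of $\rho$ on a matching edge becomes exposed in $M\oplus\rho$ and votes $+1$; an endpoint on a non-matching edge is forced to be $M$-exposed and votes $-1$; an internal node votes $+1$ or $-1$ according as the label at it of its unique non-matching edge of $\rho$ is $-$ or $+$. Partitioning the votes of all nodes that are not endpoints on a matching edge according to the non-matching edges of $\rho$, each $(+,+)$ edge contributes $-2$ and each $(+,-)$ edge contributes $0$, which yields the identity.

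For the implication ``$G_M$ contains a configuration of type (i), (ii) or (iii) $\Rightarrow$ $M$ is not popular'' I would extend the given configuration to an admissible structure $\sigma$ for the criterion of the first paragraph by appending, at every endpoint that meets a non-matching edge and is $M$-covered, the pendant matching edge there; a short argument shows the pendant edge is never already part of the configuration, the only caveat being that in case (ii) the two pendant edges may coincide and close the path into an $M$-alternating cycle. Applying the identity of the second paragraph with $p\ge 1$ (and $p\ge 2$ in case (ii)) gives $\wt_M(\sigma)=-2p<0$ in case (i) (and in the cyclic subcase of (ii)), $\wt_M(\sigma)\le 2-2p<0$ in case (ii), and $\wt_M(\sigma)\le 1-2p<0$ in case (iii); so $M$ is not popular.

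The converse is the core of the proof. Assume $M$ is not popular, and among all matchings $N$ and all nontrivial components $\rho$ of $M\oplus N$ with $\wt_M(\rho)<0$, choose one with the fewest edges. The key claim is that such a minimal $\rho$ has no $(-,-)$ edge. Indeed, if $e=uv$ were one, then $u$ and $v$ are $M$-covered, hence are internal nodes of $\rho$ (an endpoint of $\rho$ meeting the non-matching edge $e$ would be $M$-exposed), so deleting $e$ breaks $\rho$ into one (if $\rho$ is a cycle) or two (if $\rho$ is a path) strictly shorter $M$-alternating paths, each of which is the unique nontrivial component of $M\oplus N'$ for its own matching $N'$; deleting $e$ changes neither the vote of $u$ nor that of $v$, so the weights of the pieces sum to $\wt_M(\rho)<0$ and one of them is negative, contradicting minimality. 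Hence $\rho\subseteq G_M$ and $\wt_M(\rho)=a-2p<0$. If $\rho$ is a cycle then $p\ge1$ and $\rho$ is a configuration of type (i); if $\rho$ is a path with $a\le1$ then $p\ge1$ and the subpath from an $M$-exposed endpoint through the first $(+,+)$ edge is a configuration of type (iii); if $a=2$ then $p\ge2$ and the subpath of $\rho$ from its first to its last $(+,+)$ edge is a configuration of type (ii). The main obstacle is precisely this $(-,-)$-elimination step; once $\rho$ is known to lie inside $G_M$, the remaining case analysis is routine bookkeeping with $\wt_M(\rho)=a-2p$.
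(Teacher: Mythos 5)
Your proof is correct. A remark on context: the paper does not actually prove this theorem but cites it to Huang and Kavitha~\cite{HK11}, so there is no ``paper proof'' to compare against line by line; however your argument is the standard symmetric-difference route, and it is the one underlying~\cite{HK11} as well. Your main structural points all check out: every nontrivial component $\rho$ of $M\oplus N$ is admissible (if an endpoint of $\rho$ meets a non-$M$ edge it must be $M$-exposed, else $N$ would match it twice); the vote identity $\wt_M(\rho)=a-2p$ on components contained in $G_M$ is a correct redistribution of the internal votes over the non-matching edges; the extension to an admissible $\sigma$ in the forward direction correctly rules out the pendant $M$-edge reentering the configuration, and correctly flags the one degeneracy where the two pendants coincide and close a cycle; and the $(-,-)$-splitting step in the converse is sound because a $(-,-)$ edge forces both endpoints to be $M$-covered hence internal, so deletion yields strictly shorter admissible pieces whose weights add to a negative number, contradicting minimality. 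The final case split ($a=0$ cycle, $a\le 1$ path, $a=2$ path) exactly recovers the three forbidden structures.

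One presentational caution: you reuse the symbol $\wt_M$ for a vote count on alternating components, with sign opposite to the paper's $\wt_M$ on edges in $\tilde G$ (your $\wt_M(\rho)=\Delta(M,M\oplus\rho)$, whereas the paper's $\wt_M(\tilde N)=\Delta(N,M)$). You use it consistently, so nothing is wrong, but if this text were to be spliced into the paper the clash would be worth renaming (e.g.\ write $\vote_M(\rho)$ for your quantity) to avoid confusion with Section~\ref{sec:certificates}.
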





Graph $G_M$ can also be used to obtain a characterization of dominant matchings, as shown in~\cite{CK16}. 
\begin{theorem}
\label{thm:dominant}
  Let $M$ be a popular matching. $M$ is dominant if and only if there is no $M$-augmenting path in $G_M$.
\end{theorem}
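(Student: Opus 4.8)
The plan is to combine the forbidden-structure characterization of popular matchings (Theorem~\ref{thr:characterize-popular}) with the usual vote accounting over symmetric differences. For matchings $M,M'$ write $\Delta(M,M')=\sum_u \vote_u(M,M')$, where $\vote_u\in\{+1,0,-1\}$ according to whether $u$ prefers $M$, is indifferent, or prefers $M'$; only vertices covered by $M\triangle M'$ contribute, so $\Delta(M,M')=\sum_C\Delta_C$ over the components $C$ (alternating paths and cycles) of $M\triangle M'$, with $\Delta_C$ the restricted vote sum. Since $M\triangle C$ is again a matching, popularity of $M$ already forces $\Delta_C\ge 0$ for every component $C$ of every such symmetric difference; this observation will be used repeatedly. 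Both implications are then obtained by pinning down the value of $\Delta_P$ for the $M$-augmenting path component $P$ that must occur whenever $|M'|>|M|$.

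For ``dominant $\Rightarrow$ no $M$-augmenting path in $G_M$'' I argue by contraposition. Let $P=a_0a_1\cdots a_{2k+1}$ be an $M$-augmenting path lying in $G_M$ and set $M'=M\triangle P$, so $|M'|=|M|+1$. Applying Theorem~\ref{thr:characterize-popular}(iii) to the initial sub-paths $a_0\cdots a_{2i+1}$ of $P$ shows that no non-$M$ edge of $P$ can be $(+,+)$, and since $P$ lies in $G_M$ none is $(-,-)$; hence every non-$M$ edge of $P$ is $(+,-)$ or $(-,+)$, i.e. carries label $-$ at exactly one of its endpoints (using the convention that an $M$-exposed endpoint always carries $+$). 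Converting labels into votes for $M$ against $M'$ and bookkeeping over the $2k$ internal vertices — the first and last internal vertices always vote $+$, and the remaining $2(k-1)$ internal vertices fall into $k-1$ consecutive pairs, exactly one member of each pair voting $+$ — yields $k+1$ ``$+$'' votes and $k-1$ ``$-$'' votes internally, with the two $M$-exposed endpoints voting ``$-$''. Thus $\Delta(M,M')=\Delta_P=(k+1)-(k-1)-2=0$, contradicting dominance, which requires $\Delta(M,M')>0$ whenever $|M'|>|M|$. (The degenerate case $k=0$ is a single edge between two $M$-exposed nodes, which is a $(+,+)$ edge and hence excluded from $G_M$ by Theorem~\ref{thr:characterize-popular}(iii).)

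For the converse, suppose $M$ is popular and $G_M$ has no $M$-augmenting path, and let $|M'|>|M|$. Then $M\triangle M'$ has a component that is an $M$-augmenting path $P=a_0\cdots a_{2k+1}$; since $P$ is not a path of $G_M$ it contains a $(-,-)$ edge, and such an edge cannot meet the $M$-exposed endpoints, so it is an internal edge $e_j=a_{2j}a_{2j+1}$ with $1\le j\le k-1$. Split $P$ at $e_j$ into $P_1=a_0\cdots a_{2j}$ and $P_2=a_{2j+1}\cdots a_{2k+1}$. Both $M\triangle P_1$ and $M\triangle P_2$ are matchings of size $|M|$, so popularity gives $\Delta_{P_1},\Delta_{P_2}\ge 0$; moreover, because $e_j$ is $(-,-)$, each of $a_{2j}$ and $a_{2j+1}$ votes $+$ in all three comparisons, which yields the key identity $\Delta_P=\Delta_{P_1}+\Delta_{P_2}$. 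Finally, after cancelling the $-1$ vote of the $M$-exposed endpoint against the $+1$ vote of the split endpoint that becomes exposed, each of $\Delta_{P_1},\Delta_{P_2}$ reduces to a sum of an odd number of $\pm 1$'s, hence is odd, hence is at least $1$. Therefore $\Delta_P\ge 2$, and since every other component of $M\triangle M'$ contributes $\ge 0$, we get $\Delta(M,M')\ge 2>0$; as $M'$ was arbitrary with $|M'|>|M|$, $M$ is dominant.

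I expect the main obstacle to be purely organizational rather than conceptual: fixing the label convention at $M$-exposed nodes, checking that $\Delta(M,M\triangle P)$ depends only on $V(P)$, and verifying that splitting $P$ at a $(-,-)$ edge really gives $\Delta_P=\Delta_{P_1}+\Delta_{P_2}$ with both summands of odd parity. No single step is deep, but the vote bookkeeping in the two directions must be carried out carefully and consistently.
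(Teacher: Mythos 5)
The paper states Theorem~\ref{thm:dominant} as an imported result from~\cite{CK16} without reproducing a proof, so there is no in-paper argument to compare against; your derivation is a correct, self-contained proof from Theorem~\ref{thr:characterize-popular}. It is exactly the alternating-path/parity argument one expects behind the cited result: for ``dominant $\Rightarrow$ no augmenting path,'' use popularity (item (iii)) to rule out $(+,+)$ labels and the definition of $G_M$ to rule out $(-,-)$ labels on the augmenting path, then count votes to get $\Delta=0$; for the converse, split an augmenting path of $M\triangle M'$ at a $(-,-)$ edge into two odd-length halves, each of which is a popularity comparison with an odd number of $\pm1$ votes and hence contributes $\ge 1$, so $\Delta_P\ge 2$. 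One small wording slip: $a_{2j}$ votes $0$, not $+$, in the comparison against $M\triangle P_2$ (it lies outside $P_2$), and symmetrically for $a_{2j+1}$ against $M\triangle P_1$; but the identity $\Delta_P=\Delta_{P_1}+\Delta_{P_2}$ and the conclusion $\Delta_P\ge 2$ that you draw from it are nonetheless correct. It would also be worth stating explicitly that the case $k\le 1$ cannot arise in the converse (for $k=0$ popularity is violated, for $k=1$ both non-matching edges meet an exposed endpoint so $P\subseteq G_M$, contradicting the hypothesis), since your split requires $1\le j\le k-1$ to be nonempty; this is implicit in your argument but deserves a sentence.
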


The above characterizations can be used to efficiently decide if given a matching $M$ is popular (similarly, dominant), see \cite{HK11}. Hence, in most of our NP-completeness reductions, we focus on proving the hardness part.



\smallskip

The matchings that satisfy Definition~\ref{def:dominant} were called ``dominant'' in~\cite{CK16},
however dominant matchings in bipartite graphs were first constructed in \cite{HK11}.
It was observed in \cite{HK11} that Definition~\ref{def:strong-dominant} given below was a sufficient condition for a
matching $M$ to be a max-size popular matching and the goal in  \cite{HK11} was to efficiently construct
such a matching in a bipartite graph. It was shown in \cite{Kav12} that if
$M$ satisfies conditions~(i)-(iv) in Definition~\ref{def:strong-dominant} then $M$ satisfies
the condition given in Theorem~\ref{thm:dominant} along with the conditions given in Theorem~\ref{thr:characterize-popular};
thus $M$ is a dominant matching.


\begin{definition}
  \label{def:strong-dominant}
A matching $M$ is strongly dominant in $G = (V,E)$ if there is a partition $(L,R)$ of the node set $V$ such that 
(i)~$M \subseteq L \times R$, (ii)~$M$ matches all nodes in $R$,
(iii)~every $(+,+)$ edge is in $R \times R$, and (iv)~every edge in $L \times L$ is $(-,-)$.
\end{definition}

Consider the complete graph on 4 nodes $d_0, d_1, d_2, d_3$ where $d_0$'s preference list is $d_1 \succ d_2 \succ d_3$
(i.e., top choice $d_1$, followed by $d_2$ and then $d_3$), $d_1$'s preference list is $d_2 \succ d_3 \succ d_0$,
$d_2$'s preference list is $d_3 \succ d_1 \succ d_0$, and $d_3$'s preference list is $d_1 \succ d_2 \succ d_0$. 
This instance (see Fig.~\ref{D:example}) has no stable matching. The matching
$M = \{(d_0,d_1),(d_2,d_3)\}$ is a strongly dominant matching here with $L = \{d_0,d_2\}$ and $R = \{d_1,d_3\}$.
$M \subseteq L \times R$ and it is a perfect matching. Moreover, the edge $(d_0,d_2) \in L \times L$ is $(-,-)$
and there is only one $(+,+)$ edge here, which is $(d_1,d_3) \in R \times R$. 

\smallskip

In bipartite graphs, every dominant matching is strongly dominant~\cite{CK16}.
However in non-bipartite graphs, not every dominant matching is strongly dominant. For instance, consider the following graph on 4 nodes
$a, b, c, d$ where $a$'s preference list is $b \succ c \succ d$, while $b$'s preference list is $a \succ c$ and
$c$'s preference list is $a \succ b$ and $d$'s only neighbor is $a$. It is simple to check that the matching $\{(a,d),(b,c)\}$
is popular; moreover it is a perfect matching and hence it is dominant. However it is {\em not} strongly dominant as both
$(a,b)$ and $(a,c)$ are $(+,+)$ edges and one of $b,c$ has to be in $L$.

\subsection{Dual certificates for stable, popular, and dominant matchings}
\label{sec:certificates}

Let $\tilde{G}$ be the graph $G$ augmented with {\em self-loops}, i.e., it is assumed that every node is its own last choice.
Corresponding to any matching $N$ in $G$, there is a perfect matching $\tilde{N}$ in $\tilde{G}$ defined as follows:
$\tilde{N} = N \cup \{(u,u): u$ is left unmatched in $N\}$. 

Let $M$ be any matching in $G$. Corresponding to $M$, we can define an edge weight function $\wt_M$ in $\tilde{G}$ as follows.
\begin{equation*} 
\wt_M(u,v) = \begin{cases} 2   & \text{if\ $(u,v)$\ is\ labeled\ $(+,+)$}\\
	                     -2 &  \text{if\ $(u,v)$\ is\ labeled\ $(-,-)$ }\\			
                              0 & \text{otherwise}
\end{cases}
\end{equation*}

Thus $\wt_M(e) = 0$ for every $e \in M$.
We need to define $\wt_M$ on self-loops as well: for any node~$u$, let $\wt_M(u,u) = 0$ if $u$ is unmatched in $M$, else
let $\wt_M(u,u) = -1$. 
It is easy to see that for any matching $N$ in $G$, $\Delta(N,M) = \wt_M(\tilde{N})$,
where $\Delta(N,M) = \phi(N,M) - \phi(M,N)$ (see Definition~\ref{pop-def}).
Thus $M$ is popular if and only if every perfect matching in the graph $\tilde{G}$ has weight at most 0.

\subsubsection{Certificates in bipartite graphs.} Here we give a quick overview of the LP framework of popular matchings in bipartite graphs from \cite{KMN09}
along with some results from \cite{Kav16,Kav18}.

\begin{theorem}[\cite{KMN09}]
\label{thm:witness}
  Let $M$ be any matching in $G = (A \cup B, E)$. The matching $M$ is popular if and only if there exists a vector $\vec{\alpha} \in \mathbb{R}^n$ (where $n = |A \cup B|$) such that $\sum_{u \in A \cup B}\alpha_u = 0$ and
  \begin{eqnarray*}
    \alpha_{a} + \alpha_{b} & \ \ \ge \ \ & \wt_{M}(a,b) \ \ \ \forall\, (a,b)\in E\\
    \alpha_u & \ \ \ge \ \ & \wt_M(u,u) \ \ \ \forall\, u\in A \cup B.
  \end{eqnarray*}  
\end{theorem}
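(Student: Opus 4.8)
The plan is to derive the existence of the dual certificate $\vec\alpha$ from LP duality applied to the maximum-weight perfect matching LP in the augmented bipartite graph $\tilde G$, using the fact (established in Section~\ref{sec:certificates}) that $\Delta(N,M)=\wt_M(\tilde N)$ and hence that $M$ is popular if and only if every perfect matching of $\tilde G$ has $\wt_M$-weight at most $0$. First I would observe that $\tilde G$ is still bipartite: the self-loops are attached at single nodes, but in the LP relaxation each self-loop $(u,u)$ is simply a variable $x_{uu}$ contributing $1$ to the degree constraint at $u$, so the constraint matrix of the perfect-matching polytope of $\tilde G$ is still totally unimodular (one can treat a self-loop as an edge to a private copy of $u$ on the other side, or argue directly). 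Consequently the LP
\[
\max\Big\{\textstyle\sum_{e}\wt_M(e)\,x_e \ :\ \textstyle\sum_{e\ni u}x_e = 1\ \ \forall u,\ \ x\ge 0\Big\}
\]
has an integral optimum, which equals $\max_N \Delta(N,M)$; in particular $M$ is popular iff this LP has optimal value $0$ (it is always at least $0$, attained by $\tilde M$ itself, since $\wt_M(e)=0$ on $M$ and on self-loops at $M$-exposed nodes).

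Next I would pass to the dual. The dual of the above LP has a variable $\alpha_u$ for every node $u$, objective $\min\sum_u \alpha_u$, and constraints $\alpha_a+\alpha_b\ge \wt_M(a,b)$ for every edge $(a,b)\in E$ together with $\alpha_u\ge \wt_M(u,u)$ for every self-loop. By strong LP duality, $M$ is popular iff the dual optimum is $0$, i.e.\ iff there is a feasible $\vec\alpha$ with $\sum_u\alpha_u\le 0$; and since $\tilde M$ is a feasible primal solution of value $0$, weak duality forces any feasible dual $\vec\alpha$ to satisfy $\sum_u\alpha_u\ge 0$, so in fact $\sum_u\alpha_u=0$. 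This yields exactly the stated certificate. For the converse direction, given such an $\vec\alpha$, weak duality immediately gives $\wt_M(\tilde N)=\Delta(N,M)\le \sum_u\alpha_u = 0$ for every perfect matching $\tilde N$ of $\tilde G$, hence $\Delta(N,M)\le 0$ for every matching $N$ of $G$, i.e.\ $M$ is popular.

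The only genuinely delicate point — and the step I expect to require the most care — is the passage between the combinatorial statement (every perfect matching of $\tilde G$ has nonpositive $\wt_M$-weight) and the LP statement (the perfect-matching LP of $\tilde G$ has optimum $0$): one must be sure that integrality of the relevant polytope holds despite the self-loops, so that no fractional perfect matching beats all integral ones. I would handle this by the standard device of splitting each self-loop $(u,u)$ into an ordinary edge between $u$ and a fresh degree-one twin $u'$ with $\wt_M(u,u')=\wt_M(u,u)$, which keeps the graph bipartite with a totally unimodular incidence matrix and is in weight-preserving bijection with perfect matchings of $\tilde G$; total unimodularity then guarantees an integral optimum, and the dual of this modified LP is readily seen to collapse to the two families of inequalities in the statement (the constraints involving the twins $u'$ become precisely $\alpha_u\ge\wt_M(u,u)$ after eliminating the free variables $\alpha_{u'}$ at value $0$). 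Everything else is a routine application of LP duality.
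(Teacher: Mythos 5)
Your proof follows exactly the paper's route (the theorem is attributed to \cite{KMN09}, and Section~\ref{sec:certificates} states that $\vec\alpha$ is an optimal dual solution of the max-weight perfect matching LP in $\tilde G$ with weights $\wt_M$): reduce popularity of $M$ to the LP having optimal value $0$, use integrality to equate the LP optimum with $\max_N\Delta(N,M)$, and then apply weak duality for the ``if'' direction and strong duality for the ``only if'' direction. The reasoning on those points is correct.

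One inaccuracy worth flagging is in the twin device you propose to justify integrality. As stated --- split each self-loop $(u,u)$ into an edge $(u,u')$ where $u'$ is a fresh degree-one node, and take perfect matchings of the modified graph --- this is \emph{not} in weight-preserving bijection with perfect matchings of $\tilde G$: if $u'$ must be covered and has degree one, then $(u,u')$ is forced into every perfect matching, so every node of $G$ would be forced onto its self-loop. Relatedly, ``eliminating the free variable $\alpha_{u'}$ at value $0$'' does not work: a free $\alpha_{u'}$ (dual to an equality constraint at $u'$) makes $\alpha_u+\alpha_{u'}\ge\wt_M(u,u')$ vacuous, not equivalent to $\alpha_u\ge\wt_M(u,u)$; and if instead you impose no constraint at $u'$ (which is what you want), there simply is no variable $\alpha_{u'}$ to eliminate --- the dual constraint for $(u,u')$ is $\alpha_u\ge\wt_M(u,u')$ directly. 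The clean version is the ``argue directly'' variant you also mention: the constraint matrix of your LP has rows indexed by $A\cup B$ and columns by the edges of $\tilde G$; the regular-edge columns form the bipartite incidence matrix of $G$, and each self-loop column has a single $1$. Appending columns with a single nonzero $\pm1$ entry preserves total unimodularity, so the polytope is integral and the rest of your argument goes through unchanged.
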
  

The vector $\vec{\alpha}$ will be an optimal solution to the LP that is dual to the max-weight perfect matching LP in $\tilde{G}$
(with edge weight function $\wt_M$). For any popular matching $M$, a vector $\vec{\alpha}$ as given in Theorem~\ref{thm:witness}
will be called a {\em witness} to $M$.

A stable matching has the all-zeros vector $\vec{0}$ as a witness while it follows from \cite{CK16} that
a dominant matching $M$ has a witness $\vec{\alpha}$ where
$\alpha_u \in \{\pm 1\}$ for all nodes $u$ matched in $M$ and $\alpha_u = 0$ for all nodes $u$ left unmatched in $M$.
The following lemma will be useful to us. Let $|A\cup B| = n$.

\begin{lemma}[\cite{Kav16}]
  Every popular matching in $G = (A \cup B, E)$ has a witness in $\{0,\pm 1\}^n$.
\end{lemma}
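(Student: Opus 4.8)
The idea is to produce an \emph{integral} witness of $M$ and then observe that any integral witness is automatically forced to lie in $\{0,\pm 1\}^n$.

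First I would set up the LP picture. By Theorem~\ref{thm:witness} and the discussion preceding it, the witnesses of $M$ are exactly the optimal solutions of the LP $(D)$ that is dual to the max-weight perfect matching LP $(P)$ in $\tilde G$ with edge weights $\wt_M$; concretely, $(D)$ minimizes $\sum_{u} \alpha_u$ over the polyhedron $Q=\{\vec\alpha\in\R^n : \alpha_a+\alpha_b\ge \wt_M(a,b)\ \forall\,(a,b)\in E,\ \alpha_u\ge\wt_M(u,u)\ \forall\,u\}$ (with $\vec\alpha$ unconstrained in sign), while $(P)$ ranges over fractional perfect matchings of $\tilde G$. Since $M$ is popular, $\tilde M$ is a perfect matching of $\tilde G$ of $\wt_M$-weight $0$, so it is primal feasible with objective $0$; on the other hand the witness supplied by Theorem~\ref{thm:witness} satisfies $\sum_u\alpha_u=0$, so it is dual feasible with objective $0$. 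By weak duality both are optimal, so $\tilde M$ is an optimal solution of $(P)$ and every witness is an optimal solution of $(D)$.

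Second, I would argue that $Q$ has an integral optimal vertex. The constraint matrix of $Q$ is $\binom{N^{\top}}{I}$, where $N$ is the node–edge incidence matrix of the bipartite graph $G$ and $I$ is the $n\times n$ identity (coming from the self-loop constraints $\alpha_u\ge\wt_M(u,u)$). Incidence matrices of bipartite graphs are totally unimodular, and appending an identity block preserves total unimodularity, so $\binom{N^{\top}}{I}$ is totally unimodular; since $\wt_M$ is integer-valued, $Q$ is an integral polyhedron, and the identity block gives it full column rank, hence $Q$ is pointed. The set of optimal solutions of $(D)$ is a nonempty face of $Q$, and a nonempty face of a pointed polyhedron contains a vertex; that vertex is integral. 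Thus $M$ has an integral witness $\vec\alpha\in\Z^n$.

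Third, I would show that \emph{any} integral witness $\vec\alpha$ of $M$ lies in $\{0,\pm 1\}^n$. Apply complementary slackness between the optimal primal solution $\tilde M$ and the optimal dual solution $\vec\alpha$: for every $(a,b)\in M$ (where the primal variable is positive) the edge constraint is tight, so $\alpha_a+\alpha_b=\wt_M(a,b)=0$; and for every node $u$ left unmatched by $M$ the self-loop variable is positive, so $\alpha_u=\wt_M(u,u)=0$. For a node $u$ matched by $M$, the self-loop constraint gives $\alpha_u\ge\wt_M(u,u)=-1$, and likewise $\alpha_{M(u)}\ge -1$; combined with $\alpha_u+\alpha_{M(u)}=0$ this yields $\alpha_u\le 1$, so $\alpha_u\in\{-1,0,1\}$. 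Hence $\vec\alpha\in\{0,\pm 1\}^n$, as desired.

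The only real content lies in the interplay of two facts: total unimodularity of $\binom{N^{\top}}{I}$ delivers an integral witness for free, and the self-loop weight $\wt_M(u,u)=-1$ at matched nodes, pushed through complementary slackness on the edges of $M$, squeezes every coordinate of an integral witness into $\{-1,0,1\}$. I do not anticipate a genuine obstacle; the points that need care are the (routine) total-unimodularity bookkeeping and the observation that the dual optimal face, being a face of a pointed integral polyhedron, meets the integer lattice.
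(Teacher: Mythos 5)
The paper does not prove this lemma itself; it is cited from \cite{Kav16}. Your argument --- passing to the dual of the max-weight perfect-matching LP on $\tilde G$, invoking total unimodularity of the bipartite incidence matrix stacked over an identity block to extract an integral optimal dual solution, then using complementary slackness against $\tilde M$ together with $\wt_M(u,u)=-1$ on matched nodes to pin each coordinate into $\{0,\pm 1\}$ --- is correct and is essentially the standard LP-based proof of this fact from the literature.
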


Call $s\in V$ a {\em stable node} if it is matched in some (equivalently, all) stable matching(s)~\cite{GS85}. Every popular matching has to match all stable nodes~\cite{HK11}. A node that is not stable is called an {\em unstable node}.

Call any $e \in E$ a {\em popular edge} if there is some popular matching in $G$ that contains $e$.
Let $M$ be a popular matching in $G = (A \cup B, E)$ and let $\vec{\alpha} \in \{0,\pm 1\}^n$ be a witness of $M$.
Lemma~\ref{prop0} given below follows from complementary slackness conditions.

\begin{lemma}[\cite{Kav18}]
\label{prop0}
For any popular edge $(a,b)$, we have $\alpha_a + \alpha_b = \wt_M(a,b)$.
For any unstable node $u$ in $G$, if $u$ is left unmatched in $M$, then $\alpha_u = 0$ else $\alpha_u = -1$.
\end{lemma}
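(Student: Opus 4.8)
The plan is to read the lemma off from linear-programming duality and complementary slackness, applied to the primal/dual pair set up just before Theorem~\ref{thm:witness}: the primal is the max-weight perfect matching LP in $\tilde{G}$ with edge weights $\wt_M$, and its dual has variables $\vec{\alpha}$ with constraints $\alpha_a+\alpha_b\ge\wt_M(a,b)$ for $(a,b)\in E$, $\alpha_u\ge\wt_M(u,u)$ for $u\in A\cup B$, objective $\min\sum_u\alpha_u$. Two facts are already available: since $M$ is popular, every perfect matching of $\tilde{G}$ has $\wt_M$-weight at most $0$, and $\tilde{M}$ attains $0$, so $\tilde{M}$ is an optimal primal solution; and a witness $\vec{\alpha}\in\{0,\pm1\}^n$ of $M$ (which in particular satisfies $\sum_u\alpha_u=0$) is an optimal dual solution. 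I will also use that the LP is integral, so that complementary slackness may be invoked against the integral point $\tilde{N}$ for any matching $N$ (this is standard for the $b$-matching-type polytope of $\tilde{G}$ and is implicit in Theorem~\ref{thm:witness}).

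The first step is a slack-accounting identity. For an arbitrary matching $N$ of $G$, summing the dual inequalities over the edges of $\tilde{N}$ gives
\[
\Delta(N,M)=\wt_M(\tilde{N})=\sum_{(a,b)\in N}\wt_M(a,b)+\sum_{v:\,v\ \text{is}\ N\text{-exposed}}\wt_M(v,v)\ \le\ \sum_{(a,b)\in N}(\alpha_a+\alpha_b)+\sum_{v:\,v\ \text{is}\ N\text{-exposed}}\alpha_v\ =\ \sum_{u}\alpha_u\ =\ 0,
\]
and equality holds if and only if $\alpha_a+\alpha_b=\wt_M(a,b)$ for every $(a,b)\in N$ and $\alpha_v=\wt_M(v,v)$ for every $N$-exposed node $v$; this is exactly complementary slackness between $\vec{\alpha}$ and $\tilde{N}$, and it applies precisely when $\tilde{N}$ is itself an optimal primal solution, i.e., when $\Delta(N,M)=\wt_M(\tilde{N})=0$.

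The second step identifies the matchings $N$ for which equality occurs: every popular matching $M'$ satisfies $\Delta(M',M)=0$. Indeed $\Delta(M',M)\ge 0$ since $M$ is popular, while $\Delta(M',M)=-\Delta(M,M')\le 0$ since $M'$ is popular, using that $\Delta$ is antisymmetric (with strict preference lists each node is decisive: it strictly prefers exactly one of $M,M'$ or is indifferent to both). Combining with the first step, for every popular matching $M'$ we obtain $\alpha_a+\alpha_b=\wt_M(a,b)$ for all $(a,b)\in M'$ and $\alpha_v=\wt_M(v,v)$ for all $M'$-exposed $v$. Since a popular edge lies in some popular matching, the first assertion of the lemma follows. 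For the second assertion, take $M'=M$: a node $u$ left unmatched in $M$ is $M$-exposed, so $\alpha_u=\wt_M(u,u)=0$ because $u$ is unmatched in $M$; then take $M'$ to be any stable matching $S$ (which is popular): an unstable node $u$ is $S$-exposed, hence $\alpha_u=\wt_M(u,u)$, and $\wt_M(u,u)=-1$ when $u$ is matched in $M$.

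There is no substantial obstacle here; the only points needing a little care are the integrality of the perfect-matching LP of $\tilde{G}$ (so that complementary slackness can be applied to the integral matchings $\tilde{N}$, $\tilde{M}$, $\tilde{M'}$, $\tilde{S}$) and the antisymmetry of $\Delta$, both of which are routine and already implicit in the framework recalled above.
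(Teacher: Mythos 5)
Your proof is correct and follows exactly the route the paper indicates (the paper itself offers no detailed argument, merely noting that the lemma "follows from complementary slackness conditions" and citing \cite{Kav18}): $\tilde{M}$ is a max-weight perfect matching in $\tilde{G}$ under $\wt_M$, a witness $\vec{\alpha}$ is an optimal dual solution, any other popular matching $M'$ yields another primal optimum $\tilde{M'}$, and complementary slackness against $\tilde{M'}$ forces tight dual constraints on all edges of $M'$ and on all self-loops of $M'$-exposed nodes; taking $M'$ to be a popular matching through a given popular edge handles the first assertion, and taking $M' = M$ and $M' = S$ (a stable matching, in which every unstable node is exposed by the Rural Hospitals theorem) handles the second. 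One small slip: you write "$\Delta(M',M)\ge 0$ since $M$ is popular, while $\Delta(M',M)=-\Delta(M,M')\le 0$ since $M'$ is popular," but the attributions are reversed --- popularity of $M'$ gives $\Delta(M',M)\ge 0$, and popularity of $M$ gives $\Delta(M,M')\ge 0$, hence $\Delta(M',M)\le 0$; the conclusion $\Delta(M',M)=0$ is of course unaffected. Also, the appeal to integrality of the LP is unnecessary: the witness $\vec{\alpha}$ already pins the dual optimum at $0$, so by weak duality any $\tilde{M'}$ of weight $0$ is primal-optimal without any polyhedral integrality argument.
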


The popular  subgraph $F_G$ is a useful subgraph of $G$ defined in \cite{Kav18}.

\begin{definition}
\label{def:popular-subgraph}
  The {\em popular subgraph} $F_G = (A \cup B, E_F)$ is the subgraph of $G = (A \cup B, E)$
whose edge set $E_F$ is the set of popular edges in $E$.
\end{definition}

The graph $F_G$ need not be connected. Let $\Comp_1,\ldots,\Comp_h$ be the various components in $F_G$. 
Recall that $M$ is a popular matching in $G$ and $\vec{\alpha} \in \{0,\pm 1\}^n$ is a witness of $M$.

\begin{lemma}[\cite{Kav18}]
\label{prop1}
For any connected component $\Comp_i$ in $F_G$, either $\alpha_u = 0$ for all nodes $u \in \Comp_i$ or 
$\alpha_u \in \{\pm 1\}$ for all nodes $u \in \Comp_i$. Moreover, if $\Comp_i$ contains one or 
more unstable nodes, either all these unstable nodes are matched in $M$ or none of them is 
matched in $M$.
\end{lemma}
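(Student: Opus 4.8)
The plan is to prove Lemma~\ref{prop1} by combining Lemma~\ref{prop0} with the structural properties of the popular subgraph $F_G$. Recall that every edge of $F_G$ is a popular edge, so by Lemma~\ref{prop0}, for every edge $(a,b) \in E_F$ we have the tight equation $\alpha_a + \alpha_b = \wt_M(a,b)$. Since $\wt_M$ takes only the values $\{0, \pm 2\}$ on edges of $G$, and since $\vec{\alpha} \in \{0, \pm 1\}^n$, this tight equation severely restricts which pairs $(\alpha_a, \alpha_b)$ can occur on an edge: the only possibilities are $(0,0)$ with $\wt_M(a,b)=0$, $(1,1)$ with $\wt_M(a,b)=2$, $(-1,-1)$ with $\wt_M(a,b)=-2$, and $(1,-1)$ (in either order) with $\wt_M(a,b)=0$. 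In particular, an edge joining a node with $\alpha$-value $0$ to a node with $\alpha$-value $\pm 1$ is impossible.

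From this observation the first claim follows immediately by a connectivity argument. Define $V_0 = \{u : \alpha_u = 0\}$ and $V_{\pm} = \{u : \alpha_u \in \{\pm 1\}\}$; these partition $V$. By the above, no edge of $F_G$ crosses between $V_0$ and $V_{\pm}$, so each connected component $\Comp_i$ of $F_G$ is entirely contained in $V_0$ or entirely in $V_{\pm}$. That is exactly the statement that either $\alpha_u = 0$ for all $u \in \Comp_i$, or $\alpha_u \in \{\pm 1\}$ for all $u \in \Comp_i$.

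For the second claim, suppose $\Comp_i$ contains at least one unstable node. If $\Comp_i \subseteq V_0$, then by Lemma~\ref{prop0} every unstable node $u$ in $\Comp_i$ has $\alpha_u = 0$, which by the second sentence of Lemma~\ref{prop0} forces $u$ to be \emph{unmatched} in $M$ (since if $u$ were matched and unstable, we would have $\alpha_u = -1$); so none of the unstable nodes in $\Comp_i$ is matched. If instead $\Comp_i \subseteq V_{\pm}$, then every unstable node $u$ in $\Comp_i$ has $\alpha_u \in \{\pm 1\} \neq 0$, so by the contrapositive of ``$u$ unmatched $\Rightarrow \alpha_u = 0$'' every such $u$ must be matched in $M$; so all unstable nodes in $\Comp_i$ are matched. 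This exhausts the two cases (guaranteed by the first claim), giving the dichotomy.

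The only genuine subtlety — and the step I would be most careful about — is justifying cleanly that the tight-edge equation from Lemma~\ref{prop0} together with the integrality $\vec{\alpha} \in \{0,\pm1\}^n$ rules out $0$-to-$(\pm1)$ edges; this is just the case analysis on $\wt_M(a,b) \in \{0,\pm 2\}$ noted above, but it is the crux that makes the components ``homogeneous''. Everything else is bookkeeping with Lemma~\ref{prop0}. One should also note that we are using that $\vec\alpha$ is a witness lying in $\{0,\pm1\}^n$, which is available for \emph{any} popular matching by Lemma~\ref{prop0} (or the preceding lemma of \cite{Kav16}); the statement of Lemma~\ref{prop1} is made relative to such a fixed witness $\vec{\alpha}$ of $M$, exactly as set up in the two sentences preceding the lemma.
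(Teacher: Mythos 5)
Your proof is correct. The paper does not reproduce a proof of this lemma (it is cited as Lemma~\ref{prop1} from~\cite{Kav18}), so there is no paper-internal argument to compare against; but your derivation is the natural one and almost certainly matches the original. The crux — that a popular edge $(a,b)$ forces $\alpha_a+\alpha_b = \wt_M(a,b) \in \{0,\pm 2\}$, which together with $\vec\alpha\in\{0,\pm1\}^n$ forbids a $0$-to-$\pm1$ pairing across any edge of $F_G$ and hence makes each component homogeneous — is exactly right, and the second claim then falls out of the unstable-node clause of Lemma~\ref{prop0} by contrapositive in each of the two homogeneous cases.
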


The following definition marks the state of each connected component  $\Comp_i$ in $F_G$ as ``zero'' or ``unit'' in $\vec{\alpha}$
--- this classification will be useful to us in our hardness reduction.

\begin{definition}
  \label{def:stab-domn}
A connected component $\Comp_i$ in $F_G$ is in {\em zero state} in $\vec{\alpha}$ if  $\alpha_u = 0$ for all nodes $u \in \Comp_i$.
Similarly, $\Comp_i$ in $F_G$ is in {\em unit state} in $\vec{\alpha}$ if  $\alpha_u \in \{\pm 1\}$ for all nodes $u \in \Comp_i$.
\end{definition}

\subsubsection{Certificates in non-bipartite graphs.}
The following theorem shows that the sufficient condition in Theorem~\ref{thm:witness} certifies popularity in non-bipartite graphs as well.

\begin{theorem}
\label{thm:non-bipartite}
Let $M$ be any matching in $G = (V, E)$. The matching $M$ is popular if there exists $\vec{\alpha} \in \mathbb{R}^{|V|}$ such that
$\sum_{u \in V} \alpha_u = 0$ and
  \begin{eqnarray*}
    \alpha_u + \alpha_v & \ \ \ge \ \ & \wt_{M}(u,v) \ \ \ \forall\, (u,v)\in E\\
    \alpha_u & \ \ \ge \ \ & \wt_M(u,u) \ \ \ \forall\, u\in V.
  \end{eqnarray*}  
\end{theorem}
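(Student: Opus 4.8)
The plan is to mimic the bipartite argument behind Theorem~\ref{thm:witness}, but replacing LP duality (which gives an ``if and only if'' in bipartite graphs) with a direct counting argument that only yields the ``if'' direction in the general case. First I would recall from Section~\ref{sec:certificates} that for any matching $N$ in $G$ we have $\Delta(N,M) = \wt_M(\tilde N)$, where $\tilde N$ is the perfect matching of the self-loop-augmented graph $\tilde G$ corresponding to $N$. Thus to prove $M$ is popular it suffices to show that $\wt_M(\tilde N) \le 0$ for every perfect matching $\tilde N$ of $\tilde G$; equivalently, $0$ is the maximum weight of a perfect matching of $\tilde G$ under $\wt_M$ (the matching $\tilde M$ itself, together with self-loops on $M$-exposed nodes, achieves weight $0$ since $\wt_M(e)=0$ on $M$-edges and on self-loops of exposed nodes).

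Next I would use the vector $\vec\alpha$ as a feasible dual solution to bound $\wt_M(\tilde N)$ from above. Given any perfect matching $\tilde N$ of $\tilde G$, write it as a disjoint union of edges $(u,v)$ with $u\neq v$ and self-loops $(u,u)$. Summing the inequalities $\alpha_u + \alpha_v \ge \wt_M(u,v)$ over the non-loop edges of $\tilde N$ and $\alpha_u \ge \wt_M(u,u)$ over the loops of $\tilde N$, and observing that each node of $V$ is covered exactly once by $\tilde N$, the left-hand sides add up to $\sum_{u\in V}\alpha_u = 0$. Hence $\wt_M(\tilde N) \le 0$, which by the identity $\Delta(N,M) = \wt_M(\tilde N)$ gives $\Delta(N,M)\le 0$, i.e. $\phi(N,M)\le \phi(M,N)$, for every matching $N$. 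By Definition~\ref{pop-def} this means $M$ is popular.

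The one point that needs care — and the reason the statement is only an implication, not an equivalence — is that in a non-bipartite graph the LP relaxation of the maximum weight perfect matching problem on $\tilde G$ is not integral, so the existence of a feasible $\vec\alpha$ with objective $0$ does not follow from popularity; only the converse direction of the argument goes through. So there is no real obstacle in proving the stated ``if'' direction: the summation-over-$\tilde N$ step is the whole proof, and it is essentially the weak-duality inequality. The only mild subtlety is bookkeeping the self-loops correctly (each $M$-exposed node contributes a loop of weight $0$, each $M$-covered node not paired to itself in $\tilde N$ contributes to exactly one non-loop edge), which is routine given the setup already laid out in Section~\ref{sec:certificates}.
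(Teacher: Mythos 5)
Your proposal is correct and takes essentially the same approach as the paper: the paper invokes weak duality for the max-weight perfect matching LP on $\tilde G$ with weights $\wt_M$, and your summation over the edges and loops of $\tilde N$ is precisely the elementary form of that weak-duality inequality. Your closing remark about why the converse fails in non-bipartite graphs (the LP relaxation is not integral without odd-set constraints) matches the paper's understanding as well.
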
  

The proof of Theorem~\ref{thm:non-bipartite} follows by considering the max-weight perfect matching LP in the graph $\tilde{G}$ with edge weight
function $\wt_M$ as the primal LP.
It is easy to see that if there exists a vector $\vec{\alpha} \in \mathbb{R}^{|V|}$ as given above then the optimal value of the dual LP is at most 0,
equivalently, $\wt_M(\tilde{N}) \le 0$ for all matchings $N$ in $G$, i.e., $M$ is a popular matching.

If $M$ is a popular matching that admits $\vec{\alpha} \in \mathbb{R}^{|V|}$ satisfying the conditions in Theorem~\ref{thm:non-bipartite},
we will call $\vec{\alpha}$ a  {\em witness} of $M$.

Note that any stable matching in $G$ has $\vec{0}$ as a witness. 
The witness of the matching $M$ described in Section~\ref{sec:comb-prelims}
on the nodes $d_0,d_1,d_2,d_3$ is $\vec{\alpha}$ where $\alpha_{d_1} = \alpha_{d_3} = 1$ and $\alpha_{d_0} = \alpha_{d_2} = -1$.
Consider the popular (but not strongly dominant) matching $M = \{(a,d),(b,c)\}$ in the other instance described in Section~\ref{sec:comb-prelims}:
this matching $M$ does not admit any witness as given in Theorem~\ref{thm:non-bipartite}.

We will show in Section~\ref{section4} that every strongly dominant matching $M$ admits a witness $\vec{\alpha}$ as given in Theorem~\ref{thm:non-bipartite};
moreover, there will be a witness $\vec{\alpha}$ such that for every node $u$ matched in $M$, $\alpha_u = \pm 1$.

\section{Strongly dominant matchings}
\label{section4}
In this section we generalize the max-size popular matching algorithm for bipartite graphs~\cite{Kav12} to solve the strongly dominant matching 
problem in all graphs. We show a surprisingly simple reduction from the strongly dominant matching
problem in $G = (V,E)$ to the stable matching problem in a new graph $G' = (V,E')$. Thus Irving's algorithm~\cite{Irv85},
which efficiently solves the stable matching problem in all graphs, when run in $G'$, solves our problem.

The graph $G'$ can be visualized as the bidirected graph corresponding to $G$. The node set of $G'$ is the same as that of $G$.
For every $(u,v) \in E$, there will be 2 edges in $G'$ between $u$ and $v$: one directed from $u$ to $v$ which will be denoted by $(u^+,v^-)$
or $(v^-,u^+)$ and the other directed from $v$ to $u$ which will be denoted by $(u^-,v^+)$ or $(v^+,u^-)$.

For any $u \in V$, if $u$'s preference list in $G$ is $v_1 \succ v_2 \succ \cdots \succ v_k$ then $u$'s
preference list in $G'$ is $v^-_1 \succ v^-_2 \succ \cdots \succ v^-_k \succ v^+_1 \succ v^+_2 \succ \cdots \succ v^+_k$.
The neighbor $v_i^-$ corresponds to the edge $(u^+,v_i^-)$ and the neighbor $v_i^+$ corresponds to the edge $(u^-,v_i^+)$.
Thus $u$ prefers {\em outgoing} edges to {\em incoming} edges: among outgoing edges (similarly, incoming edges), its order is its original
preference order.

			

\begin{itemize}
\item A matching $M'$ in $G'$ is a subset of $E'$ such that for each $u \in V$, $M'$ contains at most one edge incident to $u$, i.e.,
     at most one edge in $\{(u^+,v^-),(u^-,v^+): v \in \Nbr(u)\}$ is in $M'$, where $\Nbr(u)$ is the set of $u$'s neighbors in $G$.
\item  For any matching $M'$ in $G'$, define the {\em projection} $M$ of $M'$ as follows:

\[M = \{(u,v): (u^+,v^-)\ \mathrm{or}\ (u^-,v^+)\ \mathrm{is\ in}\ M'\}.\]

It is easy to see that $M$ is a matching in $G$.
\end{itemize}

\begin{definition}
  A matching $M'$ is stable in $G'$ if for every edge $(u^+,v^-) \in E'\setminus M'$: either (i)~$u$ is matched in $M'$
  to a neighbor ranked better than $v^-$ or (ii)~$v$ is matched in $M'$ to a neighbor ranked better than $u^+$.
\end{definition}

We now present our algorithm to find a strongly dominant matching in $G = (V,E)$. 

\begin{enumerate}
\item Build the bidirected graph $G' = (V,E')$. 
\item Run Irving's stable matching algorithm in $G'$.
\item If a stable matching $M'$ is found in $G'$ then
  return the projection $M$ of $M'$.
  
  Else return ``$G$ has no strongly dominant matching''.
\end{enumerate}  

Note that
running Irving's algorithm in the bidirected graph $G'$ is exactly the same as running Irving's algorithm in the
simple undirected graph $H$ that has {\em three} copies of each node $u \in V$:
these are $u^+, u^-$, and $d(u)$. Corresponding to each edge $(u,v)$ in $G$, there
will be the two edges $(u^+,v^-)$ and $(u^-,v^+)$ in $H$ and for each $u \in V$, the graph $H$ also has the edges
$(u^+,d(u))$ and $(u^-,d(u))$.

If $u$'s preference list in $G$ is $v_1 \succ v_2 \succ \cdots \succ v_k$ then $u^+$'s
preference list in $H$ will be $v^-_1 \succ v^-_2 \succ \cdots \succ v^-_k \succ d(u)$ and $u^-$'s
preference list will be $d(u) \succ v^+_1 \succ v^+_2 \succ \cdots \succ v^+_k$.
The preference list of $d(u)$ will be $u^+ \succ u^-$. Thus in any stable matching in $H$, one of
$u^+,u^-$ has to be matched to $d(u)$.

Before we prove the correctness of our algorithm, we will characterize strongly
dominant matchings in terms of their witnesses.

\begin{theorem}
  \label{lem:strongly-dominant}
  A matching $M$ is strongly dominant in $G$ if and only if there exists $\vec{\alpha}$ that satisfies Theorem~\ref{thm:non-bipartite} 
  such that $\alpha_u = \pm 1$ for all nodes $u$ matched in $M$ and $\alpha_u = 0$ for all $u$ unmatched in~$M$.
\end{theorem}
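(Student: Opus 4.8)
The plan is to prove both implications by exhibiting explicit witnesses and partitions. For the forward direction, suppose $M$ is strongly dominant, so there is a partition $(L,R)$ of $V$ as in Definition~\ref{def:strong-dominant}. I would define $\vec{\alpha}$ by setting $\alpha_u = -1$ for $u \in L$ and $\alpha_u = +1$ for $u \in R$, except that I must first handle unmatched nodes: by condition~(ii) every node of $R$ is matched, so any unmatched node lies in $L$; for such nodes I set $\alpha_u = 0$ instead. The main thing to check is that this $\vec{\alpha}$ satisfies the inequalities in Theorem~\ref{thm:non-bipartite}. For the self-loop constraints, $\wt_M(u,u) = -1$ if $u$ is matched and $0$ otherwise, and in either case $\alpha_u \ge \wt_M(u,u)$ holds. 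For an edge $(u,v)$: if it is $(+,+)$ then by~(iii) both endpoints are in $R$, so $\alpha_u + \alpha_v = 2 = \wt_M(u,v)$; if it is $(-,-)$ then $\wt_M(u,v) = -2$ and the worst case is when both endpoints have $\alpha$-value $-1$ (impossible if one is unmatched, which only raises the sum), giving $\alpha_u + \alpha_v \ge -2$; all remaining edges have $\wt_M = 0$ and the only way to violate $\alpha_u + \alpha_v \ge 0$ would be both endpoints in $L$ with value $-1$, but~(iv) forces an $L\times L$ edge to be $(-,-)$, a contradiction. The one subtlety is the sum condition $\sum_u \alpha_u = 0$: since $M \subseteq L \times R$ is a matching saturating $R$, the matched nodes pair up $L$-nodes with $R$-nodes, so the $+1$'s and $-1$'s among matched nodes cancel, and unmatched nodes (all in $L$) contribute $0$; hence the sum is $0$. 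So this $\vec{\alpha}$ works.

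For the converse, suppose $M$ is popular with a witness $\vec{\alpha}$ taking values $\pm 1$ on matched nodes and $0$ on unmatched nodes. Define $R = \{u : \alpha_u = +1\}$ and $L = V \setminus R = \{u : \alpha_u \in \{0,-1\}\}$; note all unmatched nodes are in $L$, giving~(ii). I then verify the four conditions. For~(i), $M \subseteq L \times R$: an edge $(u,v) \in M$ has $\wt_M(u,v) = 0$, and by complementary slackness (the constraint is tight for matched edges, as it must be for the witness to be dual-optimal with $M$ primal-optimal) $\alpha_u + \alpha_v = 0$; since each of $\alpha_u,\alpha_v \in \{\pm 1\}$ (both endpoints are matched), exactly one is $+1$ and one is $-1$, so one endpoint is in $R$ and the other in $L$. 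For~(iii), if $(u,v)$ is $(+,+)$ then $\wt_M(u,v) = 2 \le \alpha_u + \alpha_v \le 2$, forcing $\alpha_u = \alpha_v = +1$, i.e. $u,v \in R$. For~(iv), if $(u,v) \in L \times L$ then $\alpha_u + \alpha_v \le 0$; if the edge were $(+,+)$ we'd need the sum to be $\ge 2$, contradiction, and if it were $(+,-)$ or $(-,+)$ it has $\wt_M = 0$ but then it is a blocking-like edge at one endpoint — I should double check that a $(+,-)$ edge in $L\times L$ is consistent, but in fact the claim is only that $L\times L$ edges are $(-,-)$, so I need to rule out $(+,-)$ as well; here I would argue that if $(u,v)$ has label $+$ at $u$ and $-$ at $v$ with both in $L$, then consider the constraint together with popularity — actually the cleanest route is to note $\wt_M(u,v)=0$ forces $\alpha_u+\alpha_v \ge 0$, but both in $L$ means $\alpha_u,\alpha_v \le 0$ so $\alpha_u = \alpha_v = 0$, meaning both $u,v$ are unmatched, and then an edge between two unmatched nodes with a $+$ label contradicts maximality/popularity of $M$ (such an edge could be added). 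So the only surviving label for $L\times L$ edges is $(-,-)$, giving~(iv).

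The step I expect to be the main obstacle is pinning down condition~(iv) rigorously, specifically ruling out $(+,-)$-labeled edges inside $L \times L$, since the $\pm 1$/$0$ structure of the witness alone does not immediately forbid them — I anticipate needing to invoke either complementary slackness more carefully (an edge with $\wt_M = 0$ and $\alpha_u + \alpha_v = 0$ on both-unmatched endpoints) or a direct popularity/maximality argument (a popular matching leaves no augmenting-type structure, and in particular an edge with a $+$ label at an unmatched endpoint would let us do better). A secondary point requiring care is the use of complementary slackness in the converse: I am using that for a popular matching $M$ with witness $\vec{\alpha}$, the constraint $\alpha_u + \alpha_v \ge \wt_M(u,v)$ is \emph{tight} on every edge of $M$; this follows because $\wt_M(e) = 0$ for $e \in M$ so $\tilde M$ is an optimal (zero-weight) perfect matching in $\tilde G$, and $\vec{\alpha}$ is dual-optimal, so complementary slackness applies — I should state this explicitly rather than assert it. Once these two points are handled, the remaining verifications are the routine case analyses sketched above.
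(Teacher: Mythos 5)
Your proof is correct and follows essentially the same approach as the paper: the same explicit witness in the forward direction, and the same use of complementary slackness and the partition $R=\{u:\alpha_u=1\}$, $L=\{u:\alpha_u\in\{0,-1\}\}$ in the converse. The one place your route differs is in establishing condition~(iv): you derive that a hypothetical non-$(-,-)$ edge in $L\times L$ would force both endpoints to be unmatched and then contradict popularity, whereas the paper argues more directly that at least one endpoint of any $L\times L$ edge must be matched (no popular matching leaves an edge between two unmatched nodes), giving $\wt_M(u,v)\le\alpha_u+\alpha_v\le-1$ and hence $\wt_M(u,v)=-2$; both arguments hinge on the same fact and are equally valid.
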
  
\begin{proof}
Let $M$ be a strongly dominant matching in $G = (V,E)$. So $V$ can be partitioned into $L \cup R$ such that properties~(i)-(iv) in
Definition~\ref{def:strong-dominant} are satisfied. 
We will construct $\vec{\alpha}$ as follows. For $u \in V$:
\begin{itemize}
\item if $u \in R$ then set $\alpha_u = 1$
\item else set $\alpha_u = -1$ for $u$ matched in $M$ and $\alpha_u = 0$ for  $u$ unmatched in $M$.
\end{itemize}

Since $M$ matches all nodes in $R$, all nodes unmatched in $M$ are in $L$. Thus $\alpha_u = 0$ for all $u$ unmatched in $M$
and $\alpha_u = \pm 1$ for all $u$ matched in $M$. For any edge $(u,v) \in M$, since $M \subseteq L \times R$, 
$(\alpha_u,\alpha_v) \in \{(1,-1),(-1,1)\}$ and so $\alpha_u + \alpha_v = 0$. Thus $\sum_{u \in V}\alpha_u = 0$.

We will now show that $\vec{\alpha}$ satisfies Theorem~\ref{thm:non-bipartite}.
We claim that $\alpha_u \ge \wt_M(u,u)$.
This is because $\alpha_u = 0 = \wt_M(u,u)$ for $u$ left unmatched in $M$ and $\alpha_u \ge -1 = \wt_M(u,u)$ for $u$ matched in $M$. 
We will now show that $\alpha_u + \alpha_v \ge \wt_M(u,v)$ for any edge $(u,v)$ in $G$, i.e., the edge $(u,v)$ is {\em covered}.
Recall that $\wt_M(u,v) \in \{0, \pm 2\}$.
\begin{itemize}
\item Since $\wt_M(e) \le 2$ for any edge $e$ and $\alpha_u = 1$ for all $u \in R$, all edges in $R \times R$ are covered. 
\item We also know that any edge in $L \times L$ is labeled $(-,-)$, i.e., $\wt_M(u,v) = -2$ for any edge 
$(u,v) \in L \times L$. Since $\alpha_u \ge -1$ for any $u \in L$, edges in $L \times L$ are covered. 
\item We also know that all $(+,+)$ edges are in $R \times R$ and so $\wt_M(u,v) \le 0$ for all 
$(u,v) \in L \times R$. Since $\alpha_u \ge -1$ for $u \in L$ and $\alpha_v = 1$ for $v \in R$,  all edges in $L \times R$ are covered.
\end{itemize}

We will now show the converse. Let $M$ be a matching with a witness $\vec{\alpha}$ as given in the statement of the theorem.
The matching $M$ is popular (by Theorem~\ref{thm:non-bipartite}).
Note that we can interpret $\vec{\alpha}$ as the optimal solution to the dual LP of the maximum weight perfect matching LP in $\tilde{G}$ with weights given by $\wt_M(u,v)$, of which $\tilde M$ is an optimal solution (since $M$ is popular).
Hence, the pair $(\tilde M,\vec{\alpha})$ satisfy complementary slackness conditions.

In order to show $M$ is strongly dominant, we will obtain a partition $(L, R)$ of $V$ as follows: let
$R = \{u: \alpha_u = 1\}$ and $L = \{u: \alpha_u \ \text{is\ either}\ 0 \ \text{or -1}\}$.
Complementary slackness conditions on the dual LP imply that if $(u,v) \in M$ 
then $\alpha_u + \alpha_v = \wt_M(u,v) = 0$. Since $u,v$ are matched, $\alpha_u,\alpha_v \in \{\pm 1\}$; so one of $u,v$ is in $L$ and the
other is in $R$. Thus $M \subseteq L \times R$.

We have $\wt_M(u,v) \le \alpha_u + \alpha_v$ for every $(u,v) \in E$. There cannot be any edge between 2 nodes left unmatched in
$M$ as that would contradict $M$'s popularity. So $\wt_M(u,v) \le  \alpha_u + \alpha_v \le -1$ for all 
$(u,v) \in E \cap (L \times L)$. Since $\wt_M(u,v) \in \{0,\pm 2\}$, $\wt_M(u,v) = -2$ for all edges 
$(u,v)$ in $L \times L$. In other words, every edge in $L \times L$ is labeled $(-,-)$.

Moreover, any $(+,+)$ edge can be present only
in $R \times R$ since $\wt_M(u,v) \le \alpha_u + \alpha_v \le 1$ for all edges $(u,v) \in L \times R$.
Finally, since $\alpha_u = \wt_M(u,u) = 0$ for all $u$ 
unmatched in $M$ (by complementary slackness conditions on the dual LP)
and every node $u \in R$ satisfies $\alpha_u = 1$, it means that all nodes in $R$ are matched in $M$. \qed
\end{proof}

\subsection{Correctness of our algorithm}

We will first show that if our algorithm returns a matching $M$, then $M$ is a strongly
dominant matching in $G$.

\begin{lemma}
  \label{lemma2:main}
If $M'$ is a stable matching in $G'$ then the projection of $M'$ is a strongly dominant matching in $G$.
\end{lemma}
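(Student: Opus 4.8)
The plan is to start from a stable matching $M'$ in $G'$, let $M$ be its projection, and exhibit a witness $\vec{\alpha}$ of the form demanded by Theorem~\ref{lem:strongly-dominant}; by that theorem $M$ is then strongly dominant. The natural candidate is dictated by the $+/-$ copies: for a node $u\in V$, look at which copy of $u$ is used by $M'$. Since $d(u)$ must be matched in any stable matching of $H$ to exactly one of $u^+,u^-$, say we set $\alpha_u = 1$ when $d(u)$ is matched to $u^-$ (so $u$'s \emph{outgoing} copy $u^+$ is free to be matched into $M$, i.e.\ $u$ ``proposes''/is on the $R$-side) and $\alpha_u = -1$ when $d(u)$ is matched to $u^+$ but $u^-$ is matched into $M$, and $\alpha_u = 0$ when $u$ is left unmatched by $M$ (equivalently $u^-$ is also matched to $d(u)$-side, i.e.\ $u$ is $M$-exposed). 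Equivalently: $R=\{u: (u^+,\cdot)\in M'\text{ for some neighbor}\}$ and $L$ = the rest, with the refinement inside $L$ between matched (value $-1$) and unmatched (value $0$). First I would check the easy bookkeeping: every matched edge $(u,v)\in M$ comes from either $(u^+,v^-)$ or $(u^-,v^+)$ in $M'$, and in either case one endpoint contributes its outgoing copy and the other its incoming copy, so exactly one of $u,v$ lies in $R$ and the other in $L$; hence $M\subseteq L\times R$, $\alpha_u+\alpha_v=0$ on $M$, and summing gives $\sum_u\alpha_u=0$. Also, every node unmatched in $M$ is in $L$ with $\alpha_u=0=\wt_M(u,u)$, and matched nodes have $\alpha_u=\pm1\ge -1=\wt_M(u,u)$, so the loop constraints hold.

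The heart of the argument is translating stability of $M'$ in $G'$ (equivalently in $H$) into the edge inequalities $\alpha_u+\alpha_v\ge \wt_M(u,v)$ for every $(u,v)\in E$. I would take an arbitrary edge $(u,v)\in E\setminus M$ (edges of $M$ are already handled) and split on its label relative to $M$. The only dangerous case is a $(+,+)$ edge, where $\wt_M(u,v)=2$, which forces $\alpha_u=\alpha_v=1$, i.e.\ both $u,v\in R$. Suppose not: say $u\notin R$. Since $(u,v)$ is $(+,+)$, both $u$ and $v$ prefer each other to their $M$-partners. Consider the $G'$-edge $(u^+,v^-)\notin M'$. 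For $M'$ to be stable this edge must not block: either $u$ is $M'$-matched to something it prefers to $v^-$, or $v$ is $M'$-matched to something it prefers to $u^+$. Because $u\notin R$, $u^+$ is not matched into $M$, so $u^+$ is matched to $d(u)$ (or $u$ is unmatched, an even more exposed case); in $u^+$'s preference list $v^-$ ranks above $d(u)$ whenever $v\in\Nbr(u)$ — and here it does — so $u$ is \emph{not} matched to anything better than $v^-$; thus stability forces $v$ to be $M'$-matched above $u^+$, i.e.\ $v$'s $M'$-partner is some $w^-$ with $w\succ u$ in $v$'s list. But then $M(v)=w\succ u$, contradicting that $v$ prefers $u$ to $M(v)$ (the $(+,+)$ assumption). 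Hence $u\in R$, and symmetrically $v\in R$, giving $\alpha_u+\alpha_v=2=\wt_M(u,v)$. For a $(+,-)$ edge $\wt_M(u,v)=0$, and one checks $\alpha_u+\alpha_v\ge0$ by a similar but easier stability argument on the appropriate directed copy (the ``$+$'' endpoint cannot be in the $0$-part of $L$ without blocking). For a $(-,-)$ edge the label is not even present in $G_M$, but the inequality $\alpha_u+\alpha_v\ge -2$ holds trivially since $\alpha\ge-1$ everywhere. This covers all cases, so $\vec{\alpha}$ is a valid witness.

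Finally I would invoke Theorem~\ref{lem:strongly-dominant}: $M$ admits a witness with $\alpha_u=\pm1$ on matched nodes and $\alpha_u=0$ on unmatched nodes, hence $M$ is strongly dominant in $G$, which is exactly the claim of Lemma~\ref{lemma2:main}. The main obstacle I anticipate is the $(+,+)$ case above — specifically, being careful about which copy ($u^+$ versus $u^-$) is relevant, handling the sub-case where $u$ is entirely $M$-unmatched uniformly with the matched-but-in-$L$ sub-case, and making sure the preference-list translation ($v^-$ above $d(u)$, outgoing before incoming) is used correctly so that the ``not blocking'' dichotomy really does pin both endpoints into $R$. The $(+,-)$ case requires a parallel but lighter version of the same reasoning, and the $(-,-)$ and matched-edge cases are immediate.
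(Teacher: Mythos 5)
Your overall strategy --- define $\vec{\alpha}$ by which directed copy of each node is used in $M'$, verify the edge-covering inequalities via stability of $M'$ in $G'$, then invoke Theorem~\ref{lem:strongly-dominant} --- coincides with the paper's. The paper verifies coverage by case-splitting on whether $(a^+,\ast)$, $(a^-,\ast)$, or neither belongs to $M'$; you split on the label of the edge $(u,v)$ instead, which is a legitimate alternative organization. However, your $(+,+)$ case contains a real error. You assert that stability of $(u^+,v^-)$ forces ``$v$'s $M'$-partner to be some $w^-$ with $w\succ u$ in $v$'s list.'' That is not what ``matched above $u^+$'' means: in $v$'s preference list in $G'$, \emph{every} neighbor of the form $w^-$ precedes \emph{every} neighbor of the form $z^+$, so $v$ being matched above $u^+$ is already satisfied by $(v^+,w^-)\in M'$ for an \emph{arbitrary} $w$, including $w$ that $v$ likes less than $u$. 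So non-blocking of $(u^+,v^-)$ only yields $v\in R$, not $w\succ u$, which does not contradict your hypothesis $u\notin R$; your reductio does not close. To actually force $u\in R$ you must look at the other copy of the edge, $(u^-,v^+)$: there, ``$v$ matched above $u^-$'' genuinely forces $(v^+,w^-)\in M'$ with $w\succ u$, which clashes with the $(+,+)$ label at $v$. The paper avoids this pitfall by invoking non-blocking of \emph{both} $(a^+,b^-)$ and $(a^-,b^+)$ in the relevant subcase.

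Your $(+,-)$ case is hand-waved with a parenthetical that is not the right invariant: ``the `+' endpoint cannot be in the $0$-part of $L$'' is neither necessary (if $\alpha_v=1$ then $\alpha_u=0$ is harmless) nor sufficient (it does not exclude $\alpha_u=\alpha_v=-1$, which would give $\alpha_u+\alpha_v=-2<0$). What is actually needed is: whenever the $+$ endpoint $u$ has $\alpha_u\le 0$, the $-$ endpoint $v$ must satisfy $\alpha_v=1$; and again this comes from non-blocking of $(u^-,v^+)$, the copy your sketch ignores. The bookkeeping parts --- $M\subseteq L\times R$, $\sum_u\alpha_u=0$, the self-loop constraints, and the $(-,-)$ case --- are correct, and once the two problematic label cases are redone your argument matches the paper's.
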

\begin{proof}
Let $M$ be the projection of $M'$. In order to show that $M$ is a strongly dominant matching in $G$,
we will construct a witness $\vec{\alpha}$ as given in Theorem~\ref{lem:strongly-dominant}.

Set $\alpha_u = 0$ for all nodes $u$ left unmatched in $M$. For each node $u$ matched in $M$:
  \begin{itemize}
  \item  if $(u^+,\ast) \in M'$ then set $\alpha_u = 1$; else set $\alpha_u = -1$.
  \end{itemize}

  Note that $\sum_{u\in V} \alpha_u = 0$ since for each edge $(a,b) \in M$ (so either $(a^+,b^-)$ or $(a^-,b^+)$ is in $M'$),
  we have $\alpha_a + \alpha_b = 0$ and for each node $u$ that is unmatched in $M$, we have $\alpha_u = 0$ . We also have
  $\alpha_u \ge \wt_M(u,u)$ for all $u \in V$ since 
  (i)~$\alpha_u = 0 = \wt_M(u,u)$ for all $u$ left unmatched in $M$ and (ii)~$\alpha_u \ge -1 = \wt_M(u,u)$ for all $u$ matched in $M$.
  
We will now  show that for every $(a,b) \in E$, $\alpha_a + \alpha_b \ge \wt_M(a,b)$.

\begin{enumerate}
  \item Suppose $(a^+,\ast) \in M'$. So $\alpha_a = 1$. We will consider 3 subcases here. 
    \begin{itemize}
      \item The first subcase is that
      $(b^+,\ast) \in M'$. So $\alpha_b = 1$. Since $\wt_M(a,b) \le~2$, it follows that
      $\alpha_a + \alpha_b = 2 \ge \wt_M(a,b)$.
      \item The second subcase is that $(b^-,\ast) \in M'$. So $\alpha_b = -1$. If $(a^+,b^-) \in M'$ then  
           $\wt_M(a,b) = 0 = \alpha_a + \alpha_b$. So assume $(a^+,c^-)$ and $(b^-,d^+)$ belong to $M'$.       
       Since $M'$ is stable, the edge $(a^+,b^-)$ does not block $M'$. Thus either
       (i)~$a$ prefers $c^-$ to $b^-$ or (ii)~$b$ prefers $d^+$ to $a^+$. Hence $\wt_M(a,b) \in \{0,-2\}$ and so
       $\alpha_a + \alpha_b = 0 \ge \wt_M(a,b)$.
       \item The third subcase is that $b$ is unmatched in $M$. So $\alpha_b = 0$. 
       Since $M'$ is stable, the edge $(a^+,b^-)$ does not
       block $M'$. Thus $a$ prefers its partner in $M'$ to $b^-$ and so $\wt_M(a,b) = 0 < \alpha_a + \alpha_b$.
   \end{itemize}
   \item Suppose  $(a^-,\ast) \in M$. There are 3 subcases here as before. The case where $(b^+,\ast) \in M$ is
       totally analogous to the case where $(a^+,\ast)$ and $(b^-,\ast)$ are in $M$. So we will consider the remaining 2 subcases
       here.
    \begin{itemize}
       \item The first subcase is that $(b^-,\ast) \in M'$. So $\alpha_b = -1$. Let $(a^-,c^+)$ and $(b^-,d^+)$ belong to $M'$.
       Since $M'$ is stable, the edge $(a^+,b^-)$ does not block $M'$. So $b$ prefers $d^+$ to $a^+$.
       Similarly, the edge $(a^-,b^+)$ does not block $M'$. Hence $a$ prefers $c^+$ to $b^+$.
       Thus {\em both} $a$ and $b$ prefer their respective partners in $M$ to each other, i.e., $\wt_M(a,b) = -2$. So we have
       $\alpha_a + \alpha_b = -2 = \wt_M(a,b)$.

       \item The second subcase is that $b$ is unmatched in $M$. Then the edge $(a^+,b^-)$ {\em blocks} $M'$
       since $a$ prefers $b^-$ to $c^+$ (for any neighbor $c$) and $b$ prefers to be matched to $a^+$ than be left unmatched.
       Since $M'$ is stable and has no blocking edge, this means that this subcase does not arise.
    \end{itemize}   
 \item Suppose $a$ is unmatched in $M$. Then $(b^+,\ast) \in M'$ (otherwise $(a^-,b^+)$ blocks $M'$); moreover,
       $b$ prefers its partner in $M'$ to $a^-$. So we have $\wt_M(a,b) = 0 < \alpha_a + \alpha_b$.
     
\end{enumerate}
Thus we always have $\alpha_a + \alpha_b \ge \wt_M(a,b)$. 
Since $\vec{\alpha}$ satisfies the conditions in Theorem~\ref{lem:strongly-dominant}, $M$ is a strongly dominant matching in $G$.
\end{proof}

We will now show that if $G'$ has no stable matching, then $G$ has no strongly dominant matching.
\begin{lemma}
  \label{lemma1:main}
  If $G$ admits a strongly dominant matching then $G'$ admits a stable matching.
\end{lemma}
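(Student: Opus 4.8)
The plan is to prove the contrapositive directly by constructing a stable matching $M'$ in $G'$ from a strongly dominant matching $M$ in $G$. Since $M$ is strongly dominant, we are given a partition $(L,R)$ of $V$ satisfying properties (i)--(iv) of Definition~\ref{def:strong-dominant}; in particular $M \subseteq L \times R$ and $M$ saturates $R$. The natural candidate for $M'$ is the ``oriented'' version of $M$ that always points from $L$ to $R$: for each edge $(\ell,r)\in M$ with $\ell\in L$ and $r\in R$, put the directed edge $(\ell^+,r^-)$ into $M'$, i.e.\ the endpoint in $R$ uses an \emph{incoming} edge and the endpoint in $L$ uses an \emph{outgoing} edge. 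This is clearly a matching in $G'$ whose projection is $M$, so it remains only to verify stability.

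The key step is to check that no edge $(u^+,v^-)\in E'\setminus M'$ blocks $M'$, for which I would use the witness $\vec\alpha$ of $M$ produced by Theorem~\ref{lem:strongly-dominant} together with the label information encoded in $(L,R)$. Concretely, $\alpha_u=1$ exactly when $u\in R$, and $\alpha_u\in\{0,-1\}$ when $u\in L$. I would split into cases according to where $u$ and $v$ lie. If $u\in L$ and $v\in R$: if $(u,v)\in M$ we are done; otherwise, since by (iii) the edge $uv$ is not $(+,+)$, either $u$ prefers its $M$-partner to $v$ or $v$ prefers its $M$-partner to $u$ (using that both are matched: $u$ is matched because if $u$ were $M$-exposed, $uv$ would be a label with $+$ at $u$ into a matched $v$, contradicting popularity via Theorem~\ref{thr:characterize-popular}(iii); $v\in R$ is matched by (ii)). Translating preferences in $G$ to preferences in $G'$ — recall $u$ ranks all outgoing edges above all incoming ones — gives that in $G'$ either $u$ prefers $M'(u)=(u^+,M(u)^-)$ to $v^-$, or $v$ prefers $M'(v)=(M(v)^+,v^-)$ to $u^+$, so $(u^+,v^-)$ does not block. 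The case $u\in R$, $v\in L$ is handled symmetrically but is easier, since $u^+$ is an outgoing edge from a node whose $M'$-edge is incoming, and any outgoing option at $u\in R$ into a node in $L$: here I would argue $u$'s partner in $M'$ is $(\ast^+,u^-)$, which $u$ strictly prefers to $u^+\!\cdot$-type... wait — more carefully, $u\in R$ is matched via an incoming edge, so $u$ would strictly prefer the outgoing edge $(u^+,v^-)$; hence blocking must be prevented at $v\in L$, and I would show $v$ prefers its $M'$-partner to $u^+$ because $vu$ being an $L\times R$ edge is not $(+,+)$, so the label of $vu$ at $v$ is $-$, meaning $v$ prefers $M(v)$ to $u$, i.e.\ prefers $(v^+,M(v)^-)$... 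I need to be careful which directed copy $v$ holds; in any case $v$'s preference in $G'$ ranks this partner above the incoming edge $(v^-,u^+)$. The remaining cases are $u,v\in L$: then $uv$ is $(-,-)$ by (iv), so both endpoints prefer their $M$-partners (both are matched or the edge would contradict popularity), and the edge is non-blocking; and $u,v\in R$: both are matched via incoming edges in $M'$, and I claim $(u^+,v^-)$ still cannot block — here I would invoke that $M$ is popular and use Theorem~\ref{thr:characterize-popular}, or more simply observe that if both $u,v\in R$ strictly preferred $(u^+,v^-)$ to their $M'$-partners then $uv$ would be a $(+,+)$ edge whose two endpoints could be rematched, and one traces this to a forbidden alternating structure in $G_M$ (a $(+,+)$ edge sitting on an $M$-alternating configuration), contradicting popularity.

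The main obstacle I expect is the bookkeeping in this last case ($u,v\in R$), since here $M'$ does \emph{not} simply ``orient away from'' the blocking edge, so stability does not follow from a one-line label argument; I would need to extract a forbidden $M$-alternating cycle or path of Theorem~\ref{thr:characterize-popular} from a putative blocking edge inside $R\times R$. A cleaner route, which I would try first, is to avoid case (v) entirely by choosing the orientation of $M'$ more cleverly or by appealing to the witness inequalities of Theorem~\ref{lem:strongly-dominant} abstractly: if $(u^+,v^-)$ blocked $M'$, then $u$ prefers $v$ to $M(u)$ and $v$ prefers $u$ to $M(v)$ (reading incoming/outgoing correctly), so $uv$ is a $(+,+)$ edge, hence $\wt_M(u,v)=2$; but $u,v\in R$ forces $\alpha_u+\alpha_v=2=\wt_M(u,v)$, so tightness alone is not a contradiction — one genuinely needs the combinatorial characterization. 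I would therefore fall back on Theorem~\ref{thr:characterize-popular}: the $(+,+)$ edge $uv$ together with the $M$-edges at $u$ and $v$ (and, if needed, continuing through $G_M$) yields either an alternating cycle through a $(+,+)$ edge or an alternating path between two $(+,+)$ edges, contradicting popularity of $M$. Once every case is dispatched, $M'$ is stable in $G'$, completing the proof.

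\begin{proof}[(Sketch of the intended argument)]
Given a strongly dominant $M$ with partition $(L,R)$, define $M' := \{(\ell^+,r^-) : (\ell,r)\in M,\ \ell\in L,\ r\in R\}$, a matching in $G'$ with projection $M$. To see $M'$ is stable, take $(u^+,v^-)\in E'\setminus M'$; if it blocked $M'$ then $u$ prefers $v$ to all its current options (in particular to $M(u)$ if $u$ is matched) and $v$ prefers $u$ similarly, which forces $uv$ to be a $(+,+)$ edge of $G$ with both endpoints matched by $M$. By Definition~\ref{def:strong-dominant}(iii), every $(+,+)$ edge lies in $R\times R$, so $u,v\in R$. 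Now the edges $uM(u)$, $uv$, $vM(v)$ form an $M$-alternating path through the $(+,+)$ edge $uv$; extending minimally in $G_M$ produces a configuration forbidden by Theorem~\ref{thr:characterize-popular} (an alternating cycle with a $(+,+)$ edge, or an alternating path delimited by two $(+,+)$ edges, or one starting at an exposed node and ending at a $(+,+)$ edge), contradicting the popularity of the strongly dominant matching $M$. Hence no such blocking edge exists, $M'$ is stable in $G'$, and so $G'$ admits a stable matching. \qed
\end{proof}
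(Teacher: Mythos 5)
Your construction orients $M'$ the wrong way around, and the fallback argument you propose to patch the resulting trouble in $R\times R$ does not work. Concretely, you put edges of $M'$ in the form $(\ell^+,r^-)$ with $\ell\in L$, $r\in R$, so nodes in $L$ hold their preferred \emph{outgoing} copy while nodes in $R$ hold their unpreferred \emph{incoming} copy. Now take a $(+,+)$ edge $(u,v)$ of $M$. By Definition~\ref{def:strong-dominant}(iii), $u,v\in R$, and both are $M$-matched (since $M$ saturates $R$). In your $M'$, both $u$ and $v$ hold incoming copies. Consider the directed edge $(u^+,v^-)$: at $u$ it is the outgoing option $v^-$, which $u$ prefers to its current incoming partner automatically; at $v$ it is the incoming option $u^+$, and since $(u,v)$ is $(+,+)$, $v$ prefers $u$ to $M(v)$, hence prefers $u^+$ to $M(v)^+$. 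So $(u^+,v^-)$ blocks $M'$. This is not a bookkeeping worry — you can see it in the $d_0,d_1,d_2,d_3$ instance from Section~\ref{sec:comb-prelims}: with $M=\{(d_0,d_1),(d_2,d_3)\}$, $L=\{d_0,d_2\}$, $R=\{d_1,d_3\}$, your $M'=\{(d_0^+,d_1^-),(d_2^+,d_3^-)\}$, and $(d_3^+,d_1^-)$ blocks it.

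The escape hatch you sketch (that a $(+,+)$ edge in $R\times R$, together with the two incident $M$-edges, ``extends minimally'' to one of the three forbidden configurations in Theorem~\ref{thr:characterize-popular}) is false. A single $(+,+)$ edge both of whose endpoints are $M$-matched is not a forbidden cycle, nor a path between two $(+,+)$ edges, nor a path from an exposed node to a $(+,+)$ edge. Indeed such edges are exactly what distinguishes a (strongly) dominant matching from a stable one; if they forced a forbidden structure, every strongly dominant matching would already be stable. You cannot derive a contradiction there because there is none to derive: the matching $M'$ you built is genuinely unstable.

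The paper's construction is the mirror image of yours: it assigns the outgoing copy to the node with $\alpha=1$ (i.e.\ to $R$), adding $(u^+,v^-)$ to $M'$ precisely when $(\alpha_u,\alpha_v)=(1,-1)$. With $R$ holding outgoing edges, any $(a^+,b^-)$ aimed at a node $b\in R$ is dead on arrival, since $b$ prefers its outgoing partner to any incoming offer; the remaining cases ($b\in L$ matched, $b$ unmatched) are then closed using exactly the two facts you already have in hand — $(+,+)$ edges live only in $R\times R$ and $L\times L$ edges are $(-,-)$ — without ever needing to appeal to Theorem~\ref{thr:characterize-popular}. So your overall strategy (construct $M'$, rule out blocking pairs via the $(L,R)$ structure) is the right one, but the orientation must be flipped, and once it is, the problematic $R\times R$ case disappears and no combinatorial extraction is needed.
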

\begin{proof}
  Let $M$ be a strongly dominant matching in $G = (V,E)$. Let $\vec{\alpha}$ be a witness of $M$ as given in 
  Theorem~\ref{lem:strongly-dominant}. That is, $\alpha_u = 0$ for $u$ unmatched in $M$ and $\alpha_u = \pm 1$ for $u$ matched in $M$. As done in the proof of Theorem~\ref{lem:strongly-dominant}, we can interpret $(\tilde{M},\vec{\alpha})$ as a pair of optimal primal and dual solutions. Hence, for each $(u,v) \in M$, $\alpha_u + \alpha_v = \wt_M(u,v) = 0$ by complementary slackness on the dual LP,
  so $(\alpha_u,\alpha_v)$ is either $(1,-1)$ or $(-1,1)$.

  We will construct a stable matching $M'$ in $G'$ as follows. For each $(u,v) \in M$: 
  \begin{itemize}
  \item if $(\alpha_u,\alpha_v) = (1,-1)$ then add $(u^+,v^-)$ to $M'$;
        else add $(u^-,v^+)$ to $M'$.
  \end{itemize}

We will show that no edge in $E'\setminus M'$ blocks $M'$.
Let $(a^+,b^-) \notin M'$. We consider the following cases here:

\smallskip
  
\noindent{\bf Case 1.} Suppose $\alpha_b = 1$. Then $(b^+,d^-) \in M'$ where $d = M(b)$. Since $b$ prefers $d^-$ to $a^+$, 
                         $(a^+,b^-)$ is not a blocking edge to $M'$.
\smallskip
  
\noindent{\bf Case 2.} Suppose $\alpha_b = -1$. Then  $(b^-,d^+) \in M'$ where $d = M(b)$.
  We have 2 sub-cases here: (i)~$\alpha_a = 1$ and (ii)~$\alpha_a = -1$.
  Note that $\alpha_a \ne 0$ as the edge $(a,b)$ would not be covered by $\alpha_a + \alpha_b$ then. This is because if $\alpha_a = 0$ then
  $a$ is unmatched in $M$ and $\wt_M(a,b) = 0$ while $\alpha_a + \alpha_b = -1$.
  \begin{itemize}
  \item In sub-case~(i), some edge $(a^+,c^-)$ belongs to $M'$. 
  We know that $\wt_M(a,b) \le \alpha_a + \alpha_b = 0$, so either
  (1)~$a$ prefers $M(a) = c$ to $b$ or (2)~$b$ prefers $M(b) = d$ to $a$. Hence either (1)~$a$ prefers $c^-$ to $b^-$ or
  (2)~$b$ prefers $d^+$ to $a^+$. Thus $(a^+,b^-)$ is not a blocking edge to $M'$.
  \item In sub-case~(ii), some edge $(a^-,c^+)$ belongs to $M'$. We know that $\wt_M(a,b) \le \alpha_a + \alpha_b = -2$, so
  $a$ prefers $M(a) = c$ to $b$ {\em and} $b$ prefers $M(b) = d$ to $a$. Thus $b$ prefers $d^+$ to $a^+$, hence  
  $(a^+,b^-)$ is not a blocking edge to $M'$.
  \end{itemize}
  
\noindent{\bf Case 3.} Suppose $\alpha_b = 0$. Thus $b$ was unmatched in $M$. Each of $b$'s neighbors has to be matched in $M$ to a
  neighbor that it prefers to $b$, otherwise $M$ would be unpopular. We have $\alpha_a + \alpha_b \ge \wt_M(a,b) = 0$, hence it follows
  that $\alpha_a = 1$. Thus $(a^+,c^-) \in M'$ where $c$ is a neighbor that $a$ prefers to $b$. So $(a^+,b^-)$ is not a blocking edge
  to $M'$. \qed
\end{proof}

Lemmas~\ref{lemma2:main} and \ref{lemma1:main} show that a strongly dominant matching is present in $G$ if and only if a stable
matching is present in $G'$. This finishes the proof of correctness of our algorithm.
Since Irving's stable matching algorithm in $G'$ can be implemented to run in linear time~\cite{Irv85}, we can conclude the following theorem.

\begin{theorem}
  \label{thm:strongly-dom}
  There is a linear time algorithm to determine if a graph $G = (V,E)$ with strict preference lists admits
  a strongly dominant matching or not and if so, to return one.
\end{theorem}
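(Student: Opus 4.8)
The plan is to obtain Theorem~\ref{thm:strongly-dom} by stitching together the two correctness lemmas proved above with a routine running-time analysis of the three-step algorithm; the bulk of the difficulty already lies in Lemmas~\ref{lemma1:main} and~\ref{lemma2:main}, so what is left is essentially bookkeeping.

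First I would settle correctness. Lemma~\ref{lemma1:main} says that if $G$ has a strongly dominant matching then $G'$ has a stable matching; contrapositively, whenever Irving's algorithm reports that $G'$ has no stable matching, the output ``$G$ has no strongly dominant matching'' is correct. Conversely, Lemma~\ref{lemma2:main} says that if Irving's algorithm returns a stable matching $M'$ of $G'$, then the projection $M$ of $M'$ is strongly dominant in $G$, so the returned matching is a valid certificate. Since Irving's algorithm~\cite{Irv85} is complete, i.e.\ it produces a stable matching whenever one exists, the two cases are exhaustive and the algorithm is correct on every input.

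Next I would bound the running time, measured in $|V| + |E|$. The graph $G'$, equivalently the simple graph $H$ with vertex set $\{u^+, u^-, d(u) : u \in V\}$ described above, has $\Theta(|V|)$ vertices, $\Theta(|V| + |E|)$ edges, and each preference list of length at most $2|\delta(u)| + 1$; hence $G'$ together with all of its preference lists is constructed from $G$ in $O(|V| + |E|)$ time. Irving's stable matching algorithm runs in time linear in the size of its input instance~\cite{Irv85}, which here is $O(|V| + |E|)$. The projection step scans the edges of $M'$ once, costing $O(|V|)$. Adding the three phases gives an overall bound of $O(|V| + |E|)$, i.e.\ linear time.

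I do not expect a genuine obstacle at this point; the delicate work was done in establishing the equivalence ``$G$ has a strongly dominant matching if and only if $G'$ has a stable matching''. The one point worth a sentence of care is that the reduction is size-preserving, namely $|V(G')| = \Theta(|V|)$, $|E(G')| = \Theta(|E|)$, and the total preference-list length stays linear, so that linear-time solvability of stable matching on $G'$ genuinely transfers back to linear-time solvability of the strongly dominant matching problem on $G$. With that remark, combining the correctness argument with the timing bound completes the proof.
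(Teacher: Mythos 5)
Your argument is correct and follows essentially the same route as the paper: combine Lemma~\ref{lemma1:main} and Lemma~\ref{lemma2:main} to get the equivalence between strongly dominant matchings in $G$ and stable matchings in $G'$, then invoke the linear-time implementability of Irving's algorithm. The only addition beyond what the paper writes is your explicit check that the reduction to $G'$ is size-preserving, which is a reasonable (and correct) bit of bookkeeping the paper leaves implicit.
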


\section{Finding a popular matching that is neither stable nor dominant}
\label{sec:hardness}

This section is devoted to proving the following result.

\begin{theorem}
	\label{final-thm}
	Given a bipartite instance $G = (A \cup B,E)$ with strict preference lists, the problem of deciding if $G$ admits a popular matching that is neither
	a stable matching nor a dominant matching is NP-hard.
\end{theorem}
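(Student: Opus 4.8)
The plan is to reduce from a known NP-complete problem, and the natural candidate — given the machinery built up in Section~\ref{sec:certificates}, especially the decomposition of the popular subgraph $F_G$ into components that are each independently in ``zero state'' or ``unit state'' (Definitions~\ref{def:stab-domn}, Lemma~\ref{prop1}) — is a variant of (monotone) satisfiability or a constraint-satisfaction problem where each component of $F_G$ plays the role of a Boolean variable. The key observation driving the construction is this: a popular matching $M$ with witness $\vec\alpha \in \{0,\pm1\}^n$ is stable iff $\vec\alpha = \vec 0$ on all components (equivalently, every component of $F_G$ is in zero state), and — after an auxiliary argument — it should be dominant iff every component is in unit state (or, more precisely, iff $M$ is maximum-size, which by Theorem~\ref{thm:dominant} and the existence of dominant matchings one arranges to coincide with ``all components in unit state''). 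So the target matching is ``neither stable nor dominant'' exactly when the component states form a $\{0,1\}$-vector that is neither all-$0$ nor all-$1$: we want an assignment that is forced to be non-constant. This is exactly the flavor of NOT-ALL-EQUAL SAT (NAE-SAT), which is NP-complete, and I would reduce from that.

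**The construction.** For each variable $x_i$ of the NAE-SAT instance I would build a gadget — a small bipartite graph with carefully chosen preference lists — that forms one connected component of the popular subgraph $F_G$ and that has exactly two popular states, corresponding to $\alpha \equiv 0$ (``false'') and $\alpha \equiv \pm1$ (``true''); within the component the matching itself can also flip. For each clause $c_j = (\ell_{j1} \vee \ell_{j2} \vee \ell_{j3})$ I would build a clause gadget that is attached to the three literal gadgets via popular edges in such a way that: a globally consistent choice of component states extends to a popular matching of the whole graph if and only if, in every clause, the three literals are not all set the same way. The ``not all equal'' constraint is the right one because it is symmetric under global negation, which matches the fact that the all-zero and all-unit global states are both always realizable (by a stable matching and a dominant matching respectively) — so I need the reduction to encode a constraint that rules out exactly those two trivial assignments while leaving something in between. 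I would use Theorem~\ref{thr:characterize-popular} and Theorem~\ref{thm:dominant} on the gadget-by-gadget level to verify that the matchings produced are popular, not dominant (some component forced to zero state), and not stable (some component forced to unit state), and conversely that any popular matching of $G$ that is neither stable nor dominant induces, via Lemma~\ref{prop1}, a satisfying NAE-assignment.

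**The main obstacle** will be designing the clause gadget so that the coupling between the three incident literal components is genuinely ``NAE'' and not something weaker or stronger — in particular, ensuring that (a) when the three literals are all-true or all-false, one of the forbidden alternating structures of Theorem~\ref{thr:characterize-popular} (an alternating path between two $(+,+)$ edges, or a $(+,+)$ edge reachable from an exposed node) is unavoidable inside the gadget, so no popular matching exists in that case; (b) when they are not all equal, the gadget admits a local popular matching compatible with those states; and (c) the gadget does not accidentally introduce new components or new freedom that lets the SAT structure collapse. One must also handle the bookkeeping that ``neither stable nor dominant'' is the disjunction we want: it is cleanest to ensure the gadgets force at least one component to zero state and at least one to unit state in every valid global configuration (e.g. by a designated ``anchor'' variable-pair, or by the NAE constraint itself once there is at least one clause), so that ``popular but non-stable, non-dominant'' $\Leftrightarrow$ ``satisfying NAE-assignment exists''. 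Verifying popularity/dominance of the assembled instance via the $G_M$ characterizations is routine but lengthy; the creative content is entirely in the gadget.
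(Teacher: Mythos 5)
Your high-level framing is close to the paper's actual framework: you correctly identify the decomposition of the popular subgraph $F_G$ into connected components, the witness vector $\vec\alpha\in\{0,\pm1\}^n$, the zero/unit dichotomy of Lemma~\ref{prop1}, and the idea that stable corresponds to the all-zero witness. However, there are two related gaps that would derail your construction.

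First, the equivalence you posit — ``$M$ is dominant iff every component of $F_G$ is in unit state'' — is false in general, and requires the construction itself to make it true. A dominant matching can leave unstable nodes unmatched; any unmatched node has $\alpha_u=0$, and by Lemma~\ref{prop1} its entire component must then be in zero state. The paper arranges things so that the dominant matching in its instance $G$ is a \emph{perfect} matching, and the min-size (stable) matching leaves exactly the four unstable nodes $z,z',s^c_0,t^c_0$ unmatched; this is a nontrivial design constraint, not a free byproduct.

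Second, and more fundamentally, your clause-gadget claim~(a) is impossible as stated. You want a forbidden structure (Theorem~\ref{thr:characterize-popular}) to arise ``inside the gadget'' whenever the three literal components are all-true \emph{or all-false}. But the all-false configuration is exactly what the stable matching induces on every component, and stable matchings are always popular in bipartite graphs. So a gadget that forbids all-false would forbid the stable matching — a contradiction. The clause constraint cannot be unconditional; it must be \emph{activated} only in the intermediate regime. The paper achieves this with the trigger nodes $z,z'$ and the level-3 unstable pair $s^c_0,t^c_0$: Lemma~\ref{bipartite:lemma2} shows that matching $z,z'$ forces dominance, Lemma~\ref{bipartite:lemma3} shows that leaving $s^c_0,t^c_0$ unmatched forces stability, and the satisfiability constraint (which is 1-in-3 SAT, not NAE-SAT, split into an ``at least one'' part via level-2/3 gadgets and an ``at most one'' part via level-0 gadgets) is \emph{derived conditionally} on the matching being in the ``matches all nodes except $z,z'$'' regime. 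Your proposal has no analogue of this conditional-activation mechanism, and without it the reduction cannot simultaneously admit the stable and dominant matchings (which it must) and rule out the all-equal assignments (which you ask the gadget to do). Identifying and building such a trigger is the main creative content of the paper's proof and is what your proposal is missing.
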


Our reduction will be from 1-in-3 SAT. Recall that 1-in-3 SAT is the set of 3CNF formulas with no negated variables such that there is a
satisfying assignment that makes {\em exactly one} variable true in each clause. Given an input formula $\phi$, to determine if $\phi$ is
1-in-3 satisfiable or not is NP-hard~\cite{Sch78}.

We will now build a bipartite instance $G = (A \cup B, E)$. 
The node set $A \cup B$ will consist of nodes in 4 levels: levels~0, 1, 2, and 3 along with
4 nodes $a_0,b_0,z$, and $z'$.
Nodes in $(A\cup B) \setminus \{a_0,b_0,z,z'\}$ are partitioned into gadgets that appear in some level~$i$, for $i \in \{0,1,2,3\}$. For each variable in our 1-in-3 SAT formula, we construct a variable gadget (in level 1), and for each clause, we construct three clause gadgets in level 0, three in  level~2, and one in level~3.

We will show that every gadget forms a separate connected component in the popular subgraph of $G$. 
If $M$ is any popular matching in $G$ that is neither a stable matching nor a dominant matching and $\vec{\alpha}$ is any witness of $M$ then it will be the case that every level~0 gadget in $G$ is in zero state in $\vec{\alpha}$ and every level~3 gadget in $G$ is in unit state in $\vec{\alpha}$. This will force the following property to hold for every clause in $\phi$:
\begin{itemize}
\item if $c = X_i \vee X_j \vee X_k$ then among the gadgets corresponding to $X_i,X_j,X_k$ in level~1, {\em exactly one} is in unit state in $\vec{\alpha}$.
\end{itemize}

Thus a popular matching in $G$ that is neither stable nor dominant will yield a 1-in-3 satisfiable assignment to $\phi$. Conversely, if $\phi$ is 1-in-3 satisfiable then we can build a popular matching in $G$ that is neither stable nor dominant. We describe our gadgets below.

\medskip

\noindent{\em Level~1 nodes.} Every gadget in level~1 is a variable gadget.
Corresponding to each variable $X_i$, we will have the gadget in Fig.~\ref{level1:example}. 
The preference lists of the 4 nodes in the gadget corresponding to $X_i$ are as follows:

\begin{minipage}[c]{0.45\textwidth}
			
			\centering
			\begin{align*}
			        &x_i\colon \, y_i \succ y'_i \succ  z \succ \cdots   \qquad\qquad &&  y_i\colon \, x_i \succ x'_i \succ z' \succ \cdots \\
                                &x'_i\colon \, y_i \succ y'_i \succ \cdots   \qquad\qquad &&  y'_i\colon \, x_i \succ x'_i \succ \cdots\\
			\end{align*}
\end{minipage}

\smallskip

The nodes in the gadget corresponding to $X_i$ are also adjacent to nodes in the clause gadgets:
these neighbors belong to the ``$\cdots$'' part of the preference lists. Note that the order among the nodes in the ``$\cdots$'' part
in the above preference lists does not matter. 

\begin{figure}[h]
\centerline{\resizebox{0.18\textwidth}{!}{\input{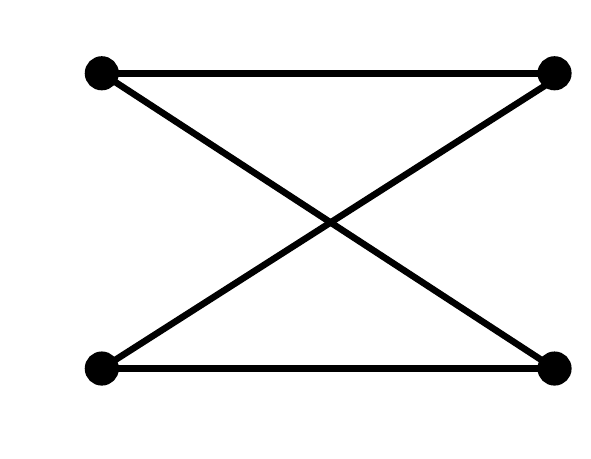_t}}}
\caption{The gadget corresponding to variable $X_i$: node preferences are indicated on edges. The node $y_i$ is the top choice of both $x_i$ and $x'_i$ and the node $y'_i$ is the second choice of both $x_i$ and $x'_i$. The node $x_i$ is the top choice of both $y_i$ and $y'_i$ and the node $x'_i$ is the second choice of both $y_i$ and $y'_i$.}
\label{level1:example}
\end{figure}

Let $c = X_i \vee X_j \vee X_k$ be a clause in $\phi$. We will describe the gadgets that correspond to $c$.
For the sake of readability, when we describe preference lists below, we drop the superscript $c$ from
all the nodes appearing in gadgets corresponding to clause $c$.

\medskip

\noindent{\em Level~0 nodes.} 
There will be three level~0 gadgets, each on 4 nodes, corresponding to clause $c$. See Fig.~\ref{level0:example}.
We describe below the preference lists of the 4 nodes $a^c_1,b^c_1,a^c_2,b^c_2$ that belong to the leftmost gadget.

\begin{minipage}[c]{0.45\textwidth}
			
			\centering
			\begin{align*}
			        &a_1\colon \, b_1 \succ \underline{y'_j} \succ b_2 \succ \underline{z}  \qquad\qquad && b_1\colon \, a_2 \succ \underline{x'_k} \succ a_1 \succ \underline{z'} \\
                                &a_2\colon \, b_2 \succ b_1  \qquad\qquad && b_2\colon \, a_1 \succ a_2 \\
			\end{align*}
\end{minipage}

Neighbors that are outside this gadget are underlined.
The preferences of nodes in the other two gadgets in level~0 corresponding to $c$ ($a^c_t,b^c_t$ for $t = 3,4$ and $a^c_t,b^c_t$ for $t = 5,6$)
are analogous. 

\begin{figure}[h]
\centerline{\resizebox{0.74\textwidth}{!}{\input{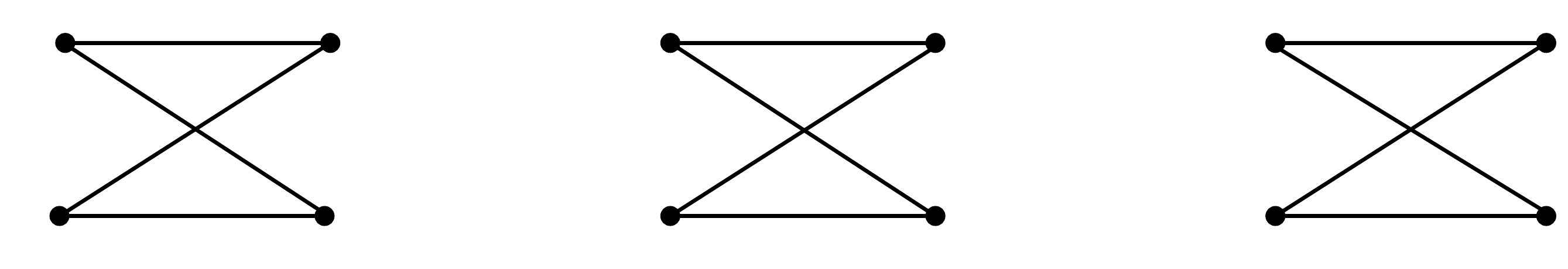_t}}}
\caption{Corresponding to clause $c =  X_i \vee X_j \vee X_k$, we have the above 3 gadgets in level 0. The node $a^c_1$'s second choice
  is $y'_j$ and $b^c_1$'s is $x'_k$, similarly, $a^c_3$'s is $y'_k$ and $b^c_3$'s is $x'_i$, also $a^c_5$'s is $y'_i$ and $b^c_5$'s is $x'_j$.}
\label{level0:example}
\end{figure}

We will now describe the three level~2 gadgets corresponding to clause $c$. See Fig.~\ref{level2:example}.

\medskip

\noindent{\em Level~2 nodes.}
There will be three level~2 gadgets, each on 6 nodes, corresponding to clause $c$.
The preference lists of the nodes $p^c_t,q^c_t$ for $0 \le t \le 2$ are described below.

\begin{minipage}[c]{0.45\textwidth}
			
			\centering
			\begin{align*}
			        &p_0\colon \, q_0 \succ q_2  \qquad\qquad && q_0\colon \, p_0 \succ p_2 \succ \underline{z'} \succ \underline{s_0}\\
                                &p_1\colon \, q_1 \succ q_2 \succ \underline{z}  \qquad\qquad && q_1\colon \, p_1 \succ p_2 \\
                                &p_2\colon \, q_0 \succ \underline{y_j} \succ q_1 \succ q_2  \succ \cdots\qquad\qquad && q_2\colon \, p_1 \succ \underline{x_k} \succ p_0 \succ p_2 \succ \cdots\\
			\end{align*}
\end{minipage}

The ``$\cdots$'' in the preference lists of $p_2$ and $q_2$ above are to nodes $t^{c_i}_0$ and $s^{c_i}_0$ respectively (in a level 3 gadget), 
for {\em all} clauses $c_i$. The order among these neighbors is not important.

\begin{figure}[h]
\centerline{\resizebox{0.8\textwidth}{!}{\input{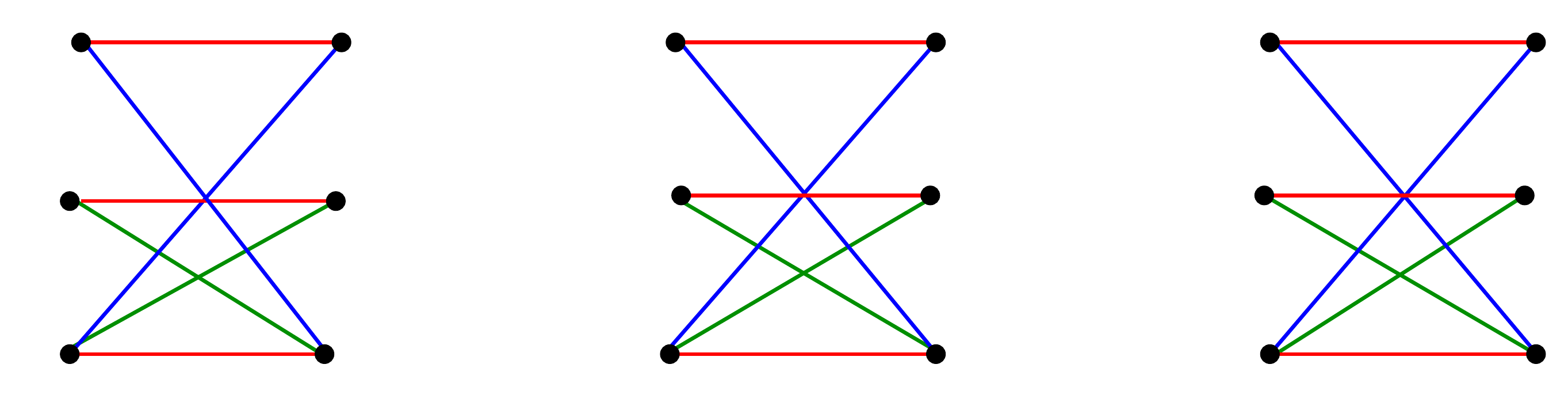_t}}}
\caption{We have the above 3 gadgets in level 2 corresponding to $c =  X_i \vee X_j \vee X_k$ . The node $p^c_2$'s second choice is
   $y_j$ and $q^c_2$'s is $x_k$, similarly, $p^c_5$'s is $y_k$ and $q^c_5$'s is $x_i$, similarly $p^c_8$'s is $y_i$ and $q^c_8$'s is $x_j$.}
\label{level2:example}
\end{figure}

Let us note the preference lists of $p_2$ and $q_2$: they are each other's fourth choices.
The node $p_2$ regards $q_0$ as its top choice, $y_j$ as its second choice, and $q_1$ as its third choice.
The node $q_2$ regards $p_1$ as its top choice, $x_k$ as its second choice, and $p_0$ as its third choice.

The preferences of nodes $p^c_t,q^c_t$ for $3 \le t \le 5$
are described below. The ``$\cdots$'' in the preference lists of $p_5$ and $q_5$ above are to nodes $t^{c_i}_0$ and 
$s^{c_i}_0$ respectively, for all clauses $c_i$.

\begin{minipage}[c]{0.45\textwidth}
			
			\centering
			\begin{align*}
			        &p_3\colon \, q_3 \succ q_5  \qquad\qquad && q_3\colon \, p_3 \succ p_5 \succ \underline{z'} \succ \underline{s_0}\\
                                &p_4\colon \, q_4 \succ q_5 \succ \underline{z} \succ \underline{t_0} \qquad\qquad && q_4\colon \, p_4 \succ p_5 \\
                                &p_5\colon \, q_3 \succ \underline{y_k} \succ q_4 \succ q_5  \succ \cdots\qquad\qquad && q_5\colon \, p_4 \succ \underline{x_i} \succ p_3 \succ p_5 \succ \cdots\\
			\end{align*}
\end{minipage}

The preferences of nodes $p^c_t,q^c_t$ for $6 \le t \le 8$
are described below. The ``$\cdots$'' in the preference lists of $p_8$ and $q_8$ above are to nodes $t^{c_i}_0$ and $s^{c_i}_0$ 
respectively, for all clauses $c_i$. 

\begin{minipage}[c]{0.45\textwidth}
			
			\centering
			\begin{align*}
			        &p_6\colon \, q_6 \succ q_8  \qquad\qquad && q_6\colon \, p_6 \succ p_8 \succ \underline{z'} \\
                                &p_7\colon \, q_7 \succ q_8 \succ \underline{z} \succ \underline{t_0} \qquad\qquad && q_7\colon \, p_7 \succ p_8 \\
                                &p_8\colon \, q_6 \succ \underline{y_i} \succ q_7 \succ q_8  \succ \cdots\qquad\qquad && q_8\colon \, p_7 \succ \underline{x_j} \succ p_6 \succ p_8 \succ \cdots\\
			\end{align*}
\end{minipage}

\noindent{\em Level~3 nodes.}
Gadgets in level~3 are again clause gadgets. There is exactly one level~3 gadget on 8 nodes $s^c_i,t^c_i$, for $0 \le i \le 3$,
corresponding to clause $c$. 

\begin{minipage}[c]{0.45\textwidth}
			
			\centering
			\begin{align*}
				&s_0\colon \, t_1  \succ \underline{q_0} \succ t_2\succ \underline{q_3} \succ t_3 \succ \cdots \qquad\qquad && t_0\colon \, s_3  \succ \underline{p_7} \succ s_2\succ  \underline{p_4} \succ s_1 \succ \cdots\\
			        &s_1\colon \, t_1 \succ t_0  \qquad\qquad && t_1\colon \, s_1 \succ s_0 \\
                                &s_2\colon \, t_2 \succ t_0  \qquad\qquad && t_2\colon \, s_2 \succ s_0 \\
                                &s_3\colon \, t_3 \succ t_0  \qquad\qquad && t_3\colon \, s_3 \succ s_0 \\
			\end{align*}
\end{minipage}

The preference lists of the 8 nodes in the level~3 gadget corresponding to clause $c$ are described above.
It is important to note the preference lists of $s_0$ and $t_0$ here.

Among neighbors in this gadget, $s_0$'s order is $t_1 \succ t_2 \succ t_3$ while
$t_0$'s order is $s_3 \succ s_2 \succ s_1$. Also, $s_0$'s order is interleaved with $q_0 \succ q_3$ (these are nodes from level~2 gadgets) and
$t_0$'s order is interleaved with $p_7 \succ p_4$.

The ``$\cdots$'' in the preference lists of $s_0$ and $t_0$ above are to neighbors in levels~1 and 2. Let $n_0$ be the number of variables in $\phi$.
All the nodes $y'_1,\ldots,y'_{n_0}$ along with $q^{c_i}_2,q^{c_i}_5,q^{c_i}_8$ for all clauses $c_i$ will be at the tail of the preference list
of $s^c_0$ and the order among all these nodes is not important.
Similarly, all the nodes $x'_1,\ldots,x'_{n_0}$ along with $p^{c_i}_2,p^{c_i}_5,p^{c_i}_8$ for all clauses $c_i$
will be at the tail of the preference list of $t^c_0$ and the order among all these nodes is also not important. 

\medskip

There are four more nodes in $G$. These are $a_0,z' \in A$ and $b_0,z \in B$.
Thus we have
\begin{eqnarray*}
A & = & \cup_c\{a^c_i: 1 \le i \le 6\} \cup_i \{x_i,x'_i\} \cup_c\{p^c_i: 0 \le i \le 8\} \cup_c\{s^c_i: 0 \le i \le 3\} \cup \{a_0,z'\}\\
B & = & \cup_c\{b^c_i: 1 \le i \le 6\} \cup_i \{y_i,y'_i\} \cup_c\{q^c_i: 0 \le i \le 8\} \cup_c\{t^c_i: 0 \le i \le 3\} \cup \{b_0,z\}.
\end{eqnarray*}

The neighbors of $a_0$ are $b_0,z$ and the neighbors of $b_0$ are $a_0,z'$.
The node $a_0$'s preference list is $b_0 \succ z$ and the node $b_0$'s preference list is $a_0 \succ z'$.

\smallskip

The set of neighbors of $z$ is $\{a_0\} \cup_i \{x_i\} \cup_c \{a^c_1,a^c_3,a^c_5\} \cup_c \{p^c_1,p^c_4,p^c_7\}$ and the set of neighbors of $z'$ is
$\{b_0\} \cup_i \{y_i\} \cup_c \{b^c_1,b^c_3,b^c_5\} \cup_c \{q^c_0,q^c_3,q^c_6\}$.
The preference lists of $z$ and $z'$ are as follows: (here $k$ is the number of clauses in $\phi$)
\begin{eqnarray*}
  z &\colon&  x_1 \succ \cdots \succ  x_{n_0} \ \succ \ p^{c_1}_1 \succ \cdots \succ p^{c_k}_7 \ \succ \ a_0 \succ \cdots \\
  z' &\colon& y_1 \succ \cdots \succ  y_{n_0} \ \succ \ q^{c_1}_0 \succ \cdots \succ q^{c_k}_6 \ \succ \ b_0 \succ \cdots
\end{eqnarray*}  

Thus $z$ prefers neighbors in level~1 to neighbors in level~2, then comes $a_0$, and then neighbors in level~0.
Analogously, for $z'$ (with $b_0$ replacing $a_0$). 
The order among neighbors in level~$i$ (for $i = 0,1,2$) in the preference lists of $z$ and $z'$ does not matter.

\subsection{Some stable/dominant matchings in $G$}

It would be helpful to see  some stable matchings and dominant matchings in the above instance $G$.
\begin{itemize}
\item The men-optimal stable matching $S$ in $G$ includes $(a_0,b_0)$ and in the level~0 gadgets, for all clauses $c$,
the edges $(a^c_i,b^c_i)$ for $1 \le i \le 6$.
  \begin{itemize}
   \item  In the level~1 gadgets, the edges $(x_i,y_i)$ and $(x'_i,y'_i)$ are included for all $i \in [n_0]$.

   \item In the level~2 gadgets, for all clauses $c$, the edges $(p^c_i,q^c_i)$ for $0 \le i \le 8$ are included.

   \item In the level~3 gadgets, for all clauses $c$, the edges $(s^c_i,t^c_i)$ for $1 \le i \le 3$ are included.

   \item The nodes $z,z'$ and $s^c_0,t^c_0$ for all clauses $c$ are left unmatched in $S$.
  \end{itemize}

  \smallskip
  
\item The women-optimal stable matching $S'$ in $G$ includes $(a_0,b_0)$ and the same edges as $S$ in all level~1, 2, 3 gadgets. In 
level~0, $S'$ includes for all clauses $c$, the edges $(a^c_1,b^c_2),(a^c_2,b^c_1)$,$(a^c_3,b^c_4)$,$(a^c_4,b^c_3)$, $(a^c_5,b^c_6),(a^c_6,b^c_5)$.

  \smallskip

\item The dominant matching $M^*$ as computed by the algorithm in \cite{Kav12} will be as follows:
  \begin{itemize}
  \item $M^*$ contains the edges $(a_0,z), (z',b_0)$, and in the level~0 gadgets, for all clauses $c$,
the edges $(a^c_1,b^c_2),(a^c_2,b^c_1)$,$(a^c_3,b^c_4)$,$(a^c_4,b^c_3)$, $(a^c_5,b^c_6),(a^c_6,b^c_5)$.
  \item   In the level~1 gadgets, the edges $(x_i,y'_i)$ and $(x_i,y'_i)$ are included for all $i \in [n_0]$.
  \item   In the level~2 gadgets, for each clause $c$, the edges $(p^c_0,q^c_2),(p^c_1,q^c_1),(p^c_2,q^c_0)$ are included from the leftmost gadget
    (see Fig.~\ref{level2:example}). Analogous edges (two blue ones and the middle red edge) are included from the other two level~2 gadgets corresponding to $c$.
  \item In the level~3 gadgets, the edges $(s^c_0,t^c_1), (s^c_1,t^c_0), (s^c_2,t^c_2), (s^c_3,t^c_3)$ for all clauses $c$ are included.
  \end{itemize}
\end{itemize}

Note that $M^*$ is a perfect matching as it matches all nodes in $G$.  We can show the above matching $M^*$ to be popular by the following witness
$\vec{\alpha} \in \{\pm 1\}^n$ to $M^*$:
\begin{itemize}
\item $\alpha_{a_0} = \alpha_{b_0} = 1$ while $\alpha_{z} = \alpha_{z'} = -1$.
\item $\alpha_{a^c_i} = 1$ and $\alpha_{b^c_i} = -1$ for $1 \le i \le 6$ and all clauses $c$.
\item $\alpha_{x_i} = \alpha_{y_i} = 1$ while $\alpha_{x'_i} = \alpha_{y'_i} = -1$ for all $i \in [n_0]$.
\item $\alpha_{p^c_0} = \alpha_{q^c_0} = \alpha_{p^c_1} = 1$ while $\alpha_{q^c_1} = \alpha_{p^c_2} = \alpha_{q^c_2} = -1$ for all clauses $c$. Similarly
  for the other 2 level~2 gadgets corresponding to $c$ and all other clauses.
\item $\alpha_{s^c_1} = \alpha_{t^c_1} = \alpha_{s^c_2} = \alpha_{s^c_3} = 1$ while $\alpha_{s^c_0} = \alpha_{t^c_0} = \alpha_{t^c_2} = \alpha_{t^c_3} = -1$ for all clauses $c$.
\end{itemize}

It can be checked that we have $\alpha_u + \alpha_v = 0$ for every edge $(u,v) \in M^*$. We also have $\alpha_u + \alpha_v \ge \wt_{M^*}(u,v)$ for every edge $(u,v)$
in the graph. In particular, the endpoints of every blocking edge to $M^*$, such as $(a_0,b_0)$, $(x_i,y_i)$ for all $i$,
$(p^c_{3j}, q^c_{3j})$ for all $c$ and $j \in \{0,1,2\}$, and $(s^c_1, t^c_1)$ for all $c$, have their $\alpha$-value equal to 1.

There are many other dominant matchings in this instance $G$:
\begin{itemize}
\item The edges $(a^c_i,b^c_i)$ may be included for $i \in \{1,\cdots,6\}$ and all clauses $c$.
 
\item The top red edge and two green ones (such as the edges $(p^c_0,q^c_0),(p^c_1,q^c_2),(p^c_2,q^c_1)$
  in the leftmost level~2 gadget corresponding to $c$) may be included from a level~2 gadget.
\item  From the level~3 gadget corresponding to $c$,
the edges $(s^c_0,t^c_2), (s^c_1,t^c_1), (s^c_2,t^c_0), (s^c_3,t^c_3)$ or the edges $(s^c_0,t^c_3), (s^c_1,t^c_1), (s^c_2,t^c_2), (s^c_3,t^c_0)$ may be included.
\end{itemize}

\subsection{The popular subgraph of $G$}

Recall the popular subgraph $F_G$ from Section~\ref{prelims}, whose edge set is the set of popular edges in $G$.

\begin{lemma}
  \label{lem:conn-comp}
   Let $C$ be any level~$i$ gadget in $G$, where $i \in \{0,1,2,3\}$. All the nodes in $C$ belong to the same connected component in $F_G$.
\end{lemma}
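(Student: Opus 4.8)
The plan is to show, separately for each level $i\in\{0,1,2,3\}$ and for each gadget $C$ in that level, that every edge drawn inside the gadget picture is a \emph{popular edge} (i.e., lies in some popular matching of $G$), so that all nodes of $C$ are joined in $F_G$. To do this I would exhibit, for each such edge $e$, an explicit popular matching $M_e$ of $G$ containing $e$, certified by an explicit witness $\vec\alpha\in\{0,\pm1\}^n$ via Theorem~\ref{thm:witness}. The workhorse matchings here are already available from the preceding subsection: the men-optimal stable matching $S$, the women-optimal stable matching $S'$, and the dominant matching $M^*$, together with the ``other dominant matchings'' listed there. I would argue that these few matchings, among them, already cover \emph{every} intra-gadget edge: for a level~1 gadget $X_i$ the edges $(x_i,y_i),(x_i',y_i')$ are in $S$ while $(x_i,y_i'),(x_i',y_i)$ are in $M^*$, so the four nodes $x_i,x_i',y_i,y_i'$ lie in one component; for a level~0 gadget the edges $(a^c_t,b^c_t)$ lie in $S$ and the crossing edges $(a^c_1,b^c_2),(a^c_2,b^c_1)$ etc.\ lie in $S'$ (and in $M^*$), so again all four nodes are connected; for a level~3 gadget the edges $(s^c_j,t^c_j)$, $j\in\{1,2,3\}$, are in $S$ while the edges $(s^c_0,t^c_1),(s^c_1,t^c_0)$ (and $(s^c_0,t^c_2),(s^c_2,t^c_0)$ and $(s^c_0,t^c_3),(s^c_3,t^c_0)$) are in $M^*$ and the listed dominant variants, so all eight nodes are connected; for a level~2 gadget on $p^c_t,q^c_t$ the edges $(p^c_t,q^c_t)$ are in $S$ while the edges $(p^c_0,q^c_2),(p^c_2,q^c_0)$ (and the symmetric green/red choices) are in $M^*$ and its variants, which suffices to connect all six nodes.

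Concretely, the steps I would carry out are: (1) recall that every stable matching and every dominant matching is popular, and that $S,S',M^*$ (and the enumerated dominant matchings) are valid matchings of $G$ with the stated edge sets — here I would only need to double-check stability of $S,S'$ against blocking edges and re-use the explicit witness $\vec\alpha\in\{\pm1\}^n$ given for $M^*$, plus analogous $\pm1$ witnesses for the dominant variants; (2) for each of the four levels, list the intra-gadget edges and point to which of these matchings contains each one; (3) conclude that the union of the corresponding edge sets, restricted to the node set of $C$, is a connected spanning subgraph of $C$ — for levels~0 and~1 a single stable and a single dominant matching already span $C$; for level~2 one stable matching plus one dominant matching span the ``path'' $p^c_0 q^c_0$, $p^c_1 q^c_1$, $p^c_2 q^c_2$ together with $p^c_0 q^c_2$ and $p^c_2 q^c_0$, which is connected; for level~3 one checks that $\{(s^c_j,t^c_j):1\le j\le 3\}\cup\{(s^c_0,t^c_1),(s^c_1,t^c_0)\}$ already spans all eight nodes. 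Since every popular edge of $G$ lies in $E_F$ by Definition~\ref{def:popular-subgraph}, all nodes of $C$ end up in one connected component of $F_G$.

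The one genuinely non-routine point — and the step I expect to be the main obstacle — is verifying that the ``crossing'' matchings ($S'$ and the various dominant matchings with rotated edges inside level~2 and level~3 gadgets) are indeed popular, i.e.\ producing the $\{0,\pm1\}$-witness for each. For $M^*$ this is handed to us, but for, say, the dominant matching that uses $(s^c_0,t^c_2),(s^c_1,t^c_1),(s^c_2,t^c_0),(s^c_3,t^c_3)$ inside the level~3 gadget (or the ``top red plus two green'' choice in a level~2 gadget), I would need to write down a modified $\vec\alpha$ that still satisfies $\alpha_u+\alpha_v\ge\wt_M(u,v)$ on all edges incident to that gadget — in particular on the edges crossing to the level~1 and level~2 neighbors appearing in the ``$\cdots$'' tails. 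The natural guess is to keep $\vec\alpha$ equal to the $M^*$-witness everywhere except inside the affected gadget, and inside it to set $\alpha$ so that each matched pair sums to $0$ and every $(+,+)$ edge has both endpoints with $\alpha=+1$; one then only has to recheck the finitely many edge types leaving the gadget. I would also remark, as a shortcut, that one may sidestep constructing $S'$ and the dominant variants from scratch by instead invoking Cseh–Kavitha~\cite{CK16}: a pair $(u,v)$ is a popular edge iff it is matched in some stable matching or some dominant matching, so it is enough to observe that each intra-gadget edge is a rotation-edge of the stable-matching lattice or of the dominant-matching lattice of $G$; whichever route is cleaner, the component-connectivity conclusion is the same.
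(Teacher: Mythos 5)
Your overall approach matches the paper's: for each gadget, exhibit a small set of popular matchings (stable or dominant) whose intra-gadget edges form a connected spanning subgraph, and invoke the definition of $F_G$. The level-0, level-1, and level-3 cases are handled exactly as the paper does.

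However, your step~(3) contains a concrete error for level~2. You assert that the edge set
$\{(p^c_0,q^c_0),(p^c_1,q^c_1),(p^c_2,q^c_2)\}\cup\{(p^c_0,q^c_2),(p^c_2,q^c_0)\}$
(one stable matching plus $M^*$) is a connected spanning subgraph of the six nodes $p^c_0,q^c_0,p^c_1,q^c_1,p^c_2,q^c_2$. It is not: the five listed edges split the nodes into two components, $\{p^c_0,q^c_0,p^c_2,q^c_2\}$ (via the 4-cycle $p^c_0$--$q^c_0$--$p^c_2$--$q^c_2$--$p^c_0$) and the isolated pair $\{p^c_1,q^c_1\}$. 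To join them you need an extra popular edge incident to $p^c_1$ or $q^c_1$ but leaving the pair --- e.g.\ $(p^c_1,q^c_2)$ or $(p^c_2,q^c_1)$. This is precisely why the paper invokes a \emph{second} dominant matching (the ``top red plus two green'' choice) containing $(p^c_1,q^c_2)$ and $(p^c_2,q^c_1)$. You gesture at ``$M^*$ and its variants'' earlier, but your explicit connectivity claim at step~(3) drops the variant, which breaks the argument.

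A secondary remark: you flag the verification of popularity for $S'$ and the dominant variants as the ``main obstacle,'' but this is misplaced. $S'$ is stable, hence popular with witness $\vec 0$, with nothing to check. The dominant variants are asserted in the paper's preceding subsection (where $M^*$ and its witness are spelled out), and the proof of Lemma~\ref{lem:conn-comp} simply reuses those facts; a fully self-contained proof would indeed need to exhibit their $\{\pm1\}$-witnesses (your ``change $\vec\alpha$ only inside the affected gadget'' heuristic is the right idea), but that is routine bookkeeping. The genuine pitfall is the level-2 connectivity bookkeeping you got wrong, not the witness construction.
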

\begin{proof}
  Consider a level~0 gadget in $G$, say on $a^c_1,b^c_1,a^c_2,b^c_2$ (see Fig.~\ref{level0:example}). The men-optimal stable matching $S$ in $G$ contains
  the edges $(a^c_1,b^c_1)$ and $(a^c_2,b^c_2)$ while the women-optimal stable matching $S'$ contains the edges $(a^c_1,b^c_2)$ and $(a^c_2,b^c_1)$.
  Thus there are popular edges among these 4 nodes and so these 4 nodes belong to the same connected component in $F_G$.

  Consider a level~1 gadget in $G$, say on $x_i,y_i,x'_i,y'_i$ (see Fig.~\ref{level1:example}). Every stable matching in $G$ contains $(x_i,y_i)$ and $(x'_i,y'_i)$
  while the dominant matching $M^*$ contains $(x_i,y'_i)$ and $(x'_i,y_i)$. Thus there are popular edges among these 4 nodes
  and so these 4 nodes belong to the same connected component in $F_G$.

  Consider a level~2 gadget in $G$, say on $p^c_i,q^c_i$ for $i = 0,1,2$ (see Fig.~\ref{level2:example}).
  The dominant matching $M^*$ contains the edges $(p^c_0,q^c_2)$ and $(p^c_2,q^c_0)$.
  There is also another dominant matching in $G$ that contains the edges $(p^c_1,q^c_2)$ and $(p^c_2,q^c_1)$.
  Thus there are popular edges among these 6 nodes and so these 6 nodes  belong to the same connected component in $F_G$.

  Consider a level~3 gadget in $G$, say on $s^c_i,t^c_i$ for $i = 0,1,2,3$.
  The dominant matching $M^*$ contains $(s^c_0,t^c_1)$, and $(s^c_1,t^c_0)$.
  There is another dominant matching in $G$ that contains $(s^c_0,t^c_2)$ and $(s^c_2,t^c_0)$. 
  There is yet another dominant matching in $G$ that contains $(s^c_0,t^c_3)$ and $(s^c_3,t^c_0)$. 
  Thus there are popular edges among these 8 nodes and so these 8 nodes belong to the same connected component in $F_G$. \qed
\end{proof}  

The following theorem will be important for us and we will prove it in Section~\ref{sec:proof-separate}.

\begin{theorem}
  \label{thm:separate}
  Every level~$i$ gadget, for $i \in \{0,1,2,3\}$, forms a distinct connected component in the graph $F_G$. The four nodes $a_0,b_0,z$, and $z'$
  belong to their own connected component in $F_G$.
\end{theorem}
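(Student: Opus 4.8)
The plan is to show that no popular edge of $G$ connects two different gadgets, and that none connects $\{a_0,b_0,z,z'\}$ to any gadget; combined with Lemma~\ref{lem:conn-comp} this gives exactly the stated components. The strategy is to use the witness machinery: by Lemma~\ref{prop1}, for a fixed popular matching $M$ with witness $\vec\alpha \in \{0,\pm1\}^n$, each component of $F_G$ is either wholly in zero state or wholly in unit state, and an edge $(u,v)$ is popular only if $\alpha_u+\alpha_v=\wt_M(u,v)$ (Lemma~\ref{prop0}). So I would argue that the only edges that could bridge two gadgets are the ones we designated as ``underlined'' / ``$\cdots$'' inter-gadget edges (e.g. $(a^c_1,y'_j)$, $(p^c_2,y_j)$, $(q^c_2,x_k)$, $(s^c_0,q^c_0)$, $(t^c_0,p^c_7)$, and the edges to $z,z'$), and then show each of these fails to be popular by exhibiting, for every popular matching $M$, one of the forbidden structures of Theorem~\ref{thr:characterize-popular} in $G_M$ through that edge, or a contradiction with Lemma~\ref{prop0}.

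Concretely I would proceed edge-type by edge-type. First handle the four nodes $a_0,b_0,z,z'$: since $z$ and $z'$ are unstable (they are unmatched in every stable matching), Lemma~\ref{prop0} pins $\alpha_z,\alpha_{z'}\in\{0,-1\}$ depending on whether they are matched; a short case analysis on the labels of $z$'s edges into levels 0,1,2 (and $a_0$) shows that a popular edge from $z$ into a level-$i$ gadget would force incompatible $\alpha$-values on that gadget's boundary node, using that $z$ ranks level-1 neighbors above level-2 neighbors above $a_0$ above level-0 neighbors. This is exactly where the carefully chosen preference order of $z,z'$ does its work. Symmetric reasoning handles $z'$; and $(a_0,b_0)$, $(a_0,z)$, $(b_0,z')$ are the only edges among $\{a_0,b_0,z,z'\}$, so that quartet is self-contained. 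Then I would treat the level-1$\leftrightarrow$level-0 edges ($x'_k\,b^c_1$, $y'_j\,a^c_1$, etc.), the level-1$\leftrightarrow$level-2 edges ($y_j\,p^c_2$, $x_k\,q^c_2$, etc.), and the level-2$\leftrightarrow$level-3 edges ($q^c_0\,s^c_0$, $p^c_7\,t^c_0$, etc.): in each case the claim is that such an edge, if present in a popular $M$ together with the internal edges of the two gadgets it joins, creates an $M$-alternating path with two $(+,+)$ ends or a $(+,+)$ edge reachable from an exposed node, contradicting Theorem~\ref{thr:characterize-popular}. The internal structure of each gadget — small cycles on $4$, $4$, $6$, $8$ nodes with the $(+,+)$ edges in known positions — makes these path-chasing arguments finite and local.

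The main obstacle, and where I expect the real work to lie, is the level-3 gadget and its interaction with level-2 via $s^c_0$ and $t^c_0$: these nodes have long interleaved preference lists ($s_0: t_1\succ q_0\succ t_2\succ q_3\succ t_3\succ\cdots$ and similarly $t_0$), they are unstable, and their ``$\cdots$'' tails reach into level-1 and into the $p_2,q_2$-type nodes of level-2 gadgets of all clauses — so a single wrong popular edge here could in principle tie together many gadgets at once. I would handle this by first nailing down, for any popular $M$, exactly which edges at $s^c_0,t^c_0$ can be popular (using that in the stable matchings $s^c_0,t^c_0$ are exposed and in the dominant matchings they are matched to $t^c_1,t^c_2,t^c_3$ resp. $s^c_1,s^c_2,s^c_3$), thereby ruling out the tail edges and the cross-gadget edges $q_0 s_0$, $p_7 t_0$, $p_4 t_0$, $q_3 s_0$. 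The remaining level-0 and level-2 cases are then comparatively routine. Once every candidate bridging edge is shown non-popular, the theorem follows immediately: each gadget's node set is a union of components of $F_G$ by the non-existence of bridges, and it is a single component by Lemma~\ref{lem:conn-comp}.
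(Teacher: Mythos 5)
Your proposal follows the paper's strategy exactly: rule out every candidate inter-gadget edge and every edge from $\{a_0,b_0,z,z'\}$ into a gadget as non-popular, then invoke Lemma~\ref{lem:conn-comp} to get that each gadget (and the quartet $\{a_0,b_0,z,z'\}$) is a single component of $F_G$. The paper executes the edge-ruling-out almost entirely via Lemma~\ref{prop0}, exhibiting for each candidate edge a slack inequality against one precomputed witness (the zero witness for the men-optimal stable matching $S$, or the $\pm 1$ witness $\vec\beta$ for the dominant matching $M^*$), rather than the forbidden-structure path-chasing you also mention — both routes work, but the single-witness slackness argument keeps each claim to a few lines.
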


\subsection{Popular matchings in $G$}

 Let $M$ be any popular matching in $G$. Note that $M$ either matches {\em both} $z$ and $z'$ or leaves both these nodes unmatched.
 This is because both $z$ and $z'$ are unstable nodes in the same connected component in $F_G$ (by Theorem~\ref{thm:separate}), 
 so either both are matched or both are unmatched in $M$ (by Lemma~\ref{prop1}). It is similar with nodes $s^c_0$ and $t^c_0$ 
 for any clause $c$: any popular matching either matches both $s^c_0$ and $t^c_0$ or leaves both these nodes unmatched.
 
\begin{lemma}
  \label{bipartite:lemma2}
  Suppose $M$ is a popular matching in $G$ that matches $z$ and $z'$. Then $M$ is a dominant matching in $G$.
\end{lemma}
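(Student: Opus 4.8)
The plan is to show that if a popular matching $M$ matches both $z$ and $z'$, then no $M$-augmenting path can exist in $G_M$, so by Theorem~\ref{thm:dominant} the matching $M$ is dominant. First I would invoke the structural results we already have: since $z$ and $z'$ are unstable nodes lying in the same component of $F_G$ (by Theorem~\ref{thm:separate}), and $M$ matches them, Lemma~\ref{prop1} tells us that in any witness $\vec\alpha$ of $M$ the whole $\{a_0,b_0,z,z'\}$ component is in unit state, so $\alpha_{a_0}=\alpha_{b_0}=1$ and $\alpha_z=\alpha_{z'}=-1$ (the signs being forced by $M$ matching $z,z'$ and by the fact that $a_0,b_0$ are stable). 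More generally, by Lemma~\ref{prop0}, along every popular edge $(u,v)$ we have $\alpha_u+\alpha_v=\wt_M(u,v)$, and every node left unmatched by $M$ has $\alpha$-value $0$; since a dominant-type witness has $\alpha\in\{0,\pm1\}^n$, I would argue that matching $z,z'$ propagates the unit state through the gadgets that border $z$ and $z'$, and hence (using the structure of the level~$0,1,2,3$ gadgets and the fixed neighbor relations) that $M$ matches \emph{every} node of $G$, i.e.\ $M$ is perfect. A perfect popular matching is automatically dominant, which would finish the lemma.

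Concretely, the key steps in order: (1) fix a witness $\vec\alpha\in\{0,\pm1\}^n$ of $M$ (exists by Lemma from \cite{Kav16}); (2) show $\alpha_z=\alpha_{z'}=-1$ and $\alpha_{a_0}=\alpha_{b_0}=1$, so $M(z)\in\{x_i,\,a^c_1,a^c_3,a^c_5,\,p^c_1,p^c_4,p^c_7\}$ with that neighbor having $\alpha=+1$, and symmetrically for $z'$; (3) trace, gadget by gadget, that the $+1$ value on the $z$-neighbor (resp.\ $z'$-neighbor) forces the state of that gadget to be unit, which in turn forces the gadget to be fully matched by $M$ — here I would use that in each level-$0$,$1$,$2$ gadget the internal $(+,+)$ edges and the $\alpha$-equations leave only the "dominant-like" configurations, all of which saturate the gadget; (4) likewise show the level-$3$ gadgets and the nodes $s^c_0,t^c_0$ get matched (using that $s^c_0,t^c_0$ are in the same component and the preference interleaving with $q_0\succ q_3$ and $p_7\succ p_4$); (5) conclude $M$ is perfect, hence $|M|$ is maximum, hence $M$ is dominant by Definition~\ref{def:dominant}.

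An alternative, possibly cleaner, route I would keep in mind is to argue directly via Theorem~\ref{thm:dominant}: suppose for contradiction there is an $M$-augmenting path $P$ in $G_M$; its endpoints are $M$-exposed, and I would show that the only candidate $M$-exposed nodes, given that $z,z'$ are matched, would have to lie in level-$0$ or level-$3$ gadgets, but the local structure of those gadgets (every node adjacent only to partners whose $\wt_M$-labels are controlled by the unit state) prevents an alternating path from reaching two exposed endpoints. Whichever route, the main obstacle is step (3)/(4): the case analysis showing that "unit state on the boundary node" forces the entire gadget to be $M$-saturated in the right pattern. This requires carefully going through the three level-$2$ gadget shapes and the level-$3$ gadget, using the labels in Figures~\ref{level0:example}--\ref{level2:example} and the $\alpha$-equations from Lemma~\ref{prop0}, and checking that no "half-matched" configuration survives popularity — that is the bookkeeping-heavy heart of the argument, though each individual check is routine.
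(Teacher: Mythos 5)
Your high-level plan is right and matches the paper's strategy: fix a $\{0,\pm1\}^n$ witness $\vec\alpha$, show that the unit state propagates out of $\{z,z'\}$, deduce that $s^c_0,t^c_0$ must be matched for every clause $c$, and conclude that $M$ is a perfect matching (a perfect popular matching is automatically dominant). The first two of your concrete steps — $\alpha_z = \alpha_{z'} = -1$ by Lemma~\ref{prop0}, hence $\alpha_{x_i} = \alpha_{y_i} = 1$ via the $\wt_M = 0$ edges $(x_i,z)$ and $(z',y_i)$, hence all level-1 gadgets are in unit state with $\alpha_{x'_i} = \alpha_{y'_i} = -1$ — are exactly the paper's steps.

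Where your proposal goes off the rails is the crucial step~(4). The mechanism that forces $s^c_0,t^c_0$ to be matched is \emph{not} the preference interleaving of $s^c_0,t^c_0$ with the level-2 nodes ($q_0 \succ q_3$, $p_7 \succ p_4$) — that interleaving is what makes Lemma~\ref{lemma1} work, a different statement proved for a different purpose. Here the paper uses a much more direct route: the level-3 node $s^c_0$ (resp.\ $t^c_0$) has, at the \emph{tail} of its preference list, all the level-1 nodes $y'_i$ (resp.\ $x'_i$). If $s^c_0$ were unmatched, then $\alpha_{s^c_0}=0$ by Lemma~\ref{prop0}, and $\wt_M(s^c_0,y'_i)=0$ (since $s^c_0$ prefers $y'_i$ to being unmatched while $y'_i$, matched inside its gadget by Theorem~\ref{thm:separate}, prefers its partner to $s^c_0$). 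But $\alpha_{s^c_0}+\alpha_{y'_i}=0+(-1)=-1 < 0 = \wt_M(s^c_0,y'_i)$, violating the witness inequality — a one-line contradiction. Your proposal never identifies the edges $(s^c_0,y'_i)$ and $(x'_i,t^c_0)$, which is the one genuinely load-bearing observation. Two smaller issues: (i)~the claim in step (3) that level-0 gadgets are forced into unit state by $z,z'$ is false (e.g.\ $\wt_M(a^c_1,z)=-2$, so $\alpha_z=-1$ imposes no constraint on $\alpha_{a^c_1}$); and (ii)~the reasoning ``unit state forces the gadget to be saturated'' is doing no work for levels~0,1,2 — every node in those gadgets is stable, and every popular matching matches every stable node, so they are saturated regardless of state. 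The only nodes whose saturation is at issue are the unstable ones $s^c_0,t^c_0$, and for those you need the specific level-1-to-level-3 edges.
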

\begin{proof}
  Let $\vec{\alpha} \in \{0,\pm 1\}^n$ be a witness of $M$. It follows from
  Theorem~\ref{thm:separate} that $(a_0,z)$ and $(z',b_0)$ are in $M$.
  Since $z$ and $z'$ prefer their neighbors in level~1 
  to $a_0$ and $b_0$ respectively while these neighbors prefer their partners in $M$ to $z$ and $z'$ (by Theorem~\ref{thm:separate}),
  we have $\wt_M(x_i,z) = \wt_M(z',y_i) = 0$. 

  The nodes $z$ and $z'$ are unstable in $G$ and $M$ matches them, so $\alpha_z = \alpha_{z'} = -1$ (by Lemma~\ref{prop0}).
  Since $\alpha_{x_i} + \alpha_z \ge 0$ and  $\alpha_{z'} + \alpha_{y_i} \ge 0$, it follows that $\alpha_{x_i} = \alpha_{y_i} = 1$ for all $i$.  
  Thus all nodes in level~1 have $\alpha$-values equal to $\pm 1$ (by Lemma~\ref{prop1}).
  In particular, $\alpha_{x'_i} = \alpha_{y'_i} = -1$. This is due to the fact that $\wt_M(x_i,y'_i) = 0$ (by Theorem~\ref{thm:separate})
  and $\alpha_{x_i} + \alpha_{y'_i} = \wt_M(x_i,y'_i)$ (by Lemma~\ref{prop0}) as $(x_i,y'_i)$ is a popular edge. 
  Similarly, with $(x'_i,y_i)$.

  Suppose $s_0^c,t_0^c$ are unmatched in $M$. Then $\wt_M(s^c_0,y'_i) = \wt_M(x'_i,t^c_0) = 0$. This is because
  $s^c_0$ and $t^c_0$ prefer to be matched to any neighbor than be unmatched while (by Theorem~\ref{thm:separate}) 
  $y'_i$ and $x'_i$ prefer their partners in $M$ to $s^c_0$ and $t^c_0$, respectively. Since we assumed $s_0^c,t_0^c$ to be unmatched in $M$,
  $\alpha_{s_0^c} = \alpha_{t^c_0} = 0$ (by Lemma~\ref{prop0}). So this implies that $\alpha_{s_0} + \alpha_{y'_i} = 0 -1 < \wt_M(s^c_0,y'_i)$, i.e.,
  the edge $(s^c_0,y'_i)$ (similarly, $(x'_i,t^c_0)$) is not covered by the sum of $\alpha$-values of its endpoints, a contradiction.
  Thus $s_0^c,t_0^c$ are forced to be matched in $M$.
  
  Thus $s^c_0$ and $t^c_0$ for all clauses $c$
  are matched in $M$. The other nodes in $G$ are stable and hence they have to be matched in $M$.
  Thus $M$ is a perfect matching and also popular, so it is a dominant matching in $G$. \qed
\end{proof}

\begin{lemma}
  \label{bipartite:lemma3}
  Suppose $M$ is a popular matching in $G$ that leaves $s^c_0$ and $t^c_0$ unmatched for some $c$. Then $M$ is a stable matching 
  in $G$.
\end{lemma}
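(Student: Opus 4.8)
The plan is to prove that the witness $\vec\alpha\in\{0,\pm1\}^n$ of $M$ must be the all-zeros vector; once this is established, $\wt_M(u,v)\le\alpha_u+\alpha_v=0$ for every edge $(u,v)$, so $M$ has no $(+,+)$ edge and is therefore stable. I begin with two quick reductions. Since $s^c_0,t^c_0$ are unmatched unstable nodes, Lemma~\ref{prop0} gives $\alpha_{s^c_0}=\alpha_{t^c_0}=0$, and then Lemma~\ref{prop1} and Theorem~\ref{thm:separate} put the level~3 gadget of $c$ in zero state. Moreover, if $z,z'$ were matched, Lemma~\ref{bipartite:lemma2} would make $M$ dominant, hence a perfect matching, contradicting that $s^c_0$ is unmatched; so $z,z'$ are unmatched, $\alpha_z=\alpha_{z'}=0$, and the component $\{a_0,b_0,z,z'\}$ is in zero state too.

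The engine of the proof is this: if $u$ is $M$-exposed, $(u,v)\in E$, and $v$ prefers its $M$-partner to $u$, then $(u,v)$ has label $(+,-)$, so $\wt_M(u,v)=0$, and if in addition $\alpha_u=0$ the witness inequality $\alpha_u+\alpha_v\ge\wt_M(u,v)$ yields $\alpha_v\ge0$. I would run this with $u$ ranging over $\{z,z',s^c_0,t^c_0\}$ (all $M$-exposed, all with $\alpha_u=0$) and $v$ over their gadget neighbours: by Theorem~\ref{thm:separate} the $M$-partner of such a $v$ lies inside $v$'s own component, and the preference lists show it is ranked ahead of $u$ ($z,z'$ come just below the two within-gadget neighbours of $x_i,y_i$, and last for the ``entry'' nodes $a^{c'}_{1},a^{c'}_{3},a^{c'}_{5},b^{c'}_{1},b^{c'}_{3},b^{c'}_{5}$ of level~0 and $p^{c'}_{1},p^{c'}_{4},p^{c'}_{7},q^{c'}_{0},q^{c'}_{3},q^{c'}_{6}$ of level~2; and, crucially, $s^c_0$ resp.\ $t^c_0$ has in the tail of its list \emph{every} node $y'_i$ and every $q^{c'}_t$, resp.\ \emph{every} $x'_i$ and every $p^{c'}_t$, for $t\in\{2,5,8\}$, all of which come after the respective within-gadget neighbours). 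Consequently $\alpha_v\ge0$ for every node $v$ of every level~1 gadget, and likewise for the listed entry nodes and for $p^{c'}_{2},p^{c'}_{5},p^{c'}_{8},q^{c'}_{2},q^{c'}_{5},q^{c'}_{8}$.

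With these sign constraints I would collapse each gadget to zero state. A level~1 gadget has all four of its (stable, hence matched) nodes paired off by popular --- thus within-gadget --- edges, and for each such $M$-edge $(u,v)$ complementary slackness (Lemma~\ref{prop0}) gives $\alpha_u+\alpha_v=\wt_M(u,v)=0$, which together with $\alpha_u,\alpha_v\ge0$ forces $\alpha_u=\alpha_v=0$. The same scheme works for level~0 and level~2 gadgets, now using that $M$ restricts to one of the few possible perfect matchings inside the gadget: for level~0 every non-matching interior edge is $(+,-)$ or $(-,+)$ with $\wt_M=0$, so the sign constraints propagate to $\vec\alpha=\vec0$ on the whole gadget; for level~2 the two non-stable restrictions each contain an interior $(+,+)$ edge which, combined with the entry-node constraints, would force some $\alpha_u\ge2$ and is therefore impossible, while the stable restriction again propagates to $\vec\alpha=\vec0$. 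Finally, for a clause $c''\ne c$: if $s^{c''}_0,t^{c''}_0$ were matched, Lemma~\ref{prop0} would give them value $-1$, and since $\alpha_{q^{c''}_0}=0$ (level~2 is already done) the edge $(s^{c''}_0,q^{c''}_0)$ would force $\wt_M(s^{c''}_0,q^{c''}_0)=-2$, hence $M(s^{c''}_0)=t^{c''}_1$, and symmetrically $M(t^{c''}_0)=s^{c''}_3$; but then $s^{c''}_1$ --- a stable node all of whose popular edges go to $t^{c''}_1$ or $t^{c''}_0$ --- could not be matched, a contradiction. So $\alpha_{s^{c''}_0}=\alpha_{t^{c''}_0}=0$, and Lemma~\ref{prop1} puts this level~3 gadget in zero state as well.

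Putting all of this together, and using Lemma~\ref{prop0} one last time so that $\alpha_u=0$ for every (necessarily unstable) node left unmatched by $M$, we obtain $\vec\alpha=\vec0$, hence $M$ is stable. I expect the main obstacle to be the middle step --- realising that the \emph{unmatched} nodes $s^c_0,t^c_0$ are the right lever, since their preference-list tails reach into every level~1 gadget (via the $y'_i,x'_i$) and every level~2 gadget, so the hypothesis forces a global sign condition; the rest is a routine, if lengthy, case analysis gadget type by gadget type (with Theorem~\ref{thr:characterize-popular}, or equivalently the witness inequalities, used to verify the ``impossible'' restrictions).
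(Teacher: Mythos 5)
Your proof is correct and rests on the same overall strategy as the paper's: show that the $\{0,\pm1\}^n$-witness of $M$ must be the all-zeros vector, which forces every edge to have $\wt_M\le 0$ and hence $M$ to be stable. The details, however, are organized differently, and the comparison is instructive. The paper deduces $\alpha_z=\alpha_{z'}=0$ last, as a consequence of the level-1 gadgets being in zero state; you deduce it first, by invoking Lemma~\ref{bipartite:lemma2} (if $z,z'$ were matched $M$ would be perfect, contradicting $s^c_0$ being exposed). This gives you four $M$-exposed nodes with $\alpha=0$ from the outset, and you exploit all four simultaneously to get $\alpha_v\ge0$ on a large set of ``entry'' nodes before collapsing the gadgets, whereas the paper proceeds level by level in a fixed order (level~2 $\rightarrow$ level~3 $\rightarrow$ level~1 $\rightarrow z,z'\rightarrow$ level~0). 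Your level-2 argument is also different: the paper observes that $(p^c_2,q^c_2)$ is a popular edge with $\wt_M(p^c_2,q^c_2)\le0$, so $\alpha_{p^c_2}+\alpha_{q^c_2}\le0$ pinches them both to zero; you instead do a short case analysis over the three possible restrictions of $M$ to the gadget and show the two non-stable ones drive some $\alpha_u\ge 2$ (which, on checking, is literally correct: the $(+,+)$ edge together with the matched edge and the nonnegativity of the opposite endpoint's $\alpha$-value does push a single coordinate to $2$). Finally, for the level-3 gadgets with $c''\ne c$, your argument is slightly more explicit than the paper's, since you spell out why the remaining case $M(s^{c''}_0)=t^{c''}_1,M(t^{c''}_0)=s^{c''}_3$ is impossible (it strands the stable node $s^{c''}_1$); the paper subsumes this inside the assertion that at least one of $s^{c''}_0,t^{c''}_0$ fails to get its top choice within the gadget. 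Both routes are sound; yours trades the paper's economical ``$\wt_M\le0$ on the popular edge $(p_2,q_2)$'' observation for a symmetric, once-and-for-all deployment of the exposed-node lever, at the cost of a bit more case-checking in levels~0 and~2.
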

\begin{proof}
  We will repeatedly use Theorem~\ref{thm:separate} here.
  Let $\vec{\alpha} \in \{0,\pm 1\}^n$ be a witness of $M$. Since the nodes $s^{c}_0$ and $t^{c}_0$ are unmatched in $M$, we have
  $\alpha_{s^c_0} = \alpha_{t^c_0} = 0$ (by Lemma~\ref{prop0}). Also $\wt_M(s^c_0,q^c_2) = \wt_M(p^c_2,t^c_0) = 0$ since both $s^c_0$ and 
  $t^c_0$ prefer to be matched than be unmatched while $q^c_2$ and $p^c_2$ prefer their partners in $M$ to $s^c_0$ and $t^c_0$, 
  respectively (by Theorem~\ref{thm:separate}). So $\alpha_{p^c_2} \ge 0$ and $\alpha_{q^c_2} \ge 0$ for all $c$.

  We also have $\alpha_{p^c_2} + \alpha_{q^c_2} = \wt_M(p^c_2, q^c_2) \le 0$ since any popular matching
  matches $p^c_2$ to a partner at least as good as $q^c_2$  and similarly, $q^c_2$ to a partner at least as good as $p^c_2$
  (by Theorem~\ref{thm:separate}).
  This means that $\alpha_{p^c_2} = \alpha_{q^c_2} = 0$. The same argument can be used for every $p^{c_i}_{3j+2}$ and $q^{c_i}_{3j+2}$
  (for any clause $c_i$ and $j = 0,1,2$) to show that $\alpha_{p^{c_i}_{3j+2}} = \alpha_{q^{c_i}_{3j+2}} = 0$. Thus all level~2 nodes
  have $\alpha$-values equal to 0 (by Lemmas~\ref{prop1} and \ref{lem:conn-comp}).

  The fact that all level~2 nodes have $\alpha$-values equal to 0 immediately implies that all level~3 nodes also have 
  $\alpha$-values equal to 0. This is because if $M$ matches $s^{c_i}_0$ and $t^{c_i}_0$ for some clause $c_i$ then at least one of 
  $s^{c_i}_0,t^{c_i}_0$ is not matched to its top choice
  neighbor in its gadget. So either $\wt_M(s^{c_i}_0, q^{c_i}_0) = 0$ or $\wt_M(p^{c_i}_7,t^{c_i}_0) = 0$. 
  Since $\alpha_{q^{c_i}_0} = \alpha_{p^{c_i}_7} = 0$ (these are level~2 nodes) and $\alpha_{s^{c_i}_0} = \alpha_{t^{c_i}_0} = -1$ 
  (by Lemma~\ref{prop0}), we have a
  contradiction. Thus $\alpha_u = 0$ for every level~3 node $u$ (by Lemma~\ref{prop1}). 

  Similarly, $\alpha_{x'_i} \ge 0$ and $\alpha_{y'_i} \ge 0$ for all $r$ as the edges $(x'_i,t^c_0)$ and
  $(s_0^c,y'_i)$ would not be covered otherwise. Also, $\alpha_{x'_i} + \alpha_{y'_i} = \wt_M(x'_i, y'_i)$ (by Lemma~\ref{prop0}) as
  $(x'_i,y'_i)$ is a popular edge and $\wt_M(x'_i, y'_i) \le 0$ since $M$ matches $x'_i$ to either $y_i$ or $y'_i$
  and similarly, $y'_i$ to either $x_i$ or $x'_i$ (by Theorem~\ref{thm:separate}). Thus $\alpha_{x'_i} = \alpha_{y'_i} = 0$.
  This means that all level~1 nodes have $\alpha$-values equal to 0 (by Lemma~\ref{prop1}).
  
  Since all level~1 nodes have $\alpha$-values equal to 0, we have $\alpha_z = \alpha_{z'} = 0$; otherwise
  the  edges $(x_i,z)$ and $(z',y_i)$ would not be covered. This is because $\wt_M(x_i,z) = \wt_M(z',y_i) = 0$ 
(by Theorem~\ref{thm:separate}). So in order to cover the edges $(x_i,z)$ and $(z',y_i)$, we need to have $\alpha_z \ge 0$ 
and $\alpha_{z'} \ge 0$, i.e., $\alpha_z = \alpha_{z'} = 0$ (by Lemma~\ref{prop0}). 
Thus $\alpha_{a_0} = \alpha_{b_0} = 0$ (by Theorem~\ref{thm:separate} and Lemma~\ref{prop1}).

Moreover, $\alpha_z = \alpha_{z'} = 0$ also implies that all their neighbors in level~0 have their $\alpha$-values at least 0. For instance,
consider $a^c_1$ and $b^c_1$: in order to cover the edges $(a^c_1,z)$ and $(z',b^c_1)$, we have $\alpha_{a^c_1} \ge 0$ and $\alpha_{b^c_1} \ge 0$.
Since either $(a^c_1,b^c_1) \in M$ or $(a^c_1,b^c_2),(a^c_2,b^c_1)$ are in $M$ (by Theorem~\ref{thm:separate}), we have
$\wt_M(a^c_1,b^c_1) = 0$. Because $(a^c_1,b^c_1)$ is a popular edge, this means $\alpha_{a^c_1} + \alpha_{b^c_1} = 0$ (by Lemma~\ref{prop0}).
Thus $\alpha_{a^c_1} = \alpha_{b^c_1} = 0$.

Similarly, $\alpha_{a^c_{2i-1}} = \alpha_{b^c_{2i-1}} = 0$ for $i = 1,2,3$ and all clauses $c$.
Thus all level~0 nodes have $\alpha$-values equal to 0.
So $\vec{\alpha} = \vec{0}$, i.e., $\wt_M(e) \le 0$ for all edges $e$. In other words, there is no blocking edge to $M$. 
Thus $M$ is a stable matching. \qed
\end{proof}  

It follows from Lemmas~\ref{bipartite:lemma2} and \ref{bipartite:lemma3} that if $M$ is a popular matching in $G$ that is
neither stable nor dominant then $M$ has to match nodes $s^c_0,t^c_0$ for all $c$ and leave $z$ and $z'$ unmatched.
Equivalently, $M$ has to match all nodes except $z$ and $z'$.

Conversely, if $M$ is a popular matching in $G$ that matches all nodes except $z$ and $z'$ then $M$ is neither a max-size popular matching
nor a min-size popular matching, i.e., $M$ is neither dominant nor stable. Thus we can conclude the following theorem.

\begin{theorem}
  \label{bipartite:theorem}
  The graph $G$ admits a popular matching that is neither stable nor dominant if and only if $G$ admits a popular matching
  that matches all nodes except $z$ and $z'$.
\end{theorem}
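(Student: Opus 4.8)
The plan is to prove the two implications separately, in both cases feeding off the two structural lemmas just proved together with an elementary count of matched vertices.

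For the forward direction I would argue by contraposition on each lemma. Suppose $M$ is a popular matching in $G$ that is neither stable nor dominant. Recall from the discussion preceding Lemma~\ref{bipartite:lemma2} (which relies on Theorem~\ref{thm:separate} and Lemma~\ref{prop1}) that any popular matching either matches both of $z,z'$ or leaves both of them unmatched, and similarly for $s^c_0,t^c_0$ for each clause $c$. If $M$ matched $z$ and $z'$, then Lemma~\ref{bipartite:lemma2} would force $M$ to be dominant; since it is not, $M$ leaves $z$ and $z'$ unmatched. If $M$ left $s^c_0,t^c_0$ unmatched for some $c$, then Lemma~\ref{bipartite:lemma3} would force $M$ to be stable; since it is not, $M$ matches $s^c_0$ and $t^c_0$ for every clause $c$. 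Finally, every node of $G$ other than $z,z'$ and the nodes $s^c_0,t^c_0$ is a \emph{stable} node --- this is witnessed by the men-optimal stable matching $S$ described above, in which precisely these nodes are left unmatched --- and every popular matching matches all stable nodes. Hence $M$ matches all nodes of $G$ except $z$ and $z'$.

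For the converse, suppose $M$ is a popular matching that matches all of $V\setminus\{z,z'\}$, so $|M|=(n-2)/2$, where $n=|A\cup B|$. On one hand, $S$ is a stable matching, hence a minimum-size popular matching, and $|S|=(n-2-2k)/2$ with $k\ge 1$ the number of clauses of $\phi$; thus the minimum size of a popular matching in $G$ is $|S|<|M|$, so $M$ is not a minimum-size popular matching and in particular is not stable. On the other hand, the perfect matching $M^*$ is popular (a witness is exhibited above), so the maximum size of a popular matching in $G$ is $n/2>|M|$; since every dominant matching is a maximum-size popular matching, $M$ is not dominant. Therefore $M$ is a popular matching that is neither stable nor dominant, which establishes the stated equivalence.

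I do not expect a genuine obstacle at this stage: essentially all of the difficulty is concentrated in Theorem~\ref{thm:separate} (that each gadget is its own connected component of $F_G$) and in Lemmas~\ref{bipartite:lemma2}--\ref{bipartite:lemma3}. The only point that needs a moment's care is identifying the set of unstable nodes of $G$ as exactly $\{z,z'\}\cup\{s^c_0,t^c_0 : c\ \text{a clause}\}$, which one reads off directly from the explicit stable matching $S$, and keeping track of the corresponding sizes so that the comparisons $|S|<|M|<|M^*|$ are strict.
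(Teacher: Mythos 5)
Your proposal is correct and follows essentially the same route the paper takes: the forward implication is exactly the contrapositive reading of Lemmas~\ref{bipartite:lemma2} and~\ref{bipartite:lemma3} (together with the fact that popular matchings match all stable nodes), and the converse is the size comparison $|S|<|M|<|M^*|$ showing $M$ is neither min-size nor max-size popular. The paper states this more tersely but the argument is the same; your added care about identifying the unstable nodes via the explicit $S$ and noting $k\ge 1$ for strictness is a reasonable expansion of what the paper leaves implicit.
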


\subsection{Desired popular matchings in $G$}
We will call a matching $M$ in $G$ that matches all nodes except $z$ and $z'$ a {\em desired popular matching}
here. Let $M$ be such a matching and let $\vec{\alpha} \in\{0,\pm 1\}^n$ be a witness of $M$, where $n$ is the number of
nodes in $G$.

Recall Definition~\ref{def:stab-domn} from Section~\ref{prelims}.
We say a gadget is in {\em unit} (similarly, {\em zero}) state in $\vec{\alpha}$ if for any node $u$ in this gadget,
we have $\alpha_u \in \{\pm 1\}$ (resp., $\alpha_u = 0$). The following two observations will be important here. 

\begin{itemize}
\item[1.] All level~3 gadgets have to be in {\em unit} state in $\vec{\alpha}$.
\item[2.] All level~0 gadgets have to be in {\em zero} state in $\vec{\alpha}$.
\end{itemize}

  The nodes $s^c_0$ and $t^c_0$, for all clauses $c$, are left unmatched in any stable matching in $G$.
  Since $M$ has to match the unstable nodes $s^c_0$ and $t^c_0$ for all clauses $c$, 
  $\alpha_{s^c_0} = \alpha_{t^c_0} = -1$ for all $c$ (by Lemma~\ref{prop0}). Thus the first observation follows 
  from Lemmas~\ref{prop1} and \ref{lem:conn-comp}. We prove the second observation below.

\begin{new-claim}
  Any level~0 gadget has to be in zero state in $\vec{\alpha}$.
\end{new-claim}
\begin{proof}
  Consider any level~0 gadget, say on nodes $a^c_1,b^c_1,a^c_2,b^c_2$. Since $M$ is a popular matching, we have $\alpha_{a^c_1} + \alpha_z \ge \wt_M(a^c_1,z)$ and
  $\alpha_{z'} + \alpha_{b^c_1}  \ge \wt_M(z',b^c_1)$. Since $z$ and $z'$ are unmatched in $M$, $\alpha_{z} = \alpha_{z'} = 0$ (by Lemma~\ref{prop0}). 
  Since $\wt_M(a^c_1,z) = 0$ and $\wt_M(z',b^c_1) = 0$, we have $\alpha_{a^c_1} \ge 0$ and $\alpha_{b^c_1} \ge 0$.

  The edge $(a^c_1,b^c_1)$ is a popular edge. Thus $\alpha_{a^c_1} + \alpha_{b^c_1} = \wt_M(a^c_1,b^c_1)$ (by Lemma~\ref{prop0}) and we have 
  $\wt_M(a^c_1,b^c_1) = 0$ (as seen in the last part of the proof of Lemma~\ref{bipartite:lemma3}).
  Thus $\alpha_{a^c_1} + \alpha_{b^c_1} = 0$. So $\alpha_{a^c_1} = \alpha_{b^c_1} = 0$.
  Thus this gadget is in zero state and this holds for every level~0 gadget. \hfill $\lozenge$
\end{proof}

Lemmas~\ref{lemma1}-\ref{lemma3} are easy to show and are crucial to our NP-hardness proof. Let $c = X_i \vee X_j \vee X_k$ be any clause 
in $\phi$. In our proofs below, we are omitting the superscript $c$ from node names for the sake of readability. Recall that 
$\vec{\alpha} \in \{0, \pm 1\}^n$ is a witness of our ``desired popular matching'' $M$. 
\begin{lemma}
  \label{lemma1}
    For every clause $c$ in $\phi$, at least two of the three level~2 gadgets corresponding to $c$ have to be in unit state in $\vec{\alpha}$.
\end{lemma}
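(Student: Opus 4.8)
\textbf{Proof plan for Lemma~\ref{lemma1}.}

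The plan is to show that if two of the three level~2 gadgets of a clause $c = X_i \vee X_j \vee X_k$ were in zero state, we could not simultaneously keep the associated level~3 gadget in unit state — which we already know (Observation~1) it must be. Recall the structure: the three level~2 gadgets contain the ``hub'' nodes $p^c_2,q^c_2$ (feeding into the level~3 node through $y_j$-side and $x_k$-side), $p^c_5,q^c_5$, and $p^c_8,q^c_8$; and crucially $q^c_0,q^c_3,q^c_6$ are neighbors of $z'$ and $s^c_0$, while $p^c_1,p^c_4,p^c_7$ are neighbors of $z$ and $t^c_0$. The ``$\cdots$'' tails of $p^c_{3j+2}$ and $q^c_{3j+2}$ lead to the level~3 nodes $t^c_0$ and $s^c_0$ respectively (for all clauses), and conversely the level~3 nodes $s^c_0$ and $t^c_0$ have $q^c_{3j+2}$ and $p^c_{3j+2}$ in their ``$\cdots$'' tails.

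First I would argue a local dichotomy for each individual level~2 gadget: since every level~2 gadget is a single connected component in $F_G$ (Lemma~\ref{lem:conn-comp}) and every node in it has a witness value in $\{0,\pm 1\}$, the gadget is entirely in zero state or entirely in unit state (by Lemma~\ref{prop1}). Next, suppose for contradiction that \emph{two} of the three level~2 gadgets for $c$ — say the ones containing $q^c_0$ and $q^c_3$ (on the $s^c_0$ side) — are in zero state; I will extract the key inequality at the corresponding level~3 node $s^c_0$. Since $s^c_0$'s preference list is $t_1 \succ \underline{q_0} \succ t_2 \succ \underline{q_3} \succ t_3 \succ \cdots$, and $s^c_0$ is matched in $M$ (it is unstable but covered in any desired popular matching), $s^c_0$ is matched to one of $t_1^c,t_2^c,t_3^c$ (the edges to $q_0^c,q_3^c$ are not in $M$ because those are popular edges only when used inside their own level~2 gadget — this is part of Theorem~\ref{thm:separate}, and I would cite it). The point is: at least one of $q^c_0$ or $q^c_3$ is ranked by $s^c_0$ \emph{above} $s^c_0$'s actual partner $M(s^c_0) \in \{t_1^c, t_2^c, t_3^c\}$, or symmetrically $q$ is ranked below but then I examine the other gadget; combining this with the fact that $q^c_0$ (resp. $q^c_3$) in zero state is matched inside its own level~2 gadget to a partner it ranks at least as high as $s^c_0$, I get a $(+,+)$ or at least $(+,-)$ label forcing $\wt_M(s^c_0, q^c_0)$ (resp. $q^c_3$) to be $0$ or $2$, hence $\alpha_{s^c_0} + \alpha_{q}^{c} = \alpha_{s^c_0} + 0 \ge \wt_M \ge 0$, giving $\alpha_{s^c_0} \ge 0$. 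But $\alpha_{s^c_0} = -1$ (Observation~1 via Lemma~\ref{prop0}), a contradiction. The symmetric argument on the $t^c_0$ side (using $p^c_1,p^c_4,p^c_7$ and the preference list $t_0\colon s_3 \succ \underline{p_7} \succ s_2 \succ \underline{p_4} \succ s_1 \succ \cdots$) handles the case where the two zero-state gadgets are on the $t$-side; and if the two zero-state gadgets straddle — one contributing a $q$ below $s^c_0$'s partner and its paired $p$ below $t^c_0$'s partner — I would need to use the interleaving more carefully, checking that whichever of $s^c_0, t^c_0$ is not matched to its top gadget-choice yields the contradiction. Actually the cleanest framing: among the three pairs $\{q^c_0,p^c_1\}$-type hub-adjacent nodes, at most one pair can ``block'' $s^c_0$/$t^c_0$ favorably, so two zero-state gadgets always expose an uncovered edge at $s^c_0$ or $t^c_0$.

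The main obstacle I anticipate is the bookkeeping around \emph{which} level~3 node ($s^c_0$ or $t^c_0$) the contradiction lands at, because the $q$-nodes feed $s^c_0$ and the $p$-nodes feed $t^c_0$, and a level~2 gadget contributes one of each. So a careful case analysis is needed: a level~2 gadget in zero state forces \emph{both} its $q$-hub-neighbor ($q^c_{3j}$, a neighbor of $s^c_0$) and its $p$-hub-neighbor ($p^c_{3j+1}$, a neighbor of $t^c_0$) to have $\alpha = 0$; then using the interleaved preference lists of $s^c_0$ and $t^c_0$ — where exactly one of the three $t$-options (resp. $s$-options) can be $M(s^c_0)$ (resp. $M(t^c_0)$) — two zero-state gadgets force $\alpha_{s^c_0} \ge 0$ or $\alpha_{t^c_0} \ge 0$ in every configuration, contradicting $\alpha_{s^c_0} = \alpha_{t^c_0} = -1$. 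I would organize this as: (a) local dichotomy per gadget; (b) zero state of a gadget $\Rightarrow$ $\alpha = 0$ on its two hub-neighbors and they are matched internally to a partner at least as good as $s^c_0$/$t^c_0$; (c) count: with two zero-state gadgets, the surviving unit-state gadget cannot simultaneously shield both $s^c_0$ and $t^c_0$ from the uncovered-edge argument; (d) conclude. Hence at least two of the three level~2 gadgets are in unit state.
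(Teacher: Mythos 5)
Your plan is the contrapositive of the paper's argument and the overall idea is sound, but it contains a concrete factual error about the gadget structure that undermines the ``counting'' step, and the case analysis is not carried out. Specifically, you assert that every level-$2$ gadget has ``its $q$-hub-neighbor ($q^c_{3j}$, a neighbor of $s^c_0$) and its $p$-hub-neighbor ($p^c_{3j+1}$, a neighbor of $t^c_0$).'' This is false: checking the preference lists, $q^c_6$ is \emph{not} adjacent to $s^c_0$ (its list ends at $z'$) and $p^c_1$ is \emph{not} adjacent to $t^c_0$ (its list ends at $z$). Only the middle gadget ($p^c_3,\ldots,q^c_5$) touches both $s^c_0$ (via $q^c_3$) and $t^c_0$ (via $p^c_4$); the leftmost touches only $s^c_0$ (via $q^c_0$) and the rightmost only $t^c_0$ (via $p^c_7$). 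This asymmetry is not incidental --- it is engineered so that the three choices for which pair $(s^c_0,t^c_i),(s^c_i,t^c_0)$ lies in $M$ (namely $i=1,2,3$) match exactly with the three pairs of level-$2$ gadgets that must go to unit state, via $s^c_0$'s interleaved list $t_1\succ q_0\succ t_2\succ q_3\succ t_3$ and $t^c_0$'s interleaved list $s_3\succ p_7\succ s_2\succ p_4\succ s_1$. Because your step (b) presupposes a symmetric adjacency structure that does not exist, your step (c) (``the surviving unit-state gadget cannot simultaneously shield both $s^c_0$ and $t^c_0$'') remains a heuristic: you acknowledge the bookkeeping is the obstacle but do not resolve the ``straddle'' case, and the pair $\{q^c_0,p^c_1\}$ you single out as ``hub-adjacent'' is in fact only half-adjacent.

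The paper avoids all of this by arguing \emph{directly} rather than by contradiction: it observes that $\alpha_{s^c_0}=\alpha_{t^c_0}=-1$ (your Observation~1), then cases on $M(s^c_0)\in\{t^c_1,t^c_2,t^c_3\}$. In each of the three cases, the interleaved preference lists identify \emph{two specific} hub neighbors among $\{q^c_0,q^c_3,p^c_4,p^c_7\}$ that $s^c_0$ or $t^c_0$ strictly prefers to its $M$-partner; for those, $\wt_M(\cdot)=0$, so the covering constraint $\alpha_u+\alpha_{s^c_0}\ge 0$ (or with $t^c_0$) forces $\alpha_u=1$, which by Lemma~\ref{prop1} puts two specific level-$2$ gadgets in unit state. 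This is three short cases with no contradiction bookkeeping. Your contradiction route does go through (I checked: for each of the $3$ choices of the lone unit-state gadget crossed with each of the $3$ level-$3$ matchings, at least one of the covering constraints at $s^c_0$ or $t^c_0$ is violated), but that is a $3\times 3$ case split versus the paper's $3$, and you would first have to fix the adjacency claim and then actually verify the nine cells. As written, the proposal has both a wrong structural claim and an incomplete argument.
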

\begin{proof}
  Let $c$ be any clause in $\phi$.
  We know from observation~1 above that the level~3 gadget corresponding to $c$ is in {\em unit} state in $\vec{\alpha}$. 
  So $\alpha_{s_0} = \alpha_{t_0} = -1$.
  Also, one of the following three cases holds: (1)~$(s^c_0,t^c_1)$ and $(s^c_1,t^c_0)$ are in $M$, 
   (2)~$(s^c_0,t^c_2)$ and $(s^c_2,t^c_0)$ are in $M$,
   (3)~$(s^c_0,t^c_3)$ and $(s^c_3,t^c_0)$ are in $M$.

  \begin{itemize}
  \item In case~(1), the node $t_0$ prefers $p_4$ and $p_7$ to its partner $s_1$ in $M$. Thus $\wt_M(p_4,t_0) = \wt_M(p_7,t_0) = 0$.
    Since $\alpha_{t_0} = -1$, we need to have $\alpha_{p_4} = \alpha_{p_7} = 1$ so that $\alpha_{p_4} + \alpha_{t_0} \ge \wt_M(p_4,t_0)$ and
    $\alpha_{p_7} + \alpha_{t_0} \ge \wt_M(p_7,t_0)$. Thus the middle and rightmost level~2 gadgets corresponding to $c$
  (see Fig.~\ref{level2:example}) have to be in unit state in $\vec{\alpha}$.

  \item In case~(2), the node $t_0$ prefers $p_7$ to its partner $s_2$ in $M$ and the node $s_0$ prefers $q_0$ to its partner $t_2$ 
        in $M$. Thus $\alpha_{p_7} = \alpha_{q_0} = 1$ so that $\alpha_{p_7} + \alpha_{t_0} \ge \wt_M(p_7,t_0)$ and 
        $\alpha_{s_0} + \alpha_{q_0} \ge \wt_M(s_0,q_0)$. Thus the leftmost and rightmost level~2 gadgets corresponding to $c$ 
        (see Fig.~\ref{level2:example}) have to be in unit state in $\vec{\alpha}$.

  \item In case~(3), the node $s_0$ prefers $q_0$ and $q_3$ to its partner $t_3$ in $M$.
    Thus $\alpha_{q_0} = \alpha_{q_3} = 1$ so that $\alpha_{s_0} + \alpha_{q_0} \ge \wt_M(s_0,q_0)$ and 
    $\alpha_{s_0} + \alpha_{q_3} \ge \wt_M(s_0,q_3)$. Thus the leftmost and middle level~2 gadgets corresponding to $c$ 
    (see Fig.~\ref{level2:example}) have to be in unit state in $\vec{\alpha}$. \qed
  \end{itemize}
\end{proof}

\begin{lemma}
  \label{lemma2}
  For any clause $c$ in $\phi$, {\em at least one} of the level~1 gadgets
  corresponding to variables in $c$ is in unit state in $\vec{\alpha}$.
\end{lemma}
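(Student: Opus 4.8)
The plan is to argue by contradiction. Suppose that \emph{none} of the three level~1 gadgets corresponding to $X_i,X_j,X_k$ is in unit state; since each of these gadgets is a connected component of $F_G$ (by Theorem~\ref{thm:separate}), Lemma~\ref{prop1} forces all three of them to be in zero state, i.e. $\alpha_u = 0$ for every node $u$ in any of them. I will then show that all three level~2 gadgets corresponding to $c$ must be in zero state as well, which contradicts Lemma~\ref{lemma1}. The three level~2 gadgets of $c$ play interchangeable roles here (the only differences between them -- an extra neighbour $t^c_0$ of $p^c_4,p^c_7$, and the missing $s^c_0$-edge at $q^c_6$ -- are irrelevant to the argument), so it suffices to treat the leftmost one, on $p^c_0,q^c_0,p^c_1,q^c_1,p^c_2,q^c_2$, and to show $\alpha_{p^c_2}=0$; Theorem~\ref{thm:separate} and Lemma~\ref{prop1} then put that whole gadget in zero state.

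First I would record the cheap inequalities available in any desired popular matching $M$. Since $z,z'$ are unstable and unmatched in $M$, Lemma~\ref{prop0} gives $\alpha_z=\alpha_{z'}=0$. Next, $q^c_0$ is matched inside its gadget -- the edge $(q^c_0,s^c_0)$ is not popular (its endpoints lie in different components of $F_G$), and $z'$ is unmatched -- so $M(q^c_0)\in\{p^c_0,p^c_2\}$, both of which $q^c_0$ prefers to $z'$; hence $(q^c_0,z')$ has labels $(-,+)$, $\wt_M(q^c_0,z')=0$, and the covering constraint yields $\alpha_{q^c_0}\ge 0$. Because $(p^c_2,q^c_0)$ is a popular edge (it lies in the dominant matching $M^*$) and a one-line check shows $\wt_M(p^c_2,q^c_0)=0$ always (its label at $q^c_0$ is never $+$ and at $p^c_2$ never $-$ unless the edge is in $M$), Lemma~\ref{prop0} gives $\alpha_{p^c_2}+\alpha_{q^c_0}=0$, so $\alpha_{p^c_2}\le 0$. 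For the matching lower bound I would use the level~1 neighbour $y_j$ of $p^c_2$ together with the hypothesis $\alpha_{y_j}=0$: provided $M(p^c_2)\ne q^c_0$, the node $p^c_2$ prefers $y_j$ to its partner (which is then $q^c_1$ or $q^c_2$), so $(p^c_2,y_j)$ has labels $(+,-)$, $\wt_M(p^c_2,y_j)=0$, and the covering constraint gives $\alpha_{p^c_2}\ge 0$; combined with the above this gives $\alpha_{p^c_2}=0$.

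It remains to exclude $M(p^c_2)=q^c_0$. In that case $M$ restricted to the gadget is forced to be $\{(p^c_0,q^c_2),(p^c_1,q^c_1),(p^c_2,q^c_0)\}$, and then $(p^c_0,q^c_0)$ is a $(+,+)$ edge. Here I would chase constraints around the gadget using $\alpha_{x_k}=0$ ($X_k$'s level~1 gadget being in zero state): $(q^c_2,x_k)$ has labels $(+,-)$ -- $q^c_2$ is matched to $p^c_0$, which it ranks below $x_k$ -- so $\alpha_{q^c_2}\ge 0$; the popular edge $(p^c_1,q^c_2)$ has $\wt_M=0$, so $\alpha_{p^c_1}=-\alpha_{q^c_2}\le 0$, while the edge $(p^c_1,z)$ gives $\alpha_{p^c_1}\ge 0$, forcing $\alpha_{p^c_1}=\alpha_{q^c_2}=0$; the matched edge $(p^c_0,q^c_2)$ then gives $\alpha_{p^c_0}=0$, and finally the popular $(+,+)$ edge $(p^c_0,q^c_0)$ yields $\alpha_{q^c_0}=2$, which is impossible because every node of the (connected) gadget has $\alpha\in\{0\}$ or $\alpha\in\{\pm1\}$. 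Hence $M(p^c_2)\ne q^c_0$, so $\alpha_{p^c_2}=0$ and the leftmost level~2 gadget of $c$ is in zero state. The middle and rightmost level~2 gadgets are handled identically, with $y_j,x_k$ replaced by $y_k,x_i$ and $y_i,x_j$ respectively -- all of these level~1 nodes have $\alpha$-value $0$ by hypothesis. Thus all three level~2 gadgets of $c$ are in zero state, contradicting Lemma~\ref{lemma1}.

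The main obstacle is purely bookkeeping: the $\wt_M$-labels of the handful of edges used (the internal edges of the level~2 gadget, and the edges to $z,z',y_j,x_k$) depend on exactly which edges of $M$ lie inside the level~2 gadget, which is why the case split on $M(p^c_2)$ is unavoidable, and one must be careful to invoke Lemma~\ref{prop0} (not merely the covering inequality) only on edges known to be popular -- namely the stable edges of the gadget and the edges $(p^c_0,q^c_2),(p^c_2,q^c_0),(p^c_1,q^c_2),(p^c_2,q^c_1)$ exhibited in the proof of Lemma~\ref{lem:conn-comp}. Everything else is a short computation.
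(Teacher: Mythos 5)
Your proof is correct, and it takes a genuinely different logical route from the paper's. The paper argues forward: from Lemma~\ref{lemma1}, at least two of the three level-2 gadgets of $c$ are in unit state; it then case-splits on which two, uses the \emph{proof} of Lemma~\ref{lemma1} to propagate concrete values ($\alpha_{q_0}=1$, $\alpha_{p_4}=1$, etc.) down from the level-3 gadget, deduces $\alpha_{p_2}=\alpha_{q_2}=-1$ for a unit-state gadget, splits on which of $(p_2,q_0),(p_0,q_2)$ or $(p_2,q_1),(p_1,q_2)$ lie in $M$, and in each branch forces $\alpha_{y_j}=1$ or $\alpha_{x_k}=1$. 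You instead prove the contrapositive: assume all three level-1 gadgets for $X_i,X_j,X_k$ are in zero state, show each level-2 gadget of $c$ is then forced into zero state, and contradict Lemma~\ref{lemma1}. Crucially, your derivation of $\alpha_{p^c_2}=0$ uses only the \emph{statement} of Lemma~\ref{lemma1} (as the target of the contradiction), getting $\alpha_{q^c_0}\ge 0$ from the unmatched $z'$ rather than from any level-3 bookkeeping; the case split you still need is localized per gadget (whether $M(p^c_2)=q^c_0$ or not) and is disposed of via the $\alpha_{q^c_0}=2$ absurdity. The trade-off: your version is cleaner and more self-contained in its use of Lemma~\ref{lemma1}, while the paper's version carries extra explicit information (exactly which level-1 gadget can be in unit state given which level-2 pattern) that it reuses implicitly in the subsequent Lemma~\ref{lemma3}. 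Both rely on the same toolkit (Theorem~\ref{thm:separate}, Lemmas~\ref{prop0} and~\ref{prop1}, and the identification of popular edges in the gadget).
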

\begin{proof}
  We showed in Lemma~\ref{lemma1} that at least two of the three level~2 gadgets corresponding to $c$ are in unit state in $\vec{\alpha}$. 
  We have three cases here (see Fig.~\ref{level2:example}): (i)~the leftmost and middle gadgets are in unit state in $\vec{\alpha}$,
  (ii)~the leftmost and rightmost gadgets are in unit state in $\vec{\alpha}$, and 
  (iii)~the middle and rightmost gadgets are in unit state in $\vec{\alpha}$.

 Let us consider case~(i) first.  
  It follows from the proof of Lemma~\ref{lemma1} that $\alpha_{q_0} = \alpha_{q_3} = 1$. This also forces  
  $\alpha_{p_1} = \alpha_{p_4} = 1$. This is because $\alpha_{p_1}$ and $\alpha_{p_4}$ have to be non-negative since $p_1$ and $p_4$ are 
  neighbors of the unmatched node $z$. And so by Lemma~\ref{prop1}, $\alpha_{p_1} = 1$ and $\alpha_{p_4} = 1$.

  As $q_0$ and $p_1$ are the most preferred neighbors of $p_2$ and $q_2$ while $p_2$ and $q_2$ are the least preferred neighbors of $q_0$ 
  and $p_1$ in $M$ (by Theorem~\ref{thm:separate}), we have $\wt_M(p_2,q_0) = \wt_M(p_1,q_2) = 0$. 
  Since $(p_2,q_0)$ and $(p_1,q_2)$ are popular edges, it follows from Lemma~\ref{prop0} that $\alpha_{p_2} + \alpha_{q_0} = 0$ and
  $\alpha_{p_1} + \alpha_{q_2} = 0$. Thus $\alpha_{p_2} = \alpha_{q_2} = -1$ and so $(p_2,q_2) \notin M$. So either $(p_2,q_0),(p_0,q_2)$
  are in $M$ or $(p_2,q_1),(p_1,q_2)$ are in $M$ (by Theorem~\ref{thm:separate}). This means that either $\wt_M(p_2,y_j) = 0$ or 
  $\wt_M(x_k,q_2) = 0$. That is, either $\alpha_{y_j} = 1$ or $\alpha_{x_k} = 1$.

  Similarly, $\wt_M(p_5,q_3) = \wt_M(p_4,q_5) = 0$ and we can conclude that $\alpha_{p_5} = \alpha_{q_5} = -1$.
  Thus $(p_5,q_5) \notin M$ and either $(p_5,q_3),(p_3,q_5)$ are in $M$ or $(p_5,q_4),(p_4,q_5)$ are in $M$ (by Theorem~\ref{thm:separate}). 
  This means that either $\wt_M(p_5,y_k) = 0$ or $\wt_M(x_i,q_5) = 0$. That is, either $\alpha_{y_k} = 1$ or $\alpha_{x_i} = 1$. 
  Thus either (1)~the gadgets corresponding to variables $X_i$ and $X_j$ are in unit state or 
  (2)~the gadget corresponding to $X_k$ is in unit state in $\vec{\alpha}$.
  Thus in this case at least {\em one} of the level~1 gadgets corresponding to variables in $c$ is in unit state in $\vec{\alpha}$. 

  The proofs of case~(ii) and case~(iii) are quite similar. Let us consider case~(ii) next.  
  It follows from the proof of Lemma~\ref{lemma1} that $\alpha_{q_0} = \alpha_{p_7} = 1$.
  This also forces $\alpha_{p_1} = \alpha_{q_6} = 1$ and $\alpha_{p_2} = \alpha_{q_2} = -1$.  
  By the same reasoning as in case~(i), we have either $\wt_M(p_2,y_j) = 0$ or $\wt_M(x_k,q_2) = 0$. 
  That is, either $\alpha_{y_j} = 1$ or $\alpha_{x_k} = 1$. Similarly, $\alpha_{p_8} = \alpha_{q_8} = -1$ and 
  either $\wt_M(p_8,y_i) = 0$ or $\wt_M(x_j,q_8) = 0$, i.e., $\alpha_{y_i} = 1$ or $\alpha_{x_j} = 1$.

  So either (1)~the gadgets corresponding to variables $X_i$ and $X_k$ are in unit state or (2)~the gadget corresponding to $X_j$ 
  is in unit state in $\vec{\alpha}$. Thus in this case also at least one of the level~1 gadgets corresponding to variables in $c$ 
  is in unit state in $\vec{\alpha}$.

  In case~(iii), it follows from the proof of Lemma~\ref{lemma1} that $\alpha_{p_4} = \alpha_{p_7} = 1$.
  This forces $\alpha_{q_3} = \alpha_{q_6} = 1$ and $\alpha_{p_5} = \alpha_{q_5} = -1$. 
  By the same reasoning as in case~(i), we have either $\wt_M(p_5,y_k) = 0$ or $\wt_M(x_i,q_5) = 0$. 
  That is, either $\alpha_{y_k} = 1$ or $\alpha_{x_i} = 1$. Similarly, either $\alpha_{y_i} = 1$ or $\alpha_{x_j} = 1$.
  Thus either (1)~the gadgets corresponding to variables $X_j$ and $X_k$ are in unit state or (2)~the gadget corresponding to $X_i$ 
  is in unit state in $\vec{\alpha}$. Thus in this case also at least one of the level~1 gadgets corresponding to variables in $c$ 
  is in unit state in $\vec{\alpha}$. \qed
\end{proof}

\begin{lemma}
  \label{lemma3}
   For any clause $c$ in $\phi$, {\em at most one} of the level~1 gadgets corresponding to variables in $c$ is
  in unit state in $\vec{\alpha}$.
\end{lemma}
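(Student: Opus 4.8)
\noindent\emph{Proof idea.}
The plan is to rule out, pair by pair, that two of the three level~1 gadgets attached to $c = X_i \vee X_j \vee X_k$ are simultaneously in unit state in $\vec\alpha$; the three level~0 gadgets of $c$ are exactly what link the three pairs of variables. Throughout I would use three facts proved earlier: every level~0 gadget is in zero state in $\vec\alpha$ (the preceding Claim); the nodes $a^c_t,b^c_t$ are stable, hence matched in $M$; and, by Theorem~\ref{thm:separate}, the restriction of $M$ to a level~0 gadget is one of the two perfect matchings of the $K_{2,2}$ it induces on $\{a^c_1,a^c_2\}\times\{b^c_1,b^c_2\}$ (and likewise for the other two level~0 gadgets and for each level~1 gadget).

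The first step is to record what ``unit state'' means for a level~1 gadget at the level of $M$ itself. If the gadget of $X_j$ is in unit state, its four $\alpha$-values lie in $\{\pm1\}$, and by Theorem~\ref{thm:separate} the restriction of $M$ to it is either $\{(x_j,y_j),(x'_j,y'_j)\}$ or $\{(x_j,y'_j),(x'_j,y_j)\}$. In the first case, applying Lemma~\ref{prop0} to the four popular edges inside the gadget gives $\alpha_{x_j}=\alpha_{x'_j}=-\alpha_{y_j}=-\alpha_{y'_j}$, which together with $\alpha_{x_j}\ge 0$ and $\alpha_{y_j}\ge 0$ (these coming from the edges $(x_j,z)$ and $(y_j,z')$ to the unmatched nodes $z,z'$, with $\wt_M(x_j,z)=\wt_M(y_j,z')=0$ and $\alpha_z=\alpha_{z'}=0$, exactly as in the proof of Lemma~\ref{bipartite:lemma3}) forces $\alpha_{x_j}=0$ --- i.e.\ the zero state, a contradiction. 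Hence unit state forces $M\supseteq\{(x_j,y'_j),(x'_j,y_j)\}$; the $(+,+)$ popular edge $(x_j,y_j)\notin M$ then gives $\alpha_{x_j}+\alpha_{y_j}=\wt_M(x_j,y_j)=2$, and the two matching edges give $\alpha_{x'_j}=\alpha_{y'_j}=-1$. In particular $y'_j$ is matched in $M$ to its top choice $x_j$; symmetrically, if the gadget of $X_k$ is in unit state then $x'_k$ is matched to its top choice $y_k$ and $\alpha_{x'_k}=-1$.

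The second step is a collision argument inside a level~0 gadget. Suppose both the $X_j$- and the $X_k$-gadget are in unit state and consider the level~0 gadget of $c$ on $a^c_1,b^c_1,a^c_2,b^c_2$, whose relevant outside edges are $(a^c_1,y'_j)$ (with $y'_j$ the second choice of $a^c_1$) and $(b^c_1,x'_k)$ (with $x'_k$ the second choice of $b^c_1$), and recall $\alpha_{a^c_1}=\alpha_{b^c_1}=0$. If $M$ matched $a^c_1$ to $b^c_2$, then $a^c_1$ would prefer $y'_j$ to its partner while $y'_j$ prefers its partner $x_j$ to $a^c_1$, so $\wt_M(a^c_1,y'_j)=0$, contradicting the witness inequality $\alpha_{a^c_1}+\alpha_{y'_j}\ge\wt_M(a^c_1,y'_j)$ because $0+(-1)<0$; hence $M$ matches $a^c_1$ to $b^c_1$, i.e.\ the level~0 gadget is in its ``uncrossed'' state. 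The mirror-image computation on $(b^c_1,x'_k)$ shows $M$ matches $b^c_1$ to $a^c_2$, i.e.\ the ``crossed'' state. Since $a^c_1$ being matched to $b^c_1$ and $b^c_1$ being matched to $a^c_2$ cannot both hold, the $X_j$- and $X_k$-gadgets cannot both be in unit state. Running the identical argument on the level~0 gadget $a^c_3,b^c_3,a^c_4,b^c_4$ (outside edges $(a^c_3,y'_k),(b^c_3,x'_i)$, linking $X_k$ and $X_i$) and on $a^c_5,b^c_5,a^c_6,b^c_6$ (outside edges $(a^c_5,y'_i),(b^c_5,x'_j)$, linking $X_i$ and $X_j$) rules out the other two pairs, so at most one of the three level~1 gadgets attached to $c$ is in unit state.

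The only step that takes a bit of care is the first one --- verifying that unit state on a level~1 gadget really pins $M$ down to the crossed matching $\{(x_j,y'_j),(x'_j,y_j)\}$, and hence that $y'_j$ is matched to its top choice $x_j$ with $\alpha_{y'_j}=-1$. Once that is in place, each of the three pairs is dispatched by a two-line labeling computation together with the observation that a level~0 $K_{2,2}$ admits exactly the two matchings $\{a^c_1b^c_1,a^c_2b^c_2\}$ and $\{a^c_1b^c_2,a^c_2b^c_1\}$, which the two unit hypotheses pull into opposite configurations.
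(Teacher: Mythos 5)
Your proof is correct and takes essentially the same approach as the paper: both use that the level-0 gadget on $a^c_1,b^c_1,a^c_2,b^c_2$ is in zero state and that unit state on a level-1 gadget forces $\alpha_{y'_j}=-1$ (resp.\ $\alpha_{x'_k}=-1$), so that whichever of its two matchings the level-0 gadget carries, one of the witness constraints on $(a^c_1,y'_j)$ or $(b^c_1,x'_k)$ is violated when both adjacent variable gadgets are in unit state. The only difference is cosmetic: you detour through the observation that unit state forces the crossed matching on the level-1 gadget (hence $y'_j$ matched to its top choice), whereas the paper shows $\alpha_{y'_r}\le 0$ and $\alpha_{x'_r}\le 0$ unconditionally and combines this with ``either $\alpha_{y'_j}\ge 0$ or $\alpha_{x'_k}\ge 0$'' to reach the same contradiction.
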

\begin{proof}
  We know from observation~2 made at the start of this section that all the three level~0 gadgets corresponding to $c$ are in zero state 
 in $\vec{\alpha}$. So $\alpha_{a_t} = \alpha_{b_t} = 0$ for $1 \le t \le 6$. We know from Theorem~\ref{thm:separate} that
  either $(a_1,b_1),(a_2,b_2)$ are in $M$ or $(a_1,b_2),(a_2,b_1)$ are in $M$. So either $\wt_M(a_1,y'_j) = 0$ or $\wt_M(x'_k,b_1) = 0$.
  So either $\alpha_{y'_j} \ge 0$ or $\alpha_{x'_k} \ge 0$.

  Consider any variable $X_r$. We know from Theorem~\ref{thm:separate} that either $\{(x_r,y'_r),(x'_r,y_r)\} \subseteq M$ or
  $\{(x_r,y_r),(x'_r,y'_r)\} \subseteq M$. It follows from Lemma~\ref{prop0} that $\alpha_{x_r} + \alpha_{y'_r} = \wt_M(x_r,y'_r) = 0$ and  
  $\alpha_{x'_r} + \alpha_{y_r} = \wt_M(x'_r,y_r) = 0$.
  Also  due to the nodes $z$ and  $z'$, we have $\alpha_{x_r} \ge 0$ and $\alpha_{y_r} \ge 0$. 
  Thus $\alpha_{y'_r} \le 0$ and $\alpha_{x'_r} \le 0$. 

  Hence we can conclude that either $\alpha_{y'_j} = 0$ or $\alpha_{x'_k} = 0$. In other words, either the gadget corresponding to $X_j$ 
  or the gadget corresponding to $X_k$ is in zero state. Similarly, by analyzing the level~0 gadget on nodes $a^c_t,b^c_t$ for 
  $t = 3,4$, we can show that either the gadget corresponding to $X_k$ or the gadget corresponding to $X_i$ is in zero state.
  Also, by analyzing the level~0 gadget on nodes $a^c_t,b^c_t$ for $t = 5,6$,  either the gadget corresponding to $X_i$ or the gadget 
  corresponding to $X_j$ is in zero state.

  Thus at least 2 of the 3 gadgets corresponding to variables in clause $c$ are in zero state in $\vec{\alpha}$. Hence at most 1 of 
  the 3 gadgets corresponding to variables in $c$ is in unit state in $\vec{\alpha}$. \qed
\end{proof}

\begin{theorem}
  \label{thm1}
If $G$ admits a desired popular matching then $\phi$ has a 1-in-3 satisfying assignment.
\end{theorem}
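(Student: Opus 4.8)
\textbf{Proof plan for Theorem~\ref{thm1}.}
The plan is to read the satisfying assignment directly off the witness, since all the combinatorial work has already been done in Lemmas~\ref{lemma1}--\ref{lemma3}. Fix a desired popular matching $M$ in $G$ and a witness $\vec{\alpha}\in\{0,\pm1\}^n$ of $M$ (such a witness exists by~\cite{Kav16}). By Theorem~\ref{thm:separate} every level~1 gadget is a connected component of $F_G$, so by Lemma~\ref{prop1} each such gadget is either entirely in zero state or entirely in unit state in $\vec{\alpha}$. In particular, the state of the level~1 gadget corresponding to a variable $X_i$ is a well-defined function of $\vec{\alpha}$ alone and does not depend on which clause we view $X_i$ through.

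I would then define a truth assignment $\tau$ for $\phi$ by setting $X_i$ to true if the level~1 gadget corresponding to $X_i$ is in unit state in $\vec{\alpha}$, and to false otherwise; by the previous paragraph this is a consistent assignment even though a variable may occur in several clauses. It remains to check that $\tau$ is a $1$-in-$3$ satisfying assignment. Let $c=X_i\vee X_j\vee X_k$ be an arbitrary clause of $\phi$. By Lemma~\ref{lemma2}, at least one of the level~1 gadgets corresponding to $X_i,X_j,X_k$ is in unit state in $\vec{\alpha}$, so $\tau$ makes at least one of $X_i,X_j,X_k$ true. By Lemma~\ref{lemma3}, at most one of these gadgets is in unit state, so $\tau$ makes at most one of them true. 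Hence exactly one literal of $c$ is true under $\tau$, and since $c$ was arbitrary, $\tau$ is a $1$-in-$3$ satisfying assignment for $\phi$.

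There is essentially no remaining obstacle in this direction: the only point that needs a moment's care is the well-definedness of $\tau$ noted above, and this follows immediately because a gadget's state is determined by $\vec{\alpha}$ via Theorem~\ref{thm:separate} and Lemma~\ref{prop1}. The genuinely hard part of the overall reduction lies elsewhere, namely in the converse direction (building a desired popular matching out of a $1$-in-$3$ assignment) and in the proof of Theorem~\ref{thm:separate} that each gadget is a separate component of $F_G$.
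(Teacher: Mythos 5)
Your proposal is correct and follows essentially the same route as the paper: fix a $\{0,\pm 1\}$-witness $\vec{\alpha}$ of the desired popular matching, set a variable to true exactly when its level~1 gadget is in unit state in $\vec{\alpha}$, and invoke Lemmas~\ref{lemma2} and~\ref{lemma3} to get ``exactly one true per clause.'' The only cosmetic difference is that the paper additionally spells out the equivalence between ``unit state'' and ``$(x_r,y'_r),(x'_r,y_r)\in M$,'' which you omit but which is not needed for the statement being proved.
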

\begin{proof}
  Let $M$ be a desired popular matching in $G$. That is, $M$ matches all nodes except $z$ and $z'$. Let $\vec{\alpha} \in \{0,\pm 1\}^n$
  be a witness of $M$.
  
  We will now define a $\mathsf{true}$/$\mathsf{false}$ assignment for the variables in $\phi$.
  For each variable $X_r$ in $\phi$ do:
  \begin{itemize}
    \item If the level~1 gadget corresponding to $X_r$ is in {\em unit} state in $\vec{\alpha}$, i.e., if $\alpha_{x_r} = \alpha_{y_r} = 1$ 
      and $\alpha_{x'_r} = \alpha_{y'_r} = -1$ or equivalently, if $(x_r,y'_r)$ and $(x'_r,y_r)$ are in $M$, then set $X_r$ to $\mathsf{true}$.

      \smallskip
      
    \item Else set  $X_r$ to $\mathsf{false}$, i.e., the level~1 gadget corresponding to $X_r$ is in {\em zero} state in $\vec{\alpha}$ or 
          equivalently, $(x_r,y_r)$ and $(x'_r,y'_r)$ are in $M$.
  \end{itemize}
      
  Since $M$ is our desired popular matching, 
  it follows from Lemmas~\ref{lemma2} and \ref{lemma3} that for every clause $c$ in $\phi$, {\em exactly one} of the three level~1 gadgets
  corresponding to variables in $c$ is in unit state in $\vec{\alpha}$.
  When the gadget $X_r$ is in unit state, we have
  $\alpha_{x_r} = \alpha_{y_r} = 1$ and $\alpha_{x'_r} = \alpha_{y'_r} = -1$. This is due to the fact that 
  $\alpha_{x_r} \ge 0$ and $\alpha_{y_r} \ge 0$ due to the nodes $z$ and $z'$, respectively.

  Thus $X_r$ is in unit state in $\vec{\alpha}$ if and only if the edges $(x_r,y'_r)$ and $(x'_r,y_r)$ are in $M$. 
  Hence for each clause $c$ in $\phi$, exactly one of the
  variables in $c$ is set to $\mathsf{true}$. Hence this is a 1-in-3 satisfying assignment for $\phi$. \qed
\end{proof}

\subsection{The converse}
\label{converse}
Suppose $\phi$ admits a 1-in-3 satisfying assignment. We will now use this assignment to construct a desired popular matching $M$ in $G$.
The edge $(a_0,b_0)$ is in $M$. For each variable $X_r$ in $\phi$ do:
\begin{itemize}
\item if $X_r = \mathsf{true}$ then include the edges $(x_r,y'_r)$ and $(x'_r,y_r)$ in $M$;
\item else include the edges $(x_r,y_r)$ and $(x'_r,y'_r)$ in $M$.
\end{itemize}
  
Consider a clause $c = X_i \vee X_j \vee X_k$. We know that exactly one of $X_i,X_j,X_k$ is set to $\mathsf{true}$ in our assignment.
Assume without loss of generality that $X_j = \mathsf{true}$. 
We will include the following edges in $M$ from all the gadgets corresponding to $c$.
Corresponding to the level~0 gadgets for $c$ (see Fig.~\ref{level0:example}):
  \begin{itemize}
  \item  Add the edges  $(a^c_1,b^c_1), (a^c_2,b^c_2)$ from the leftmost gadget and $(a^c_5,b^c_6),(a^c_6,b^c_5)$
    from the rightmost gadget to $M$.

    We will select $(a^c_3,b^c_3),(a^c_4,b^c_4)$ from the middle gadget. (Note that we  could also have selected
    $(a^c_3,b^c_4),(a^c_4,b^c_3)$ from the middle gadget.) 
  \end{itemize}

Corresponding to the level~2 gadgets for $c$  (see Fig.~\ref{level2:example}):
  \begin{itemize}
  \item Add the edges $(p^c_0,q^c_0),(p^c_2,q^c_1),(p^c_1,q^c_2)$ from the leftmost gadget,
    $(p^c_3,q^c_3),(p^c_4,q^c_4),(p^c_5,q^c_5)$ from the middle gadget, and
    $(p^c_6,q^c_8),(p^c_7,q^c_7),(p^c_8,q^c_6)$ from the rightmost gadget to $M$.
  \end{itemize}


 Since the leftmost and rightmost level~2 gadgets (see Fig.~\ref{level2:example}) are dominant, we will include $(s^c_0,t^c_2)$ and $(s^c_2,t^c_0)$ in $M$. Hence
  \begin{itemize}
     \item Add the edges $(s^c_0,t^c_2), (s^c_1,t^c_1),(s^c_2,t^c_0),(s^c_3,t^c_3)$ to $M$.
  \end{itemize}

Thus $M$ matches all nodes except $z$ and $z'$. We will show the following theorem now.

\begin{theorem}
  The matching $M$ described above is a popular matching in $G$.
\end{theorem}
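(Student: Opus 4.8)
The plan is to exhibit an explicit witness $\vec\alpha \in \{0,\pm 1\}^n$ for the matching $M$ constructed from the 1-in-3 satisfying assignment, and then invoke Theorem~\ref{thm:witness} (the LP characterization of popularity in bipartite graphs). First I would set the $\alpha$-values according to the intended state of each gadget: put $\alpha_{a_0}=\alpha_{b_0}=0$ and $\alpha_z=\alpha_{z'}=0$ (these two are unmatched, and $a_0b_0$ is a blocking edge but both its endpoints get value $0$ — this is fine since $\wt_M(a_0,b_0)\le 2$, wait, no: $\wt_M(a_0,b_0)=2$ so we actually need $\alpha_{a_0}+\alpha_{b_0}\ge 2$; I must re-examine and instead set $\alpha_{a_0}=\alpha_{b_0}=1$, noting $z,z'$ are then covered because $\wt_M(a_0,z)=\wt_M(z',b_0)\le 0$). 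For a variable $X_r$ set to \textsf{true}, the level~1 gadget is in unit state: $\alpha_{x_r}=\alpha_{y_r}=1$, $\alpha_{x'_r}=\alpha_{y'_r}=-1$; for $X_r$ set to \textsf{false}, the gadget is in zero state, all four $\alpha$-values $0$. All level~0 gadgets are in zero state. All level~3 gadgets are in unit state, with the specific pattern $\alpha_{s^c_0}=\alpha_{t^c_0}=-1$ and the matched-to-top nodes getting $+1$ (matching the pattern used for $M^*$ but adapted to the chosen edges $(s^c_0,t^c_2),(s^c_2,t^c_0)$). The level~2 gadgets are set so that the two gadgets flanking the true variable are in unit state and the third in zero state, with signs chosen exactly as dictated by Lemmas~\ref{lemma1} and~\ref{lemma2} (e.g.\ $\alpha_{q_0}=\alpha_{p_7}=1$, $\alpha_{p_2}=\alpha_{q_2}=-1$, etc.).

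Next I would verify $\sum_u \alpha_u = 0$. This follows edge-by-edge over $M$: every edge of $M$ lies in $A\times B$ and by construction has $\alpha_a+\alpha_b=0$ (each matched pair is one $+1$ and one $-1$, or both $0$); the only unmatched nodes are $z,z'$ with $\alpha=0$; so the sum telescopes to $0$. The bulk of the proof is then checking the edge constraints $\alpha_u+\alpha_v\ge\wt_M(u,v)$ for every $e\in E$ and $\alpha_u\ge\wt_M(u,u)$ for every node. The node constraints are immediate ($\alpha_u\ge -1=\wt_M(u,u)$ if $u$ is matched, $\alpha_u=0=\wt_M(u,u)$ if $u\in\{z,z'\}$). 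For the edge constraints I would organize by type: (a) edges internal to a single gadget — here one checks within each of the finitely many gadget types that the chosen $M$-edges plus $\alpha$-values leave no $(+,+)$ edge uncovered; the zero-state gadgets contain only $(-,-)$ or $(+,-)$ edges relative to their local matching, and the unit-state gadgets have their $(+,+)$ edge(s) between two $+1$-nodes; (b) edges between a gadget and $z$ or $z'$ — since $z,z'$ are unmatched, every neighbor is matched to someone it prefers, so these edges are $(+,-)$ with $\wt_M=0$, and $\alpha_z=\alpha_{z'}=0$ together with the neighbor's nonnegative-or-higher $\alpha$-value suffices (one checks the level~1/level~2 neighbors of $z,z'$ all get $\alpha\ge 0$, in fact $=1$ exactly when needed); (c) the cross-level edges $y'_j$–$a^c_1$, $x'_k$–$b^c_1$ (level~1 to level~0), $y_j$–$p^c_2$, $x_k$–$q^c_2$ (level~1 to level~2), and $q_0$–$s^c_0$, $p_7$–$t^c_0$ etc.\ (level~2 to level~3), plus the tail edges from $s^c_0,t^c_0$ to the $y'$/$x'$ and $q_{3j+2}$/$p_{3j+2}$ nodes. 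These last cross-level checks are the crux: I would verify that the 1-in-3 condition (exactly one true variable per clause) forces precisely the sign pattern that makes each such edge covered — e.g.\ the edge $y_j$–$p^c_2$ has $\wt_M\le 2$ and $\alpha_{y_j}=1$ when $X_j$ is true, while when $X_j$ is false we need $\alpha_{p^c_2}$ high enough, which holds because the level~2 gadget adjacent to the \emph{true} variable, not $X_j$, is the unit-state one.

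The main obstacle I expect is bookkeeping the level~2 and level~3 gadgets: each clause has three level~2 gadgets (each on 6 nodes with non-obvious preference lists where $p_2,q_2$ are mutual fourth choices) and one level~3 gadget (8 nodes), and the choice of which $M$-edges to include depends on which variable is true, so there are effectively three symmetric cases to handle, and within each the interaction between the level~3 node $s^c_0$ (adjacent, via its tail, to many $q$-nodes and $y'$-nodes across \emph{all} clauses) and the rest must be checked for coverage. The key simplification is that $s^c_0,t^c_0$ being matched to a \emph{second-or-later} choice inside their own gadget makes several edges incident to them $(+,-)$ with weight $0$, and all their tail-neighbors $y'_r, q^{c_i}_{3j+2}$ have $\alpha\le 0$, but $\alpha_{s^c_0}=\alpha_{t^c_0}=-1$, so one must confirm $\wt_M=0$ (not $+2$) on every such tail edge — which holds because those neighbors are matched to partners they strictly prefer. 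I would present the verification as a case analysis over the (finitely many) edge types, asserting the routine per-type checks and spelling out only the cross-level edges where the 1-in-3 structure is actually used. By Theorem~\ref{thm:witness}, exhibiting $\vec\alpha$ proves $M$ is popular, and since $M$ matches all nodes except $z,z'$ it is by Theorem~\ref{bipartite:theorem} neither stable nor dominant, completing the reduction.
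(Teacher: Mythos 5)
You follow the paper's strategy exactly --- build an explicit witness $\vec\alpha\in\{0,\pm1\}^n$, set each gadget's state according to the truth assignment, verify $\sum_u\alpha_u=0$ and the covering constraints via Theorem~\ref{thm:witness} --- but two concrete claims in your witness are wrong, and each would break the verification as you have written it.

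First, $(a_0,b_0)$ is \emph{in} the constructed matching $M$ (Section~\ref{converse} begins by putting it there), so it is not a blocking edge and $\wt_M(a_0,b_0)=0$. Your original instinct $\alpha_{a_0}=\alpha_{b_0}=0$ was correct; the ``re-examination'' that resets $\alpha_{a_0}=\alpha_{b_0}=1$ is based on a misreading of $M$ and is inconsistent with the rest of your own argument --- the matched pair $(a_0,b_0)$ would then contribute $+2$ to $\sum_u\alpha_u$, violating the zero-sum constraint of Theorem~\ref{thm:witness} and contradicting your later claim that ``every edge of $M$ \ldots has $\alpha_a+\alpha_b=0$.'' The paper sets $\alpha_{a_0}=\alpha_{b_0}=\alpha_z=\alpha_{z'}=0$.

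Second, the tail edges $(s^c_0,y'_r)$ and $(s^c_0,q^{c_i}_{3j+2})$ (and symmetrically for $t^c_0$) are labeled $(-,-)$, so $\wt_M=-2$, not $0$. The node $s^c_0$ is matched to $t^c_2$, and all its tail-neighbors appear strictly \emph{after} $t^c_3$ (hence after $t^c_2$) in $s^c_0$'s list, so the label at $s^c_0$ is $-$ as well as at the other endpoint. Your stated verification (``confirm $\wt_M=0$, which holds because those neighbors are matched to partners they strictly prefer'') would in fact fail to cover these edges: with $\alpha_{s^c_0}=-1$ and $\alpha_{y'_r}\le 0$, you would need $\wt_M\le -1$, i.e., $\wt_M=-2$. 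What needs confirming is precisely that these edges are $(-,-)$, which is what the paper checks in its final bullet. The remainder of your case decomposition (gadget-by-gadget and cross-level) matches the paper's proof.
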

\begin{proof}
  We will prove $M$'s popularity by describing a witness $\vec{\alpha} \in \{0,\pm 1\}^n$. That is, $\sum_{u\in A\cup B} \alpha_u$ will be 0 and every edge will be
  covered by the sum of $\alpha$-values of its endpoints, i.e., $\alpha_u + \alpha_v \ge \wt_M(u,v)$ for all edges $(u,v)$ in $E$. We will also have
  $\alpha_u \ge \wt_M(u,u)$ for all nodes $u$.

  Set $\alpha_{a_0} = \alpha_{b_0} = \alpha_z = \alpha_{z'} = 0$. Also set $\alpha_u = 0$ for all nodes $u$ in the gadgets with {\em no} ``blocking edges''.
  This includes all level~0 gadgets,
  and the gadgets in level~1 that correspond to variables set to $\mathsf{false}$, and also the level~2 gadgets
  such as the gadget with nodes $p^c_3,q^c_3,p^c_4,q^c_4,p^c_5,q^c_5$ (the middle gadget in Fig.~\ref{level2:example}) since we assumed $X_j = \mathsf{true}$.

  \smallskip

  For every variable $X_r$ assigned to $\mathsf{true}$: set $\alpha_{x_r} = \alpha_{y_r} = 1$ and $\alpha_{x'_r} = \alpha_{y'_r} = -1$.
  For every clause, consider the level~2 gadgets corresponding to this clause with ``blocking edges'':
  for our clause $c$, these are the leftmost and rightmost gadgets in Fig.~\ref{level2:example} (since we assumed $X_j = \mathsf{true}$).

  Recall that we included in $M$ the edges $(p^c_0,q^c_0),(p^c_2,q^c_1),(p^c_1,q^c_2)$ from the leftmost gadget.
  We will set $\alpha_{q^c_0} = \alpha_{p^c_1} = \alpha_{q^c_1} = 1$ and $\alpha_{p^c_0} = \alpha_{p^c_2} = \alpha_{q^c_2} = -1$.
  We also included in $M$ the edges $(p^c_6,q^c_8),(p^c_7,q^c_7),(p^c_8,q^c_6)$ from the rightmost gadget.
  We will set $\alpha_{p^c_6} = \alpha_{q^c_6} = \alpha_{p^c_7} = 1$ and $\alpha_{q^c_7} = \alpha_{p^c_8} = \alpha_{q^c_8} = -1$.

  In the level~3 gadget corresponding to $c$, we included in $M$ the edges $(s^c_0,t^c_2), (s^c_1,t^c_1),(s^c_2,t^c_0)$, $(s^c_3,t^c_3)$.
  We will set $\alpha_{t^c_1} = \alpha_{s^c_2} = \alpha_{t^c_2} = \alpha_{s^c_3} = 1$ and $\alpha_{s^c_0} = \alpha_{t^c_0} = \alpha_{s^c_1} = \alpha_{t^c_3} = -1$.

  The claim below shows that $\vec{\alpha}$ is indeed a valid witness for $M$. Thus $M$ is a popular matching. \qed
 \end{proof}

 \begin{new-claim}
   The vector $\vec{\alpha}$ defined above is a witness to $M$'s popularity.
 \end{new-claim}
 \begin{proof}
   For any edge $(u,v) \in M$, we have $\alpha_u + \alpha_v = 0$, also $\alpha_z = \alpha_{z'} = 0$. Thus $\sum_{u \in A \cup B}\alpha_u = 0$.
   For any neighbor $v$ of $z$ or $z'$, we have
  $\alpha_v \ge 0$. Thus all edges incident to $z$ or $z'$ are covered by the sum of $\alpha$-values of their endpoints.
  It is also easy to see that for every intra-gadget edge $(u,v)$, we have $\alpha_u + \alpha_v \ge \wt_M(u,v)$.

  In particular, the endpoints of
  every blocking edge to $M$ have their $\alpha$-value set to 1. When $X_j = \mathsf{true}$, the edge
  $(x_j,y_j)$ is a blocking edge to $M$ and so are $(p^c_1,q^c_1),(p^c_6,q^c_6),(s^c_2,t^c_2)$ in the gadgets involving clause $c$. 
  We will now check that the edge covering constraint holds for all edges $(u,v)$  where $u$ and $v$ belong to different levels.

  \begin{itemize}
  \item  Consider edges in $G$ between a
    level~0 gadget and a level~1 gadget. When $X_j = \mathsf{true}$, the edges $(a^c_1,y'_j)$ and $(x'_j,b^c_5)$ are the most interesting as they have one endpoint
    in a gadget with $\alpha$-values 0 and another endpoint in a gadget with $\alpha$-values equal to $\pm 1$.

  Observe that both these edges are labeled $(-,-)$. This is because $a^c_1$ prefers its partner $b^c_1$
  to $y'_j$ and symmetrically, $y'_j$ prefers its partner $x_j$ to $a^c_1$. Thus $\wt_M(a^c_1,y'_j) = -2 < \alpha_{a^c_1} + \alpha_{y'_j}  = 0 - 1$.
  Similarly, $b^c_5$ prefers its partner $a^c_6$ to $x'_j$ and symmetrically, $x'_j$ prefers its partner $y_j$ to $b^c_5$.
  Thus  $\wt_M(x'_j,b^c_5) = -2 < \alpha_{x'_j}  + \alpha_{b^c_5} = - 1 + 0$.

  \smallskip
   
   \item We will now consider edges in $G$ between a level~1 gadget and a level~2 gadget. 
   We have $\wt_M(p^c_2,y_j) = 0$ since $p^c_2$ prefers $y_j$ to its partner $q^c_1$ while  $y_j$ prefers its partner $x'_j$ to $p^c_2$. We have
   $\alpha_{p^c_2} + \alpha_{y_j} = -1 + 1 = \wt_M(p^c_2,y_j) = 0$. The edge $(x_k,q^c_2)$ is labeled $(-,-)$ and we have $\alpha_{x_k} = 0$ and
   $\alpha_{q_2} = -1$.
   
   Similarly, the edge $(p^c_8,y_i)$ is labeled $(-,-)$ and so this is covered by the sum of $\alpha$-values of its endpoints.
   We have $\wt_M(x_j,q^c_8) = 0 = 1 - 1 = \alpha_{x_j} + \alpha_{q^c_8}$.
   We also have $\wt_M(p^c_5,y_k) = 0$ and $\alpha_{p^c_5} = \alpha_{y_k} = 0$. Similarly, 
   $\wt_M(x_i,q^c_5) = 0$ and $\alpha_{x_i} = \alpha_{q^c_5} =  0$. Thus all these edges are covered.

  \smallskip
   
\item We will now consider edges in $G$ between a level~2 gadget and a level~3 gadget. First, consider the edges
  $(s^c_0,q^c_0), (s^c_0,q^c_3), (p^c_7,t^c_0), (p^c_4,t^c_0)$.
   We have $\wt_M(s^c_0,q^c_0) = 0$ and $\alpha_{s^c_0} = -1, \alpha_{q^c_0} = 1$, so this edge is covered. Similarly, $\wt_M(p^c_7,t^c_0) = 0$ and
   $\alpha_{p^c_7} = 1,\alpha_{t^c_0} = -1$.  The edges $(s^c_0,q^c_3)$ and $(p^c_4,t^c_0)$ are labeled $(-,-)$, so they are also covered.  
   Next consider the edges $(s^c_0,q^{c_i}_{3j+2})$ and $(p^{c_i}_{3j+2},t^c_0)$ for any clause $c_i$ and $j \in \{0,1,2\}$.
   It is easy to see that these edges are labeled $(-,-)$, so these edges are also covered.

   \smallskip

 \item Finally consider the edges between a level~1 gadget and a level~3 gadget. Corresponding to clause $c$, these edges are $(s^c_0,y'_i)$ and $(x'_i,t^c_0)$
   for any $i \in [n_0]$.
   It is again easy to see that these edges are labeled $(-,-)$ and so they are covered. Thus it follows that $\vec{\alpha}$ is a witness to $M$'s popularity. \hfill $\lozenge$
   \end{itemize}
\end{proof}

\subsection{Proof of Theorem~\ref{thm:separate}}
\label{sec:proof-separate}

Let $c = X_i \vee X_j \vee X_k$ be a clause in $\phi$. We will show in the following claims that no edge between 2 different gadgets can be popular.

\begin{new-claim}
  No edge between a level~0 node and a level~1 node can be popular.
\end{new-claim}
\begin{proof}
  Consider any such edge  in $G$, say $(a^c_1,y'_j)$. In order to show this edge cannot be present in any popular matching, we will show a popular matching $S$ along
  with a witness $\vec{\alpha}$ such that $\alpha_{a^c_1} + \alpha_{y'_j} > \wt_S(a^c_1,y'_j)$. Then it will immediately follow from the {\em slackness} of this edge that
  $(a^c_1,y'_j)$ is not used in any popular matching (by Lemma~\ref{prop0}).

  Let $S$ be the men-optimal stable matching in $G$. The vector $\vec{\alpha} = \vec{0}$ is a witness to $S$. The edges $(a^c_1,b^c_1)$
  and $(x'_j,y'_j)$ belong to $S$, so we have $\wt(a^c_1,y'_j) = -2$ while $\alpha_{a^c_1} = \alpha_{y'_j} = 0$. Thus $(a^c_1,y'_j)$ is not a popular edge.
  We can similarly show that $(x'_k,b^c_1)$ is not a popular edge by considering the women-optimal stable matching $S'$. \hfill $\lozenge$
\end{proof}

\begin{new-claim}
  No edge between a level~1 node and a level~2 node is popular.
\end{new-claim}
\begin{proof}
  Consider any such edge  in $G$, say $(p^c_2,y_j)$.
  Consider the dominant matching $M^*$ that contains the edges $(p^c_0,q^c_2),(p^c_2,q^c_0)$ and also
  $(x_j,y'_j),(x'_j,y_j)$. Note that $\wt_{M^*}(p^c_2,y_j) = -2$.

  A witness $\vec{\beta}$ to $M^*$ sets $\beta_{p^c_2} = \beta_{q^c_2} = -1$ and $\beta_{x_j} = \beta_{y_j} = 1$. This is because $(x_j,y_j)$ and $(p^c_0,q^c_0)$
  are blocking edges to $M^*$, so $\beta_{x_j} = \beta_{y_j} = 1$ and similarly, $\beta_{p^c_0} = \beta_{q^c_0} = 1$ (this makes $\beta_{p^c_2} = \beta_{q^c_2} = -1$).
  So $\beta_{p^c_2} + \beta_{y_j} = 0$ while $\wt_{M^*}(p^c_2,y_j) = -2$. Thus this edge
  is slack and so it cannot be a popular edge.

  We can similarly show that the edge $(x_k,q^c_2)$ is not popular by considering the dominant matching
  that includes the edges $(p^c_1,q^c_2)$ and $(p^c_2,q^c_1)$. \hfill $\lozenge$
\end{proof}

\begin{new-claim}
  No edge between a node in level~3  and a node in levels~1 or 2 is popular.
\end{new-claim}  
\begin{proof}
  Consider the edge $(s^c_0,q^c_0)$. The dominant matching $M^*$ includes the edges $(s^c_0,t^c_1),(s^c_1,t^c_0)$, and $(p^c_2,q^c_0)$. 
  So we have $\wt_{M^*}(s^c_0,q^c_0) = -2$ while $\beta_{s^c_0} = -1$ and $\beta_{q^c_0} = 1$, where
  $\vec{\beta}$ is a witness to $M^*$. Hence $(s^c_0,q^c_0)$ is not a popular edge. It can similarly be shown for any edge
  $e \in \{(s^c_0,q^c_3),(p^c_4,t^c_0),(p^c_7,t^c_0)\}$ that $e$ is not a popular edge.

  \smallskip
  
  Suppose the edge $(s^c_0,u)$ for some $u \in \{y'_i: i\in[n_0]\} \cup \{q^{c_i}_2,q^{c_i}_5,q^{c_i}_8: c_i$ is a clause$\}$ belongs to a popular matching $M$.
  In order to show a contradiction,
  consider the edge $(s^c_0,t^c_1)$. We have $\wt_M(s^c_0,t^c_1) = 0$ since $s^c_0$ prefers $t^c_1$ to $u$ while $t^c_1$ prefers its partner in $M$
  (this is $s^c_1$) to $s^c_0$. Since $\alpha_{s^c_0} = -1$, it has to be the case that $\alpha_{t^c_1} = 1$.

  Since $s^c_0$ is matched to a node outside $\{t^c_1,t^c_2,t^c_3\}$, the node $t^c_0$ also has to be matched to a node outside $\{s^c_1,s^c_2,s^c_3\}$ ---
  otherwise one of the 3 stable nodes $t^c_1,t^c_2,t^c_3$ would be left unmatched in $M$; however as $M$ is popular, every stable node 
has to be matched in $M$. We have also seen
  earlier that neither $(p^c_4,t^c_0)$ nor $(p^c_7,t^c_0)$ is popular. Thus $t^c_0$ has to be matched to a neighbor worse than $s^c_1$.
  
  Thus $\wt_M(s^c_1,t^c_0) = 0$ and so $\alpha_{s^c_1} = 1$. Since $(s^c_1,t^c_1)$ is a stable edge, it follows from Lemma~\ref{prop0} that
  $\alpha_{s^c_1} + \alpha_{t^c_1} = \wt_M(s^c_1,t^c_1)$. However $\wt_M(s^c_1,t^c_1) = 0$ since $(s^c_1,t^c_1) \in M$ and we have just shown that
  $\alpha_{s^c_1} = \alpha_{t^c_1} = 1$. This is a contradiction and thus $(s^c_0,u) \notin M$. \hfill $\lozenge$
\end{proof}

It is easy to see that the nodes $z$ and $z'$ are in the same connected component of $F_G$ as the dominant matching $M^*$ contains the edges
$(a_0,z)$ and $(z',b_0)$ while any stable matching in $G$ contains $(a_0,b_0)$.
We will now show that any popular matching that matches $z$ and $z'$ has to match these nodes to $a_0$ and $b_0$, respectively.

\begin{lemma}
  \label{bipartite:lemma1}
      If $M$ is a popular matching in $M$ that matches $z$ and $z'$ then $\{(a_0,z),(z',b_0)\} \subseteq M$.
\end{lemma}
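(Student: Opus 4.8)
The plan is to use the witness structure together with the popular subgraph characterization (Theorem~\ref{thm:separate}) to pin down the partners of $z$ and $z'$. Let $M$ be a popular matching that matches $z$ and $z'$, and let $\vec{\alpha}\in\{0,\pm1\}^n$ be a witness of $M$. Since $z$ and $z'$ are unstable nodes and are matched in $M$, Lemma~\ref{prop0} gives $\alpha_z=\alpha_{z'}=-1$. The node $z$ is adjacent only to $a_0$ and to nodes in levels~$0$, $1$, and $2$; by Theorem~\ref{thm:separate}, $z$ lies in its own connected component of $F_G$ together with $a_0,b_0,z'$, so the only popular edges incident to $z$ are those going to $a_0$ (the other neighbors of $z$ lie in different components of $F_G$). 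Since $z$ is matched in $M$ and every edge of $M$ is a popular edge, $M(z)$ must be a neighbor of $z$ reachable by a popular edge, hence $M(z)=a_0$. Symmetrically $M(z')=b_0$, which is exactly the claim.

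First I would state the reduction of candidate partners to popular edges: $M(z)$ is joined to $z$ by an edge of $M$, which is popular by definition of $F_G$, so $M(z)$ lies in the same component of $F_G$ as $z$. Then I would invoke Theorem~\ref{thm:separate}, which asserts that $\{a_0,b_0,z,z'\}$ form their own component of $F_G$; since $b_0$ and $z'$ are not neighbors of $z$ in $G$, the only possible popular edge at $z$ is $(a_0,z)$. This forces $M(z)=a_0$ and likewise $M(z')=b_0$, giving $\{(a_0,z),(z',b_0)\}\subseteq M$.

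Alternatively, if one prefers an argument that does not lean on the full strength of Theorem~\ref{thm:separate} for this particular edge set, one can argue directly: with $\alpha_z=-1$, covering each edge $(z,v)$ requires $\alpha_v\ge 1$, hence $\alpha_v=1$ for every neighbor $v$ of $z$; in particular $\alpha_{a_0}=1$. If $M(z)=v$ with $v$ a level-$i$ node ($i\in\{0,1,2\}$), one inspects the preference lists: $z$ is at the tail of $v$'s list (e.g. $a^c_1$ ranks $z$ last, $x_i$ ranks $z$ third after $y_i\succ y'_i$, $p^c_1$ ranks $z$ third, etc.), so $v$ strictly prefers its other popular partners to $z$; combined with the $\alpha$-values already forced on the level-$i$ gadget (every level-$0$ gadget in zero state would contradict $\alpha_v=1$; in levels~$1$ and $2$ one traces the witness constraints inside the gadget to a contradiction), one rules out each case, leaving $M(z)=a_0$. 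The symmetric analysis handles $z'$.

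The main obstacle is making sure Theorem~\ref{thm:separate} is genuinely available at this point and that its statement really does isolate $\{a_0,b_0,z,z'\}$ as a single $F_G$-component disjoint from all gadget components — once that is in hand, the lemma is essentially immediate. If instead we must argue from scratch, the delicate part is the case analysis on $M(z)$ being a level-$1$ or level-$2$ node, where one must carefully propagate the witness inequalities through the gadget's internal edges (using Lemma~\ref{prop0} on the intra-gadget popular edges, exactly as in the proofs of Lemmas~\ref{bipartite:lemma2} and \ref{bipartite:lemma3}) to derive the contradiction; but since all the relevant gadget computations have already been carried out in the surrounding lemmas, this amounts to bookkeeping rather than new ideas.
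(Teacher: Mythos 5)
The main argument you present is circular. Lemma~\ref{bipartite:lemma1} lives \emph{inside} the proof of Theorem~\ref{thm:separate}; the very last sentence of Section~\ref{sec:proof-separate} says ``the above claims and Lemma~\ref{bipartite:lemma1} show that \dots\ the 4 nodes $a_0,b_0,z,z'$ belong to their own connected component.'' The earlier claims in that section establish only that no edge between two \emph{gadgets} is popular; they say nothing about edges incident to $z$ or $z'$, which is exactly the hole Lemma~\ref{bipartite:lemma1} is supposed to fill. So invoking Theorem~\ref{thm:separate} (and specifically the statement that the only popular edges at $z$ go to $a_0$) assumes what you must prove. You flag this worry yourself, but then your fallback argument has genuine gaps.

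In the alternative route, the step ``with $\alpha_z=-1$, covering each edge $(z,v)$ requires $\alpha_v\ge 1$, hence $\alpha_v=1$ for every neighbor $v$ of $z$'' is wrong: the covering constraint is $\alpha_v\ge 1+\wt_M(z,v)$, and $\wt_M(z,v)=-2$ whenever $(z,v)$ is a $(-,-)$ edge. In particular, if $M(z)$ is a level-1 node $x_i$ then $z$ prefers $x_i$ to $a_0$ and $a_0$ prefers $b_0$ to $z$, so $(a_0,z)$ is $(-,-)$ and nothing forces $\alpha_{a_0}=1$. The paper instead derives a contradiction \emph{inside} the level-1 gadget: $(x_i,z)\in M$ forces $(z',y_i)\in M$ (using only the inter-gadget claims), hence $(x'_i,y'_i)\in M$ and both $(x_i,y'_i)$, $(x'_i,y_i)$ block $M$, giving $\alpha_{x'_i}=\alpha_{y'_i}=1$ while the popular edge $(x'_i,y'_i)$ requires $\alpha_{x'_i}+\alpha_{y'_i}=\wt_M(x'_i,y'_i)=0$. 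Your appeal to ``every level-0 gadget in zero state'' is also misplaced here: that fact is established in Section ``Desired popular matchings in $G$'' precisely under the hypothesis that $z,z'$ are \emph{unmatched}, which is the opposite of the present setting; in the level-0 case of this lemma the gadget in fact ends up in \emph{unit} state, and the contradiction comes from the popular edge $(a_0,b_0)\in M$ having $\alpha_{a_0}+\alpha_{b_0}=2\ne 0=\wt_M(a_0,b_0)$. So while the general ``case-split on $M(z)$ and propagate witness constraints'' shape matches the paper, the mechanism of the contradiction in each case is different from what you sketch, and the places you call ``bookkeeping'' are exactly where your plan as stated would fail.
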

\begin{proof}
Suppose $(x_i,z) \in M$ for some $i \in [n_0]$. We know from the above claims that there is no popular edge
between $x_i$'s gadget and any neighbor in levels~0, 2, or 3. So $(x_i,z) \in M$ implies that $(z',y_i) \in M$ since all the 4 nodes $x_i,y_i,x'_i,y'_i$
have to be matched in $M$ and there is no other possibility of a popular edge incident to either $y_i$ or $y'_i$. Hence $(x'_i,y'_i) \in M$.
Thus $(x_i,y'_i)$ and $(x'_i,y_i)$ are blocking edges to $M$.

This means that $\alpha_{x_i} = \alpha_{y_i} = \alpha_{x'_i} = \alpha_{y'_i} = 1$. Note that $(x'_i,y'_i)$ is a stable edge and so
$\alpha_{x'_i} + \alpha_{y'_i} = \wt_M(x'_i,y'_i)$. However $\wt_M(x'_i,y'_i) = 0$ while $\alpha_{x'_i} = \alpha_{y'_i} = 1$. This is a contradiction and hence
$(x_i,z) \notin M$ for any $i \in [n_0]$. We can similarly show that neither $(p^c_{3j+1},z)$ nor $(z',q^c_{3j})$ is in $M$, for any $j \in \{0,1,2\}$ and any clause $c$.

\smallskip

So if $z$ and $z'$ are matched in $M$ then it has to be either with $a_0,b_0$ or with some level~0 neighbors.
Observe that if $(a^c_{2i-1},z) \in M$ then $(z',b^c_{2i-1}) \in M$ as the 4 nodes $a^c_{2i-1},b^c_{2i-1},a^c_{2i},b^c_{2i}$ have to be matched in $M$
and there is no other possibility of a popular edge incident to any of these 4 nodes (by our first claim in this section).

Suppose $(a^c_{2i-1},z)$ and $(z',b^c_{2i-1})$ are in $M$ for some $c$ and $i \in \{1,2,3\}$. Since $z$ prefers $a_0$ to any level~0
neighbor, we have $\wt_M(a_0,z) = 0$. Similarly, $\wt_M(z',b_0) = 0$. Since $\wt_M(a_{2i-1}^c, b_{2i-1}^c) = 2$, this implies that $\alpha_{a_{2i-1}^c} = \alpha_{b_{2i-1}^c} = 1$. Hence, $\alpha_z = \alpha_{z'} = -1$ and we have
$\alpha_{a_0} = \alpha_{b_0} = 1$. Note that $\alpha_{a_0} + \alpha_{b_0}$ has to be equal to $\wt_M(a_0,b_0)$ as $(a_0,b_0)$
is a popular edge.

If $a_0$ is not matched to $z$ (and so $b_0$ is not matched to $z'$), then 
$(a_0,b_0) \in M$. So $\wt_M(a_0,b_0) = 0$, however $\alpha_{a_0} + \alpha_{b_0} = 2$, a contradiction.
Thus if $M$ matches $z$ and $z'$ then $\{(a_0,z),(z',b_0)\} \subseteq M$. \qed
\end{proof}

Thus the above claims and Lemma~\ref{bipartite:lemma1} show that every level~$i$ gadget (for $i = 0,1,2,3$) forms a distinct connected component
in the graph $F_G$ and the 4 nodes $a_0,b_0,z$, and $z'$ belong to their own connected component.
This finishes the proof of Theorem~\ref{thm:separate}.

We have shown a polynomial time reduction from 1-in-3 SAT to the problem of deciding if $G = (A \cup B, E)$ admits a popular matching that 
matches all nodes except $z$ and $z'$. That is, we have shown the following theorem.

 \begin{theorem}
   \label{thm:section3}
   The instance $G = (A \cup B, E)$ admits a popular matching that matches all nodes except $z$ and $z'$ if and only if $\phi$ is in 1-in-3 SAT.
 \end{theorem}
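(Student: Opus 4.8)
The plan is to obtain Theorem~\ref{thm:section3} by assembling the two implications already established. For the ``only if'' direction, suppose $G$ admits a popular matching $M$ that matches all nodes except $z$ and $z'$ — a \emph{desired popular matching} — and fix a witness $\vec\alpha\in\{0,\pm1\}^n$ of $M$ (which exists by Theorem~\ref{thm:witness}). I would then simply invoke Theorem~\ref{thm1}, whose proof runs the chain of Lemmas~\ref{lemma1}--\ref{lemma3}: the two observations preceding them pin every level~3 gadget to the unit state (since $M$ must match the unstable nodes $s_0^c,t_0^c$) and every level~0 gadget to the zero state (since its nodes are neighbours of the unmatched $z,z'$ and carry popular edges of weight $0$); Lemma~\ref{lemma1} then forces at least two of the three level~2 gadgets of each clause into the unit state; Lemma~\ref{lemma2} forwards this to \emph{at least one} of the three level~1 (variable) gadgets of each clause; and Lemma~\ref{lemma3}, exploiting the level~0 attachments to $z,z'$ and the second‑choice edges $y'_j,x'_k$, gives \emph{at most one}. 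Hence exactly one variable per clause is set to $\mathsf{true}$, i.e.\ $\phi\in$ 1-in-3 SAT.

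For the ``if'' direction, given a 1-in-3 satisfying assignment, I would quote the explicit construction of Section~\ref{converse}: in each variable gadget pick $(x_r,y'_r),(x'_r,y_r)$ if $X_r=\mathsf{true}$ and $(x_r,y_r),(x'_r,y'_r)$ otherwise; in each clause add the indicated level~0 edges, the ``dominant'' choice on the two level~2 gadgets whose variable is false and the ``stable'' choice on the third (together with the matching level~3 edges $(s_0^c,t_2^c),(s_1^c,t_1^c),(s_2^c,t_0^c),(s_3^c,t_3^c)$); and include $(a_0,b_0)$. By construction this $M$ covers every vertex except $z$ and $z'$, and the $\{0,\pm1\}$-vector $\vec\alpha$ exhibited there — $0$ on $a_0,b_0,z,z'$ and on every gadget with no blocking edge, $\pm1$ on the level~1 gadget of the true variable, on the two ``active'' level~2 gadgets, and on the level~3 gadget of each clause — satisfies $\sum_u\alpha_u=0$ and covers every edge, as verified in the accompanying claim; so $M$ is popular by Theorem~\ref{thm:witness}. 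Thus $G$ admits a desired popular matching.

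The genuine content is not in this final bookkeeping but in two ingredients already in hand. The first is Theorem~\ref{thm:separate}: that every level-$i$ gadget, and the set $\{a_0,b_0,z,z'\}$, each form their own connected component of the popular subgraph $F_G$. This is exactly what licenses the gadget-by-gadget use of Lemmas~\ref{prop0} and~\ref{prop1} throughout the arguments above, and it is proved by producing, for every edge joining two distinct gadgets, a popular matching (a men/women-optimal stable matching, or a suitable dominant matching) with a witness under which that edge is strictly slack, so Lemma~\ref{prop0} forbids it from any popular matching. The second is the gadget design itself, tuned so that the level~3 ``rotation'' of $s_0,t_0$ lifts two of three level~2 gadgets, each level~2 gadget forwards the forcing to exactly one variable through the second-choice edges $y_j,x_k$ of $p_2,q_2$, and the level~0 gadgets supply the complementary ``at most one'' bound. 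The main obstacle is precisely getting this interplay to produce the \emph{exactly one} condition; but since Theorems~\ref{thm1}, \ref{thm:separate} and the theorem of Section~\ref{converse} are established, the proof of Theorem~\ref{thm:section3} reduces to citing the forward implication (Theorem~\ref{thm1}) and the reverse implication (Section~\ref{converse}), which (combined with Theorem~\ref{bipartite:theorem}) also yields the NP-hardness of deciding whether $G$ has a popular matching that is neither stable nor dominant.
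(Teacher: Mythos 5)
Your proposal is correct and follows the paper's own route: Theorem~\ref{thm:section3} is indeed obtained by citing Theorem~\ref{thm1} for the ``only if'' direction and the explicit construction plus witness verification in Section~\ref{converse} for the ``if'' direction, with Theorem~\ref{thm:separate} doing the structural work underneath. Nothing is missing.
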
  

 Theorem~\ref{bipartite:theorem} showed that $G$ admits a popular matching that is neither stable nor dominant if and only if $G$ admits a popular matching
 that matches all nodes except $z$ and $z'$. Hence Theorem~\ref{final-thm} follows from Theorem~\ref{thm:section3}.

\section{Related hardness results}\label{sec:consequences}

\subsection{Popular matchings with forced/forbidden elements}\label{sec:forbidden-forced}

We now consider the \emph{popular matching problem in $G = (A \cup B, E)$ with forced/forbidden elements} (\texttt{pmffe}), which is the variant of the popular matching problem in a bipartite graph $G = (A \cup B, E)$, where our input also consists of some forced (resp., forbidden) edges $E_1$ (resp., $E_0$), and/or some forced (resp. forbidden) nodes $U_1$ (resp. $U_0$). The goal is to compute a popular matching $M$ in $G$ where all forced elements are included in $M$ 
and no forbidden element is included in $M$. 
The following result is immediate.

\begin{corollary}
  \label{cor:forced-forb1}
  The popular matching problem in $G = (A \cup B, E)$ with a given forced node set $U_1$ and a given forbidden node set $U_0$ is NP-hard for $|U_0| = |U_1| = 1$.
\end{corollary}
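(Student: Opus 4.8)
The plan is to reuse the bipartite instance $G = (A \cup B, E)$ built in the proof of Theorem~\ref{final-thm} from an arbitrary $1$-in-$3$ SAT formula $\phi$ --- which I may assume has at least one clause, fixing one such clause $c_1$ --- and to feed it to \texttt{pmffe} with the forbidden node set $U_0 = \{z\}$ and the forced node set $U_1 = \{s^{c_1}_0\}$. I will show that $(G, U_0, U_1)$ is a yes-instance of \texttt{pmffe} if and only if $\phi \in 1$-in-$3$ SAT; since the latter is NP-hard~\cite{Sch78}, this proves the corollary. The guiding observation is that, since $M^*$ is a perfect popular matching, every dominant matching of $G$ is perfect and hence matches $z$, so the constraint $U_0 = \{z\}$ excludes precisely the dominant popular matchings; meanwhile every stable matching of $G$ leaves the unstable node $s^{c_1}_0$ unmatched, so the constraint $U_1 = \{s^{c_1}_0\}$ excludes precisely the stable ones. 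What is left is exactly the family of popular matchings that saturate $(A\cup B)\setminus\{z,z'\}$, i.e.\ the ``desired popular matchings'' of Section~\ref{sec:hardness}.

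To make this rigorous in the forward direction, I would start from a popular matching $M$ of $G$ with $z$ unmatched and $s^{c_1}_0$ matched. Because $z$ and $z'$ are unstable nodes of the same connected component of $F_G$ (Theorem~\ref{thm:separate}), Lemma~\ref{prop1} forces $z'$ to be unmatched as well, and the same argument makes each pair $s^c_0, t^c_0$ (for every clause $c$) either both matched or both unmatched in $M$. Suppose some clause $c$ had $s^c_0$ and $t^c_0$ both unmatched; then Lemma~\ref{bipartite:lemma3} would force $M$ to be stable, and a stable matching leaves the unstable node $s^{c_1}_0$ unmatched --- contradicting our choice of $M$. Hence $M$ matches $s^c_0$ and $t^c_0$ for every clause $c$. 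Since every node of $G$ other than $z$ and $z'$ is either a stable node (matched by every popular matching) or some $s^c_0$ or $t^c_0$, it follows that $M$ matches all of $(A\cup B)\setminus\{z,z'\}$; by Theorem~\ref{thm1} this yields a $1$-in-$3$ satisfying assignment of $\phi$.

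For the converse, given a $1$-in-$3$ satisfying assignment of $\phi$, the matching constructed in Section~\ref{converse} is a popular matching of $G$ saturating $(A\cup B)\setminus\{z,z'\}$; in particular it avoids $z \in U_0$ and uses $s^{c_1}_0 \in U_1$, so it certifies that $(G,U_0,U_1)$ is a yes-instance of \texttt{pmffe}.

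I do not expect any serious obstacle here, since the reduction is essentially a reinterpretation of Theorem~\ref{thm:section3}; the one point worth stating explicitly is that both singleton constraints are genuinely needed. Forbidding $z$ alone leaves every stable matching of $G$ feasible, and these always exist; forcing $s^{c_1}_0$ alone leaves the dominant matching $M^*$ feasible, and that always exists as well. It is only the combination of the two constraints with the ``both-or-neither'' behaviour of $\{z,z'\}$ and of each $\{s^c_0,t^c_0\}$ across popular matchings that isolates the desired popular matchings, which is what makes the equivalence with $1$-in-$3$ satisfiability go through.
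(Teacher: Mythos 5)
Your proposal is correct and follows essentially the same route as the paper: choose $U_0 = \{z\}$, $U_1 = \{s^{c}_0\}$ in the instance $G$ from Section~\ref{sec:hardness}, use Lemma~\ref{prop1} for the both-or-neither behaviour of $\{z,z'\}$ and of each $\{s^c_0,t^c_0\}$, invoke Lemma~\ref{bipartite:lemma3} to rule out any pair $\{s^c_0,t^c_0\}$ being unmatched, and then reduce to Theorem~\ref{thm:section3}. You simply spell out several steps the paper leaves implicit (and add the useful remark that neither singleton constraint alone would suffice), so there is nothing to correct.
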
  

The proof of Corollary~\ref{cor:forced-forb1} follows from our instance $G$ in Section~\ref{sec:hardness} with $U_0 = \{z\}$ and $U_1 = \{s^c_0\}$. It follows from the proof of
Lemma~\ref{bipartite:lemma3} that the nodes $s^c_0,t^c_0$ for all clauses $c$ have to be matched in such a popular matching $M$. So $M$ has to match all nodes
in $G$ except $z$ and $z'$. Theorem~\ref{thm:section3} showed that finding such a popular matching $M$ is NP-hard.

The variant of the popular matching problem in $G = (A \cup B, E)$ with a given forced node set $U_1$ where $|U_1| = 1$ and a given forced/forbidden edge set
$E_0 \cup E_1$ where $|E_0 \cup E_1| = 1$ is also NP-hard. To show this, consider $U_1 = \{s^c_0\}$ and either $E_1 = \{(a_0,b_0)\}$ or $E_0 = \{(a_0,z)\}$.
The forced node set forces the nodes $s^c_0,t^c_0$ for all clauses $c$ to be matched in our popular matching $M$ while $E_1$ (similarly, $E_0$)
forces $z$ and $z'$ to be unmatched in $M$.

In order to show the NP-hardness of the variant with $|E_0| = 2$ or $|E_1| = 2$, we will augment our instance $G$ in Section~\ref{sec:hardness} with an extra
level~1 gadget $X_0$ (see Fig.~\ref{level1:example}). Call the new instance $G_0$. The gadget $X_0$ has 4 nodes $x_0,x'_0,y_0,y'_0$ with the following preferences:

\begin{minipage}[c]{0.45\textwidth}
			
			\centering
			\begin{align*}
			        &x_0\colon \, y_0 \succ y'_0    \qquad\qquad &&  y_0\colon \, x_0 \succ x'_0  \\
                                &x'_0\colon \, y_0 \succ y'_0 \succ t^{c_1}_0 \succ \cdots \succ t^{c_k}_0 \qquad\qquad &&  y'_0\colon \, x_0 \succ x'_0 \succ s^{c_1}_0 \succ \cdots \succ s^{c_k}_0\\
			\end{align*}
\end{minipage}

Thus nodes in the gadget $X_0$ are not adjacent to $z$ or $z'$ or to any node in levels~0, 1, or 2 --- however $x'_0$ is adjacent to $t^c_0$ for all clauses $c$
and $y'_0$ is adjacent to $s^c_0$ for all clauses $c$. For each clause $c$, the node $y'_0$ is at the bottom of $s^c_0$'s 
preference list and the node $x'_0$ is at the bottom of $t^c_0$'s preference list.

A stable matching in the instance $G_0$ includes the edges $(x_0,y_0)$ and $(x'_0,y'_0)$ while there is a dominant matching in this instance
with the edges $(x_0,y'_0)$ and $(x'_0,y_0)$. It is easy to extend Theorem~\ref{thm:separate} to show that the popular subgraph for the 
instance $G_0$ is the popular subgraph $F_G$ along with an extra connected component with 4 nodes $x_0,y_0,x'_0,y'_0$. 
We are now ready to show the following result.

\begin{lemma}
  \label{lem:forced-forb2}
  The popular matching problem in $G = (A \cup B, E)$ with a given forced edge set $E_1$ is NP-hard for $|E_1| = 2$.
\end{lemma}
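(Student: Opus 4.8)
The plan is to run the reduction of Section~\ref{sec:hardness} on the augmented instance $G_0$ and to choose the two forced edges so that obeying them pins a popular matching down to matching exactly $V(G_0)\setminus\{z,z'\}$. Concretely, on input $\phi$ I would output $G_0$ together with the forced edge set $E_1=\{(a_0,b_0),(x'_0,y_0)\}$, and prove that $G_0$ admits a popular matching containing $E_1$ if and only if $\phi$ is 1-in-3 satisfiable. The edge $(x'_0,y_0)$ is a popular edge of $G_0$ (the instance has a dominant matching through $(x_0,y'_0),(x'_0,y_0)$), so the instance is not vacuously infeasible. Recall that $F_{G_0}$ is $F_G$ together with one extra component on $\{x_0,y_0,x'_0,y'_0\}$; in particular $(x'_0,t^c_0)$ and $(y'_0,s^c_0)$ are not popular, so Theorem~\ref{thm:separate} (hence all of the structure used in Section~\ref{sec:hardness}) applies verbatim inside the copy of $G$, and any popular matching of $G_0$ matches the gadget $X_0$ internally.

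For the forward implication, let $M\supseteq E_1$ be popular in $G_0$ and fix a witness $\vec\alpha$ of $M$ with entries in $\{0,\pm1\}$, which exists by the $\{0,\pm1\}$-witness lemma of~\cite{Kav16}. Since $(a_0,b_0)\in M$, the node $z$ cannot be matched to $a_0$; as that is its only candidate popular partner (Theorem~\ref{thm:separate}; cf.\ the proof of Lemma~\ref{bipartite:lemma1}), $z$ is unmatched, and then $z'$ is unmatched too, since it lies in the same component of $F_{G_0}$ as the unstable node $z$ (Lemma~\ref{prop1}). On the other side, $(x'_0,y_0)\in M$ forces $(x_0,y'_0)\in M$ ($x_0$ is stable, hence matched, and $y_0$ is taken), so $(x_0,y_0)$ is a $(+,+)$ edge of $M$; therefore $\alpha_{x_0}=\alpha_{y_0}=1$, and then $\alpha_{x'_0}=\alpha_{y'_0}=-1$ follows from the tight popular edges $(x'_0,y_0),(x_0,y'_0)$ by Lemma~\ref{prop0}. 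Now fix a clause $c$. If $s^c_0$ were unmatched in $M$ then $\alpha_{s^c_0}=0$ (Lemma~\ref{prop0}), while the edge $(s^c_0,y'_0)$ of $G_0$ has label $+$ at $s^c_0$ (it is exposed and prefers $y'_0$) and label $-$ at $y'_0$ (which is matched inside $X_0$, above $s^c_0$ on its list), so $\wt_M(s^c_0,y'_0)=0$ and the covering inequality $\alpha_{s^c_0}+\alpha_{y'_0}\ge0$ becomes $0+(-1)\ge0$, a contradiction; hence $s^c_0$ is matched, and symmetrically, through $x'_0$ and the edge $(x'_0,t^c_0)$, so is $t^c_0$. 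Every remaining node of $G_0$ is stable and therefore matched, so $M$ matches all of $V(G_0)$ except $z$ and $z'$. Its restriction to $E(G)$ is then a popular matching of $G$ matching all of $V(G)\setminus\{z,z'\}$ --- the witness restricts correctly because the four $X_0$-nodes contribute $1+1-1-1=0$ to $\sum_u\alpha_u$ --- and Theorem~\ref{thm:section3} yields a 1-in-3 satisfying assignment of $\phi$.

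For the converse, from a 1-in-3 satisfying assignment I would take the desired popular matching of $G$ built in Section~\ref{converse} (it contains $(a_0,b_0)$, matches every node but $z,z'$, and has a witness $\vec\alpha$ with $\alpha_{t^c_0}=-1$ for all $c$), add the edges $(x_0,y'_0),(x'_0,y_0)$, and extend the witness by $\alpha_{x_0}=\alpha_{y_0}=1$ and $\alpha_{x'_0}=\alpha_{y'_0}=-1$. Then $\sum_u\alpha_u$ stays $0$; inside $X_0$ all edges are covered since $\wt_M(x_0,y_0)=2=1+1$, $\wt_M(x_0,y'_0)=\wt_M(x'_0,y_0)=0$, and $(x'_0,y'_0)$ is $(-,-)$ with $\wt_M=-2\le-2$; and every cross edge $(x'_0,t^c_0)$ and $(y'_0,s^c_0)$ is $(-,-)$ (each endpoint is matched above the other on its list), hence covered as $\alpha_{x'_0}+\alpha_{t^c_0}=\alpha_{y'_0}+\alpha_{s^c_0}=-2$. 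Thus $M$ is a popular matching of $G_0$ with $E_1\subseteq M$, and combining the two directions gives a polynomial reduction from 1-in-3 SAT, proving Lemma~\ref{lem:forced-forb2}.

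Nothing deep happens here, so no single step is the real obstacle; the two points that need care are (i)~checking that attaching $X_0$ creates no new popular edges, so that Theorem~\ref{thm:separate} and the whole Section~\ref{sec:hardness} analysis transfer to $G_0$, and (ii)~the bookkeeping that the forced edge $(x'_0,y_0)$ excludes every stable matching while $(a_0,b_0)$ excludes every dominant one, leaving precisely the ``desired'' popular matchings of $G_0$.
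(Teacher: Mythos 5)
Your proof is correct and follows essentially the same route as the paper: force $(a_0,b_0)$ to kill $z,z'$ and force a ``crossed'' $X_0$ edge so the witness analysis (via the edges $(s^c_0,y'_0)$, $(x'_0,t^c_0)$) pins down every $s^c_0,t^c_0$, reducing to the problem of matching all of $G$ except $z,z'$; the only cosmetic difference is that you force $(x'_0,y_0)$ while the paper forces $(x_0,y'_0)$, which are interchangeable since any popular matching matching the $X_0$ gadget internally contains one if and only if it contains the other.
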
  
 
The proof of Lemma~\ref{lem:forced-forb2} follows from the instance $G_0$ with $E_1 = \{(a_0,b_0), (x_0,y'_0)\}$.
Let $M$ be a popular matching in $G_0$ that includes the edges $(a_0,b_0)$ and $(x_0,y'_0)$.
Since $M$  has to contain $(a_0,b_0)$, it means that the nodes $z$ and $z'$ are unmatched in $M$.

Since $(x_0,y'_0) \in M$, it means that
$(x'_0,y_0) \in M$. So $(x_0,y_0)$ is a blocking edge to $M$ and we have $\alpha_{x_0} = \alpha_{y_0} = 1$ and  $\alpha_{x'_0} = \alpha_{y'_0} = -1$.
This forces $s^c_0,t^c_0$ for all $c$ to be matched in $M$. The argument is the same as in Lemma~\ref{bipartite:lemma3} since the edges $(s^c_0,y'_0)$
and  $(x'_0,t^c_0)$ would not be covered otherwise. This is because $\alpha_{x'_0} = \alpha_{y'_0} = -1$ and if $s^c_0,t^c_0$ are unmatched then
$\alpha_{s^c_0} = \alpha_{t^c_0} = 0$ and $\wt_M(s^c_0,y'_0) = \wt_M(x'_0,t^c_0) = 0$. This would make $\alpha_{s^c_0} + \alpha_{y'_0} < \wt_M(s^c_0,y'_0)$
and similarly,  $\alpha_{x'_0} + \alpha_{t^c_0}  < \wt_M(x'_0,t^c_0)$.

\smallskip

Thus $M$ has to be a popular matching that matches all nodes in $G_0$ except $z$ and $z'$. It is easy to see that the proof of Theorem~\ref{thm:section3} implies
that finding such a popular matching in $G_0$ is NP-hard.

We can similarly show that the popular matching problem in $G_0$ with a given forbidden edge set $E_0$ is NP-hard for $|E_0| = 2$.
For this, we will take $E_0 = \{(a_0,z),(x_0,y_0)\}$. This is equivalent to setting $E_1 = \{(a_0,b_0),(x_0,y'_0)\}$.

We can similarly show that this problem with a given forced set $E_1$ and a given forbidden edge set $E_0$ is NP-hard for $|E_0| = |E_1| = 1$.
For this, we will take $E_0 = \{(a_0,z)\}$ and $E_1 = \{(x_0,y'_0)\}$. This will force $s^c_0,t^c_0$ for all $c$ to be matched in $M$ while $z,z'$ are unmatched in $M$.

Finally, the variant with $|U_0| = 1$ and $|E_0\cup E_1| = 1$ follows by taking $U_0 = \{z\}$ and $E_0 = \{(x_0,y_0)\}$ or $E_1 = \{(x_0,y'_0)\}$.

\medskip

We now put all those observations together in the following theorem.

\begin{theorem}
  \label{thm:forced-forb3}
  The popular matching problem in $G = (A \cup B, E)$ with forced/forbidden element set $\langle E_0, E_1, U_0, U_1\rangle$ is NP-hard
when (i)~$|E_0| = 2$, (ii)~$|E_1| = 2$, (iii)~$|E_0| = |E_1| = 1$, (iv)~$|U_0| = |U_1| = 1$, (v)~$|U_0| = 1$ and $|E_0\cup E_1| = 1$,
and (vi)~$|U_1| = 1$ and $|E_0\cup E_1| = 1$.
\end{theorem}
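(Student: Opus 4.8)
The plan is to derive all six cases from the reduction already in hand. By Theorem~\ref{thm:section3}, deciding whether the bipartite instance $G$ of Section~\ref{sec:hardness} admits a popular matching that matches all nodes except $z$ and $z'$ is NP-hard; I would first observe that the same holds for the augmented instance $G_0$ (obtained by appending the extra level~1 gadget $X_0$), because the popular subgraph of $G_0$ is $F_G$ together with one extra connected component on $\{x_0,y_0,x'_0,y'_0\}$, so the analysis of Sections~\ref{sec:hardness}--\ref{sec:proof-separate} goes through unchanged and such a matching in $G_0$ exists iff $\phi\in$ 1-in-3 SAT. Then for each of the six settings it suffices to exhibit a forced/forbidden element set whose satisfaction, for a popular matching, is \emph{equivalent} to ``$M$ matches every node except $z$ and $z'$'' in $G$ or in $G_0$.

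Next I would dispose of the two node-based cases. For (iv) I take $U_0=\{z\}$ and $U_1=\{s^c_0\}$ for a single clause $c$ in $G$: forbidding $z$ forces $z$, and hence $z'$ (same component of $F_G$ by Theorem~\ref{thm:separate}, via Lemma~\ref{prop1}), to be unmatched, while forcing one node $s^c_0$ to be matched forces all the $s_0,t_0$ nodes to be matched by exactly the witness argument of Lemma~\ref{bipartite:lemma3} (otherwise the corresponding edges $(s^c_0,q^c_2)$ and $(p^c_2,t^c_0)$ would be uncovered). For (vi) I take $U_1=\{s^c_0\}$ together with the single constraint $E_1=\{(a_0,b_0)\}$, or equivalently $E_0=\{(a_0,z)\}$: the forced node again pins down all the $s_0,t_0$ nodes, and $(a_0,b_0)\in M$ (resp. $(a_0,z)\notin M$) forces $z$ and $z'$ to be unmatched.

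For the edge-based cases (i), (ii), (iii) and for (v), I would work in $G_0$ and use $X_0$ to simulate the forced node $s^c_0$ by an edge constraint. Forbidding $(x_0,y_0)$ --- equivalently, forcing $(x_0,y'_0)$ --- makes $(x_0,y_0)$ a blocking edge to $M$, so every witness has $\alpha$-value $-1$ at $x'_0$ and $y'_0$; then, as in Lemma~\ref{lem:forced-forb2}, the edges $(s^c_0,y'_0)$ and $(x'_0,t^c_0)$ cannot be covered unless $s^c_0,t^c_0$ are matched, so all level~3 nodes $s_0,t_0$ are matched. Combining with the $a_0,b_0$ part of the instance, I would use: for (ii), $E_1=\{(a_0,b_0),(x_0,y'_0)\}$; for (i), $E_0=\{(a_0,z),(x_0,y_0)\}$; for (iii), $E_0=\{(a_0,z)\}$ and $E_1=\{(x_0,y'_0)\}$; for (v), $U_0=\{z\}$ and $E_0=\{(x_0,y_0)\}$ (or $E_1=\{(x_0,y'_0)\}$). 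In each case these constraints hold for a popular $M$ exactly when $M$ matches all nodes except $z$ and $z'$, and the statement follows from the $G$- or $G_0$-version of Theorem~\ref{thm:section3}.

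The step I expect to be the main obstacle is the propagation argument on $G_0$: showing that forcing or forbidding one edge of the $X_0$ gadget really does pin the witness on $X_0$ to $\pm 1$ and hence forces $s^c_0,t^c_0$ to be matched. This requires that $X_0$ forms its own connected component of the popular subgraph of $G_0$ (so that Lemma~\ref{prop0} applies to $(x_0,y'_0)$ and to the $X_0$-to-level~3 edges) and that, in $G_0$, every level~3 gadget is still a separate component, i.e.\ the easy extension of Theorem~\ref{thm:separate}. Once these structural facts are in place, the rest is the bookkeeping already carried out for Theorem~\ref{final-thm}.
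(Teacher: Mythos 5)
Your proposal is correct and follows essentially the same route as the paper: you use the identical forced/forbidden sets ($U_0=\{z\}$, $U_1=\{s^c_0\}$ for (iv) and (vi); the $X_0$-gadget edges $(x_0,y'_0)$/$(x_0,y_0)$ together with $(a_0,b_0)$/$(a_0,z)$ for (i)–(iii) and (v)), the same augmented instance $G_0$, and the same structural facts (Theorem~\ref{thm:separate} extended to $G_0$, the witness argument of Lemma~\ref{bipartite:lemma3}, and Theorem~\ref{thm:section3}). The obstacle you flag — that $X_0$ is its own component of $F_{G_0}$ so that Lemma~\ref{prop0} applies to $(x_0,y'_0)$ — is exactly the point the paper addresses by its "easy to extend Theorem~\ref{thm:separate}" remark.
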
  

Note that when $|E_1|=1$ and $E_0=U_0=U_1=\emptyset$, \texttt{pmffe} reduces to the \emph{popular edge problem}, that can be solved in polynomial time~\cite{CK16}. Now suppose $|E_0|=1$ and $E_1=U_0=U_1=\emptyset$: we show below that a polynomial-time algorithm for \texttt{pmffe} follows from the algorithm for the popular edge problem. Define $E_s$ and $\overline{E}_s$ as:
$$\begin{array}{lll}
E_s &= \{ e \in E: \exists \text{ stable matching } M \text{ s.t. } e \in M\},\\
\overline{E}_s &= \{ e \in E: \exists \text{ stable matching } M \text{ s.t. } e \notin M\}.
\end{array}$$
$E_d$, $\overline{E}_d$ (resp., ${E_p}$, $\overline{E}_p$) are defined similarly, by replacing ``stable'' with ``dominant'' (resp., popular). It was proved in~\cite{CK16} that $E_p = E_s \cup E_d$. We now argue that $\overline{E}_p = \overline{E}_s \cup \overline{E}_d$.

Consider any $e \in \overline{E}_s \cup \overline{E}_d$. Since there is a stable or dominant matching that does not contain $e$, it follows that  $e \in \overline{E}_p$. We conclude that $\overline{E}_p \supseteq \overline{E}_s \cup \overline{E}_d$. We will now show that $\overline{E}_p \subseteq \overline{E}_s \cup \overline{E}_d$.

Consider any $e=(i,j) \in \overline{E}_p$. There is a popular matching $M$ s.t. $e \notin M$, and $i$ or $j$ is matched (if $i$ and $j$ are unmatched then $(i,j) = (+,+)$, and $M$ is not popular). Without loss of generality assume $(j,k) \in M$. It follows from $E_p = E_s \cup E_d$ that there exists a stable or dominant matching $M'$ s.t. $(j,k) \in M'$, hence $(i,j) \notin M'$. Thus $\overline{E}_p \subseteq \overline{E}_s \cup \overline{E}_d$.

Since $\overline{E}_p = \overline{E}_s \cup \overline{E}_d$, we can solve the forbidden edge problem by checking if $e \in \overline{E}_s$ or $e \in \overline{E}_d$. For stable matchings, this can be done in polynomial time, see e.g.~\cite{GI}. For dominant matchings, it immediately follows from results from~\cite{CK16}.

\subsection{Weighted popular matching problems} 

In this section, we are given, together with the usual bipartite graph and the rankings, a nonnegative edge weight vector $c\geq 0$. First, consider the problem of finding a popular matching of \emph{minimum} weight wrt $c$ (\texttt{min-wp}). Recall that \texttt{pmffe} with $|E_0|=2$, $U_1=U_0=E_1=\emptyset$ is NP-hard, as shown by the instance with graph $G_0$ and $E_0=\{(a_0,z),(x_0,y_0)\}$ (see Section \ref{sec:forbidden-forced}). Let the weights of $(a_0,z)$ and $(x_0,y_0)$ be equal to $1$, and let all other weights be $0$. We can conclude the following.

\begin{corollary}\label{cor:miwp}
\texttt{min-wp} is NP-Hard.
\end{corollary}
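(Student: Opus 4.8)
The plan is to reduce \texttt{min-wp} from the \texttt{pmffe} variant with $|E_0| = 2$, $U_1 = U_0 = E_1 = \emptyset$, which was shown to be NP-hard in Section~\ref{sec:forbidden-forced} via the instance $G_0$ with $E_0 = \{(a_0,z),(x_0,y_0)\}$. First I would take exactly this instance $G_0$, define a cost function $c: E \to \R_{\geq 0}$ by $c(a_0,z) = c(x_0,y_0) = 1$ and $c(e) = 0$ for every other edge $e$, and observe that any popular matching $M$ in $G_0$ satisfies $c(M) = 0$ if and only if $M$ avoids both edges $(a_0,z)$ and $(x_0,y_0)$, i.e.\ if and only if $M$ is a solution to the \texttt{pmffe} instance.

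The next step is to argue both directions of the reduction. If $\phi$ is 1-in-3 satisfiable, then the analysis in Section~\ref{sec:forbidden-forced} (extending Theorem~\ref{thm:section3} to $G_0$) produces a popular matching in $G_0$ that matches all nodes except $z$ and $z'$ and includes $(x_0,y'_0)$, $(x'_0,y_0)$; such a matching uses neither $(a_0,z)$ (since $z$ is unmatched) nor $(x_0,y_0)$, so it has cost $0$. Conversely, if there is a popular matching in $G_0$ of cost $0$ (equivalently, cost less than $1$), it must avoid $(a_0,z)$ and $(x_0,y_0)$; the argument given for Lemma~\ref{lem:forced-forb2} shows such a matching forces $s^c_0, t^c_0$ to be matched for every clause $c$ and $z,z'$ to be unmatched, hence $\phi$ is 1-in-3 satisfiable by Theorem~\ref{thm:section3}. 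Therefore deciding whether the minimum cost of a popular matching is $0$ is NP-hard, which gives the corollary.

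There is essentially no main obstacle here beyond bookkeeping: every hard part has already been done in Section~\ref{sec:hardness} and Section~\ref{sec:forbidden-forced}. The one point to state cleanly is that a threshold question for the optimization problem (``is the optimum equal to $0$?'') encodes the decision problem, so \texttt{min-wp} is NP-hard; and that all weights are nonnegative, as required by the problem statement. Given how short the deduction is, I would present it as a one-paragraph proof of Corollary~\ref{cor:miwp} immediately after stating it, essentially just spelling out the weighting of the two forbidden edges and invoking the NP-hardness of the corresponding \texttt{pmffe} variant.
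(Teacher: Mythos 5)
Your proposal is correct and takes essentially the same approach as the paper: the paper also assigns weight $1$ to the two edges $(a_0,z)$ and $(x_0,y_0)$ and weight $0$ to all others, so that the minimum-weight popular matching has cost $0$ if and only if the NP-hard \texttt{pmffe} instance with $E_0=\{(a_0,z),(x_0,y_0)\}$ on $G_0$ has a solution.
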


Now consider the problem of finding a popular matching of \emph{maximum} weight wrt $c$. We denote this problem by \texttt{max-wp}. It was shown in \cite{Kav18} that this problem is NP-hard. We show here a tight inapproximability result.

\begin{theorem}\label{thr:mwp}
  Unless P=NP, \texttt{max-wp} cannot be approximated in polynomial time to a factor better than $\frac{1}{2}$. On the other hand, there is a polynomial-time algorithm that computes a $\frac{1}{2}$-approximation to \texttt{max-wp}.
\end{theorem}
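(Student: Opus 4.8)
I would prove the two halves of the statement separately, handling the negative half via a weighted version of the reduction in Section~\ref{sec:hardness}.

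\emph{Inapproximability.} Given a 1-in-3 SAT formula $\phi$ with $k$ clauses, build the bipartite instance $G$ of Section~\ref{sec:hardness} and add the weights $c(a_0,b_0)=k$, $c(s^c_0,t^c_i)=1$ for every clause $c$ and every $i\in\{1,2,3\}$, and $c(e)=0$ for all other edges. By Lemmas~\ref{bipartite:lemma2} and~\ref{bipartite:lemma3} together with the case analysis preceding Theorem~\ref{bipartite:theorem}, every popular matching of $G$ is stable, dominant, or ``desired'' (matches all nodes except $z,z'$), so it is enough to evaluate $c$ on these three types. A stable matching contains $(a_0,b_0)$ (as $a_0$ is a stable node it is matched, and its only neighbours are $b_0$ and $z$) and leaves every $s^c_0$ unmatched, hence has weight exactly $k$. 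A dominant matching is perfect, so it matches $z$ and $z'$ and therefore contains $(a_0,z),(z',b_0)$ and not $(a_0,b_0)$ by Lemma~\ref{bipartite:lemma1}; by Theorem~\ref{thm:separate} the only neighbours of $s^c_0$ in $F_G$ are $t^c_1,t^c_2,t^c_3$, so such a matching pairs $s^c_0$ with one of them and again has weight exactly $k$. A desired popular matching contains $(a_0,b_0)$ (since $a_0$ is matched but $z$ is not) and pairs each $s^c_0$ with some $t^c_i$, so it has weight exactly $2k$. Hence the optimum of \texttt{max-wp} on $(G,c)$ is $2k$ if $\phi$ is 1-in-3 satisfiable (Theorem~\ref{thm:section3}) and $k$ otherwise, and a polynomial-time $\rho$-approximation with $\rho>\frac{1}{2}$ would output a matching of weight more than $k$ exactly on the yes-instances, deciding 1-in-3 SAT, so $\mathrm{P}=\mathrm{NP}$.

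\emph{The $\frac{1}{2}$-approximation.} The algorithm computes a maximum-weight stable matching $S^\ast$ and a maximum-weight dominant matching $D^\ast$ and outputs the heavier one. Both are computable in polynomial time: a maximum-weight stable matching via the stable matching polytope / rotation poset (see~\cite{GI}), and a maximum-weight dominant matching of $G$ by transporting $c$ along the linear correspondence of~\cite{CK16} between dominant matchings of $G$ and stable matchings of its auxiliary graph and solving maximum-weight stable matching there. For the guarantee, let $M^\ast$ be a \texttt{max-wp} optimum. Since $E_p=E_s\cup E_d$ (\cite{CK16}, recalled in Section~\ref{sec:forbidden-forced}), partition $E(M^\ast)$ into $R\subseteq E_s$ and $B\subseteq E_d$; then $c(R)+c(B)=c(M^\ast)$, so one of them, say $R$, has weight at least $\frac{1}{2}c(M^\ast)$ (the other case is symmetric). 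Because $c\ge 0$, it now suffices to exhibit a stable matching containing $R$ (resp.\ a dominant matching containing $B$), which then has weight at least $\frac{1}{2}c(M^\ast)$.

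\emph{Main obstacle.} This last structural step is where the work lies. I would fix a witness $\vec\alpha\in\{0,\pm1\}^n$ of $M^\ast$ (it exists by~\cite{Kav16}), classify every component of $F_G$ as being in zero or unit state in $\vec\alpha$ (Lemma~\ref{prop1} and Definition~\ref{def:stab-domn}), note from Lemma~\ref{prop0} that each edge of $M^\ast$ lies inside a single component and that no $M^\ast$-edge joins a zero-state component to a unit-state one, and then argue that the restriction of $M^\ast$ to the zero-state components extends to a stable matching of $G$, while its restriction to the unit-state components extends to a dominant matching. The delicate point is ruling out blocking edges of the reconstructed matching running between two different components of $F_G$; for this I would use that such edges are non-popular, together with the forbidden-substructure characterisation of popularity in Theorem~\ref{thr:characterize-popular} applied to $M^\ast$. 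Granting the claim, $\max(c(S^\ast),c(D^\ast))\ge\max(c(R),c(B))\ge\frac{1}{2}c(M^\ast)$, which together with the first half proves the theorem.
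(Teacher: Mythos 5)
Your inapproximability half is correct but takes a different route: the paper reuses the auxiliary instance $G_0$ from Section~\ref{sec:forbidden-forced} and puts weight $1$ only on $(a_0,b_0)$ and $(x_0,y_0')$, so that the optimum is $2$ when $\phi$ is 1-in-3 satisfiable and $1$ otherwise, whereas you stay with the original instance $G$ and achieve the same gap ($2k$ vs.\ $k$) with the weighting $c(a_0,b_0)=k$, $c(s^c_0,t^c_i)=1$. Both are valid; the paper's choice is slightly cleaner because the entire hardness is delegated to Lemma~\ref{lem:forced-forb2}, but yours is a direct consequence of the trichotomy established before Theorem~\ref{bipartite:theorem} and of Theorem~\ref{thm:section3}, and is fine as stated.

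The $\tfrac12$-approximation half has a genuine gap. You invoke only $E_p = E_s \cup E_d$ to split $E(M^\ast)$ into $R\subseteq E_s$ and $B\subseteq E_d$ and then assert it ``suffices to exhibit a stable matching containing $R$.'' But $R\subseteq E_s$ says only that each edge of $R$ lies in \emph{some} stable matching; it does not by itself give one stable matching containing all of $R$ simultaneously (two edges could come from different stable matchings with no common extension). That is exactly the stronger statement you need, and the paper obtains it by citing the following result of~\cite{CK16}: every popular matching $M$ in a bipartite graph can be partitioned as $M = M_0 \cup M_1$ with $M_0 \subseteq S$ and $M_1 \subseteq D$ for a \emph{single} stable matching $S$ and a \emph{single} dominant matching $D$. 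With that in hand the bound is immediate: since $c\ge 0$, $c(S)\ge c(M_0)$ and $c(D)\ge c(M_1)$, hence $\max(c(S^\ast),c(D^\ast))\ge \max(c(S),c(D))\ge\tfrac12 c(M^\ast)$. Your ``main obstacle'' paragraph recognises the missing step and sketches a plausible witness/zero-state/unit-state reconstruction, but it stops at ``granting the claim'': nothing you have written rules out a blocking edge for the reassembled matching running between a zero-state and a unit-state component of $F_G$. You should replace the $E_p=E_s\cup E_d$ invocation with a direct citation of the decomposition theorem from~\cite{CK16} (or actually carry out the reconstruction argument), otherwise the approximation guarantee is not established.
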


\begin{proof}
Consider the instance $G_0$ again. The hardness of approximating \texttt{max-wp} to a factor better than $\frac{1}{2}$ immediately follows from Lemma~\ref{lem:forced-forb2} by setting the weights of edges $(a_0,b_0)$ and $(x_0,y_0')$ to $1$ and all other edge weights to $0$. 

We will now show that a popular matching in $G$ of weight at least $c(M^*)/2$ can be computed in polynomial time, where
$M^*$ is a max-weight popular matching in $G$.
It was shown in \cite{CK16} that  any popular matching $M$ in $G = (A \cup B, E)$ can be 
partitioned into $M_0 \cup M_1$ such that $M_0 \subseteq S$ and $M_1 \subseteq D$, 
where $S$ is a stable matching and $D$ is a dominant matching in $G$.

Consider the following algorithm.

\begin{enumerate}
\item Compute a max-weight stable matching $S^*$ in $G$.
\item Compute a max-weight dominant matching $D^*$ in $G$.
\item Return the matching in $\{S^*, D^*\}$ with larger weight.
\end{enumerate}

Since all edge weights are non-negative, either the max-weight stable matching in $G$ or the max-weight dominant matching in $G$ has weight at least $c(M^*)/2$. 
Thus Steps~1-3 compute a $\frac{1}{2}$-approximation for max-weight popular matching in $G = (A \cup B,E)$.

Regarding the implementation of this algorithm, both $S^*$ and $D^*$ can be computed in polynomial time~\cite{Rot92,CK16}.
Thus our algorithm runs in polynomial time. \qed
\end{proof}

\subsection{Popular and Dominant matchings in non-bipartite graphs}

In this section, we consider the popular (similarly, dominant) problem in general graphs. The only modification to the input with respect to the previous paragraphs, is that our input graph $G$ is not required to be bipartite. As usual, together with $G$, we are given a collection of rankings, one per node of $G$, with each node ranking its neighbors in a strict order of preference. The goal is to decide if $G$ admits a popular (resp., dominant) matching.

Theorem~\ref{thm:section3} showed that that the problem of deciding if a bipartite instance $G = (A \cup B, E)$ admits a popular matching
that matches {\em exclusively} a given set $S \subset A \cup B$ is NP-hard (i.e., the nodes outside $S$ have to be left unmatched).
Let us call this the {\em exclusive popular set} problem.
We will now use the hardness of the exclusive popular set problem in the instance $G$ from Section~\ref{sec:hardness}
to show that the dominant problem in non-bipartite graphs is NP-hard. In order to show this, {\em merge} the nodes $z$ and $z'$ in the instance $G$
from Section~\ref{sec:hardness} into a single node $z$. Call the new graph $G'$.

The preference list of the node $z$ in $G'$ is all its level~1 neighbors in some order of preference, followed by
all its level~2 neighbors in some order of preference, followed by $a_0,b_0$, and then level~0 neighbors. The order among level~$i$ neighbors (for $i = 0,1,2$)
in this list does not matter.

\begin{lemma}
     A popular matching $N$ in $G'$ is dominant if and only $N$ matches all nodes in $G'$ except $z$.
  \end{lemma}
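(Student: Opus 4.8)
The plan is to prove the two directions of the equivalence, leaning on the bijection between popular matchings of $G'$ that leave $z$ unmatched and popular matchings of $G$ that leave both $z$ and $z'$ unmatched, together with Theorem~\ref{thm:section3}.

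First I would establish the easier direction: if a popular matching $N$ in $G'$ matches all nodes except $z$, then $N$ is dominant. Since $|V(G')|$ is odd (the only change from $G$ was merging $z$ and $z'$ into one node, and $|V(G)|$ was even), a matching that covers all nodes except one is a maximum matching of $G'$. Any matching $N'$ with $|N'|>|N|$ is therefore impossible, so vacuously $\Delta(N,N')>0$ for every larger $N'$, and $N$ is dominant by Definition~\ref{def:dominant}. This part is a one-line observation once the parity of $|V(G')|$ is noted.

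For the converse — if $N$ is a dominant matching in $G'$ then $N$ covers every node except $z$ — I would argue that $N$ must leave $z$ exposed, and then that $N$ must be perfect on $V(G')\setminus\{z\}$. For the first point, recall that $z$ in $G'$ plays the combined role of $z$ and $z'$ in $G$; I would adapt the argument of Lemma~\ref{bipartite:lemma2} and Lemma~\ref{bipartite:lemma1} to show that if $z$ were matched in a popular matching $N$ of $G'$, the only possibility (up to the structure of the popular subgraph, which transfers from $G$ to $G'$ essentially unchanged for the clause/variable gadgets) forces the level-1 gadgets into unit state, which in turn forces a blocking edge at a stable edge of a level-1 gadget, contradicting popularity exactly as in Lemma~\ref{bipartite:lemma1}; alternatively, and more cleanly, since $N$ is dominant it is a max-size popular matching, and I would show any popular matching of $G'$ covering $z$ has size strictly smaller than $|V(G')\setminus\{z\}|/2+1$, so it cannot be dominant. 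For the second point, once $z$ is exposed, $N$ restricted to $G'-z$ corresponds to a popular matching $M$ of $G$ leaving $z,z'$ exposed; by Lemmas~\ref{bipartite:lemma2} and~\ref{bipartite:lemma3} such an $M$ is either stable or the ``desired'' perfect-on-the-rest popular matching, and a dominant (hence max-size) $N$ forces the latter, i.e. $N$ covers all of $V(G')\setminus\{z\}$.

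The main obstacle I expect is making rigorous the transfer of Theorem~\ref{thm:separate} (the popular-subgraph structure) from $G$ to $G'$: merging $z$ and $z'$ creates new potential popular edges incident to the merged node, and new alternating paths through it, so I would need to recheck that no gadget-to-gadget edge becomes popular in $G'$ and that the four special nodes $a_0,b_0,z$ still form their own component — essentially re-running the claims in Section~\ref{sec:proof-separate} with $z=z'$, where the only real work is verifying that an alternating path entering the merged node from the ``$z$-side'' and leaving from the ``$z'$-side'' does not create a forbidden configuration of Theorem~\ref{thr:characterize-popular}. Once that structural lemma is in place, the rest is a short deduction as sketched above.
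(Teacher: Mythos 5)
Your easy direction matches the paper's: since $|V(G')|$ is odd, a matching covering all nodes but $z$ is a maximum matching of $G'$, and dominance follows vacuously.

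Your hard direction, however, is both longer and shakier than the paper's, which is a single observation: if $N$ leaves some unstable node $s^c_0$ (other than $z$) unmatched, then $t^c_0$ is unmatched too (this is the one localized fact about the popular subgraph that the paper invokes), so $\rho = s^c_0$--$t^c_1$--$s^c_1$--$t^c_0$ is an $N$-augmenting path with $\Delta(N, N\oplus\rho) = 0$, and $N$ is not dominant. Combined with the fact that all stable nodes must be matched and the odd parity of $|V(G')|$, this forces the unique unmatched node of a dominant $N$ to be $z$. No separate ``$z$ is unmatched'' lemma is required, and no wholesale transfer of Theorem~\ref{thm:separate} to $G'$ is needed.

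Your route instead wants to first prove $z$ is unmatched, then invoke Lemmas~\ref{bipartite:lemma2} and \ref{bipartite:lemma3}, and finally use maximality of dominant matchings. There are two genuine problems. First, the ``cleaner'' size argument you propose is vacuous as stated: $|V(G')\setminus\{z\}|/2$ is already the maximum matching size in an odd-order graph, so every matching trivially has size $< |V(G')\setminus\{z\}|/2 + 1$; the version you would actually need is that every popular matching covering $z$ has size strictly below the maximum \emph{and} that a popular matching of maximum size exists --- but the latter is precisely the NP-hard property being encoded, so assuming it makes the argument circular (the lemma must also hold vacuously when $\phi$ is unsatisfiable). Second, in the final step ``a dominant (hence max-size) $N$ forces the latter,'' you again implicitly presuppose that a ``desired'' matching exists; to rule out the stable case of Lemma~\ref{bipartite:lemma3} without that assumption you end up needing the same augmenting-path argument the paper uses, so the detour through those lemmas does not buy you a shortcut. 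Your alternative route of adapting Lemmas~\ref{bipartite:lemma1} and \ref{bipartite:lemma2} to $G'$ would indeed work, but --- as you correctly anticipate --- only after re-establishing a good portion of Theorem~\ref{thm:separate} in the merged graph, which is exactly the case analysis the paper's more direct proof avoids.
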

  \begin{proof}
    Let $N$ be any popular matching in $G'$. Any popular matching has to match all stable nodes in $G'$~\cite{HK11},
    thus $N$ matches all stable nodes in $G'$. Suppose some unstable node other than $z$ (say, $s^c_0$) is left unmatched in $N$. 
    We claim that $t^c_0$ also has to be left unmatched in $N$. Since $s^c_1$ and $t^c_1$ have no other neighbors, the edge $(s^c_1,t^c_1) \in N$
    and so there is an augmenting path
    $\rho = s^c_0$-$t^c_1$-$s^c_1$-$t^c_0$ with respect to $N$. Observe that $N$ is {\em not} more popular than $N \oplus \rho$, a larger matching.
    Thus $N$ is not a dominant matching in $G'$.
    
    In order to justify that $t^c_0$ also has to be left unmatched in $N$,
    let us view $N$ as a popular matching in $G$.
    We know that  $s^c_0$ and $t^c_0$ belong to the same connected component in the popular
    subgraph $F_G$ (by Lemma~\ref{lem:conn-comp}). So if $s^c_0$ is left unmatched in $N$, then $t^c_0$ is also unmatched in $N$ (by Lemma~\ref{prop1}).

    Conversely, suppose $N$ is a popular matching in $G'$ that matches all  nodes except $z$. Then there is no larger matching than $N$ in $G'$,
    thus $N$ is a dominant matching in $G'$. \qed
  \end{proof}
  
  Thus a dominant matching exists in $G'$ if and only if there is a popular matching in $G'$ that matches all nodes except $z$.
  This is equivalent to deciding if there exists a popular matching in $G$ that matches all nodes in $G$ except $z$ and $z'$. Thus we 
  have shown the following theorem.

  \begin{theorem}
  \label{second-thm}
  Given a graph $G = (V,E)$ with strict preference lists, the problem of deciding if $G$ admits a dominant matching or not
  is NP-hard. Moreover, this hardness holds even when $G$ admits a stable matching.
\end{theorem}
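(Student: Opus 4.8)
The plan is to reduce from the \emph{exclusive popular set} problem on the bipartite instance $G$ of Section~\ref{sec:hardness}: recall that Theorem~\ref{thm:section3} proves it NP-hard to decide whether $G$ has a popular matching covering exactly $V(G)\setminus\{z,z'\}$, and this is the problem to which $1$-in-$3$ SAT was reduced. From $G$ I would build the (non-bipartite) graph $G'$ by identifying the two nodes $z$ and $z'$ into a single node, again called $z$, whose preference list ranks first all level-$1$ neighbours, then all level-$2$ neighbours, then $a_0$ and $b_0$, then the level-$0$ neighbours (the internal order within each block being immaterial); this is clearly a polynomial-time construction. By the Lemma proved just above, a popular matching $N$ of $G'$ is dominant if and only if it covers every node except $z$. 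So it suffices to establish (a) that $G'$ has a popular matching covering $V(G')\setminus\{z\}$ if and only if $\phi$ is $1$-in-$3$ satisfiable, and (b) that $G'$ always admits a stable matching; combined with the Lemma, (a) yields the NP-hardness and (b) the ``moreover'' clause.

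The heart of (a) is that popularity transfers, in both directions, between matchings of $G$ exposing $\{z,z'\}$ and matchings of $G'$ exposing $z$. Note that in the bipartite graph $G$ we have $z\in B$ and $z'\in A$, so $N_G(z)$ and $N_G(z')$ are disjoint, every other node keeps its preference list, and the edge set of $G'$ is the edge set of $G$ with every edge at $z$ or $z'$ reattached to the merged node; consequently such a matching is literally the same set of edges in the two graphs, and for every node $u\neq z,z'$ the label of an incident edge and the value $\wt_M(\cdot)$ are unchanged. For the direction ``$\phi$ is $1$-in-$3$ satisfiable $\Rightarrow$ $G'$ has a dominant matching'' I would take the popular matching $M$ and the witness $\vec{\alpha}\in\{0,\pm 1\}^n$ built in Section~\ref{converse}, which satisfies $\alpha_z=\alpha_{z'}=0$; assigning the merged node the value $0$ leaves $\sum_u\alpha_u=0$ and, since each edge $(v,z_{\mathrm{merged}})$ equals an edge $(v,z)$ or $(v,z')$ of $G$ with unchanged weight and $\alpha_v+0=\alpha_v+\alpha_z$ there, every covering inequality of Theorem~\ref{thm:non-bipartite} is inherited; hence $M$ is popular in $G'$, covers all but $z$, and is dominant by the Lemma. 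For the converse, a dominant --- equivalently, by the Lemma, $(V(G')\setminus\{z\})$-covering --- matching $N$ of $G'$ carries a witness $\vec{\beta}$ as in Theorem~\ref{thm:non-bipartite}; complementary slackness of $\tilde N$ (which uses the self-loop at $z$) against the max-weight perfect matching LP in $\tilde{G'}$ forces $\beta_z=0$, so splitting $z$ back into $z,z'$ with both values $0$ gives a witness of $N$ in the bipartite $G$ in the sense of Theorem~\ref{thm:witness}; thus $N$, read as a matching of $G$, is popular and covers all but $z,z'$, and Theorem~\ref{thm:section3} yields that $\phi$ is $1$-in-$3$ satisfiable.

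For (b) I would exhibit the men-optimal stable matching $S$ of the bipartite instance $G$: as recorded in Section~\ref{sec:hardness}, $S$ leaves $z$ and $z'$ unmatched, hence is a matching of $G'$, and it stays stable in $G'$ because the only new candidate blocking edges are those incident to the merged node $z$, while every neighbour of $z$ in $G'$ --- namely $a_0$, each $x_i$, and the relevant level-$0$ and level-$2$ nodes --- strictly prefers its $S$-partner to $z$. So $G'$ is a hard instance that does admit a stable matching, which is precisely the statement of Theorem~\ref{second-thm}.

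The only point that needs genuine care --- everything else being a direct appeal to the Lemma, to Theorem~\ref{thm:section3}, and to the constructions of Section~\ref{converse} and Section~\ref{sec:hardness} --- is this popularity transfer across the identification of $z$ and $z'$: one must be sure that merging their preference lists introduces no new $(+,+)$ edge and no new forbidden $M$-alternating path or cycle through the merged node. This is immediate from the disjointness of $N_G(z)$ and $N_G(z')$ together with the fact (Theorem~\ref{thr:characterize-popular}(iii)) that a node left exposed by a popular matching is never an endpoint of a $(+,+)$ edge, so all edges at the merged node have weight at most $0$; with that observation in hand the two witness transfers above are purely mechanical.
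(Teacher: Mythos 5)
Your construction, use of the Lemma preceding Theorem~\ref{second-thm}, and reduction to Theorem~\ref{thm:section3} match the paper's proof exactly; you also spell out the stable-matching check in (b), which the paper leaves implicit. The one genuine gap is in your converse direction of (a). You write that a dominant (equivalently, all-but-$z$-covering) popular matching $N$ of the non-bipartite $G'$ ``carries a witness $\vec{\beta}$ as in Theorem~\ref{thm:non-bipartite}.'' But Theorem~\ref{thm:non-bipartite} is only a \emph{sufficient} condition: the paper explicitly exhibits a popular matching in a non-bipartite graph with no such witness (the end of Section~\ref{sec:certificates}), because in a non-bipartite graph the max-weight perfect matching LP in $\tilde{G'}$ needs odd-set constraints and its dual need not reduce to node potentials. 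So you may not assume $\vec{\beta}$ exists, and the ``mechanical'' splitting of $\vec{\beta}$ into $\vec{\alpha}$ is built on sand.

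The repair is exactly the observation you make at the end, used directly rather than via witnesses: argue combinatorially through Theorem~\ref{thr:characterize-popular}. The only $N$-exposed nodes of $G$ are $z$ and $z'$; since $N$ has no type-(iii) path in $G'_N$, every edge incident to $z_{\text{merged}}$ is $(+,-)$, hence every edge of $G_N$ incident to $z$ or $z'$ is $(+,-)$. Any alternating cycle avoids $z,z'$ (exposed nodes cannot lie on alternating cycles), so a forbidden cycle in $G_N$ is a forbidden cycle in $G'_N$. A forbidden path touching $z$ or $z'$ only at one endpoint maps, after merging, to a forbidden path in $G'_N$ ending at $z_{\text{merged}}$. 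A path joining $z$ to $z'$ would need a $(+,+)$ edge at one of its ends, but both end-edges are $(+,-)$, so it is not of type (ii) or (iii). Hence $G_N$ has no forbidden structure and $N$ is popular in $G$; now Theorem~\ref{thm:section3} applies. (Once this is in hand, the witness in $G'$ does exist---take the $G$-witness, which has $\alpha_z=\alpha_{z'}=0$ by Lemma~\ref{prop0}, and set $\beta_{z_{\text{merged}}}=0$---but that is a corollary of popularity in $G$, not a route to it.)
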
  

  We will now show that the popular matching problem in non-bipartite graphs is also NP-hard.
  For this, we will augment the graph $G'$ with the gadget $D$ given in Fig.~\ref{D:example}. Call the new graph $H$.

\begin{figure}[h]
\centerline{\resizebox{0.16\textwidth}{!}{\input{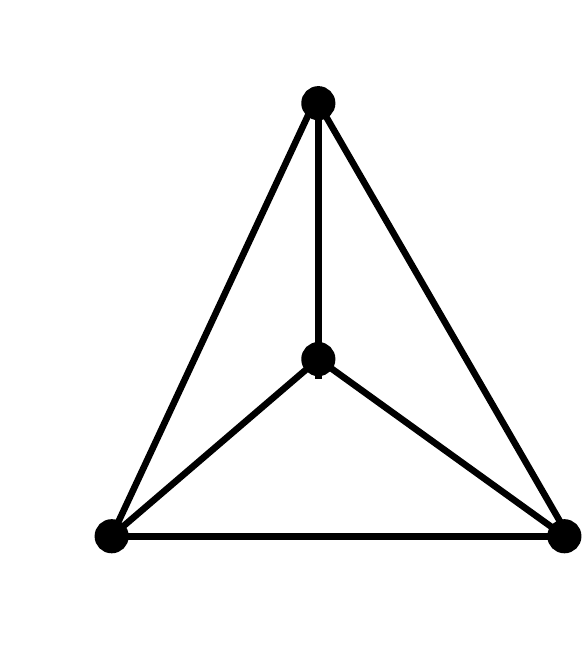_t}}}
\caption{Each of $d_1,d_2,d_3$ is a top choice neighbor for another node here and $d_0$ is the last choice of $d_1,d_2,d_3$.}
\label{D:example}
\end{figure}

\medskip

\noindent{\em The gadget $D$.}
There will be 4 nodes $d_0,d_1,d_2,d_3$ that form the gadget $D$ (see  Fig.~\ref{D:example}).
The preferences of nodes in $D$ are given below.

\begin{minipage}[c]{0.45\textwidth}
			
			\centering
			\begin{align*}
				&d_1\colon \, d_2  \succ d_3 \succ d_0 \qquad\qquad && d_2\colon \, d_3  \succ d_1 \succ d_0\\
			        &d_3\colon \, d_1 \succ d_2 \succ d_0  \qquad\qquad && d_0\colon \, d_1 \succ d_2 \succ d_3 \succ \cdots \\
			\end{align*}
\end{minipage}

The node $d_0$ will be adjacent to all nodes in $H$, except $z$. The neighbors of $d_0$ that are not in $D$ are in the ``$\cdots$'' 
part of $d_0$'s preference list and the
order among these nodes does not matter. The node $d_0$ will be at the bottom of preference lists of  all its neighbors.

\begin{lemma}
    \label{new-lemma1}
    For any popular matching $M$ in $H$, the following properties hold:
    \begin{itemize}
    \item[(1)] either $\{(d_0,d_1), (d_2,d_3)\} \subset M$ or $\{(d_0,d_2), (d_1,d_3)\} \subset M$.
    \item[(2)] $M$ matches all nodes in $H$ except $z$.
    \end{itemize}  
\end{lemma}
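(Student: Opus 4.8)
The plan is to pin down $M$ on the gadget $D$ first (part (1)), then bootstrap to the whole graph (part (2)), using throughout the forbidden–substructure characterization of Theorem~\ref{thr:characterize-popular}; whenever convenient a forbidden substructure can equivalently be turned into a matching $M'$ with $\Delta(M',M)>0$ by swapping along it, which is the variant I would actually carry out in the non‑bipartite instance $H$.

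For part (1), I would first show that each of $d_1,d_2,d_3$ is $M$‑covered. If $d_j$ ($j\in\{1,2,3\}$) is $M$‑exposed, a one‑line check on the cyclic preferences of $D$ produces the neighbor $v\in\{d_1,d_2,d_3\}\setminus\{d_j\}$ that ranks $d_j$ first (namely $d_3,d_1,d_2$ for $j=1,2,3$); then $(d_j,v)$ is a $(+,+)$ edge incident to the exposed node $d_j$, a length‑one path forbidden by Theorem~\ref{thr:characterize-popular}(iii). Since $d_1,d_2,d_3$ have no neighbors outside $D$ and $d_0$ can be matched to at most one of them, $M\cap D$ must consist of one edge $d_0d_i$ together with the edge on the remaining two nodes, so $M\cap D$ is one of the three perfect matchings of $D$. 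To discard $\{(d_0,d_3),(d_1,d_2)\}$ I would note that with this choice $(d_2,d_3)$ is a $(+,+)$ edge and $d_0\!-\!d_3\!-\!d_2\!-\!d_1\!-\!d_0$ is an $M$‑alternating cycle through it, forbidden by Theorem~\ref{thr:characterize-popular}(i). This leaves exactly the two options in (1). I would also record the by‑product that, in either surviving case, $G_M$ contains a $(+,+)$ edge $d_id_j$ with $d_0d_i\in M$: concretely $(d_1,d_3)$ with $d_0d_1\in M$, or $(d_1,d_2)$ with $d_0d_2\in M$.

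For part (2) I would first argue that no node $w\in V(G')\setminus\{z\}$ is $M$‑exposed. Such a $w$ is adjacent to $d_0$ in $H$, and since $d_0$ is matched inside $D$ to a node it ranks above every non‑$D$ node (true whether $d_0$ is matched to $d_1$ or to $d_2$), the edge $(d_0,w)$ is labeled $-$ at $d_0$ and $+$ at the exposed $w$, hence lies in $G_M$. Concatenating it with the $M$‑edge $d_0d_i$ and then the $(+,+)$ edge $d_id_j$ from part (1) yields an $M$‑alternating path from the exposed node $w$ ending with a $(+,+)$ edge, forbidden by Theorem~\ref{thr:characterize-popular}(iii). (The node $z$ is immune precisely because it is the unique node of $H$ not adjacent to $d_0$.) Hence $M$ covers every node of $H$ except possibly $z$. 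To rule out $z$ being matched, assume it is; then by the previous sentence and part (1) the matching $M$ is perfect on $H$, so every edge from $d_0$ to a node of $G'$ is $(-,-)$ (both endpoints matched, $d_0$ prefers its $D$‑partner, and $d_0$ is last on every $G'$‑node's list) and is deleted in $G_M$. Thus $G_M$ is the disjoint union of its restriction to $D$ and the auxiliary graph $(G')_N$ of $N:=M\cap E(G')$ inside $G'$ (the $H$‑lists of $G'$‑nodes are their $G'$‑lists with $d_0$ appended last, so all labels of $G'$‑internal edges are unchanged). Since $M$ is popular in $H$ there is no forbidden substructure in $G_M$, hence none in $(G')_N$, so $N$ is popular in $G'$; but $N$ covers all of $V(G')$, so $N$ is maximum‑size and therefore, being popular, vacuously dominant in $G'$ — contradicting the already established fact that a popular matching of $G'$ is dominant if and only if it leaves $z$ unmatched. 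So $z$ is $M$‑exposed, and together with part (1) and the first half of this paragraph we get that $M$ matches all nodes of $H$ except $z$.

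The step I expect to be most delicate, and which I would isolate as a small sublemma, is the clean decomposition $G_M=\big(\text{restriction to }D\big)\sqcup (G')_N$ used to finish part (2): it relies on $M$ being perfect (so every $d_0$–$G'$ edge degenerates to $(-,-)$) and on the routine but essential observation that appending $d_0$ to the bottom of each $G'$‑node's list changes no label of a $G'$‑internal edge, so that popularity genuinely localizes to the $G'$‑component. Everything else reduces to case‑checking on the four nodes of $D$, where the only non‑obvious point is identifying, in each surviving case, the right combination of an $M$‑exposed node, the $M$‑edge at $d_0$, and the unique $(+,+)$ edge of $D$.
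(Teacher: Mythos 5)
Your proof is correct, and its core — forcing $M\cap D$ to be $\{(d_0,d_1),(d_2,d_3)\}$ or $\{(d_0,d_2),(d_1,d_3)\}$ and then, for any $w\in V(G')\setminus\{z\}$ that is $M$-exposed, exhibiting the $M$-alternating path $w$-$d_0$-$d_i$-$d_j$ ending in the unique $(+,+)$ edge $d_id_j$ of $D$ as a forbidden structure of type (iii) — coincides with the paper's argument.

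Two places differ. First, to rule out the third perfect matching $\{(d_0,d_3),(d_1,d_2)\}$ of $D$ you produce the alternating $4$-cycle $d_0$-$d_3$-$d_2$-$d_1$-$d_0$ through the $(+,+)$ edge $(d_2,d_3)$, invoking Theorem~\ref{thr:characterize-popular}(i); the paper instead notes directly that $\{(d_0,d_1),(d_2,d_3)\}$ wins the head-to-head vote among $d_0,\ldots,d_3$ against $\{(d_0,d_3),(d_1,d_2)\}$. These are two views of the same fact. Second, and more substantively, you prove inside the lemma that $z$ is actually unmatched, via a decomposition argument: if $z$ were matched then $M$ would be perfect, all $d_0$–$G'$ edges become $(-,-)$ and disappear from $G_M$, so $G_M$ splits into its $D$-part and $(G')_N$ for $N:=M\cap E(G')$, whence $N$ is a popular perfect matching of $G'$, hence dominant, contradicting the earlier lemma that a dominant popular matching of $G'$ leaves $z$ unmatched. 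This is sound, and your instinct that the localization of popularity across the disconnected split is the delicate point is right — it is exactly the mechanism later used for Lemma~\ref{new-lemma2}. But it is heavier machinery than needed: the paper simply notes, right after stating the lemma, that $|V(H)|$ is odd, so some vertex must be exposed in any matching, and part (2) forces that vertex to be $z$. Indeed your own argument already reaches a parity contradiction the moment you conclude "$N$ is perfect on $G'$", since $|V(G')|$ is odd; the additional appeal to the dominant-matching characterization of $G'$ is therefore superfluous.
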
      
\begin{proof}
  Since each of $d_1,d_2,d_3$ is a top choice neighbor for some node in $H$, a popular matching in $H$ cannot leave any of these 3 nodes unmatched.
  Since these 3 nodes have no neighbors outside themselves other than $d_0$,
  a popular matching has to match $d_0$ to one of  $d_1,d_2,d_3$.
  Thus $d_0,d_1,d_2,d_3$ are matched among themselves in $M$.

  The only possibilities for $M$ when restricted to $d_0,d_1,d_2,d_3$ are the pair of edges $(d_0,d_1), (d_2,d_3)$ and $(d_0,d_2), (d_1,d_3)$.
  The third possibility
  $(d_0,d_3),(d_1,d_2)$ is ``less popular than''  $(d_0,d_1),(d_2,d_3)$ as $d_0,d_2$, and $d_3$ prefer the latter to the former.
  This proves part~(1) of the lemma.

  \smallskip
  
    Consider any node $v \ne z$ outside the gadget $D$. If $v$ is left unmatched in $M$ then we either have an alternating path $\rho_1 = (v,d_0)$-$(d_0,d_1)$-$(d_1,d_3)$
    or  an alternating path $\rho_2 = (v,d_0)$-$(d_0,d_2)$-$(d_2,d_1)$ with respect to $M$: the middle edge in each of these alternating paths belongs to $M$
    and the third edge is a {\em blocking edge} with respect to $M$. Both $\rho_1$ and $\rho_2$ are $M$-alternating paths in $G_M$ that start from an $M$-exposed
    node and end with a $(+,+)$ edge --- this is a forbidden structure for a popular matching (see Theorem~\ref{thr:characterize-popular}, path~(iii)).
    Hence every node $v \ne z$ in $H$ has to be matched in $M$. This proves part~(2). \qed
\end{proof}

Since the total number of nodes in $H$ is odd, at least 1 node has to be left unmatched in any matching in $H$.
Lemma~\ref{new-lemma1} shows that the node $z$ will be left unmatched in any popular matching in $H$.
For any popular matching $M$ in $H$,
the matching $M$ restricted to $G'$ (recall that $G'$ is $H \setminus D$) has to be popular on $G'$, otherwise it would
contradict the popularity of $M$ in $H$. We will now show the following converse of Lemma~\ref{new-lemma1}.

\begin{lemma}
    \label{new-lemma2}
    If $G'$ admits a popular matching that matches all its nodes except $z$ then $H$ admits a popular matching.
\end{lemma}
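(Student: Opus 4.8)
The plan is to take a popular matching $N$ in $G'$ that matches every node except $z$, extend it into $H$ by adding the edge $(d_0,d_1)$ together with $(d_2,d_3)$, and verify that the resulting matching $M = N \cup \{(d_0,d_1),(d_2,d_3)\}$ is popular in $H$. Note that $M$ leaves $z$ unmatched and matches every other node of $H$; since $|V(H)|$ is odd, $M$ is a maximum matching in $H$.

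To prove popularity I would exhibit a witness in the sense of Theorem~\ref{thm:non-bipartite}, i.e.\ a vector $\vec\alpha \in \R^{|V(H)|}$ with $\sum_u \alpha_u = 0$ and $\alpha_u + \alpha_v \ge \wt_M(u,v)$ for all $(u,v) \in E(H)$ and $\alpha_u \ge \wt_M(u,u)$ for all $u$. First I would recall that $N$, being popular in $G'$ (a graph obtained from the bipartite $G$ by merging $z,z'$), admits a witness $\vec\beta$; in fact, since $N$ matches all nodes of $G'$ except $z$, it is a ``desired popular matching'' and by the analysis in Section~\ref{sec:hardness} there is a $\{0,\pm 1\}$-valued witness $\vec\beta$ with $\beta_z = 0$ (indeed $\beta_z = \wt_N(z,z) = 0$ since $z$ is unmatched). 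I would then set $\alpha_u = \beta_u$ for every $u \in V(G')$, and on the gadget $D$ set $\alpha_{d_0} = -1$, $\alpha_{d_1} = \alpha_{d_2} = \alpha_{d_3} = 1$ — so that $\alpha_{d_0}+\alpha_{d_1} = 0 = \wt_M(d_0,d_1)$, $\alpha_{d_2}+\alpha_{d_3} = 2 \ge \wt_M(d_2,d_3)$, and the unique $(+,+)$ edge $(d_1,d_3)$ inside $D$ is covered since $\alpha_{d_1}+\alpha_{d_3}=2$. The sum $\sum_u \alpha_u$ changes from $0$ (the value of $\sum_u\beta_u$) by $-1+1+1+1 = 2$, so to restore the total to $0$ I would instead adjust: recall $z$ has $\beta_z = 0$ and is unmatched; since $z \notin \Nbr(d_0)$, I can freely lower $\alpha_z$ — set $\alpha_z = -2$. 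One checks $\alpha_z = -2 \ge \wt_M(z,z)=0$ fails, so instead I would balance by noting that in $G'$ the node $z$ (merged $z,z'$) is the \emph{only} unmatched node and look for two nodes in $D$ or use $\alpha_{d_0}=-1,\alpha_{d_1}=\alpha_{d_2}=\alpha_{d_3}=1$ but compensate inside $G'$: pick some edge $(a,b) \in N$ with $\beta_a=1,\beta_b=-1$ and no $(+,+)$ or tight constraint forcing equality, and decrease both — but cleaner is to set $\alpha_{d_0}=1$, $\alpha_{d_1}=\alpha_{d_2}=1$, $\alpha_{d_3}=-3$; then $\alpha_{d_0}+\alpha_{d_1}=2\ge 0$, $\alpha_{d_2}+\alpha_{d_3}=-2 \le \wt_M(d_2,d_3)$ which fails. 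The correct choice, which I would verify carefully, is $\alpha_{d_0}=-1$, $\alpha_{d_1}=1$, $\alpha_{d_2}=-1$, $\alpha_{d_3}=1$: then $(d_0,d_1)\in M$ is tight, $(d_2,d_3)\in M$ is tight, the edges $(d_0,d_2),(d_0,d_3)$ are $(-,-)$ (since $d_0$ is everyone's last choice) hence $\wt_M \le -2 \le \alpha_{d_0}+\alpha_{d_i}$, the edge $(d_1,d_2)$ is $(-,-)$ as well, and $(d_1,d_3)$ is the lone $(+,+)$ edge with $\alpha_{d_1}+\alpha_{d_3}=2$. Now $\sum_{D}\alpha = 0$, so $\sum_{V(H)}\alpha = \sum_{V(G')}\beta = 0$ and we are done with the balance condition.

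The remaining verification is that the \emph{cross edges} — those joining $d_0$ to a node $v \in V(G')\setminus\{z\}$ — are covered. Here I use that $d_0$ is the bottom choice of every such $v$, and $v$ is matched in $M$ (by construction $N$ matches $v$); hence $v$ prefers its $N$-partner to $d_0$, so the label of $(d_0,v)$ at $v$ is $-$. Also $d_0$ is matched (to $d_1$) and $d_1 \succ v$ for $d_0$, so the label at $d_0$ is $-$ too; thus $(d_0,v)$ is a $(-,-)$ edge and $\wt_M(d_0,v) = -2 \le \alpha_{d_0} + \alpha_v = -1 + \beta_v$, which holds since $\beta_v \ge -1$ (the witness is $\{0,\pm1\}$-valued). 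All edges internal to $G'$ are covered because $\vec\beta$ was a witness for $N$ in $G'$ and $\wt_M$ restricted to $E(G')$ equals $\wt_N$ (the gadget $D$ is only connected to the rest through $d_0$, which does not change any label inside $G'$). Finally $\alpha_u \ge \wt_M(u,u)$: for matched $u$ we have $\alpha_u \in \{\pm1, 0\}$ and $\wt_M(u,u) = -1$, and the only unmatched node is $z$ with $\alpha_z = \beta_z = 0 = \wt_M(z,z)$. Hence $\vec\alpha$ is a valid witness and $M$ is popular in $H$, proving the lemma.

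The main obstacle I anticipate is pinning down the right witness for $N$ in the merged graph $G'$ and making sure the total still sums to zero after the gadget is attached; the labels on the cross edges are forced cleanly by $d_0$ being everyone's last choice, so that part is routine, but one must be careful that the chosen $\vec\beta$ actually has $\beta_z = 0$ (which follows from $z$ being unmatched and complementary slackness, or directly from $z$ being an unstable unmatched node) and that merging $z$ and $z'$ does not break the earlier structural analysis — this is where I would cite Theorem~\ref{thm:section3} and the discussion preceding Theorem~\ref{second-thm}, where exactly this merged instance $G'$ is already used.
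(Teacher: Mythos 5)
Your overall strategy — extend $N$ by $\{(d_0,d_1),(d_2,d_3)\}$ and certify popularity of $M$ in $H$ — is the same as the paper's, but the certification mechanism you chose does not close. You try to exhibit an LP witness $\vec\alpha$ as in Theorem~\ref{thm:non-bipartite}, and for this you need a witness $\vec\beta$ for $N$ in $G'$. That step is not justified: $G'$ is \emph{non-bipartite} (it is $G$ with $z$ and $z'$ merged), and in non-bipartite graphs Theorem~\ref{thm:non-bipartite} gives only a \emph{sufficient} condition for popularity. The paper explicitly exhibits a popular matching in a non-bipartite graph with no witness (the matching $\{(a,d),(b,c)\}$ in the small example at the end of Section~\ref{sec:comb-prelims}). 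Your references to "the analysis in Section~\ref{sec:hardness}" do not help either, because Theorem~\ref{thm:witness} and Lemma~\ref{prop0} are proved for the bipartite graph $G$, not the merged $G'$. The hypothesis of the lemma supplies an \emph{arbitrary} popular matching $N$ of $G'$ leaving only $z$ unmatched, and nothing guarantees it has a $\{0,\pm1\}$-witness, or any witness at all. The paper avoids this entirely by using the combinatorial characterization (Theorem~\ref{thr:characterize-popular}), which is necessary \emph{and} sufficient in all graphs: it shows $H_M$ decomposes as $G'_N$ plus the pruned $D$-gadget with no edges between them (all cross edges are $(-,-)$, exactly as you argued), so none of the three forbidden $M$-alternating structures can arise.

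A secondary slip: you assert $(d_1,d_2)$ is labeled $(-,-)$. In fact $d_1$ is matched to $d_0$, its worst choice, and $d_1$ prefers $d_2$ to $d_0$, so the label at $d_1$ is $+$ and $(d_1,d_2)$ is a $(+,-)$ edge with $\wt_M(d_1,d_2)=0$. Your inequality $\alpha_{d_1}+\alpha_{d_2}=0\ge 0$ still holds, but the labeling claim is wrong. (The verification that $\{(d_0,d_1),(d_2,d_3)\}$ is the popular choice inside $D$, and that the final witness values $\alpha_{d_0}=\alpha_{d_2}=-1$, $\alpha_{d_1}=\alpha_{d_3}=1$ cover $D$ with $\sum_D \alpha = 0$, is otherwise correct; the cross-edge argument via $d_0$ being everyone's last choice is also correct. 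Only the existence of $\vec\beta$ is the genuine gap.)
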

\begin{proof}
  Let $N$ be a popular matching in $G'$ that matches all its nodes except $z$. 
  Let $G'_N$ be the subgraph obtained by removing all edges labeled $(-,-)$ with respect to $N$ from $G'$.
  Since $N$ is popular in $G'$, it satisfies the 
  necessary and sufficient conditions for popularity given in Theorem \ref{thr:characterize-popular}.

  We claim $M = N \cup \{(d_0,d_1),(d_2,d_3)\}$ is a popular matching in $H$.
  We will now show that $M$ obeys those conditions in the subgraph $H_M$ obtained by deleting edges labeled $(-,-)$ with respect to $M$.
  The graph $H_M$ is the graph $G'_N$ along with some edges within the gadget $D$.

  There is no edge in $H_M$ between $D$ and any node in $G'$ since every edge in $H$ between  $D$ and a node in $G'$ is $(-,-)$.
  This is because for any such edge $(d_0,v)$, the node $d_0$ prefers $d_1$ (its partner in $M$) to $v$ and similarly, $v$ prefers each
  of its neighbors in $G'$ to $d_0$. Since $v \ne z$, we know that $N$ (and thus $M$) matches $v$ to one of its neighbors in $G'$.

  It is easy to check that $\{(d_0,d_1),(d_2,d_3)\}$ satisfies the 3 conditions from Theorem \ref{thr:characterize-popular} in the subgraph of $D$
  obtained by pruning $(-,-)$ edges. 
  We know that $N$ satisfies the 3 conditions from Theorem \ref{thr:characterize-popular} in $G'_N$.
  Thus $M$ satisfies all these 3 conditions in $H_M$.
  Hence $M$ is popular in $H$. \qed
\end{proof}

  Thus we have shown that $H$ admits a popular matching if and only if $G'$ admits a matching that matches all nodes except $z$.
  Since the latter problem is NP-hard, so is the former problem. Thus we have shown the following result.

  \begin{theorem}
  \label{main-thm}
  Given a graph $G = (V,E)$ with strict preference lists, the problem of deciding if $G$ admits a popular matching or not is NP-hard.
\end{theorem}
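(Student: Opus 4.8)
The plan is to reduce from the \emph{exclusive popular set} problem, which was shown NP-hard in Theorem~\ref{thm:section3}: given the bipartite instance $G = (A\cup B,E)$ from Section~\ref{sec:hardness}, deciding whether $G$ admits a popular matching that matches all nodes except $z$ and $z'$ is NP-hard. The construction proceeds in two stages, both already essentially assembled in the excerpt. First I would form the non-bipartite graph $G'$ by merging $z$ and $z'$ into a single node $z$, whose preference list lists all level~1 neighbors (in any order), then all level~2 neighbors, then $a_0,b_0$, then the level~0 neighbors; this is exactly the graph used for Theorem~\ref{second-thm}. Second, I would augment $G'$ by the $4$-node gadget $D=\{d_0,d_1,d_2,d_3\}$ of Fig.~\ref{D:example}, making $d_0$ adjacent to every node of $H$ except $z$ and placing $d_0$ at the bottom of every neighbor's preference list; call the result $H$. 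The goal is to show $H$ has a popular matching iff $G'$ has a popular matching that covers every node except $z$, which in turn (by the merge) holds iff $G$ has a popular matching covering all nodes except $z$ and $z'$, which is NP-hard by Theorem~\ref{thm:section3}.

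The two directions are handled by Lemmas~\ref{new-lemma1} and~\ref{new-lemma2}, so the bulk of the argument is to invoke them. For the forward direction, Lemma~\ref{new-lemma1} shows that in \emph{any} popular matching $M$ of $H$, the nodes $d_1,d_2,d_3$ must be matched (each is some node's top choice), forcing $d_0$ to be matched inside $D$, so $M$ restricted to $D$ is either $\{(d_0,d_1),(d_2,d_3)\}$ or $\{(d_0,d_2),(d_1,d_3)\}$ (the third perfect matching of $D$ is dominated); and moreover every node $v\neq z$ outside $D$ must be $M$-covered, since otherwise the path $v$-$d_0$-$d_1$-$d_3$ (or $v$-$d_0$-$d_2$-$d_1$) is an $M$-alternating path from an exposed node ending in a $(+,+)$ edge — a forbidden substructure by Theorem~\ref{thr:characterize-popular}(iii). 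Since $|V(H)|$ is odd, exactly one node is unmatched, namely $z$; and $M$ restricted to $G'$ is popular in $G'$ (a more-popular opponent in $G'$ would extend, via the fixed edges in $D$, to a more-popular opponent in $H$). For the converse, Lemma~\ref{new-lemma2} takes a popular matching $N$ of $G'$ covering all nodes but $z$ and sets $M=N\cup\{(d_0,d_1),(d_2,d_3)\}$; the key observation is that every edge between $D$ and $G'$ is labeled $(-,-)$ with respect to $M$ (since $d_0$ ranks $d_1$ above any such neighbor $v$, and $v$ — being matched inside $G'$ because $v\neq z$ — ranks all its $G'$-neighbors above $d_0$), so $H_M$ decomposes as the disjoint union of $G'_N$ and the pruned copy of $D$; as each of these two pieces satisfies the three conditions of Theorem~\ref{thr:characterize-popular}, so does $H_M$, whence $M$ is popular in $H$.

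Combining the two lemmas yields: $H$ admits a popular matching $\iff$ $G'$ admits a popular matching covering $V(G')\setminus\{z\}$. Unwinding the merge of $z,z'$, a popular matching in $G'$ covering all but $z$ corresponds exactly to a popular matching in $G$ covering all but $z$ and $z'$ (the preference list of the merged $z$ is the natural interleaving and no popular edge of $G$ crosses between the $z$-side and the $z'$-side by Theorem~\ref{thm:separate}), and the existence of the latter is NP-hard by Theorem~\ref{thm:section3}. Since the construction of $H$ from $\phi$ is clearly polynomial, deciding whether $G=(V,E)$ admits a popular matching is NP-hard. The main obstacle — and the place requiring the most care — is verifying that no popular edge of $H$ crosses the $D$–$G'$ interface and that the forbidden-structure analysis of Theorem~\ref{thr:characterize-popular} goes through cleanly for the augmented instance; but this is exactly what Lemmas~\ref{new-lemma1} and~\ref{new-lemma2} establish, so the theorem follows.
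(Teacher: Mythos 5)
Your proposal is correct and follows essentially the same route as the paper: merge $z$ and $z'$ to form $G'$, augment with the gadget $D$ to obtain $H$, and invoke Lemmas~\ref{new-lemma1} and~\ref{new-lemma2} to show that $H$ has a popular matching if and only if $G'$ has a popular matching covering all nodes but $z$, which in turn is equivalent to the NP-hard exclusive popular set problem from Theorem~\ref{thm:section3}. The only additions beyond the paper's text are harmless spellings-out of the forbidden-structure analysis and of the (trivial) correspondence under the $z,z'$-merge.
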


  \section{Popular matchings of minimum cost in bounded treewidth graphs}\label{sec:treewidth}

 In this section, we show a polynomial time algorithm to compute a minimum cost popular matching in a roommates instance $G = (V,E)$ with arbitrary edge costs
 under the assumption that $G$ has bounded treewidth.

Bounded treewidth is a classical assumption that often turns intractable problems into tractable ones. A typical example is Maximum Independent (Stable) Set (\texttt{MIS}), for which a polynomial-time algorithm exists in bounded treewidth graphs \cite{Bod}. \texttt{MIS} enjoys two nice properties, often shared by problems for which the bounded treewidth approach is successful. The first is \emph{monotonicity}: if $S$ is
an independent set in a graph $G$ and $G'$ is a subgraph of $G$, then the solution induced by $S$ on $G'$  is also feasible. The
second is \emph{locality}: in order to check if $S$ is an independent set, it suffices to verify, for each node of $S$, if any node of its neighborhood also belongs to $S$. 

Interestingly, similar properties do not hold for popular matchings. Indeed, popularity is not a local condition, since it may depend on how nodes far away in the graph are matched. Moreover, if we take a graph $G$ and a popular matching $M$, the subset of $M$ contained in an induced subgraph of $G$ may not be popular. Examples with both those features can be easily constructed by building on the characterization of popular matchings given by Theorem \ref{thr:characterize-popular}.

Our technique to prove the tractability of the minimum cost popular matching problem in the bounded treewidth case is as follows. We assume wlog that all matchings have different costs. Given a vertex separator $S$ of $G$ and a connected component $X$, we define the family of \emph{$(S,X)$-locally popular} matchings. This contains all matchings that may potentially be extended to a popular matching in the whole graph by adding edges not incident to $X\cup S$ (a formal definition is given in Section~\ref{sec:wpm}). As $(S,X)$-locally popular matchings can be exponentially many even in graphs of bounded treewidth
, we cannot store all of them. Instead, we divide them in classes (which we call \emph{tipping points}), and show that, if a matching $M$ of a class can be completed to a $(S',X')$-locally popular matching for some sets with $X'\supseteq X$ and $X'\cup S' \supseteq X \cup S$ by adding some matching $M'$ not incident to $X\cup S$, then all matchings of the same class can be extended via the same matching $M'$ to a $(S',X')$-locally popular matching. Hence, it is enough to keep only a representative for each class --- the one of minimum cost, which we call the \emph{leader}. Finally, we show how to iteratively construct tipping points and their leaders by building on a tree decomposition of the graph (see Algorithm \ref{Algo}). In particular, if the graph has bounded treewidth, a tree decomposition of bounded width can be computed in linear time~\cite{Bo96}, there will be only polynomially many tipping points and leaders, and they can be built in polynomial time, see Theorem \ref{thr:main}.




\subsection{Additional definitions}

For $k \in \N$, we  write $[k]:=\{1,\dots,k\}$. Recall that our input graph is $G = (V,E)$.
For $S\subseteq V$, we denote by $G\setminus S$ the subgraph of $G$ induced by $V\setminus S$. If $G\setminus S$ is disconnected, $S$ is said to be a \emph{vertex separator}. Given a matching $M$ on $G$ and a set $U\subseteq V$, the matching \emph{induced} by $M$ on $U$ is given by $E[U]\cap M$, and it is denoted by $M[U]$. For $u \in V$, $M(u)$ is the unique $v \in V$ such that $(u,v) \in M$ if such a $v$ exists, and $\emptyset$ otherwise. 

Now also fix $U\subseteq V$. Given a matching $M$ of $G$, we say that $M$ is a \emph{$U$-matching} if for all edges $(u,v) \in M$, we have $\{u,v\}\cap U \neq \emptyset$. Given a matching $M$ of $G$, the \emph{$U$-matching induced by $M$} is the set $M_U=\{(u,v) \in M : \{u,v\}\cap U \neq \emptyset\}$. 

Since all graphs we deal with are simple, throughout this section we represent paths and cycles as ordered set of nodes, with the first node of a cycle coinciding with the last. This also allows us to distinguish between the \emph{first} and \emph{last} node of a path. We will say that \emph{$e$ is an edge of $P$} if it is an edge between two consecutive nodes of $P$. A $U$-path $P$ is a path in $G$ where either the first or the last node of $P$ (or possibly both)
belongs to $U$, and all other vertices of $P$ do not lie in $U$ (i.e., if $P=(v_1,\dots,v_k)$, then $\emptyset \neq P\cap U \subseteq \{v_1,v_k\}$).


Two paths $P=(v_1,\dots,v_k)$, $P'=(v_1',\dots,v_{k'}')$ are called \emph{$U$-disjoint} if $P\cap P'\cap U=\emptyset$. $P$, $P'$ are said to be \emph{internally vertex-disjoint} if, for all $i \in [k]$ and $j \in [k']$, $v_i\neq v_j'$ with possibly the exception of $(i,j)\in \{(1,1),(1,k'),(k,1),(k,k')\}$. 
Consider paths $P_1,P_2,\dots,P_q$ of $G$, $q\geq 2$, whose union contains at least two distinct vertices and with the following properties:
\begin{itemize}
	\item $P_i \cap P_j= \emptyset$, unless $|i-j| \leq 1$ or $i,j \in \{1,q\}$;
	\item If $q\geq 3$, for $i=1,\dots,q-1$, the last node of $P_i$ is the first node of $P_{i+1}$, and $P_i, P_{i+1}$ are otherwise disjoint. Moreover, $P_1$ and $P_q$ are disjoint, with possibly the exception of the last node of $P_q$ coinciding with the first node of $P_1$;
	\item If $q=2$, the last node of $P_1$ coincides with the first node of $P_2$, the last node of $P_2$ may coincide with the first node of $P_1$, and paths $P_1$ and $P_2$ are otherwise disjoint.
\end{itemize}
The \emph{juxtaposition} of $P_1,\dots,P_q$ is the path (if the first node of $P_1$ and the last of $P_q$ are different) or cycle (otherwise) defined from $(P_1,P_2,\dots, P_q)$ as above by removing consecutive repeated vertices.

\smallskip

We assume that no two matchings of $G$ have the same cost. This can be achieved efficiently by standard perturbation techniques.



\subsection{Locally popular matchings, Configurations, and Tipping points}\label{sec:wpm}

In this section, we fix a graph $G = (V,E)$ together with a strict ranking of the neighbors of each node in $V$, a vertex separator $S$ of $G$, and a connected component $X$ of $G\setminus S$ (possibly $S=\emptyset$ and $X=V$).

Let $M$ be a $(X\cup S)$-matching of $G$. We will extensively work with graph $G_M[X \cup S]$ -- that is, the subgraph of $G_M$ induced by $X\cup S$. For a node $v$ in $S$ that is matched in $M$ to a neighbor outside $X \cup S$, the labels of edges of $G_M[X\cup S]$ incident to $v$ are a function of the edge $(v,M(v))$, however note that the edge $(v,M(v))$ is not in $G[X \cup S]$.

We say that $M$ is \emph{$(S,X)$-locally popular} if none of the structures (i), (ii), and (iii) from Theorem~\ref{thr:characterize-popular} is a subgraph of $G_M[X \cup S]$. We remark that a node that is not matched in $G_M[X\cup S]$ may still be $M$-covered (hence not $M$-exposed). If this happens, such a node cannot be the $M$-exposed node in the path (iii) from Theorem \ref{thr:characterize-popular}. The following simple proposition relates the definitions of popularity and $(S,X)$-local popularity.

\begin{prop}\label{prop:popular-will-be-weak}
	Let $G,S,X$ be as above, $S'$ be a vertex separator of $G$, and $X'$ be one of the connected components of $G\setminus S'$ such that $X' \cup S' \supseteq X \cup S$. Then:
	\begin{enumerate}
		\item  Let $M$ be a matching in $G$. Then $M$ is $(\emptyset,V)$-locally popular if and only if it is popular. 
		\item Let $M'$ be a $(S',X')$-locally popular matching. Let $M:=M'_{X \cup S}$. Then $M$ is an $(S,X)$-locally popular matching. 
	\end{enumerate}
\end{prop}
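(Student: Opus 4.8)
Part~1 I would dispatch immediately. When $S=\emptyset$ and $X=V$, every matching of $G$ is trivially a $V$-matching, we have $X\cup S=V$, and hence $G_M[X\cup S]=G_M$; moreover a node is incident to an $M$-edge of $G_M$ precisely when it is $M$-covered. Thus $M$ is $(\emptyset,V)$-locally popular exactly when none of the structures (i), (ii), (iii) occurs in $G_M$, which by Theorem~\ref{thr:characterize-popular} is equivalent to $M$ being popular.

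For Part~2, the first thing I would record is that $M:=M'_{X\cup S}$ and $M'$ agree on $X\cup S$, i.e.\ $M(v)=M'(v)$ for every $v\in X\cup S$. Indeed, if $v\in X\cup S$ and $(v,w)\in M'$, then $(v,w)$ has an endpoint in $X\cup S$, so $(v,w)\in M'_{X\cup S}=M$; conversely $(v,w)\in M$ implies $(v,w)\in M'$ by definition of $M'_{X\cup S}$. In particular $M$ is, by construction, an $(X\cup S)$-matching, so $(S,X)$-local popularity is defined for it, and a node of $X\cup S$ is $M$-covered iff it is $M'$-covered. Second, the label of an edge with both endpoints in $X\cup S$ is a function only of the matched partners of its endpoints --- this is exactly the point noted in the paragraph introducing $(S,X)$-local popularity, and it remains valid even when an endpoint lies in $S$ and is matched outside $X\cup S$; combined with the agreement just established, this shows that $G_M[X\cup S]$ and $G_{M'}[X\cup S]$ are the same labelled graph (same vertex set, same edges, same labels).

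The proof is then finished by a transfer argument. Since $X\cup S\subseteq X'\cup S'$, the labelled graph $G_{M'}[X\cup S]$ is an induced subgraph of $G_{M'}[X'\cup S']$. Every edge of a path or cycle lying inside $G_M[X\cup S]=G_{M'}[X\cup S]$ has both endpoints in $X\cup S$, so by the agreement $M(v)=M'(v)$ on $X\cup S$ it belongs to $M$ iff it belongs to $M'$; hence every $M$-alternating path (resp.\ cycle) there is an $M'$-alternating path (resp.\ cycle), every $(+,+)$ edge with respect to $M$ is a $(+,+)$ edge with respect to $M'$, and an $M$-exposed endpoint of such a path (meaning $\delta(v)\cap M=\emptyset$) is $M'$-exposed. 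Consequently, if any of the structures (i), (ii), (iii) occurred in $G_M[X\cup S]$, the same structure would occur in $G_{M'}[X\cup S]\subseteq G_{M'}[X'\cup S']$, contradicting the $(S',X')$-local popularity of $M'$. Therefore $M$ is $(S,X)$-locally popular.

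The one place deserving care is structure (iii), which requires the first node of the path to be genuinely $M$-exposed rather than merely isolated in the pruned graph $G_M[X\cup S]$ (the paper explicitly warns about this); but "$M$-exposed'' is defined as $\delta(v)\cap M=\emptyset$, and with $M(v)=M'(v)$ this transfers without trouble. Everything else is routine once the identity $G_M[X\cup S]=G_{M'}[X\cup S]$ is in hand.
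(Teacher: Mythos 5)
Your proposal is correct and takes essentially the same approach as the paper: establish that $G_M[X\cup S]=G_{M'}[X\cup S]$ and that a node of $X\cup S$ is $M$-exposed iff $M'$-exposed, then transfer any forbidden structure of $G_M[X\cup S]$ into $G_{M'}[X'\cup S']$ to contradict the $(S',X')$-local popularity of $M'$. You supply more explicit intermediate steps (the verification that $M(v)=M'(v)$ for $v\in X\cup S$ and that labels depend only on matched partners), but the argument is the one in the paper.
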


\begin{proof}
	1. It follows by definition and from the fact that $G_M[\emptyset\cup V]=G_M$.
	
	2. Suppose $M$ is not $(S,X)$-locally popular. Then, one of the structures from Theorem \ref{thr:characterize-popular} is a subgraph of $G_M[X \cup S]$ --- call it $P$. We claim that $P$ is a forbidden structure in $G_{M'}[X' \cup S']$ (again, in the sense of Theorem \ref{thr:characterize-popular}), hence $M'$ is not $(S',X')$-locally popular, a contradiction.

        Indeed, $X\cup S \subseteq X' \cup S'$, and none of the edges of $M'\setminus M$ is incident to $X\cup S$ by definition. We deduce $G_M[X\cup S]=G_{M'}[X \cup S]$ and a node of $X\cup S$ is $M$-exposed if and only if it is $M'$-exposed. This concludes the proof. \qed
\end{proof}

We now introduce the concepts of \emph{configuration} and \emph{tipping point} in order to partition the family of $(S,X)$-locally popular matchings into a (small) number of classes. These definitions are related to the existence of certain structures that are not forbidden in a popular matching, but restrict the capability of extending locally popular matchings to popular matchings of the whole graph. 

\smallskip

From now on, we also fix $M$ to be a $(X \cup S)$-matching of $G$.
Let ${\cal P}_M$ be the set of $M$-alternating paths of $G_M[X \cup S]$. We associate to each $P\in {\cal P}_M$ a $2$-dimensional \emph{parity} vector $\pi$, where the first component of $\pi$ is defined to be $0$ if the first edge of $P$ is a matching edge, $1$ otherwise. Similarly, the second component of the parity vector $\pi$ takes values $0$ or $1$, depending on whether the last edge of the path is a matching edge or not. We also associate to $P$ a two-dimensional \emph{level} vector $\ell=(e,p)$, where $e\in\{0,1,2\}$ is the number of $M$-exposed nodes of $P$, and $p \in \{0,1\}$ denotes the number of $(+,+)$ edges of $P$. If $p\geq 2$, then we say that $P$ is of level $\infty$. Note that not all parity-level combinations are possible. Moreover, if $P$ is of level $\infty$, then $M$ is not $(S,X)$-locally popular. Let
\begin{equation}
  \label{eq:U}
  U=((u_1,v_1),(u_2,v_2),\dots,(u_k,v_k)),
\end{equation}
for some $k \in \N$, where $u_1,v_1,u_2,\dots,v_k \in S \cup \{\emptyset\}$, under the additional condition that $u_i\neq v_i$ for all $i \in [k]$, all pairs are different, and each node of $S$ can appear at most twice in the collection. Moreover, let $L=(\ell_1,\dots, \ell_k)$, $\Pi=(\pi_1,\pi_2,\dots,\pi_k)$ and, for $i \in [k]$, $\ell_i \in \{0,1,2\} \times \{0,1\}$ and $\pi_i \in \{0,1\}\times \{0,1\}$. The triple ${\cal C}=(U,L,\Pi)$ is called an \emph{$(S,X)$-configuration}.

We say that $M$ is \emph{active at ${\cal C}$} if there exist pairwise $X$-disjoint $S$-paths $P^1,\dots,P^k \in {\cal P}_M$, such that, for $i \in [k]$, $P^i$: is of level $\ell_i$ and parity $\pi_i$; starts at $u_i \in S$ if $u_i\neq \emptyset$ and at some node of $X$ otherwise; ends at $v_i \in S$ if $v_i \neq \emptyset$ and at some node of $X$ otherwise. We call those paths the \emph{certificate of ${\cal C }$ at $M$}\footnote{Note that, if $M$ is active at ${\cal C}$, it is also active at the configurations e.g. obtained by permuting entries of $U$ (and of $L$ and $\Pi$ accordingly). This causes some redundancy, yet this will not affect our analysis, so we do not eliminate it.}.

Let $\{(U_i,L_i,\Pi_i)\}_{i=1,\dots,q}$ be the collection of all $(S,X)$-configurations at which $M$ is active. We call 
$$(\{(U_i,L_i,\Pi_i)\}_{i=1,\dots,q},M_S)$$ the \emph{$(S,X)$-tipping point of $M$}.

\begin{prop}
  \label{prop:poly-many-tipping}
	Let $G,X,S$ be as above. Let $M$ be an $(S,X)$-locally popular matching. The $(S,X)$-tipping point of $M$ is uniquely defined. On the other hand, there exists a function $g: \N \rightarrow \N$ such that the number of $(S,X)$-configurations is bounded by $g(|S|)$, and the collection of $(S,X)$-tipping points of $M$, where $M$ ranges over all $(S,X)$-locally popular matchings, has size at most $g(|S|)|V|^{|S|}$.
\end{prop}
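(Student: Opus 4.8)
The plan is to prove the three assertions in order. \emph{Uniqueness} is essentially a definitional matter: once $M$ and the pair $(S,X)$ are fixed, the set ${\cal P}_M$ of $M$-alternating paths in $G_M[X\cup S]$ is determined, and hence so is the predicate ``$M$ is active at ${\cal C}$'' for each configuration ${\cal C}$. The $(S,X)$-tipping point is defined to be the pair consisting of the (uniquely determined) collection of all configurations at which $M$ is active, together with $M_S$, which is itself uniquely determined by $M$. So the first claim follows with no real work — I would just spell out this chain of ``determined by'' implications.

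For the counting, the main step is to bound the number of $(S,X)$-configurations by a function $g(|S|)$ of $|S|$ alone. A configuration is a triple ${\cal C}=(U,L,\Pi)$ with $U$ as in~\eqref{eq:U}: an ordered list of $k$ ordered pairs drawn from $S\cup\{\emptyset\}$, where each node of $S$ appears at most twice and all pairs are distinct, so $k \le |S| + $ a constant (more precisely $2k \le 2|S|+\text{(number of $\emptyset$-slots)}$, but the key point is that $k$ is bounded in terms of $|S|$, since there are only finitely many admissible pairs and each may occur once). Given that $k$ is bounded by some $h(|S|)$, the number of choices for $U$ is at most $(|S|+1)^{2h(|S|)}$ times the number of orderings, i.e.\ bounded by a function of $|S|$; and for each of the $\le h(|S|)$ coordinates, $\ell_i$ ranges over the finite set $\{0,1,2\}\times\{0,1\}$ and $\pi_i$ over $\{0,1\}\times\{0,1\}$, contributing a further $(6\cdot 4)^{h(|S|)}$ factor. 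Multiplying these finitely many finite bounds gives $g(|S|)$; I would just assemble this product and not optimize constants.

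For the final bound on the number of distinct tipping points, observe that a tipping point is a pair $(\mathcal{D}, M_S)$ where $\mathcal{D}$ is a \emph{subset} of the set of all $(S,X)$-configurations, so there are at most $2^{g(|S|)}$ possibilities for $\mathcal{D}$; absorbing $2^{g(|S|)}$ into (a redefined) $g$ handles that factor. The second coordinate $M_S$ is a matching in $G$ each of whose edges meets $S$, so it is determined by the function sending each node of $S$ to its partner (or to ``unmatched''); there are at most $(|V|+1)^{|S|} \le (2|V|)^{|S|}$ such functions, and again the constant can be absorbed, giving the stated $g(|S|)\cdot|V|^{|S|}$ bound. I expect the only mildly delicate point to be verifying that $k$ is genuinely bounded by a function of $|S|$ — this uses the constraints in~\eqref{eq:U} that all pairs are distinct and each node of $S$ occurs at most twice, which caps the number of pairs that actually contain a node of $S$; one must also note that pairs of the form $(\emptyset,\emptyset)$ (both endpoints in $X$) are allowed but, being all equal to each other, can occur at most once, so they do not blow up $k$ either. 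Once that is in hand the rest is bookkeeping.
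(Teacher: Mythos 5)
Your proof is correct and takes essentially the same approach as the paper's (which is considerably terser but makes the identical points): uniqueness is definitional, the configuration count depends only on $|S|$ because all pairs in $U$ are distinct and each node of $S$ appears at most twice, and the $M_S$ coordinate contributes the $|V|^{|S|}$ factor; the exponential blow-up from passing to subsets of configurations is absorbed into $g$. One minor slip worth flagging: the definition in~\eqref{eq:U} already requires $u_i\neq v_i$, so pairs of the form $(\emptyset,\emptyset)$ are not admissible at all --- your caveat about them is moot rather than needed (though it does not affect the bound).
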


\begin{proof}
	The first statement follows by definition. For the second and third: the number of $(S,X)$-configurations $(U,L,\Pi)$ is a function of the size of $S$ only, since all pairs from $U$ are different, and each node of $S$ can appear in at most two pairs from $U$. Moreover, the number of $S$-matchings of $G$ is upper bounded by $|V|^{|S|}$. \qed
\end{proof}

\begin{prop}\label{pro:match-same}
	Let $G,X,S$ be as above. Let $M$ be an $(S,X)$-locally popular matching and ${\cal T}$ be the $(S,X)$-tipping point of $M$. Let $N$ be another $(S,X)$-locally popular matching whose $(S,X)$-tipping point is also ${\cal T}$. Let $e\in \delta(v)$ for some $v \in S$. Then: 
	\begin{enumerate}
		\item $e \in M$ if and only if $e \in N$.
		\item Suppose $e \notin M$ and let $\star \in \{+,-\}$ be the label of $e$ at $v$ wrt $M$. Then $e \notin N$ and $\star$ is also the label of $e$ at $v$ wrt $N$.
	\end{enumerate}
\end{prop}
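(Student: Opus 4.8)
The plan is to read off everything from the last coordinate of the tipping point. Recall that the $(S,X)$-tipping point of a matching $M$ is the pair $\bigl(\{(U_i,L_i,\Pi_i)\}_{i=1,\dots,q},\,M_S\bigr)$, whose second component is the $S$-matching $M_S=\{(u,v)\in M:\{u,v\}\cap S\neq\emptyset\}$ induced by $M$, i.e.\ the set of all edges of $M$ incident to $S$. Since $M$ and $N$ are assumed to have the same $(S,X)$-tipping point $\mathcal{T}$, comparing second components yields $M_S=N_S$; this single identity will give both claims, and the configuration part of $\mathcal{T}$ is not needed here.

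For part~1, fix $e\in\delta(v)$ with $v\in S$. If $e\in M$, then $e$ has the endpoint $v\in S$, so $e\in M_S$ by definition; as $M_S=N_S\subseteq N$, we get $e\in N$. The reverse implication is symmetric, interchanging $M$ and $N$. Hence $e\in M\iff e\in N$.

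For part~2, suppose $e=(v,w)\notin M$; then $e\notin N$ by part~1, so it remains to compare the label of $e$ at $v$ with respect to $M$ and to $N$. The key observation is that this label depends only on $v$'s (fixed) preference list together with the status of $v$ in the matching: it equals $+$ if $v$ is unmatched, and otherwise is $+$ or $-$ according to whether $v$ prefers $w$ to its partner or conversely (note $w$ and that partner are distinct since $e\notin M$). So it suffices to show $v$ has the same status in $M$ as in $N$. But for $v\in S$ this status is entirely read off from $M_S$: $v$ is matched in $M$ iff some edge of $M_S$ is incident to $v$, and in that case that edge is unique and determines $M(v)$; the analogous statement holds for $N$ and $N_S$. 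Since $M_S=N_S$, $v$ is matched in $M$ iff it is matched in $N$, and when matched $M(v)=N(v)$. Therefore the label of $e$ at $v$ is the same with respect to $M$ and $N$, completing the proof.

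There is no genuine combinatorial obstacle here; the only point that warrants care is the treatment of edges incident to a node $v\in S$ that is unmatched in $M$, or matched by $M$ to a neighbor outside $X\cup S$. For these one must invoke the labelling convention already in force in this section --- the label at $v$ of a non-matching edge is a function of $(v,M(v))$ when $v$ is matched, and is $+$ (equivalently, no $(-,-)$ edge is incident to an $M$-exposed node) otherwise. Once this convention is made explicit, parts~1 and~2 follow immediately from $M_S=N_S$.
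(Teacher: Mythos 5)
Your proof is correct and takes the same approach as the paper's one-line proof, which simply observes that both claims follow immediately from $M_S=N_S$ (a direct consequence of the definition of tipping point). You have merely spelled out the routine verification that the paper leaves implicit, including the correct observation that the label of a non-matching edge at $v\in S$ is determined entirely by $v$'s preference list and $v$'s partner (or exposedness), both of which are read off from $M_S$.
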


\begin{proof}
	Both statements immediately follow from $M_S = N_S$, which holds by definition of tipping point. \qed
\end{proof}

\subsection{Leaders}

Fix $G,S,X,M$ as in the previous section, and let ${\cal T}$ be the $(S,X)$-tipping point of $M$. $M$ is said to be the \emph{${\cal T}$-leader} if it is the one of minimum cost among all $(S,X)$-locally popular matchings whose $(S,X)$-tipping point is ${\cal T}$.
Due to the initial perturbation of costs, note that there is at most one ${\cal T}$-leader for each tipping point ${\cal T}$.



The following crucial lemma shows that, in order to find a min cost popular matching, it suffices to consider matchings that induce $(X\cup S)$-matchings that are ${\cal T}$-leaders, for any $(S,X)$-tipping point ${\cal T}$. The proof of Lemma~\ref{lem:only-leaders-left-alive} is given in Section~\ref{sec:lemma15-proof}.

\begin{lemma}\label{lem:only-leaders-left-alive}
	Suppose we are given $G,X,S,M$ as above. Let ${\cal T}$ be the $(S,X)$-tipping point of $M$, and assume that $M$ is not the ${\cal T}$-leader. Let $S'$ be a vertex separator of $G$, $X'$ a connected component of $G\setminus S'$ with the property that $X'\supseteq X$ and $S' \cup X' \supseteq S \cup X$ (possibly $S'=\emptyset$ and $X'=V$). Let $M'$ be a 
	$(S',X')$-locally popular matching such that $M'_{X\cup S}=M$. Let ${\cal T}'$ be the $(S',X')$-tipping point of $M'$. Then $M'$ is not a ${\cal T}'$-leader.
\end{lemma}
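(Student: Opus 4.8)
The plan is to show the contrapositive-flavored statement directly: given that $M$ is not the ${\cal T}$-leader, we exhibit a matching $N'$ of $G$ that is $(S',X')$-locally popular, has the same $(S',X')$-tipping point ${\cal T}'$ as $M'$, and has strictly smaller cost than $M'$. The natural candidate is $N' := N \cup (M' \setminus M)$, where $N$ is the ${\cal T}$-leader (or any $(S,X)$-locally popular matching with tipping point ${\cal T}$ that is cheaper than $M$). First I would check $N'$ is a well-defined matching: since $N_S = M_S$ (both equal the $S$-part of the common tipping point ${\cal T}$), every node of $S$ matched by an edge of $M'\setminus M$ is matched the same way by $N$ and by $M$, so no node is covered twice; nodes outside $X\cup S$ are touched only by $M'\setminus M$, and nodes in $X$ are touched only by $N$. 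Then $c(N') = c(N) + c(M'\setminus M) < c(M) + c(M'\setminus M) = c(M')$, using that all costs were perturbed to be distinct and $c(N) < c(M)$.

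Next I would verify that $N'_{X\cup S} = N$. By construction $N' \cap E[X\cup S] = N$ (edges of $M'\setminus M$ are not incident to $X\cup S$), and conversely every edge of $N$ is incident to $X\cup S$ since $M$, hence $N$, is an $(X\cup S)$-matching. So $N'_{X\cup S}=N$, which by Proposition~\ref{prop:popular-will-be-weak}(2)-style reasoning will let me transfer local popularity. The key structural observation, exactly as in the proof of Proposition~\ref{prop:popular-will-be-weak}, is that $G_{N'}[X'\cup S']$ and $G_{M'}[X'\cup S']$ agree \emph{outside} the region $X\cup S$, and inside $X\cup S$ we have $G_{N'}[X\cup S] = G_N[X\cup S]$ versus $G_{M'}[X\cup S]=G_M[X\cup S]$. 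Using Proposition~\ref{pro:match-same}, since $M$ and $N$ have the same tipping point ${\cal T}$, the labels at every node of $S$ (with respect to edges leaving $X\cup S$) coincide for $M$ and $N$, and the set of $M$-exposed nodes in $S$ equals the set of $N$-exposed nodes in $S$; the same holds in $X$ up to the parity/level data recorded by the configurations at which $M$ (equivalently $N$) is active. Thus any forbidden structure (path (i), (ii), or (iii) of Theorem~\ref{thr:characterize-popular}) in $G_{N'}[X'\cup S']$ would, after replacing the portion lying inside $X\cup S$ by a corresponding $S$-path of the certificate of the matching configuration of $M$ at ${\cal T}$, yield a forbidden structure in $G_{M'}[X'\cup S']$, contradicting that $M'$ is $(S',X')$-locally popular. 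Hence $N'$ is $(S',X')$-locally popular.

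Finally I would argue ${\cal T}' $ is also the $(S',X')$-tipping point of $N'$. By Proposition~\ref{pro:match-same} we have $N'_{S'} = M'_{S'}$: edges of $N'$ incident to $S'$ either lie in $M'\setminus M$ (shared) or are incident to some node of $S'\cap(X\cup S)$, where the $S$-matching data of $N$ and $M$ agree. For the collection of configurations at which $N'$ is active: a certificate of a configuration ${\cal C}$ at $M'$ consists of $X'$-disjoint $S'$-paths in $G_{M'}[X'\cup S']$; I would show each such path can be rerouted through $X\cup S$ using the certificate paths of the matching configuration of $M$ (which exist because $N$ has the same tipping point ${\cal T}$, hence is active at the same $(S,X)$-configurations with certificates of identical level and parity) to produce an $X'$-disjoint family of $S'$-paths in $G_{N'}[X'\cup S']$ of the same level and parity, and symmetrically. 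So $M'$ and $N'$ are active at exactly the same $(S',X')$-configurations, giving ${\cal T}' $ for both. Then $N'$ is an $(S',X')$-locally popular matching with tipping point ${\cal T}'$ and $c(N') < c(M')$, so $M'$ is not the ${\cal T}'$-leader.

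The main obstacle I anticipate is the rerouting argument in the last two steps: one must show that a forbidden alternating structure (or a certificate path) of $M'$ that enters and leaves the region $X\cup S$ can have its interior segments swapped for segments witnessed by $M$'s certificate without changing its ``type'' (level vector, parity vector, and $X'$-disjointness with the other paths), and without accidentally creating new exposed nodes or $(+,+)$ edges. This requires carefully bookkeeping how an alternating path in $G_{M'}[X'\cup S']$ decomposes into maximal sub-paths inside $X\cup S$ (each an $S$-path of $G_M[X\cup S]$, recorded by some configuration at which $M$ is active) and sub-paths outside; matching up the parities at the gluing nodes in $S$ (which is exactly what $M_S = N_S$ and Proposition~\ref{pro:match-same} guarantee); and invoking the juxtaposition operation defined in Section~\ref{sec:wpm} to reassemble the rerouted path or cycle. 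I expect this to be the longest and most delicate part of the proof, but conceptually it is just a substitution argument powered by the equality of tipping points.
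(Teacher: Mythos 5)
Your proposal matches the paper's proof step for step: the same candidate $N' = N\cup(M'\setminus M)$ with $N$ the ${\cal T}$-leader, the same cost comparison, and the same reduction of both the $(S',X')$-local popularity of $N'$ and the equality of tipping points to a substitution of interior $S$-paths, justified by $M$ and $N$ sharing the $(S,X)$-tipping point ${\cal T}$. The ``delicate bookkeeping'' you flag at the end is precisely what the paper formalizes in Claim~\ref{cl:fromPtoPhat} via the path decomposition of Algorithm~\ref{algo:decompose}, so you have correctly identified both the strategy and the technical crux.
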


\subsection{Tree decomposition}

Let $G = (V,E)$ be a graph. A \emph{tree decomposition} of $G$ is a pair $(T, {\cal B})$ where $T$ is a tree and ${\cal B}=\{B(i):i\in V(T)\}$ is a family of subsets of $V(G)$, called \emph{bags}, one for each vertex of $T$, satisfying the following:
\begin{enumerate}
	\item For each $(u,v) \in E$, there is at least one bag $B \in {\cal B}$ such that $(u,v)\in B$.
	\item If $i\neq j \neq k \in V(T)$ are such that $k$ is on the unique path from $i$ to $j$ in $T$ then $B(i)\cap B(j) \subseteq B(k)$.
\end{enumerate} 

Note that the second property implies that, for any vertex $u\in V(G)$, the bags which contain $u$ form a subtree of $T$. We will sometimes abuse notation and denote by $B$ both a vertex of $V(T)$ and the bag corresponding to it. 

\smallskip

The \emph{width} of a tree decomposition $(T, {\cal B})$ is $\max\{|B|-1:B\in {\cal B}\}$. The \emph{treewidth} of $G$ is the minimum integer $\omega$ such that there is a tree decomposition of $G$ of width $\omega$.

Let $G = (V,E)$ be a graph and $(T, {\cal B})$ be a tree decomposition of $G$ of width $\omega$. Wlog we can assume that, for each pair of bags $B,B'$ adjacent in $T$, $B\cap B'$ is a vertex separator of $G$. Form a directed rooted tree by picking an arbitrary vertex as the root and orienting the remaining edges towards the root, and call this a \emph{directed tree decomposition}. In the directed tree, each node $X$ other than the root node has exactly one \emph{successor} $S(B)$, i.e., there exists exactly one node $S(B)$ such that $(B,S(B))$ is an arc of the directed tree decomposition. If $B$ is the root, we set $S(B)=\emptyset$. We also say that $B$ is a \emph{predecessor} of $S(B)$ and notice that a bag may have multiple predecessors, and has none if and only if it is a leaf of $T$.

\smallskip

\noindent{\em Dichotomic tree decomposition.}
Let $G$ be a graph and $(T,{\cal B})$ a directed tree decomposition of $G$ of width $\omega$. We can transform the directed tree decomposition of $G$ it into a directed tree decomposition of the same graph and width where every node has indegree at most $2$. We call such a tree decomposition \emph{dichotomic}. Indeed, suppose edges $(B_1,B), (B_2,B), \dots, (B_t,B)$ with $t\geq 3$ are part of the directed tree decomposition. Then we can create a copy $\bar B$ of $B$, add edge $(\bar B,B)$, and split the edges entering $B$ as follows: $(B_1,\bar B), \dots, (B_{\lceil t/2\rceil},\bar B)$ and $(B_{\lceil t/2\rceil+1},B), \dots, (B_{t},B)$. The indegree of $X$ and $\bar X$ is at most $\lfloor t/2\rfloor+1<t$. Notice that the digraph we obtain is still a tree decomposition of $G$, since each edge of $G$ is still covered by a bag, and the bags which contain any vertex $v\in G$ still form a continuous subtree.

If we repeat the above operation for a non-dichotomic tree once for each original node of indegree at least $3$, the maximum indegree over all nodes of the tree goes from $t>2$ to $\lfloor t/2\rfloor+1$, while the number of nodes is at most doubled.  Hence, we can iterate the operation so as to obtain a tree decomposition with at most a quadratic number of vertices, all of which have indegree at most $2$.

From here on, we assume without loss of generality that our directed tree decomposition is dichotomic. Let $T'$ be a subtree of $T$, and $V(T')$ the set of nodes contained in at least a bag of $T'$. We say that $T'$ is a \emph{closed subtree} of $T$ if $B \in V(T')$ and $B'$ is a predecessor of $B$, then $B'$ belongs to $T'$. By connectivity, for each node $B$ of a closed subtree $T'$, $S(B)$ also belongs to $T'$, with the exception of at most one node that we call the \emph{head} of $T'$ and denoted by $H(T')$. If $T'\neq T$, the successor of $H(T')$ exists and it is also called the \emph{successor of $T'$} and denoted by $S(T')$. Note that each bag $B$ is the head of exactly one closed subtree of $T$, that we denote by $T_B$.



\begin{remark}\label{rem:tree-decomposition}
	Let $(T,{\cal B})$ be a dichotomic directed tree decomposition. Then the following holds:
	let $T'$ be a closed subtree of $T$ and $B$ be the head of $T'$. The removal of $B$ partitions $T'\setminus B$ in at most $2$ closed subtrees. The successor of each of those subtrees is $B$.
\end{remark}

\subsection{The algorithm}

We now give our algorithm for computing a minimum weight popular matching, see Algorithm~\ref{Algo}. Note that Algorithm~\ref{Algo} relies on the subroutine \texttt{Update} described in Algorithm~\ref{algo:update}, and the implementations of some other subroutines are not completely defined. We give a formal description of those together with a complexity analysis in the proof of Theorem \ref{thr:main}.

Algorithm~\ref{Algo} takes as input a graph $G$ with strict preference lists as usual, and a dichotomic directed tree decomposition $(T,{\cal B})$ of $G$. It iteratively constructs the sequence of closed subtrees $T_B$ of $T$, with the first $T_B$ corresponding to a leaf $B$ of $T$, and the last to the root. For each $T_B$, let $S=B \cap S(B)$ and $X=V(T_B)\setminus S$. The algorithm constructs and stores in ${\cal L}_B$ all the pairs $(M,{\cal T})$, where  ${\cal T}$ is an $(S,X)$-tipping point and $M$ is the ${\cal T}$-leader. This set can be found by building on the corresponding sets for the (at most two) predecessors of $B$. Finally, of all matchings $M$ such that $(M,{\cal T}) \in {\cal L}_{B}$ -- with $B$ being the root -- the one of minimum cost is output. If at the end of any iteration when a bag $B$ is flagged, we have ${\cal L}_B=\emptyset$, we deduce that $G$ has no popular matching.

With a little abuse of notation, we will write $M \in {\cal L}_B$ if $(M,{\cal T}) \in {\cal L}_B$ for some ${\cal T}$. Similarly, we write ${\cal L}_B={\cal L}_B \cup \{M\}$ to mean that $(M,{\cal T})$ is added to ${\cal L}_B$ for an appropriate ${\cal T}$, and similarly for ${\cal L}_B \setminus \{M\}$.

\begin{algorithm}[h!]
	\caption{ }
	\label{Algo}
	\begin{algorithmic} [1]
		\REQUIRE A graph $G$, together with, for each node $v \in V$, a strict ranking of the neighbors of $v$. A dichotomic directed tree decomposition $(T,{\cal B})$ of $G$.
		
		\ENSURE A popular matching of minimum cost in $G$.
		\smallskip
		\STATE {For all $B \in {\cal B}$, label $B$ as \emph{unflagged}.}
		\STATE{Choose an unflagged bag $B$ whose predecessors are flagged, and flag $B$.}\label{st:pick-a-bag}
		\STATE{Let $T_B$ be the closed subtree of $T$ whose head is $B$, and set ${\cal L}_B=\emptyset$.}
		\STATE{Let $S=B\cap S(B)$, $X=V(T_B)\setminus S$.}
		\IF{$B$ has no predecessor in $T$}\label{step:if-leaf}
		\FOR{all $B$-matchings $M^*$ of $G$}
		\IF{$M^*$ is an $(S,X)$-locally popular matching}\label{step:if1}
		\STATE{\texttt{Update}($M^*$, ${\cal L}_B$)}
		\ENDIF
		\ENDFOR
		\ELSE
		\STATE{Let $S_1=B\cap B_1$ and (possibly) $S_2=B\cap B_2$,  where $B_1$ and (possibly) $B_2$ are the predecessors of $B$.}
		\FOR{all $B$-matchings $M$ of $G$, all $M_1\in {\cal L}_{B_1}$ and (possibly) $M_2 \in {\cal L}_{B_2}$}
		\IF{$(M_1)_{S_1}=M_{S_1}$ and (possibly) $(M_2)_{S_2}=M_{S_2}$}\label{st:same-induced}
		\STATE{Let $M^*=M \cup M_1 \cup M_2$}\label{st:M-star}
\IF{$M^*$ is an $(S,X)$-locally popular matching of $G$}
		\STATE{\texttt{Update}($M^*$, ${\cal L}_B$)}
		\ENDIF
		\ENDIF
		\ENDFOR
		\ENDIF
		\IF{${\cal L}_B=\emptyset$}
		\STATE{output: $G$ has no popular matching.} 
 		\ELSIF{there is a bag $B \in {\cal B}$ that is unflagged}
		\STATE{Go to Step \ref{st:pick-a-bag}.}
		\ENDIF 
		\STATE{Let $B$ be the head of $T$. Output the matching of minimum cost from ${\cal L}_B$.}

	\end{algorithmic}
\end{algorithm}

\begin{algorithm}[h]
	\begin{algorithmic}
		\caption{\texttt{Update}}\label{algo:update}
		\REQUIRE $M^*$, ${\cal L}_B$
		\STATE{Let ${\cal T}$ be the $(S,X)$-tipping point of $M^*$.}\label{st:find-T}
		\IF{there exists $M' \in {\cal L}_B$ whose tipping point is ${\cal T}$}
		\IF{$c(M^*)<c(M')$}
		\STATE{Set ${\cal L}_B={\cal L}_B \setminus \{M'\}\cup \{M^*\}$.}
		\ENDIF
		\ELSE	
		\STATE{Set ${\cal L}_B={\cal L}_B \cup \{M^*\}$.}
		\ENDIF
		
	\end{algorithmic}
\end{algorithm}

\subsection{Analysis}

The goal of this section is to prove the following result.

\begin{theorem}\label{thr:main}
	Algorithm \ref{Algo} is correct. If the treewidth of $G = (V,E)$ is upper bounded by a constant~$\omega$, then it can be implemented to run in time $O(|V|^{3\omega +7})$.
\end{theorem}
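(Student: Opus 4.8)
The proof of Theorem~\ref{thr:main} has two parts: correctness of Algorithm~\ref{Algo}, and the running-time bound under the bounded-treewidth assumption. I would present them in that order.

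\textbf{Correctness.} The key invariant I would establish, by induction along the directed tree decomposition (processing bags in the order they are flagged, i.e., predecessors before successors), is the following: when a bag $B$ is flagged, with $S = B \cap S(B)$ and $X = V(T_B)\setminus S$, the set ${\cal L}_B$ contains exactly one pair $(M,{\cal T})$ for each $(S,X)$-tipping point ${\cal T}$ that is realized by some $(S,X)$-locally popular $(X\cup S)$-matching of $G$, and that $M$ is the ${\cal T}$-leader. The base case is a leaf bag, where $X\cup S = B$ and the algorithm simply enumerates all $B$-matchings, tests $(S,X)$-local popularity, and calls \texttt{Update}, which by construction keeps the minimum-cost representative of each tipping point. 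For the inductive step at a bag $B$ with predecessors $B_1$ (and possibly $B_2$): by Remark~\ref{rem:tree-decomposition}, removing $B$ splits $T_B\setminus B$ into at most two closed subtrees whose successor is $B$; using property~2 of tree decompositions, $S_i = B\cap B_i$ is a vertex separator, and any $(X\cup S)$-matching $M^\star$ of $G$ decomposes uniquely as $M \cup M_1 \cup M_2$ where $M$ lives on $B$, $M_i$ lives on $V(T_{B_i})$, and the three agree on the separators $S_i$. By Proposition~\ref{prop:popular-will-be-weak}(2), $M_i = (M^\star)_{X_i\cup S_i}$ must itself be $(S_i,X_i)$-locally popular, so it has some tipping point ${\cal T}_i$; and crucially, by Proposition~\ref{pro:match-same}, whether $M^\star$ is $(S,X)$-locally popular, and what its tipping point is, depends on $M_i$ only through $(M_i)_{S_i}$ and the labels at $S_i$ — hence only through ${\cal T}_i$. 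This means that replacing $M_i$ by the ${\cal T}_i$-leader stored in ${\cal L}_{B_i}$ (which exists by the induction hypothesis) does not change $(S,X)$-local popularity nor the resulting $(S,X)$-tipping point, while it can only decrease cost. This is exactly the content of Lemma~\ref{lem:only-leaders-left-alive}, and it is the step I expect to require the most care: I must check that the juxtaposition of certificate paths on the two sides glues correctly across $B$ and that no forbidden structure is created or destroyed by swapping in the leader — the machinery of $U$-paths, juxtapositions, and $X$-disjointness in Section~\ref{sec:wpm} is set up precisely to make this gluing argument go through. Given the lemma, the algorithm's enumeration over $(M, M_1\in{\cal L}_{B_1}, M_2\in{\cal L}_{B_2})$ with the agreement test on line~\ref{st:same-induced} sees, up to cost-irrelevant substitutions, every $(S,X)$-locally popular $(X\cup S)$-matching, so \texttt{Update} correctly records every tipping point with its leader. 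Finally, at the root $B$ we have $S=\emptyset$, $X=V$, so by Proposition~\ref{prop:popular-will-be-weak}(1) the $(\emptyset,V)$-locally popular matchings are exactly the popular matchings of $G$; outputting the minimum-cost element of ${\cal L}_B$ therefore returns a minimum-cost popular matching, and ${\cal L}_B=\emptyset$ correctly certifies that none exists. (One should also note the perturbation assumption guarantees the leader is unique, so "minimum cost" is well-defined throughout.)

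\textbf{Running time.} I would first invoke the linear-time algorithm of~\cite{Bo96} to compute a width-$\omega$ tree decomposition, then dichotomize it at the cost of at most a quadratic blow-up in the number of bags (as described before Remark~\ref{rem:tree-decomposition}); the width is unchanged, so $|S|, |B| \le \omega+1$ throughout. Proposition~\ref{prop:poly-many-tipping} bounds the number of $(S,X)$-tipping points — hence $|{\cal L}_B|$ — by $g(|S|)\,|V|^{|S|} \le g(\omega+1)\,|V|^{\omega+1}$, a polynomial with the constant $g(\omega+1)$ absorbed. At a non-leaf bag, the main loop ranges over all $B$-matchings of $G$ (there are at most $|V|^{|B|}\le |V|^{\omega+1}$ of these, since a $B$-matching is determined by the partner, possibly empty, of each of the $\le \omega+1$ vertices of $B$) times $|{\cal L}_{B_1}|\times|{\cal L}_{B_2}| = O(|V|^{2(\omega+1)})$ choices, for $O(|V|^{3\omega+3})$ iterations per bag; for a leaf bag it is just $O(|V|^{\omega+1})$. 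Inside each iteration, testing $(S,X)$-local popularity and computing the $(S,X)$-tipping point (line~\ref{st:find-T} of \texttt{Update}) reduces to checking for the forbidden structures of Theorem~\ref{thr:characterize-popular} and for the certificate paths of each of the $O(1)$-many configurations in the graph $G_{M^\star}[X\cup S]$, which has $O(|V|)$ vertices — this is a polynomial-time (indeed near-linear, but I will be generous) graph search, say $O(|V|^{c})$ for a small constant $c$; I would bound the whole per-iteration cost crudely by $O(|V|^{4})$ to land the final exponent. Multiplying, each bag costs $O(|V|^{3\omega+7})$, and there are $O(|V|^2)$ bags after dichotomization, giving $O(|V|^{3\omega+9})$ — so to hit the stated $O(|V|^{3\omega+7})$ I would account more tightly, using that the original (pre-dichotomization) tree has $O(|V|)$ bags and that the per-iteration tipping-point computation is $O(|V|^2)$ (it suffices to enumerate pairs of endpoints in $S\cup\{\emptyset\}$ and run an alternating-path search for each), yielding $O(|V|) \cdot O(|V|^{3\omega+3}) \cdot O(|V|^{2}) \cdot O(1)$-style bookkeeping. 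I expect the honest statement to need the careful constant-chasing that the proof in the paper presumably carries out; the conceptual content is entirely in the correctness argument and Lemma~\ref{lem:only-leaders-left-alive}.
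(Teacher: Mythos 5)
Your correctness argument matches the paper's: the invariant $(*)$ (each ${\cal L}_B$ holds exactly the ${\cal T}$-leaders), the induction over the closed subtrees, and the crucial reliance on Lemma~\ref{lem:only-leaders-left-alive} to justify that swapping $M_i$ for the corresponding leader preserves $(S,X)$-local popularity and the tipping point. One phrase is misleading: you say that whether $M^\star$ is $(S,X)$-locally popular ``depends on $M_i$ only through $(M_i)_{S_i}$ and the labels at $S_i$.'' That is false — a forbidden structure can reach deep into $X_i$, so the boundary data alone do not suffice, and this is precisely why tipping points must record which configurations (i.e., which families of certificate $S$-paths inside $X_i$) are realized. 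Your ultimate conclusion (``hence only through ${\cal T}_i$'') is the right one, and you correctly defer to Lemma~\ref{lem:only-leaders-left-alive}, so this is a wobble in exposition rather than a fatal error.

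The running-time analysis, however, has a genuine gap. You claim that for a given $M^\star$ one can test $(S,X)$-local popularity and compute the $(S,X)$-tipping point by a ``polynomial-time graph search'' on $G_{M^\star}[X\cup S]$, which has up to $\Theta(|V|)$ vertices. This is precisely the step the paper does \emph{not} take, because it is not justified: Claim~\ref{cl:everything-is-checkable} only gives a fast check when $|X\cup S|$ is bounded, and finding the tipping point requires deciding, for each configuration, whether there exists a family of up to $|S|$ pairwise $X$-disjoint $S$-paths with prescribed endpoints, parities, and levels — a constrained disjoint-paths question, not a plain alternating-path search, and you give no algorithm for it. The paper's solution is a separate piece of machinery you omit entirely: it exploits the already-computed tipping points of $M_1$ and $M_2$ to build a \emph{constant-size} auxiliary graph $H$ with artificial ``shortcut'' paths encoding each certificate path's level and parity, and then (via Claims~\ref{cl:all-paths-are-ivd}--\ref{cl:there,there}) reduces the check on $G$ to the bounded-size check of Claim~\ref{cl:everything-is-checkable} on $H$, at cost $O(|E|)=O(|V|^2)$ per triple $(M,M_1,M_2)$. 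This is what brings the per-bag cost to $O(|V|^{3\omega+3})\cdot O(|V|^2)=O(|V|^{3\omega+5})$, and combined with the $O(|V|^2)$ bags of the dichotomized tree gives the stated $O(|V|^{3\omega+7})$. Your ``tighter'' accounting is also arithmetically off: the algorithm runs on the dichotomized decomposition, which has $O(|V|^2)$ bags, not the $O(|V|)$ of the original tree, and in any case $O(|V|)\cdot O(|V|^{3\omega+3})\cdot O(|V|^2) = O(|V|^{3\omega+6})$, which does not match the claimed bound. The shortcut-graph reduction is the missing idea.
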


We assume throughout the proof that every bag $B$ that is not a leaf has two predecessors, as the (simpler) case where some $B$ has only one predecessor follows in a similar fashion. For a bag $B$ we write $S:=B\cap S(B)$ and $X:=V(T_B)\setminus S$; if $B$ is not a leaf, we denote by $B_1$ and $B_2$ its predecessors, and write $S_i:=B\cap B_i$, $X_i=V(T_{B_i})\setminus S_i$ for $i=1,2$. We start by proving the correctness of Algorithm \ref{Algo}. We show that, at the end of the iteration where bag $B$ is flagged,
\begin{itemize}
\item[$(*)$] ${\cal L}_B$ contains exactly all ${\cal T}$-leaders, for all $(S,X)$-tipping points ${\cal T}$ for which a ${\cal T}$-leader exists.
\end{itemize} Suppose $(*)$ is proved. Then, when $B$ is the root, ${\cal L}_B$ contains only popular matchings (by Proposition~\ref{prop:popular-will-be-weak}), and the one of minimum cost among those is the popular matching of minimum cost. Again by Proposition \ref{prop:popular-will-be-weak}, if ${\cal L}_B=\emptyset$ at the end of the iteration where $B$ is flagged, then $G$ has no popular matching, and the output of Algorithm \ref{Algo} is again correct. 

\smallskip

The proof of $(*)$ is by induction on the number of nodes $n_B$ of $T_B$. If $n_B=1$, then the condition from the {\bf if} statement in Step \ref{step:if-leaf} is verified. In this case, the statement is immediate, since we enumerate over all possible $B$-matchings of $G$, check those that are $(S,X)$-locally popular, and for each $(S,X)$-tipping point ${\cal T}$, keep the $(S,X)$-locally popular matching of minimum cost active at ${\cal T}$.

Now suppose $n_B > 1$. By induction hypothesis, for $i=1,2$, all matchings that are stored in ${\cal L}_{B_i}$ are exactly all ${\cal T}$-leaders, for all $(S_i,X_i)$-tipping points ${\cal T}$ for which a ${\cal T}$-leader exists (i.e., there is at least a $(S_i,X_i)$-locally popular matching active at ${\cal T}$). Now let ${\cal T}$ be an $(S,X)$-tipping point for which a ${\cal T}$-leader $\hat M$ exists. It suffices to show that one of the matchings $M^*$ constructed at Step \ref{st:M-star} is indeed $\hat M$.

Since $\hat M$ is $(S,X)$-locally popular and $X_i \cup S_i \subseteq X \cup S$, matching $M_i:=\hat M_{X_i \cup S_i}$ is also $(S_i,X_i)$-locally popular by Proposition \ref{prop:popular-will-be-weak}. By Lemma \ref{lem:only-leaders-left-alive}, $M_i$ is a $(S_i,X_i)$-leader. By induction, $M_i \in {\cal L}_{B_i}$. On the other hand, $M:=\hat M_{B}$ is a $B$-matching of $G$ such that $M_{S_i}=(M_{i})_{S_i}$ for $i=1,2$. Since we enumerate all $B$-matchings of $G$, as well as all matchings from ${\cal L}_{B_1}$ and ${\cal L}_{B_2}$, matching $\hat M$ is eventually enumerated.

\smallskip

\noindent{\em Running time analysis.}
We now bound the running time of Algorithm~\ref{Algo}. We will use the following general fact proved in the claim below: if we are given an $(X\cup S)$-matching $M$ in a graph $H$ and $|X\cup S|$ is  bounded, then one can check $(S,X)$-local popularity of $M$. If $M$ is $(S,X)$-locally popular, then we can efficiently find the $(S,X)$-tipping point at which $M$ is active. 

\begin{new-claim}\label{cl:everything-is-checkable}

	Let $H = (U,F)$ be a graph, $S$ a vertex separator of $H$, $X$ a connected component of $H\setminus S$. Let $M$ be a $(X \cup S)$-matching of $H$. Assume $|X\cup S|$ is upper bounded by a constant. Then in $O(|F|)$ time one can:
	\begin{enumerate}
		\item check if $M$ is $(S,X)$-locally popular in $H$ and, if it is,
		\item find the $(S,X)$-tipping point (in $H$) at which $M$ is active.
	\end{enumerate}
\end{new-claim}

\begin{proof}
	Building graph $H_M[X\cup S]$ takes $O(|F|)$ time. Since $|X\cup S|$ is upper bounded by a constant, it takes constant time to enumerate all paths and cycles of $H_M[X \cup S]$, and to check if any of these violates the definition of $(S,X)$-local popularity. In case $M$ is indeed $(S,X)$-locally popular, for any family of paths ${\cal P}$ in $H_M[X\cup S]$ and $(S,X)$-configurations ${\cal C}$, one can check in constant time if ${\cal P}$ is a certificate for $M$ at ${\cal C}$. Since by Proposition \ref{prop:poly-many-tipping} the number of $(S,X)$-configurations is upper bounded by $g(\omega+1)$, which is a constant by hypothesis, the claim follows. \hfill $\lozenge$
\end{proof}


Observe that $|B|\leq \omega +1$, so enumerating all $B$-matchings takes time $O(|V|^{\omega+1})$.
Because of Claim \ref{cl:everything-is-checkable} and Step \ref{step:if1} (in Algorithm~\ref{Algo}), finding the tipping point of any
$B$-matching $M^*$ can be performed in time $O(|E|)=O(|V|^2)$. Moreover, once the tipping point of ${M^*}$ is computed, the \texttt{Update} function can be implemented to run in time $O(|V|^{\omega+1})$, since ${\cal L}_B$ at each step will contain at most one matching per tipping point, and by Proposition \ref{prop:poly-many-tipping}, there are at most $g(\omega+1)|V|^{\omega +1}$ many tipping points.

 We conclude that, when $n_B=1$, the iteration when $B$ is flagged runs in  time $O(|V|^{2\omega +2})$. We now prove, by induction on $n_B$, that the iteration when any $B$ is flagged can also be implemented to run in time $O(|V|^{3\omega +5})$, the base case having been just proved. Multiplying this by $O(|V|^2)$ (the number of bags of the tree decomposition), we obtain the desired bound.

Assume $n_B>1$. Because of what was discussed above, given an $(S,X)$-configuration ${\cal C}$ and a matching $M^*=M\cup M_1 \cup M_2$,
where $M$ is a $B$-matching of $G$ and for $i=1,2$ $M_i \in {\cal L}_{B_i}$,
it is enough to give an upper bound on the time needed to decide if $M^*$ is $(S,X)$-locally popular and if it is active at ${\cal C}$.
Recall that the number of $B$-matchings is $O(|V|^{\omega+1})$, while $|{\cal L}_{B_i}|=O(|V|^{\omega+1})$ and the number of $(S,X)$-configuration
is at most $g(\omega + 1)$ by Proposition \ref{prop:poly-many-tipping}.

The condition from Step \ref{st:same-induced} can be verified in time $O(|E|)$. Hence, we assume that $(M_1)_{S_1}=M_{S_1}$ and $(M_2)_{S_2}=M_{S_2}$. We start with a simple claim.

\begin{new-claim}\label{cl:all-paths-are-ivd}
	For $i=1,2$, let ${\cal C}_i=(U_i,L_i,\Pi_i)$ be a configuration at which $M_i$ is active, and let ${\cal P}_i$ be a certificate of ${\cal C}_i$ at $M_i$. Then all paths from ${\cal P}_1,{\cal P}_2$ are pairwise internally vertex-disjoint. 
\end{new-claim}
\begin{proof}
	By definition of certificate, paths from ${\cal P}_1$ are pairwise internally vertex-disjoint, and similarly so are paths from ${\cal P}_2$. Now let $P_1 \in {\cal P}_1$, $P_2 \in {\cal P}_2$. For $i=1,2$, $P_i\subseteq V(T_{B_i})$ and $P_i$ is an $S_i$-path. Since $V(T_{B_1})\cap V(T_{B_2})\subseteq S_1 \cup S_2$, the claim follows. \hfill $\lozenge$
\end{proof}

Now fix an $(S_1,X_1)$-configuration ${\cal C}_1=(U_1,L_1,\Pi_1)$ in $G$ at which $M_1$ is active, and an $(S_2,X_2)$-configuration ${\cal C}_2=(U_2,L_2,\Pi_2)$ in $G$ at which $M_2$ is active. Consider the graph $H$ and matching $M_H$ obtained as follows. Start from $H=G_M[B]$ and the corresponding matching $M_H=M[B]$, and let $(u,v)$ be the first pair from $U_1$. Assume $u,v$ are matched by $M_H$, $\ell_1=(0,1)$, $\pi_1=(0,0)$, the other cases following in a similar fashion. 
Add to $H$ and $M_H$ new nodes $u',w',z',v'$, matching edges $(u',w')$, $(z',v')$, $(+,+)$ edge $(w',z')$, and $(+,-)$ edges $(u,u')$ and $(v,v')$. Note that $(u,u',w',z',v',v)$ is a $M_H$-alternating $S_1$-path of parity $\pi_1$ and level $\ell_1$ starting at $u$ and ending at $v$. We call it the \emph{shortcut} of the path with endpoints $u,v$. Repeat this for all pairs from $U_1$ and $U_2$, adding at each time new nodes and an appropriate path. Note that this adds to $H$ a bounded number of nodes. This means that the graph $H$ and matching $M_H$ can be constructed in time $O(|E|)$.

\begin{new-claim}\label{cl:almost-there}
	Let $M^*$ be an $(S\cup X)$-matching. $M^*$ is not an $(S,X)$-locally popular matching in $G$ if and only if there exist an $(S_i,X_i)$-configuration ${\cal C}_i$ at which $M_i$ is active for $i=1,2$, such that, if we construct graph $H$ and matching $M_H$ as above, then $M_H$ is not $(S,V(H)\setminus S)$-locally popular in $H$.   
\end{new-claim}

\begin{proof}
	Consider the forbidden subgraph $P$ (from Theorem \ref{thr:characterize-popular}) of $G_{M^*}[X \cup S]$, whose existence certifies that $M^*$ is not $(S,X)$-locally popular in $G$, and take $P$ that is minimal with this property. Using an immediate modification to Algorithm \ref{algo:decompose} (given in Section~\ref{sec:lemma15-proof}), we can write $P$ as the juxtaposition of $(P_1,P_2,\dots,P_k)$ where each $P_j$ is: either an $S_i$-path contained in $G[X_i \cup S_i]$ for $i\in \{1,2\}$, or a path contained in $G[B]$. Collect all such $S_1$-paths from $(P_1,\dots,P_k)$ not contained in $G[B]$ in ${\cal P}_1$, and all $S_2$-paths from $(P_1,\dots,P_k)$ not contained in $G[B]$ in ${\cal P}_2$. One easily checks that ${\cal P}_i$ form the certificate of a certain $(S_i,X_i)$-configuration ${\cal C}_i$ at $M_i$.
	
	Now consider the graph $H$ and matching $M_H$ obtained from configurations ${\cal C}_1$ and ${\cal C}_2$. By replacing each path from ${\cal P}_1$ and ${\cal P}_2$ with the corresponding shortcut, we deduce that $M_H$ is not an $(S,V(H)\setminus S)$-locally popular matching in $H$. 
	
	The opposite direction follows in a similar (inverse) fashion: start from a forbidden path in $H$ that certifies that $M_H$ is not $(S,V(H)\setminus S)$-locally popular in $H$, write it as $(P_1,\dots,...,P_k)$, replace each of those subpaths with the appropriate certificate paths so as to obtain an $M^*$-alternating forbidden path by Claim \ref{cl:all-paths-are-ivd}.\hfill $\lozenge$
\end{proof}

The proof of the following claim follows in a similar fashion to the proof of Claim \ref{cl:almost-there}.

\begin{new-claim}\label{cl:there,there}
	Let $M^*$ be an $(S,X)$-locally popular matching in $G$ and let ${\cal C}$ be an $(S,X)$-configuration. Then $M^*$ is active at ${\cal C}$ if and only if there exist a $(S_i,X_i)$-configuration ${\cal C}_i$ at which $M_i$ is active for $i=1,2$ such that if we construct graph $H$ and matching $M_H$, then $M_H$ is active at ${\cal C}$ (here we interpret ${\cal C}$ as an $(S,V(H)\setminus S)$-configuration in $H$, which is allowed since $S\subseteq V(H)$).
\end{new-claim}

Because of Claim \ref{cl:almost-there}, we can check if $M^*$ is an $(S,X)$-locally popular matching in $G$ by checking, for each pair of $(S_i,X_i)$-configurations ${\cal C}_i$ at which $M_i$ is active for $i=1,2$, if the corresponding $M_H$ is $(S,V(H)\setminus S)$-locally popular in $H$. For $i=1,2$, the number of $(S_i,X_i)$-configurations is at most $g(\omega+1)$. Since the number of nodes of $H$ is bounded (we start with a bounded set of nodes and we add a bounded number of new nodes), testing if $M_H$ is $(S,V(H)\setminus S)$-locally popular in $H$ can be done in time $O(|E|)=O(|V|^2)$ by Claim \ref{cl:everything-is-checkable}. Similarly, if $M^*$ is $(S,X)$-locally popular in $G$, we can find its tipping point in time $O(|V|^2)$ by repeatedly applying Claims~\ref{cl:there,there} and \ref{cl:everything-is-checkable}. Hence, the iteration when $B$ is flagged can be performed in time $O(|V|^{3\omega +5})$. This concludes the proof of Theorem~\ref{thr:main}.

\subsection{Proof of Lemma~\ref{lem:only-leaders-left-alive}}
\label{sec:lemma15-proof}

Let $N$ be the $(S,X)$-locally popular matching that is the ${\cal T}$-leader. This implies that $M_S=N_S$, and $N$ is a $(X\cup S)$-matching. Define $N':=N \cup (M'\setminus M)$, and notice this is a disjoint union. Hence, $N'_{X \cup S}=N$. We will now show that $N'$ is a $(S',X')$-locally popular matching whose $(S'\cup X')$-tipping point is ${\cal T}'$. We then have:
	$$c(N')  = c(N) + c(N'\setminus N) < c(M) + c(M'\setminus M) = c(M'),$$ concluding the proof. We start with some claims, the first two of which immediately follow from construction.
	
	\begin{new-claim}\label{cl:sameG}
		$G_{M'}[X \cup S]=G_M[X\cup S]$ and  $G_{N'}[X\cup S]=G_{N}[X \cup S]$. 
	\end{new-claim}
	
	\begin{new-claim}\label{cl:reverseM'}
		$M'=M\cup(N'\setminus N)$, and the latter is a disjoint union. 
	\end{new-claim}

	\begin{new-claim}\label{cl:one}
		Let $e \in E$ not be incident to a node of $X$. 
		\begin{enumerate}
			\item[a)]\label{first} $e \in N'$ if and only if $e \in M'$. 
			\item[b)]\label{second} Let $e \notin N'$. Then $e \in E(G_{N'})$ if and only if $ e \in E(G_{M'})$. Moreover, if $e \in E(G_{N'})$, then it has the same labels in $G_{N'}$ and in $G_{M'}$. 
		\end{enumerate}
	\end{new-claim}

	\begin{proof}
		a) Suppose $e \in N'$. If $e \in N$, then $e \in N_S = M_S \subseteq M \subseteq M'$, where equality holds by hypothesis and inclusions by definition. Else, $e \in (N'\setminus N)\subseteq M'$. The argument can be reversed to show the opposite direction.
		
		\smallskip
		
		b) Suppose $e=(u,v) \notin N'$. By part a), $e \notin M'$. If $u \notin S$, then $N'(u)=M'(u)$ by part a). We deduce that the label of $e$ at $u$ coincides in $M'$ and $N'$. If instead $u \in S$, then the label of $e$ at $u$ in $G_{M}$ and $G_{N}$ coincide by Proposition \ref{pro:match-same}. The claim follows since $M'_S=M_S=N_S=N'_S$. \hfill $\lozenge$ \end{proof}

	\begin{new-claim}\label{cl:matching,indeed}
		$N'$ is an $(X' \cup S')$-matching of $G$.
	\end{new-claim}
	
	\begin{proof}
		Let us first show that it is a matching of $G$. Recall that $M'_{X \cup S}=M$. Hence, $M'\setminus M$ has no edge incident to $X \cup S$. Since by hypothesis $N$ is an $(X \cup S)$-matching, $N$ has no edge incident to nodes other than $X \cup S$. Hence $N'$ is the union of two node-disjoint matchings of $G$, hence a matching. 
		
		Now by hypothesis, $S \cup X \subseteq S' \cup X'$. Hence, $N$ has no edge in $G\setminus (S' \cup X')$. Moreover, $M'$ is by hypothesis an $(S',X')$-locally popular matching, so it is an $(S'\cup X')$-matching. Hence, $N'$ is an $(S'\cup X')$-matching. \hfill $\lozenge$\end{proof}

	\begin{new-claim}\label{cl:fromPtoPhat}
		Let $k \in \N$ and  $P^1 ,\dots, P^k \in {\cal P}_{M'}$ be pairwise internally  vertex-disjoint and $X$-disjoint. Assume that, for $j \in [k]$, path $P^j$ is of level $\ell_j \in \{0,1,2\} \times \{0,1\}$ and parity $\Pi_j \in\{0,1\}\times\{0,1\}$, that every node appears at most twice in $P^1\dots,P^k$, no pair of paths have exactly the same endpoints, and no path is contained in $X$. 
		Then there exist pairwise internally vertex-disjoint and $X$-disjoint paths  $Q^1,\dots,Q^k \in {\cal P}_{N'}$ with the following properties:
		\begin{enumerate}
			\item For $j \in [k]$, $Q^j$ is of level $\ell_j$ and parity $\Pi_j$, and $P^j \cap S'=Q^j \cap S'$. 
			\item For $j \in [k]$, let $u_j$ (resp. $v_j$) be the first (resp. last) node of $P^j$. If $u_j$ (resp. $v_j$) $\notin X$, then $u_j$ (resp. $v_j$) is the first (resp. last) node of $Q^j$. If $u_j$ (resp. $v_j$) $\in X$, the first (resp. last) node of $Q^j$ also belongs to $X$.
		\end{enumerate}
		Moreover,
		\begin{enumerate}
			\item[3.]
			1,2 also hold if we switch the roles of $M'$ and $N'$.
		\end{enumerate}
	\end{new-claim}
	
	\begin{proof}
		The proof of the claim will only assume that $M$ and $N$ are $(S,X)$-locally popular matchings with the same $(S,X)$-tipping point. Recall that $N'=N\cup (M'\setminus M)$, $M'=M\cup(N'\setminus N)$ (by Claim~\ref{cl:reverseM'}), and that those unions are disjoint. Therefore, the roles of $M$ and $N$ (hence those of $M'$ and $N'$) can be exchanged and the conclusions preserved. This proves 3 (assuming 1,2).
		
		\smallskip

		
		Let us consider any path $P$ from $P^1,\dots,P^k$, where we omit the superscript for the sake of readability. If $P \cap X=\emptyset$, we write $P_\infty:=P$. Else, we write $P$ as the juxtaposition of $P_0,P_1,P_2,\dots, P_q,P_\infty$, defined through the procedure described in Algorithm \ref{algo:decompose}. Note that some of those paths  may consist of a single node. We call them \emph{trivial}. Note that a trivial path can be removed from the juxtaposition without changing the resulting $P$. Hence, in the following, we assume that all paths are non-trivial (without changing the subscripts of the remaining non-trivial paths), and we let $q$ be the subscript of the last non-trivial path (other than $\infty$).
		
		\begin{algorithm}[h!]
			\caption{ }
			\label{algo:decompose}
			\begin{algorithmic} [1]
				
				\STATE {Let $v$ be the first vertex of $P$.}
				\IF{$v \in X$}\label{step:if-X}
				\STATE{Let $P_0=\{v\}$ and $u=v$.}
				\ELSE
				\STATE{Let $u'$ be the first vertex of $P$ that is contained in $X$, and $u$ the predecessor of $u'$ in $P$. Note that $u \in S$. Let $P_0$ be the subpath of $P$ between $v$ and $u$ (possibly $u=v$ and $P_0=\{u\}$).}
				\ENDIF
				\STATE{Let $i=1$.}
				\STATE{Starting from $u$, traverse $P$ until the first node $v \in S$, $v\neq u$ is encountered.}\label{step:iterative}
				\IF{such $v$ does not exist}
				\STATE{Set $q=i-1$ and go to Step \ref{step:output}.}
				\ELSE{}
				\STATE{Let $P_i$ be the subpath between $u$ and $v$. Set $i=i+1$ and $u=v$.}
				\STATE{Starting from $u$, traverse $P$ until the first node $u' \in X$ is encountered.}
				\IF{such $u'$ does not exist}
				\STATE{Set $q=i-1$ and go to Step \ref{step:output}.}
				\ELSE{}	\STATE{Let $P_i$ be the subpath of $P$ between $u$ and the predecessor $v$ of $u'$ in $P$. Note that $v \in S$ and possibly $v=u$.}
				\STATE{Set $u=v$, $i=i+1$, and go to Step \ref{step:iterative}.}
				\ENDIF
				\ENDIF
				\STATE{Let $P_\infty$ be the subpath of $P$ between $u$ and the last node of $P$.}\label{step:output}
				\STATE{Output $P_0,P_1,\dots,P_{q},P_\infty$.}
			\end{algorithmic}
		\end{algorithm}
		
		The following fact immediately holds by construction.
		
		\begin{fact}\label{fact:basic}
			Let $\{P_i\}_{i\in [q]\cup\{0,\infty\}}$ be the output of Algorithm \ref{algo:decompose} on input $P=P^j$ for some $j \in [k]$. Then: (i) $P$ is the juxtaposition of $P_0,P_1,\dots,P_{q},P_\infty$. (ii) For all $i \in [q] \cup \{0,\infty\}$, $P_i \in {\cal P}_{M'}$. (iii) For $i\in \N$, $i \geq 2$, the first and last nodes of $P_i$ belong to $S$. (iv) For $i \in \N$, $i$ odd, $P_i$ is an $S$-path and $P_i\subseteq X \cup S$. (v) For $i$ even, $P_i \cap X=\emptyset$. (vi) Assume $q\geq 1$. Then $P_1$ ends at a node of $S$, and starts either at a node of $X$ (if the {\bf if} condition in Step \ref{step:if-X} of Algorithm \ref{algo:decompose} is satisfied), or at a node of $S$ (otherwise). 
			(vii) If $P_\infty \cap X \neq \emptyset$, then $P_\infty$ is an $S$-path of $G_{M'}[X\cup S]$ ending at some node of $X$. 
		\end{fact}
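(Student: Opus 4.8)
\medskip\noindent To prove Fact~\ref{fact:basic}, the plan is to verify items (i)--(vii) one at a time by tracing Algorithm~\ref{algo:decompose}: each assertion is a mechanical consequence of how the blocks $P_0,P_1,\dots,P_q,P_\infty$ are carved out, and the only non-bookkeeping input is that $S$ is a vertex separator of $G$ and $X$ is a connected component of $G\setminus S$. From the latter I would use two equivalent remarks: every edge of $G$ with exactly one endpoint in $X$ has its other endpoint in $S$; and any walk of $G$ that starts at a node of $X$ and never visits $S$ stays inside $X$. I would also note at the outset that Algorithm~\ref{algo:decompose} is invoked only when $P\cap X\neq\emptyset$ and, by the hypotheses of Claim~\ref{cl:fromPtoPhat}, $P$ is not contained in $X$; this guarantees that the first node of $P$ lying in $X$ (needed to define $P_0$ when $P$ does not start in $X$) exists, and it excludes the degenerate case in which $P$ lies entirely in $X$.

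I would first dispatch the purely combinatorial items (i), (ii), (iii), (vi). For (i): by construction the $P_i$ and $P_\infty$ are consecutive subpaths of $P$, each iteration resets the running vertex $u$ to the last vertex of the block just emitted, $P_0$ starts at the first vertex of $P$, and $P_\infty$ ends at its last vertex; since $P$ is a simple path, consecutive blocks share exactly the switching vertex and non-adjacent blocks are disjoint, so the juxtaposition of $P_0,\dots,P_q,P_\infty$ is $P$. For (ii): each $P_i$ is a subpath of $P\in{\cal P}_{M'}$, and both the $M'$-alternation and the membership in the induced subgraph defining ${\cal P}_{M'}$ pass to subpaths. For (iii) with $i\ge 2$: the first vertex of $P_i$ equals the last vertex of $P_{i-1}$; an odd-indexed block ends at the first subsequent node of $S$, and an even-indexed block ends at the predecessor of the first subsequent node of $X$, which the algorithm explicitly notes lies in $S$; hence both endpoints of $P_i$ are in $S$. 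For (vi): $q\ge 1$ holds exactly when the first pass of the loop emits $P_1$, whose last vertex is the node of $S$ found at Step~\ref{step:iterative} and whose first vertex is the value of $u$ fixed when $P_0$ was defined --- a node of $X$ if the first node of $P$ lies in $X$ (the test at Step~\ref{step:if-X}), and a node of $S$ otherwise.

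Next I would treat (v), (iv), (vii), which use the separator property. Item (v) is immediate: an even-indexed block stops at (the predecessor of) the first node of $X$ met after its start, so it meets no node of $X$. For (iv), the point is that after $P_0$ the traversal always sits so that the node of $P$ immediately following the current $u$ lies in $X$: either $u\in X$ already (when the first node of $P$ lies in $X$, so $P_0=\{u\}$), or $u$ was set to the predecessor of a node of $X$ (in the first iteration, and whenever a new odd-indexed block begins inside the loop). Hence an odd-indexed $P_i$ has its first node in $X\cup S$, then lies entirely in $X$ by the walk remark until its terminal node, which is in $S$; so $P_i\subseteq X\cup S$, and since its last vertex lies in $S$ while no internal vertex does, $P_i$ is an $S$-path. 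For (vii), assume $P_\infty\cap X\neq\emptyset$ and split on the two exits of the loop to Step~\ref{step:output}. If it exits because there is no further node of $S$, then the current $u$ is the last vertex of $P_0$ (if $q=0$) or of the last even-indexed block (if $q\ge 2$); in either case $u\in S$ --- for $q=0$ using that $P$ is not contained in $X$ --- the node of $P$ after $u$ lies in $X$, and, there being no further node of $S$, the tail $P_\infty$ remains inside $X$ after $u$ and ends at a node of $X$; so $P_\infty$ is an $S$-path of $G_{M'}[X\cup S]$ ending at a node of $X$. If instead it exits because there is no further node of $X$, then $P_\infty\cap X=\emptyset$, contradicting the assumption, so this exit does not occur.

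I do not anticipate a genuine obstacle, since Fact~\ref{fact:basic} merely records what Algorithm~\ref{algo:decompose} does. The only part requiring care is the bookkeeping in (iv) and (vii): tracking the value of the running vertex $u$ and the parity of the current index at each of the two exit points of the loop, and invoking at the right moments that the boundary between an even-indexed block and the following odd-indexed block is, by construction, the predecessor of a node of $X$ --- which is exactly what pushes the odd blocks, and $P_\infty$ when $P_\infty\cap X\neq\emptyset$, back into $X\cup S$.
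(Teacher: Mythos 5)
Your verification is correct and follows exactly the path the paper takes, which dismisses Fact~\ref{fact:basic} as holding ``immediately by construction'': you are simply unfolding the bookkeeping implicit in that claim by tracing Algorithm~\ref{algo:decompose} and invoking the separator property at the right moments. One point worth noting is that the paper removes trivial (single-node) blocks before stating the Fact, which is what saves item~(v) in the case where the first node of $P$ lies in $X$ and $P_0$ is the single vertex $\{v\}\subseteq X$; aside from this small tidying, your argument is a faithful expansion of the paper's ``by construction''.
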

		Let us call a subpath $P_i$ of $P$ \emph{hidden} if $i \equiv 1 \bmod 2$ or $i=\infty$ and $P_\infty \cap X\neq \emptyset$ (i.e., condition (vii) above is verified). Now consider the collection of paths $\{P_i^j\}$ defined above, for $j \in [k]$\footnote{We introduced the superscripts back, to distinguish the paths produced by applying Algorithm \ref{algo:decompose} to $P^1,\dots,P^k$.}.
		\begin{fact}\label{fact:paths}
			Consider the family of paths $\{P_i^j\}$ with $j \in [k]$. Then:
			\begin{enumerate}
				\item $\{P_i^j\}$ is a collection of pairwise internally vertex-disjoint paths and each node appears at most twice in the collection.
				\item The restriction of the collection $\{P_i^j\}$ to hidden paths is a family of pairwise $X$-disjoint $S$-paths from ${\cal P}_M$, and the level of each of those path is not $\infty$.
				\item If $P_i^j \cap P_{i'}^{j'} \neq \emptyset$, then either (a) $j=j'$ and ($|i-i'|\leq 1 $ or $i,i' \in\{q_j,\infty\}$), or (b) $i$ (resp. $i'$) is the first or last path of the juxtaposition leading to $P^j$ (resp. $P^{j'}$). 
			\end{enumerate}
		\end{fact}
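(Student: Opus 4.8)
The plan is to read off all three items of Fact~\ref{fact:paths} from the single-path decomposition statement Fact~\ref{fact:basic}, together with the two global hypotheses imposed on $P^1,\dots,P^k$ — that they are pairwise \emph{internally} vertex-disjoint and pairwise $X$-disjoint — and the identity $G_{M'}[X\cup S]=G_M[X\cup S]$ from Claim~\ref{cl:sameG}. Two elementary observations drive everything. First, the pieces produced from a fixed $P^j$ by Algorithm~\ref{algo:decompose} (with the trivial single-vertex ones deleted) are consecutive subpaths of the \emph{simple} path $P^j$, so two of them are either disjoint or meet in a single junction vertex which, by Fact~\ref{fact:basic}(iii),(vi), lies in $S$. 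Second, for $j\neq j'$, since $P^j$ and $P^{j'}$ are internally vertex-disjoint, any common vertex of a piece of $P^j$ and a piece of $P^{j'}$ must be an endpoint of both $P^j$ and $P^{j'}$, hence can only lie in the first or the last piece of each of the two decompositions.

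\emph{Item 1.} From the first observation, within the decomposition of a single $P^j$ the pieces are pairwise internally vertex-disjoint, and a vertex of $P^j$ lies in at most two of them, with two occurring exactly when the vertex is a junction vertex, in which case it is \emph{interior} to $P^j$. From the second observation, a vertex shared between pieces of $P^j$ and $P^{j'}$ with $j\neq j'$ is an endpoint of both, hence lies in at most one piece of each decomposition; being an endpoint of $P^j$, it is moreover not interior to $P^j$, so within $P^j$'s decomposition it lies in only one piece. Combining these, no vertex is interior to two distinct pieces, and each vertex lies in at most two pieces overall; the hypothesis that each vertex appears at most twice in $P^1,\dots,P^k$ is used precisely here (a vertex in pieces of two different $P^j$'s is an endpoint of two of the original paths).

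\emph{Item 2.} By Fact~\ref{fact:basic}(iv) an odd-indexed piece is an $S$-path contained in $X\cup S$, and by Fact~\ref{fact:basic}(vii) a hidden $P^j_\infty$ is an $S$-path of $G_{M'}[X\cup S]$ ending at a vertex of $X$; in either case the piece lies entirely in $G_{M'}[X\cup S]=G_M[X\cup S]$, and since $M$ and $M'$ agree on the edges inside $X\cup S$ and induce the same labels there (Claim~\ref{cl:sameG} and Proposition~\ref{pro:match-same}), each hidden piece is an $M$-alternating path of $G_M[X\cup S]$, i.e.\ belongs to ${\cal P}_M$. For pairwise $X$-disjointness: two hidden pieces of the same $P^j$ can meet only at a junction vertex, which lies in $S$; and a hidden piece of $P^j$ and one of $P^{j'}$ with $j\neq j'$ can meet only at a common endpoint of $P^j$ and $P^{j'}$, but the only vertices of a hidden piece that can lie in $X$ and simultaneously be an endpoint of $P^j$ are the first vertex of $P^j_1$ or the last vertex of $P^j_\infty$, both of which are endpoints of $P^j$ itself, so this is ruled out by the $X$-disjointness of $P^j$ and $P^{j'}$. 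Finally, if a hidden piece had level $\infty$ then, having at least two $(+,+)$ edges, it would contain as a subpath an $M$-alternating path of $G_M[X\cup S]$ starting and ending with two distinct $(+,+)$ edges — the forbidden structure~(ii) of Theorem~\ref{thr:characterize-popular} — contradicting the $(S,X)$-local popularity of $M$, which holds since $M=M'_{X\cup S}$ and $M'$ is $(S',X')$-locally popular (Proposition~\ref{prop:popular-will-be-weak}). Hence every hidden piece has level $\neq\infty$.

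\emph{Item 3.} Suppose $P^j_i\cap P^{j'}_{i'}\neq\emptyset$. If $j\neq j'$, the second observation above immediately gives case~(b). If $j=j'$, the two pieces are consecutive subpaths of the simple path $P^j$, so they can intersect only when they are neighbours in the sequence of surviving (non-trivial) pieces; tracing through Algorithm~\ref{algo:decompose} and tracking which pieces can become trivial then yields the index condition $|i-i'|\le 1$ or $\{i,i'\}\subseteq\{q_j,\infty\}$, the second alternative accounting for a block of trivial pieces separating $P^j_{q_j}$ from $P^j_\infty$. This last bookkeeping is the step I expect to need the most care: one must determine exactly when deleting trivial single-vertex pieces creates a ``long-range'' adjacency between two surviving pieces, verify that the shared vertex in that case is always an $S$-junction, and confirm that the index pairs produced are indeed those allowed by the statement. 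The rest of the argument is a routine combination of Fact~\ref{fact:basic} with the internal- and $X$-disjointness hypotheses.
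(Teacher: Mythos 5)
Your approach is essentially the paper's: all three items are read off from Fact~\ref{fact:basic} together with the internal-vertex-disjointness and $X$-disjointness hypotheses on $P^1,\dots,P^k$, and Claim~\ref{cl:sameG}. Items~1 and~2 you argue more explicitly than the paper does (the paper dispatches Item~1 in one sentence and Item~2 in a few), and those arguments are fine. Two small remarks on Item~2: the appeal to Proposition~\ref{pro:match-same} is unnecessary, since Claim~\ref{cl:sameG} alone gives $G_{M'}[X\cup S]=G_M[X\cup S]$ with identical matchings, labels and exposed nodes; and for $X$-disjointness with $j\neq j'$ the quickest route is that the hypothesis $P^j\cap P^{j'}\cap X=\emptyset$ already forbids a common $X$-vertex between \emph{any} two pieces taken from different $P^j$'s, no case analysis on endpoints needed.

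For Item~3 you explicitly defer the index bookkeeping, and you are right to be uneasy about it --- but you should note that the wrinkle you flag is in the \emph{statement}, not just the proof. The paper's own proof of Item~3 only observes that two intersecting pieces of the same $P^j$ must be consecutive in the juxtaposition; it does not verify that ``consecutive after removing trivial pieces'' translates into the stated index condition. It does not always: odd-indexed pieces produced by Algorithm~\ref{algo:decompose} are never trivial, but an even-indexed interior piece $P^j_{2i}$ can be a single $S$-vertex $u$ (this happens whenever the successor of $u=$ last node of $P^j_{2i-1}$ in $P^j$ already lies in $X$). Once that trivial $P^j_{2i}$ is discarded, the surviving pieces $P^j_{2i-1}$ and $P^j_{2i+1}$ still share $u$, yet $|{(2i-1)}-{(2i+1)}|=2$ and $\{2i-1,2i+1\}\neq\{q_j,\infty\}$, so this intersection is not covered by case~(a) as written. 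This does not damage anything downstream --- what is actually used later is Item~2, and the shared vertex in the problematic case lies in $S$ so the hidden pieces remain $X$-disjoint --- but a careful write-up of Item~3 either has to enlarge case~(a) to ``$P^j_i$ and $P^j_{i'}$ are consecutive among the surviving pieces'' or explain why the index gap $2$ is absorbed by the intended reading. You should close this rather than leave it as an open ``bookkeeping'' step.
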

		
		\begin{proof}
			Part 1 follows from hypothesis, the fact that paths $P^j$, $j \in [k]$ form a collection of pairwise internally vertex-disjoint paths and Fact \ref{fact:basic}.
			
			We now prove part 2. Restrict the collection $\{P_i^j\}$ to hidden paths. We already argued in Fact \ref{fact:basic} that those are $S$-paths. As they are subpaths of $M'$-alternating paths, they are also $M'$-alternating. Using again Fact \ref{fact:basic}, they are contained in $X \cup S$. Using Claim \ref{cl:sameG}, we deduce they are paths from ${\cal P}_{M}$. The fact that they are $X$-disjoint follows by hypothesis for $j\neq j'$, and by Fact \ref{fact:basic} for $j=j'$. The fact that $M'$ is $(S',X')$-locally popular implies that they cannot be of level $\infty$.  This proves part 2. 
			
						We now prove part 3. Let $P_i^j \cap P_{i'}^{j'} \neq \emptyset$. If $j=j'$, by Fact \ref{fact:basic} the only possibility is that $P^j_i$, $P^{j}_{i'}$ are consecutive paths in the juxtaposition leading to $P$. This is case (a) in 3. Else, $j\neq j'$ and case~(b) follows from the fact that $P^j$ and $P^{j'}$ are internally vertex-disjoint, concluding the proof of 3.
		\hfill $\lozenge$\end{proof}
		Consider the collection of hidden paths $\{P_i^j\}$ for $j \in [k]$, and let $u_i^j$ (resp. $v_i^j$) be the first (resp. last) node of each of those paths. Consider ${\cal C}=(U,L,\Pi)$ defined as follows: $U$ is the ordered collection of pairs:
		\begin{itemize}
			\item $(u^j_1,v^j_1)$ if the first node of $P_1^j$ belongs to $S$, and $(\emptyset,v^j_1)$ otherwise.
			\item  $(u^j_i,v^j_i)$ for $i\geq 3$ odd and
			\item  $(u^j_\infty,\emptyset)$ if $P^j_\infty$ is hidden.
		\end{itemize} 
		$L$ is the ordered collection of levels $\ell_i^j$ of the hidden paths $P_i ^j$, while $\Pi$ is the ordered collection of their parities. Using Fact \ref{fact:basic}, Fact \ref{fact:paths} and the hypothesis, one easily verifies the following. 

\begin{fact}
${\cal C}$ is an $(S,X)$-configuration. $M$ is active at ${\cal C}$, and the collection of hidden paths $\{P^j_i\}$ for $j \in [k]$ is a certificate of ${\cal C}$ at $M$.
\end{fact}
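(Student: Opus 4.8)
The plan is to verify the three assertions by unwinding the definitions of \emph{$(S,X)$-configuration} and of \emph{certificate of ${\cal C}$ at $M$}, feeding in the structural properties of Facts~\ref{fact:basic} and~\ref{fact:paths}. Recall that $U$, $L$, $\Pi$ were \emph{defined} from the collection of hidden paths $\{P^j_i\}$: $L$ lists their levels, $\Pi$ their parities, and $U$ lists their ordered endpoint pairs, writing $\emptyset$ for an endpoint lying in $X$. So essentially all that is needed is (a) to check that ${\cal C}=(U,L,\Pi)$ meets the constraints in the definition of a configuration, after which (b) the hidden paths satisfy the definition of a certificate of ${\cal C}$ at $M$ verbatim.

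For (a): every hidden path is a piece $P_i$ with $i$ odd, or the piece $P_\infty$ when $P_\infty\cap X\neq\emptyset$; by Fact~\ref{fact:basic}(iii),(vi),(vii), each of its two endpoints lies in $S$, with the only exceptions being the ``outer'' endpoint of the first hidden piece of a given $P^j$ (possibly in $X$, hence recorded as $\emptyset$) and the ``outer'' endpoint of a hidden $P_\infty$ (recorded as $\emptyset$). Thus every entry of $U$ lies in $S\cup\{\emptyset\}$ and no pair equals $(\emptyset,\emptyset)$. Each hidden path is non-trivial (trivial pieces were discarded) and $G$ is simple, so its endpoints are distinct, giving $u_i\neq v_i$. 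Membership of the levels and parities in $\{0,1,2\}\times\{0,1\}$ and $\{0,1\}\times\{0,1\}$ is immediate, and by Fact~\ref{fact:paths}(2) no hidden path has level $\infty$. For the ``at most twice'' constraint: a node $w\in S$ is an endpoint of at most one hidden piece of any single $P^j$ (the hidden and non-hidden pieces of $P^j$ meet only at transition nodes, each of which borders exactly one hidden piece, while $w$ occurs at most once along the simple path $P^j$), and $w$ occurs in at most two of $P^1,\dots,P^k$ by hypothesis; hence $w$ occurs at most twice in $U$.

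For (b): by Fact~\ref{fact:paths}(2) the hidden paths form a family of pairwise $X$-disjoint $S$-paths from ${\cal P}_M$ (using Claim~\ref{cl:sameG} to pass between $G_{M'}[X\cup S]$ and $G_M[X\cup S]$). By construction of $L$ and $\Pi$, the $i$-th hidden path has level $\ell_i$ and parity $\pi_i$; by construction of $U$, it starts at $u_i\in S$ if $u_i\neq\emptyset$ and at a node of $X$ otherwise, and symmetrically for its last node. This is precisely the defining property of a certificate of ${\cal C}$ at $M$, so $M$ is active at ${\cal C}$.

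The one point I expect to require care is the requirement that the pairs of $U$ be pairwise distinct. Two hidden pieces with the same ordered endpoint pair must lie in two different $P^j$'s (within a single simple path a subpath is determined by its endpoints); their union is then an $M$-alternating cycle inside $G_M[X\cup S]$, which by $(S,X)$-local popularity of $M$ contains no $(+,+)$ edge, forcing both pieces to have level $(0,0)$. I would rule this out by tracing back to the pairwise internal vertex-disjointness of $P^1,\dots,P^k$ together with the at-most-twice hypothesis; and if a clean contradiction is not available, this minor over-counting can be absorbed harmlessly into ${\cal C}$, exactly as already licensed by the footnote following the definition of ``active at ${\cal C}$''. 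Everything else is a mechanical translation of Facts~\ref{fact:basic} and~\ref{fact:paths}.
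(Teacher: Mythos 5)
Your overall approach is the one the paper takes: the paper dismisses this Fact with ``one easily verifies,'' and you carry out that verification by unwinding the definitions of configuration and certificate against Facts~\ref{fact:basic} and \ref{fact:paths} and the hypotheses on $P^1,\dots,P^k$. Most of the verification is correct.

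The one delicate point is the ``all pairs are different'' clause in the definition of a configuration, and you correctly flag it, but the cycle detour is the wrong tool. For a repeated $U$-entry $(u,v)$ with $u,v\in S$: if two hidden pieces from distinct $P^j,P^{j'}$ both run from $u$ to $v$, then $u$ and $v$ each lie in both $P^j$ and $P^{j'}$; pairwise internal vertex-disjointness forces each of them to be an \emph{endpoint} of both $P^j$ and $P^{j'}$, so $P^j$ and $P^{j'}$ would have the same endpoint set, directly contradicting the hypothesis that no two of the given paths share their endpoints. No popularity or cycle argument is needed.

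Your argument also does not cover repeated entries of the form $(\emptyset,v)$ or $(u,\emptyset)$: the two hidden pieces then have different $X$-endpoints, so their union is a path rather than a cycle, and the purported contradiction evaporates. Indeed this case appears realizable --- e.g.\ $P^j=(w,v)$, $P^{j'}=(w',v)$ with $w\neq w'\in X$, $v\in S$ satisfy every hypothesis of Claim~\ref{cl:fromPtoPhat}, yet both contribute $(\emptyset,v)$ to $U$. Read literally, the constructed ${\cal C}$ can thus violate the distinctness clause; this looks like a slight over-constraint in the paper's definition of configuration rather than an error in your proof, and it is harmless to the remaining argument (allowing a bounded number of duplicate pairs keeps the configuration count polynomial and the collection of hidden pieces still encodes the needed data). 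Your fallback of absorbing the over-count ``as licensed by the footnote'' is not quite grounded either: that footnote addresses permutation redundancy, not duplicate pairs.
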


		
		
		By hypothesis, $N$ is also active at ${\cal C}$. Hence, for all pairs $(i,j)$ such that $P_i^j$ is hidden, we can find pairwise $X$-disjoint $S$-paths $Q_i^j \in {\cal P}_{N}$, with each $P_i^j$ starting (resp. ending) at node $u_i^j$ (resp. $v_i^j$) if this belongs to $S$, and at a node of $X$ otherwise, of the appropriate level and parity.
		
		For $j \in [k]$, recall that $P^j$ is the juxtaposition of $(P^j_0,P^j_1,\dots,P_{q_j}^j,P^j_\infty)$, from which we removed the trivial paths. Consider the path $Q^j$ obtained by replacing, in the juxtaposition above, each hidden path $P^j_i$ with the corresponding $Q^j_i$. We conclude the proof of the claim by showing that the collection of $Q^j$, $j \in [k]$, satisfies the claim. 
		
		Fix $j \in [k]$. Let us first argue that $Q^j$ is a path in $G$ that satisfies property 2 from the statement of the claim. Note that, by construction, for all hidden $P^j_i$, the first (resp. last) vertex of $P^j_i$ coincides with the first (resp. last) vertex of $Q_i^j$, with possibly the exception of the first vertex of $P_0^j$ (resp. last vertex of $P_\infty^j$) if it belongs to $X$.  Moreover, since all $Q^j_i$ are pairwise $X$-disjoint $S$-paths with no edge in $G\setminus X$, each $Q^j$ is a path in $G$. The statement follows.
		
		Let us now argue that $Q^j$ is $N'$-alternating. Since $N'[(X'\cup S') \setminus X]=M'[(X'\cup S') \setminus X]$ and all $P^j_i$ that are not hidden are $M'$-alternating and do not intersect $X$, we conclude that all such $P^j_i$ are also $N'$-alternating. For each hidden $P^j_i$, $Q^j_i$ is an ${N}$-alternating path in $G_{N}[X \cup S]$ by construction. Since $G_{N}[X \cup S]=G_{N'}[X \cup S]$ by Claim \ref{cl:sameG}, each $Q^i_j$ is also $N'$-alternating. 
		By construction, the parity of each hidden $P_i^j$ coincides with that of the corresponding $Q_i^j$.  We conclude that each $Q^j$ is $N'$-alternating. 
		
		Let us now show that $Q^j \in {\cal P}_{N'}$, and it is of the same level and parity of $P^j$. Let $P^j_i$ be non-hidden. By Fact \ref{fact:basic}, it is contained in $G[X'\cup S']\setminus X$. By Claim \ref{cl:one}, $P^j_i\cap N'=P^j_i\cap M'$, and each edge from $P^j_i\setminus N'$ has the same labels in $G_{M'}[X' \cup S']$ and $G_{N'}[X' \cup S']$. Now consider a hidden path $P^j_i$. By construction, $Q^j_i$ is a path of $G_{N}[X\cup S]=G_{N'}[X \cup S]$ of the same parity and level of $P^j_i$. The statement follows.
		
		Now observe that property 1 of the claim holds, since the hypothesis $X\subseteq X'$ implies $S' \cap X=\emptyset$, hence we have: $$Q^j \cap S'= (Q^j \cap (S' \cap S))\cup (Q^j \cap (S'\setminus S))= (P^j \cap (S' \cap S))\cup (P^j \cap (S'\setminus S)) = P^j \cap S'.$$ 
		Finally, let us observe that $Q^1,\dots, Q^k$ are internally vertex-disjoint and $X$-disjoint. The $X$-disjointness follows from the fact that all $Q^j_i$ corresponding to hidden $P^j_i$ are $X$-disjoint, while other $P^j_i$ do not intersect $X$ by Fact \ref{fact:basic}. The fact that $Q^1,\dots, Q^k$ are pairwise internally vertex-disjoint follows from Fact \ref{fact:paths}. \hfill $\lozenge$ \end{proof}
	
	In order to conclude the proof of the lemma, we need to prove that $N'$ is a $(S',X')$-locally popular matching whose tipping point is ${\cal T}'$.

\smallskip
        
\noindent{\em $(S',X')$-local popularity.} Let us first show that $N'$ is $(S',X')$-locally popular. Suppose it is not, then there exists a forbidden path or cycle $P$ as in item (i), (ii), or (iii) from Theorem \ref{thr:characterize-popular} that is contained in $G_{N'}[S' \cup X']$. Since (by Claim \ref{cl:sameG}) $G_{N}[X\cup S]=G_{N'}[X\cup S]$ and $v \in X\cup S$ is $N$-exposed if and only if it is $N'$-exposed, we deduce $P\not\subseteq X\cup S$, else $N$ would not be $(S,X)$-locally popular, a contradiction.

        Suppose $P$ is a path: take one that is inclusionwise minimal among all paths that certify that $N'$ is not $(S',X')$-locally popular. We can then write $P$ as the juxtaposition of $(P^1,P^2)$ of levels $\ell_1,\ell_2 \neq \infty$, so that the first node of $P^2$ (which coincides with the last node of $P^1$) does not belong to $X$. Clearly $P^1,P^2 $ satisfy the hypothesis of Claim \ref{cl:fromPtoPhat} with the roles of $N'$ and $M'$ exchanged and $k=2$. Let $Q^1, Q^2$ be the paths whose existence is guaranteed by the claim. We claim that the juxtaposition of $(Q^1,Q^2)$ is a forbidden path in $G_{M'}[X'\cup S']$, a contradiction.

        First, observe that the last node of $Q^1$ coincides with the last node of $P^1$ (since the latter does not belong to $X$) which coincides with the first node of $P^2$, which coincides with the first node of $Q^2$ (again, since it does not belong to $X$). We now argue that $P^1$ and $P^2$ have no other node in common. We know they are internally vertex-disjoint and $X$-disjoint. Hence, if the first node of $P^1$ or the last node of $P^2$ belongs to $X$, we are done. Suppose not. Then, by Claim \ref{cl:fromPtoPhat}, the first node of $Q^1$ is the first node of $P^1$, which is different by hypothesis ($P$ is a path) from the last node of $P^2$, which coincides, again by Claim \ref{cl:fromPtoPhat}, with the last node of $Q^2$. Hence, the juxtaposition of $(Q^1,Q^2)$ is a path of $G$, that we denote by $Q$. Again by Claim \ref{cl:fromPtoPhat}, $Q^1$, $Q^2 \in {\cal P}_{M'}$. Moreover, the parity and level of $Q^1$ (resp. $Q^2$) coincide with the parity and level of $P^1$ (resp. $P^2$). Hence $Q \in {\cal P}_{M'}$ is the required forbidden path, obtaining the desired contradiction.

        The argument when $P$ is a cycle follows in a similar fashion, writing $P$ as the juxtaposition of $(P^1,P^2,P^3)$ with $P^1\cap X = P^2\cap X = \emptyset$, and the endpoints of $P_3$ lying in $S$ (note that we can assume that $P$ contains exactly one $(+,+)$ edge, else it also contains a path with two $(+,+)$ edges and we are back to the previous case). 
	
	\smallskip

 \noindent{\em Tipping point.}       
	Let us now show that ${\cal T'}$ is the tipping point of $N'$. First, observe that $X \subseteq X'$ implies $S'\cap X=\emptyset$. For $v \in S' \cap S$, we know that $N'(v)=N(v) = M(v)=M'(v)$. For $v \in S'\setminus S$, we have $N'(v)=M'(v)$ by construction. Hence $N'_{S'} = M'_{S'}$. Now let ${\cal C}=\{U,L,\Pi\}$ be a $(S',X')$-configuration at which $M'$ is active, with $U$ as in \eqref{eq:U}. We will show that $N'$ is also active at ${\cal C}$. A symmetric argument shows that, if ${N'}$ is active at ${\cal C}$, then so is ${M}'$, concluding the proof.

        Take paths $P^1,\dots,P^k$ that are the certificate of ${\cal C}$ at $M'$. We claim that those paths satisfy the hypothesis of Claim \ref{cl:fromPtoPhat}. They are, by construction, $S'$-paths that are pairwise $X'$-disjoint. This implies that they are pairwise internally vertex-disjoint and $X$-disjoint (using again $X\cap S'=\emptyset$). The $X'$-disjointness implies that each node from $X'$ appears at most once. Moreover, again by construction, each node from $S'$ appears at most twice. Hence, every node appears at most twice in $P^1,\dots, P^k$. Moreover, by definition of $U$, no two paths have the same endpoints. 
	
	Let $Q^1,\dots,Q^k \in {\cal P}_{N'}$ be the paths whose existence is guaranteed by Claim \ref{cl:fromPtoPhat}. We show that $Q^1,\dots,Q^k$ are the certificate of ${\cal C}$ at $N'$, concluding the proof. Fix $j \in [k]$. The parity and level of $Q^j$ is the same as that of $P^j$. Hence, the parity of $Q^j$ is $\Pi_j$, and its level $\ell_j$. Recall that $Q^j$ is an $S'$-path. By property $2$ of Claim \ref{cl:fromPtoPhat}, $Q^j$ starts (resp. ends) at the same node as $P^j$ when this belongs to $S'$, and at some node of $X'$ otherwise. Using property 1 of Claim \ref{cl:fromPtoPhat}, we deduce that $Q^j$ is a $S'$-path that starts (resp. ends) at $u_j$ (resp. $v_j$) when $u_j\neq \emptyset$ (resp. $v_j\neq \emptyset$), and at a node $x \in X'$ otherwise.

        We are left to show that paths $\{Q^j\}_{j=1,\dots,k}$ are $X'$-disjoint. Suppose not, then there exists $v \in X'$, $j\neq j'$ such that $v \in Q^j \cap Q^{j'}$. Note that $v$ cannot be an internal node of $Q^j$ or $Q^{j'}$, since those paths are pairwise internally vertex-disjoint by construction. On the other hand, if $v$ is an endpoint of both $Q^j$ and $Q^{j'}$, then the statement either follows from the $X$-disjointness of $Q^j$ and $Q^{j'}$, or by property $2$ of Claim \ref{cl:fromPtoPhat}. \hfill \qed

\end{document}